\numberwithin{equation}{section}
\newcommand\bigcheck[1]{#1 \raise1ex\hbox{$\hspace{-1ex}{}^\vee$}}
\newcommand\sucheck[1]{#1 \raise0.5ex\hbox{$\hspace{-1ex}{}^\vee$}}
\renewcommand\section{\@startsection {section}{ 0}{\z@}%
                                    {-4ex \@plus -2ex \@minus -.2ex}%
                                    {6ex \@plus.2ex}%
                                    {\centering\large\scshape}}
\renewcommand\subsection{\@startsection{subsection}{1}{\z@}%
                                     {-5ex\@plus -6ex \@minus -3ex}%
                                     {-75ex \@plus 100ex}%
                                     {\normalfont\large\bfseries}}
\newtheorem{theorem}{Theorem}[section]
\newtheorem{lemma}[theorem]{Lemma}
\newtheorem{corollary}[theorem]{Corollary}
\newtheorem{proposition}[theorem]{Proposition}
\newtheorem*{lemma*}{Lemma}
\theoremstyle{definition}
\newtheorem{definition}[theorem]{Definition}
\theoremstyle{remark}
\newtheorem{remark}[theorem]{Remark}
\newtheorem{example}[theorem]{Example}
\newtheorem{conjecture}[theorem]{Conjecture}
\newcommand{\mc}[1]{{\mathcal #1}}
\renewcommand{\tilde}{\widetilde}
\begin{document}

\begin{titlepage}
	\centering
	\vspace{3.5cm}
	{\Huge\bfseries A sufficient condition for a Rational Differential Operator to generate an Integrable System\par}
	\vspace{2cm}
	{\Large\itshape Sylvain Carpentier*\par}
\vspace{1 cm}
\begin{abstract}
Hey !
\end{abstract}
	{\large \today\par}
\vspace{1 cm}
\textbf{Abstract}
\\
For a rational differential operator $L=AB^{-1}$, the Lenard-Magri scheme of integrability is a sequence of functions $F_n, n\geq 0$, such that (1) $B(F_{n+1})=A(F_n)$ for all $n \geq 0$ and (2) the functions $B(F_n)$ pairwise commute. We show that, assuming that property $(1)$ holds and that the set of differential orders of $B(F_n)$ is unbounded, property $(2)$ holds if and only if $L$ belongs to a class of rational operators that we call integrable. If we assume moreover that the rational operator $L$ is weakly non-local and preserves a certain splitting of the algebra of functions into even and odd parts, we show that one can always find such a sequence $(F_n)$ starting from any function in Ker B. This result gives some insight in the mechanism of recursion operators, which encode the hierarchies of the corresponding integrable equations.

{\let\thefootnote\relax\footnotetext{* \textit{Department of Mathematics, Massachusetts Institute of Technology, Cambridge, MA 02139, USA}}}
\end{titlepage}

\section{ Introduction}

In this paper, we work in the framework of an algebra of differential functions, introduced in [BDSK09], which is a  differential extension of the algebra of differential polynomials in one variable,  $R=\mathbb{C}[u,u',u",\dots]$ where the total derivative $\partial$ is defined by $\partial(u^{(n)})=u^{(n+1)}$. More specifically, an algebra $\mathcal{V}$ is called an \textit{algebra of differential functions} if (1) it contains $R$, (2) it is endowed with commuting derivations $\frac{\partial}{\partial u^{(n)}}$ extending the partial derivatives on $A$ in such a way that, for all function $f \in \mathcal{V}$, only finitely many of these extended partial derivatives of $f$ are nonzero and (3) there is a derivation $\frac{\partial}{\partial x}$ of $\mc V$ which is $0$ on $R$ and commute with all the $\frac{\partial}{\partial u^{(n)}}$. Then, 
\begin{equation}
\partial=\sum_{n\geq 0}{u^{(n+1)}\frac{\partial}{\partial u^{(n)}}}+\frac{\partial }{\partial x}
\end{equation}
is a derivation on $\mc V$ which we call again the \textit{total derivative}. It extends the total derivative on $R$. The set $\mathcal{C} \equiv \{v \in \mathcal{V}, \partial (v)=0\}$ is a subalgebra of $\mathcal{V}$ called the subalgebra of constants. We assume that $\mc V$ is a domain and denote its field of fractions by $\mc K$, which is endowed with a algebra of differential functions structure as well.
\\
\indent
 Let $\mathcal{V}$ be an algebra of differential functions. Given $F \in \mathcal{V}$, define the associated  \textit{evolutionnary vector field} $X_F$ by : 
\begin{equation}
X_F=\sum_{n\geq 0}{F^{(n)}\frac{\partial}{\partial u^{(n)}}} \hspace{1 mm}.
\end{equation}
$X_F$ is a derivation on $\mathcal{V}$ commuting with the total derivative $\partial$ and mapping $u$ to $F$.
 Note that $\partial=X_{u'}$. The commutator endows the space of evolutionnary vector fields on $\mathcal{V}$ with a Lie algebra structure. This holds thanks to the identity 
\begin{equation}
[X_F,X_G]=X_{X_F(G)-X_G(F)},
\end{equation}
satisfied for all $F,G \in \mathcal{V}$. We can identify $F \in \mathcal{V}$ with $X_F$ and in particular infer a Lie bracket $\{.,.\}$ on $\mathcal{V}$ defined by
\begin{equation}
\{F,G\}=X_F(G)-X_G(F) \hspace{1 mm} .
\end{equation}
\indent
 The \textit{differential order} $d(F)$ of a function $F$ is defined as the greatest integer $n$ for which $\frac{\partial F}{\partial u^{(n)}} \neq 0$.
\\
\indent
  A \textit{differential operator} on $\mathcal{V}$ is an element of the algebra $\mathcal{V}[\partial]$, where the multiplication is defined by $\partial  F=F \partial+F'$. If $F, G \in \mathcal{V}$, observe that 
\begin{equation}
X_G(F)=\sum_{n\geq 0}{G^{(n)}\frac{\partial F}{\partial u^{(n)}}}=D_F(G) \hspace{1 mm},
\end{equation}
where $D_F$ is the differential operator
$\sum_{n \geq 0} \frac{\partial F}{\partial u^{(n)}}{\partial}^n$, called  the \textit{Frechet derivative} of $F$. Two functions $F$ and $G$ are said to commute, or to be a symetry of one another, if 
\begin{equation}
\{F,G\}=0 \hspace{1 mm}.
\end{equation}
 \indent
 In this paper, an \textit{integrable system} is, by definition, an abelian Lie subalgebra of  the Lie algebra $\mathcal{V}$ with bracket $(0.4)$ which contains functions of arbitrarily high order. Finally, $F \in \mc V$ (or rather the evolution equation $\frac{du}{dt}=F$) is called \textit{integrable} if $F$ lies in an integrable system. 
\indent
It is known ([MS08], [IS80], [SS84]) that if $F$ is integrable, then there exists a so-called \textit{recursion operator} $L$, defined by the property that its Lie derivative along $F$ vanishes :
\begin{equation}
\mathcal{L}_F(L) : \hspace{1 mm}= X_F(L)-[D_F,L]=0 \hspace{1 mm} .
\end{equation}
 \indent
If $L$ is a differential operator satisfying equation $(0.7)$, it preserves the centralizer of $F$ in the Lie algebra $\mathcal{V}$. However, in almost all interesting cases, $L$ is not a differential operator but lies in an extension of $\mathcal{V}[\partial]$ called the algebra of \textit{pseudodifferential operators}, $\mathcal{V}((\partial^{-1}))$, where the multiplication is extended by letting for all $a \in \mathcal{V}$
\begin{equation}
\partial^{k}a=\sum_{n\geq 0} { \binom{k}{n} a^{(n)} \partial^{k-n}}, \hspace{1 mm} k \in \mathbb{Z} \hspace{1 mm}.
\end{equation}
\indent
 Most of the integrable systems encountered in the litterature are generated by such an operator $L$, namely they consist of the iterative images of a function $F_0$ by $L$ : $\{L^n(F_0)\}_{n \geq 0}$. Often, the existence of this operator is used as a non-rigorous argument to claim that the corresponding functions indeed define an abelian subalgebra of $\mathcal{V}$ . 
\\
\indent
Of course, for an arbitrary pseudodifferential operator $L$, the expression $L(F)$ makes no sense. However, for the rational pseudodifferential operators, i.e. those that can be written as the ratio of two differential operators $L=AB^{-1}$, we can define an image of $F$ by $L$ if $F$ lies in the image of $B$. Namely, we say, as in [DSK13], that $(G,F)$ are associated through $L$ if there exists a function $H$ such that 
\begin{equation}
(G,F)=(A(H),B(H)) \hspace{1 mm}.
\end{equation}
 In practice, we will just write $G=L(F)$, but we should always keep in mind that $G$ is only defined modulo $A(Ker B)$ and in particular is not unique. We denote the algebra of rational pseudodifferential operators by $\mathcal{V}(\partial)$ and simply call them \textit{rational operators}. We refer to [CDSK12] and [CDSK14] for an in-depth study of the algebra $\mathcal{V}(\partial)$.
\\
\indent
In this paper we are interested in determining which pairs of differential operators $(A,B)$ produce integrable systems. More precisely, for which $(A,B) \in \mc V[\partial]^2$ can we find a sequence of functions $(F_n)_{n \geq 0} \in \mc V^{\mathbb{Z}_+}$ such that :
\begin{equation}
\begin{split}
 A(F_n)=B(F_{n+1}) &\hspace{10 mm} \forall n \geq 0 \hspace{1 mm},\\
\{B(F_n),B(F_m)\}&=0 \hspace{3 mm} \forall n,m \geq 0 \hspace{1 mm}.
\end{split}
\end{equation}
\indent
It is often said in the litterature that a sufficient condition is for the rational operator $L=AB^{-1}$ to be \textit{Nihenjuis}, or \textit{hereditary}, which means that $L$ satisfies the following identity for all function $F \in \mc V$ :
\begin{equation} 
\mathcal{L}_{A(F)}(L)=L\mathcal{L}_{B(F)}(L) \hspace{1 mm}.
\end{equation}
Note that it follows from equations $(0.7)$ and $(0.11)$ that, if $L=AB^{-1}$ is hereditary and recursion for $B(F)$, then it is recursion for $A(F)$. 
Although necessary, $(0.11)$ is not sufficient for the pair $(A,B)$ to satisfy $(0.10)$ for some functions $F_n \in \mc V$. A counterexample is given by \begin{equation}
L=\partial^{-1}u''\partial \hspace{1 mm},
\end{equation}
which is hereditary but does not generate commuting functions. 
\\
\indent
We will see that, to satisfy $(0.10)$ for some functions $F_n \in \mc V$, the rational operator $L=AB^{-1}$ must lie in a finer subset of $\mc V(\partial)$ which we call the class of integrable rational operators. A differential operator $A$ is called \textit{integrable} if one can find a bidifferential skewsymmetric operator $M$ (i.e. an element of $\mc V[\partial_1,\partial_2]$) such that for all functions $F$ and $G$ one has 
\begin{equation}
X_{A(F)}(G)-X_{A(G)}(F)=A(M(F,G)) \hspace{1 mm}.
\end{equation}
\indent
We then define a pair of operators $(A,B)$ to be integrable if any of their linear combination is integrable, more precisely if there exists two skewsymmetric bidifferential operators $M$ and $N$ such that for all functions $F,G$ and all constant $\lambda$, we have
\begin{equation}
X_{(A+\lambda B)(F)}(G)-X_{(A+ \lambda B)(G)}(F)=(A+ \lambda B)((M+ \lambda N)(F,G)).
\end{equation}
\indent
For example, any local Poisson structure $H$ is an integrable operator, and any compatible pair of local Poisson structures $(H,K)$ is an integrable pair.
\\
\indent
Finally, a rational operator $L$ is called integrable if it can be written in the form $AB^{-1}$ where $(A,B)$ is an integrable pair of differential operators.
This definition is natural for the following reason :
\begin{theorem}
Let $A$, $B$ be two differential operators and $(H_n)_{n \geq 0}$ be a sequence of functions in $\mc V$ which spans an infinite dimensional space over $\mc C$. Let us assume moreover that $L$ is recursion for $B(H_0)$ and that for all $n \geq 0$,
\begin{equation}
B(H_{n+1})=A(H_n) \hspace{1 mm} .
\end{equation}
 Then the functions $B(H_n)$ pairwise commute if and only if the pair $(A,B)$ is integrable.
\end{theorem}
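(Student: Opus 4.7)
My plan is to expand the integrability condition $(0.14)$ in powers of the formal parameter $\lambda$. Collecting coefficients at $\lambda^0$, $\lambda^1$ and $\lambda^2$ yields three identities:
\begin{align*}
(\text{I})&\quad X_{A(F)}(G)-X_{A(G)}(F)=A(M(F,G)),\\
(\text{II})&\quad X_{B(F)}(G)-X_{B(G)}(F)=A(N(F,G))+B(M(F,G)),\\
(\text{III})&\quad B(N(F,G))=0,
\end{align*}
which jointly encode the integrability of the pair $(A,B)$. Both implications of the theorem will be derived from these three identities, and many of the manipulations will rely on the elementary Leibniz-type rule $X_F(P(G))=P(X_F(G))+(X_FP)(G)$ valid for any differential operator $P\in\mc V[\partial]$.

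For the ``if'' direction, I will run a generalized Lenard-Magri induction. The base case $\{B(H_0),B(H_n)\}=0$ uses that $L=AB^{-1}$ is a recursion operator for $B(H_0)$: since the Lie derivative of $L$ along $B(H_0)$ vanishes, $L$ sends symmetries of $B(H_0)$ to symmetries modulo $A(\ker B)$, and by induction on $n$ one reaches $B(H_n)=L^n(B(H_0))$. The inductive step establishes the telescoping identity
\[
\{B(H_n),B(H_m)\}=\{B(H_{n-1}),B(H_{m+1})\},
\]
derived by rewriting $B(H_n)=A(H_{n-1})$ for $n\geq 1$, expanding each bracket with the Leibniz rule applied to $P=A$ and $P=B$, and invoking $(\text{I})$--$(\text{III})$ together with the skewsymmetry of $M$ and $N$ to cancel the cross terms. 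Iterating the telescoping identity reduces every bracket to the base case.

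For the ``only if'' direction, one must reconstruct bidifferential operators $M,N$ satisfying $(\text{I})$--$(\text{III})$ from the commutation data alone. The strategy exploits that $L$ is recursion for $B(H_0)$ and that the $\mc C$-span of $(H_n)$ is infinite-dimensional, which forces the differential orders $d(B(H_n))$ to be unbounded. Reading the commutation identities $\{B(H_n),B(H_m)\}=0$ at leading differential order forces the top symbols of $M(H_n,H_m)$ and $N(H_n,H_m)$ to be uniquely determined modulo $\ker B$; descending in order by induction then fixes the lower coefficients, and the Frechet derivative formalism lets us re-package this data as bidifferential operators on all of $\mc V\times\mc V$. The main obstacle is this last extension step: ensuring that the operators $M,N$ read off from the test sequence $(H_n)$ genuinely satisfy $(\text{I})$--$(\text{III})$ for arbitrary $F,G\in\mc V$ requires a density argument whose essential input is precisely the unboundedness of differential orders furnished by the infinite-dimensionality hypothesis.
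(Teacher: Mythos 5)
Your three identities (I)--(III) are read off from the introduction's $(0.13)$--$(0.14)$, which are loosely stated; the operative definition is $(3.18)$, whose $\lambda$-coefficients are the three identities $(3.19)$. There the left-hand sides involve $X_{A(F)}(A)(G)$ (the vector field applied to the coefficients of the operator $A$), not $X_{A(F)}(G)$, and the $\lambda^{2}$ coefficient says that $B$ is itself integrable with operator $N$, i.e.\ $X_{B(F)}(B)(G)-X_{B(G)}(B)(F)=B(N(F,G))$ --- not your (III), $B(N(F,G))=0$, which would force $N$ to take values in the finite-dimensional $\ker B$ and hence essentially to vanish. So the conditions you manipulate are not the integrability of the pair $(A,B)$.

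The decisive gap, however, is the telescoping identity $\{B(H_{n}),B(H_{m})\}=\{B(H_{n-1}),B(H_{m+1})\}$ in the ``if'' direction. Setting $F=H_{n-1}$, $G=H_{m}$ and expanding both brackets with the Leibniz rule gives
\[
\{A(F),B(G)\}-\{B(F),A(G)\}=[\cdots]+B\left(X_{A(F)}(G)+X_{A(G)}(F)\right)-A\left(X_{B(F)}(G)+X_{B(G)}(F)\right),
\]
where the displayed terms are symmetric under $F\leftrightarrow G$: the skewsymmetric operators $M,N$ of $(3.19)$ control only the antisymmetric part $[\cdots]$, so these cross terms do not cancel. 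The classical Lenard--Magri telescoping needs skewadjoint operators acting on variational derivatives, neither of which is assumed here; this is exactly why the paper takes a different route, proving the diagonal identity $\{A(F),B(F)\}=0$ when integrable $AB^{-1}$ is recursion for $B(F)$ (Lemma $3.15$), showing that every power $L^{k}$ is integrable (Proposition $3.13$, a delicate induction on least common multiples), and then writing $B(H_{n})=B_{k}(F_{n,k})$, $B(H_{n+k})=A_{k}(F_{n,k})$ for the minimal decomposition $L^{k}=A_{k}B_{k}^{-1}$ to reduce each bracket to the diagonal case (Proposition $3.16$). Your ``only if'' sketch is in the spirit of the paper's Proposition $3.14$, but the reconstruction is much simpler than you suggest: $M_{F}=X_{A(F)}(A)-(D_{A})_{F}A$ and $N_{F}=X_{B(F)}(B)-(D_{B})_{F}B$ are already bidifferential operators, and one shows that $A$ divides $M$ and $B$ divides $N$ on the left by Euclidean division and the least common multiple of $(A,B)$, using only that a (bi)differential operator vanishing on an infinite-dimensional space is zero; your claim that an infinite-dimensional span forces $d(B(H_{n}))$ to be unbounded is false (take $H_{n}=u^{n}$) and is not needed.
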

In the first section of the paper, we recall elementary properties of algebras of differential functions, differential operators, Frechet derivatives and bidifferential operators. In the second section, we begin by recalling and proving the result of Ibragimov and Shabat (stated in [IS80] and proved in [SS84]), which states that an integrable system has a recursion operator. Then, we move on studying hereditary operators and in particular their relation with the Lenard-Magri scheme of integrability $(0.10)$. In the third section, we introduce integrable operators and prove  one of their key properties :
\begin{proposition}
Let $L=AB^{-1}$ be an integrable rational operator which is recursion for $B(F)$. Then $A(F)$ and $B(F)$ commute :
\begin{equation}
\{A(F),B(F)\}=0.
\end{equation}
\end{proposition}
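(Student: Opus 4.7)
My approach is to open the bracket via the Leibniz rule, absorb the correction terms using the integrable-pair structure, and then invoke the recursion hypothesis to finish. Applying the product rule $X_G(T(F)) = (X_G(T))(F) + T(X_G(F))$ for any differential operator $T$ together with $X_G(F) = D_F(G)$, one gets
\[
\{A(F),B(F)\} = \bigl(X_{A(F)}(B) - X_{B(F)}(A)\bigr)(F) + B(D_F(A(F))) - A(D_F(B(F))).
\]
Substituting $X_G(T) = \mathcal{L}_G(T) + [D_G, T]$ in the first piece further splits it into Lie-derivative and Fréchet-commutator contributions.

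The next step is to absorb the two Fréchet-commutator pieces together with the residual $B(D_F(A(F))) - A(D_F(B(F)))$ into a clean Lie-derivative expression. This should work by expanding $D_{A(F)}(H) = (X_H(A))(F) + A(D_F(H))$ (obtained by differentiating $A(F)$ in the Fréchet sense and using $X_H(F) = D_F(H)$), and similarly for $D_{B(F)}$. The pieces involving $A\,D_F$ and $B\,D_F$ then cancel against the residual, leaving expressions symmetric in two copies of $F$, which are killed by the skew-symmetry of the bidifferential operators $M$, $N$ furnished by the integrable-pair condition $(0.14)$. The bracket thereby reduces to $\bigl(\mathcal{L}_{A(F)}(B) - \mathcal{L}_{B(F)}(A)\bigr)(F)$, interpreted modulo the intrinsic indeterminacy $A(\ker B)$ of $A(F) = L(B(F))$. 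Finally, the recursion hypothesis $\mathcal{L}_{B(F)}(L) = 0$, expanded for $L = AB^{-1}$ in $\mathcal V(\partial)$, yields the rational-operator identity $\mathcal{L}_{B(F)}(A) = L\cdot \mathcal{L}_{B(F)}(B)$; combined with the analogous relation for $\mathcal{L}_{A(F)}$ extracted from the $\lambda$-dependence of the integrable-pair equation for the pencil $A+\lambda B$, this forces the surviving Lie-derivative difference to vanish.

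\textbf{Main obstacle.} The principal difficulty is the rational character of the recursion hypothesis. The identity $\mathcal{L}_{B(F)}(L) = 0$ lives in $\mathcal V(\partial)$ rather than in $\mathcal V[\partial]$, and converting it into a usable evaluation $\mathcal{L}_{B(F)}(A)(H) = L\bigl(\mathcal{L}_{B(F)}(B)(H)\bigr)$ requires that the right-hand side be well-defined, i.e., that $\mathcal{L}_{B(F)}(B)(H) \in \im B$ for the relevant input $H$. The integrable-pair hypothesis is exactly what supplies the bidifferential operators $M$, $N$ whose skew-symmetry controls the cancellations above, and simultaneously what makes the rational-operator manipulations legal. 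Keeping these two distinct uses of integrability in lockstep with the recursion identity is where the delicate portion of the proof will sit.
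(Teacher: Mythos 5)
There is a genuine gap here, on two counts. First, your central reduction --- that after Leibniz expansion the ``residual'' terms $B(D_F(A(F)))-A(D_F(B(F)))$ and the Fr\'echet-commutator pieces cancel, and that what survives is killed by the skew-symmetry of $M$ and $N$ --- does not go through. Evaluating the three identities $(3.19)$ on the diagonal $(F,F)$ yields only tautologies: by $(1.25)$ one has $(D_A)_F(B(F))=X_{B(F)}(A)(F)$ and $(D_B)_F(A(F))=X_{A(F)}(B)(F)$, so the third line of $(3.19)$ applied to $F$ reads $0=0$ once $M_F(F)=N_F(F)=0$ is used; no information about $\{A(F),B(F)\}$ is extracted this way. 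Nor do the residual terms cancel against the commutator pieces: already for $A=\partial$, $F=u^2$ one has $D_F(A(F))=4u^2u'$ while $A(D_F(F))+(D_A)_F(F)=6u^2u'$. So the claimed identity $\{A(F),B(F)\}=\bigl(\mathcal{L}_{A(F)}(B)-\mathcal{L}_{B(F)}(A)\bigr)(F)$ is unsubstantiated, and the subsequent appeal to the pencil $A+\lambda B$ has nothing to act on.

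Second, and more seriously, the issue you yourself flag as the ``main obstacle'' --- that $\mathcal{L}_{B(F)}(A)=L\,\mathcal{L}_{B(F)}(B)$ is a rational-operator identity whose evaluation at $F$ is a priori only defined modulo $A(\ker B)$ --- is precisely where the entire proof lives, and you leave it unresolved. An argument valid only ``modulo the intrinsic indeterminacy $A(\ker B)$'' cannot establish the exact identity $\{A(F),B(F)\}=0$. The paper's proof (Lemma $3.15$) removes the indeterminacy as follows: multiplying the recursion identity by the left least common multiple $CA=DB$ gives the honest differential-operator identity $C(X_{B(F)}(A)-D_{B(F)}A)=D(X_{B(F)}(B)-D_{B(F)}B)$, whence there is a single differential operator $H$ with $X_{B(F)}(A)-D_{B(F)}A=A_0H$ and $X_{B(F)}(B)-D_{B(F)}B=B_0H$ (where $A=A_0E$, $B=B_0E$). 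The integrability of $B$ alone (second line of $(3.19)$) then identifies $H=E(M_F-D_FB)$, and only at this point does skew-symmetry enter, via $M_F(F)=0$, to give $H(F)=-E(D_F(B(F)))$; substituting back yields $X_{B(F)}(A(F))=D_{B(F)}(A(F))$ exactly. Your proposal contains the right ingredients (recursion identity, skew-symmetry, Leibniz expansions) but not the divisibility mechanism that turns them into a proof.
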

In the fourth section, we consider rational operators of a particular kind, namely, \textit{weakly non-local operators}, which by definition are rational operators which can be written in the following form (see [MN01]) 
\begin{equation}
L=E(\partial)+\sum_{i=1}^n{p_i \partial^{-1} q_i} \hspace{1 mm},
\end{equation}
where $E$ is a differential operator, called the local part of $L$, and $p_i$ and $q_i$ are elements of $\mc K$. It is shown in the paper that the space of weakly non-local operators coincides with the space of rational operators $AB^{-1}$ whose denominator $B$ in the minimal fractional decomposition (i.e. $deg B$ is minimal, see [CDSK12]) has full kernel in $\mc K$, i.e. :
\begin{equation}
deg B=dim_{\mathcal{C}}  \hspace{2 mm} Ker_{\mc K} B \hspace{1 mm}.
\end{equation}
\indent
Differential operators with property $(0.18)$ have been studied in [DSKT15] and called strongly non-degenerate there. The majority of known recursion operators associated to integrable systems are weakly non-local. Differential operators with full kernel provide an easy way to test whether a function $F$ lies in their image or not. Indeed, if $Ker B^*$ is spanned by $q_1,\dots , q_n$, then
\begin{equation}
F \in \hspace{2 mm} Im B \iff q_i F \in \partial \mc K \hspace{2 mm} \forall i=1\dots n \hspace{1 mm}.
\end{equation}
\indent
We study the structure of weakly non-local operators and examine what can be said when a rational operator $L \in \mc V(\partial)$ is integrable and weakly non-local. In particular, we prove :

\begin{proposition}
Let $L=E+ \sum_{i=1}^n{p_i \partial^{-1}q_i}$ be a weakly non-local operator where the $\{p_1,\dots,p_n \}$ and $\{q_1,\dots,q_n \}$ are sets consisting of linearly independent elements of $\mc V$. Then $L$ is integrable if and only if it is hereditary and $q_i$ is a variational derivative for all $i=1 \dots n$.
\end{proposition}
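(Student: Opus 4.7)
The plan has two directions. For the forward direction, assume $L$ is integrable so that $L = AB^{-1}$ for an integrable pair $(A,B)$ with bidifferential operators $M,N$ satisfying (0.14). By weak non-locality together with the characterization (0.18), the minimal denominator $B$ has $\ker B^*$ equal to the $\mc C$-span of $\{q_1,\dots,q_n\}$. Hereditariness should follow from expanding (0.14) in powers of $\lambda$ and using the relation $LB=A$ (modulo $A(\ker B)$) to rearrange the resulting coefficient identities into the Nijenhuis form (0.11); alternatively, Proposition~0.2 applied to every $F$ for which $L$ is a recursion for $B(F)$ yields the infinitesimal version of (0.11) and one propagates. The more subtle variational-derivative claim will be extracted by pairing the $\lambda^1$ coefficient of (0.14), namely $X_{B(F)}(G)-X_{B(G)}(F)=A(N(F,G))+B(M(F,G))$, against $q_i$ and integrating by parts: the $B$-term vanishes since $B^*(q_i)=0$, and after re-expressing the remaining terms through Frechet adjoints, one extracts an identity valid for all $F,G$ which, upon suitable specialization, forces $D_{q_i}=D_{q_i}^*$, the Helmholtz criterion for $q_i$ to be a variational derivative.

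For the backward direction, assume $L$ is hereditary with $q_i=\delta h_i/\delta u$. First choose a differential operator $B$ whose adjoint kernel is exactly the $\mc C$-span of $\{q_1,\dots,q_n\}$; this is possible precisely because the $q_i$ are linearly independent. Then $A := LB$ is a differential operator, since the non-local parts of $L$ collapse thanks to $q_iB\in\partial\,\mc V[\partial]$, itself a consequence of $B^*q_i=0$. The task reduces to constructing explicit skew-symmetric bidifferential operators $M$ and $N$ realizing (0.14). Expanding in $\lambda$, this splits into three identities, one per power of $\lambda$: the $\lambda^0$ identity is integrability of $A$ via $M$; the $\lambda^2$ identity requires $N(F,G)\in\ker B$; and the $\lambda^1$ identity is the cross-compatibility between $A$, $B$, $M$, and $N$.

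The crux, and the main obstacle, is producing a concrete ansatz for $M$ and $N$ in terms of $E$, $p_i$, $q_i$, and their Frechet derivatives. The self-adjointness of $D_{q_i}$ coming from the variational-derivative property is precisely what ensures that the candidate bidifferential operators are skew-symmetric; without it, no skew-symmetric choice exists. The hereditary identity (0.11), rewritten through $L=AB^{-1}$, supplies the structural information needed to close the $\lambda^1$ compatibility. Once the ansatz is in place, verification of the three identities becomes a computation in the algebra $\mc V[\partial_1,\partial_2]$, but one has to keep careful track of commutators between multiplication operators, $\partial$, and the Frechet derivatives of the tails $p_i$ and $q_i$; this bookkeeping, together with finding the right formulas for $M$ and $N$, is where the bulk of the technical work lies.
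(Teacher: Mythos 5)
Your reduction of the statement to three ingredients (a minimal denominator $B$ with $\ker B^*=\langle q_1,\dots,q_n\rangle$, hereditariness of $L$, and the Helmholtz condition $D_{q_i}=D_{q_i}^*$) matches the skeleton of the paper's argument, but both directions have real gaps. In the forward direction, the variational-derivative claim should come from the $\lambda^2$ coefficient of the defining identity $(3.18)$, which says exactly that $B$ is integrable, i.e.\ that $B$ divides $X_{B(F)}(B)-(D_B)_FB$ on the left; the paper (Proposition $4.7$) then uses that $B$ is a right least common multiple of the operators $\frac{1}{q_i}\partial$ to convert this left-divisibility into the scalar conditions $X_{B(F)}(q_i)=-D_{B(F)}^*(q_i)$, which after the computation $(4.24)$ collapse to $D_{q_i}(B(F))=D_{q_i}^*(B(F))$ and hence, since $\im B$ is infinite-dimensional over $\mc C$, to $D_{q_i}=D_{q_i}^*$. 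Your plan of pairing the $\lambda^1$ coefficient against $q_i$ kills the $B$-term but leaves a term of the form $\int q_i\,A(N(F,G))$ which you do not control; nothing forces it to vanish, and the Helmholtz condition does not visibly come out of that identity.

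The more serious gap is the backward direction. You correctly set up $A=LB$ with $B^*(q_i)=0$ and identify the three identities to be proved, but you then defer the entire content of the proof to ``producing a concrete ansatz for $M$ and $N$'', which you do not produce. The paper never writes such an ansatz: it obtains $N$ from Proposition $4.7$ (the Helmholtz conditions are equivalent to left-divisibility of $X_{B(F)}(B)-(D_B)_FB$ by $B$, and that divisibility \emph{defines} $N$), and then obtains $M$ abstractly from the hereditary identity by a division argument --- one rewrites hereditariness as $P=AB^{-1}Q$ with $P_F=X_{A(F)}(A)-(D_A)_FA$ and $Q_F=X_{A(F)}(B)+X_{B(F)}(A)-(D_A)_FB-(D_B)_FA-AN_F$, clears denominators using the left least common multiple $CA=DB$, and invokes the structural Lemma $1.29$ on bidifferential operators to conclude $P=AM$ and $Q=BM$ for a single bidifferential $M$ (this is the second half of Lemma $3.8$). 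Without this division machinery (Lemmas $1.26$--$1.29$) or an actual explicit formula, your backward direction is a statement of what remains to be done rather than a proof; moreover skew-symmetry of $M$ is then automatic from $(3.17)$, so it need not be imposed on an ansatz as you suggest.
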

Recall that the \textit{variational derivative} of $F \in \mc V$ is defined by
\begin{equation}
\frac{\delta F}{\delta u} : \hspace{1 mm}= \sum_{n \geq 0} {(-\partial)^n (\frac{\partial F}{\partial u^{(n)}})} \hspace{1 mm} .
\end{equation}
\indent
In section $5$, we will assume that  the algebra of differential functions $\mathcal{V}$ can be split as a sum
\begin{equation}
\mathcal{V}=\mathcal{V}_{\bar{0}} \oplus \mathcal{V}_{\bar{1}}
\end{equation}
of eigenspaces of an involution $\sigma$ of the algebra $\mc V$, such that $\sigma \partial \sigma^{-1}=-\partial$ and $\sigma \frac{\partial}{\partial u^{(n)}} \sigma ^{-1}=(-1)^n \frac{\partial}{\partial u^{(n)}}$ for all $n \geq 0$. We call elements of $\mc V_{\bar{0}}$ even functions and elements of $\mc V_{\bar{1}}$ odd functions. It follows from this definition that $\partial$ switches parity and that $\frac{\delta}{\delta u}$ preserves parity. 
\\
\indent
We then give a sufficient condition for the existence of Lenard-Magri sequences for a subclass of rational operators :

\begin{theorem}
Let $L \in (\mc V [\partial])_{\bar{0}}+\mc V_{\bar{1}} \partial^{-1} \mc V_{\bar{0}}$ be integrable. If $AB^{-1}$ is a right minimal fractional decomposition of $L$ and $F_0 \in Ker B$, then there exists a sequence $F_n \in \mc V, n \geq 0$ such that 
\begin{enumerate}
\item[(1)] $B(F_{n+1})=A(F_n)$ for all $n\geq 0$ \hspace{1 mm},\\
\item[(2)] $\{B(F_n),B(F_m)\}=0$ for all $n,m \geq 0$\hspace{1 mm}.
\end{enumerate}
Moreover, if $d(L) >0$ and $d(B(F_N))$ is greater than the differential orders of the coefficients of $A$ and $B$ for some $N$, then the sequence $d(B(F_n))$ is unbounded.
\end{theorem}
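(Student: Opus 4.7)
The plan is to construct the sequence $(F_n)$ by induction on $n$, exploiting both the heredity of $L$ (which follows from integrability via Proposition 0.3) and the parity structure imposed by the assumption $L \in (\mc V[\partial])_{\bar{0}} + \mc V_{\bar{1}} \partial^{-1} \mc V_{\bar{0}}$. Proposition 0.3 also provides the crucial structural information that each $q_i = \delta h_i/\delta u$ for some $h_i \in \mc V$, and the parity hypothesis $q_i \in \mc V_{\bar{0}}$ permits choosing the $h_i$ even.

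Once the sequence is built, property (2) follows from two ingredients. First, the recursion property of $L$ propagates along the orbit: since $B(F_0)=0$ trivially gives $\mc L_{B(F_0)}(L)=0$, the heredity identity $\mc L_{A(F)}(L) = L\,\mc L_{B(F)}(L)$ yields, by induction on $n$, that $L$ is a recursion operator for each $B(F_n)$ with $A(F_{n-1}) = B(F_n)$. Proposition 0.2 then gives the adjacent commutation $\{B(F_{n+1}), B(F_n)\} = \{A(F_n), B(F_n)\} = 0$. To upgrade to full pairwise commutation I expand the integrability relation (0.14) in powers of $\lambda$, obtaining $X_{B(F)}(G) - X_{B(G)}(F) = A(N(F,G)) + B(M(F,G))$ together with $B(N(F,G))=0$; combined with $A(F_k) = B(F_{k+1})$ these implement a standard Lenard-Magri shift $\{B(F_m), B(F_n)\} \equiv \{B(F_{m+1}), B(F_{n-1})\}$ that reduces any pair to the adjacent case already settled.

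The main obstacle is the extension step: given $F_0,\dots,F_n$ satisfying (1) up to index $n$, showing that $A(F_n) \in \im B$. By (0.19) this amounts to $q_i A(F_n) \in \partial \mc K$ for each $i$, and since $q_i = \delta h_i/\delta u$ the standard identity $F \cdot \delta h/\delta u - X_F(h) \in \partial \mc V$ reduces this to $X_{A(F_n)}(h_i) \in \partial \mc K$. This is the point where the splitting $\mc V = \mc V_{\bar{0}} \oplus \mc V_{\bar{1}}$ enters decisively: the form of $L$ allows one to choose $A$ and $B$ in the minimal decomposition to be $\sigma$-homogeneous, hence $F_n$ carries a definite parity, and a parity count on $X_{A(F_n)}(h_i)$ combined with the evenness of $h_i$ and the fact that $\partial$ swaps parities should produce the required primitive. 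If the parity bookkeeping proves too tight, an auxiliary route is to combine the established $\mc L_{A(F_n)}(L)=0$ with $B^*(q_i)=0$ to rewrite $q_i A(F_n)$ as an explicit total derivative.

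Finally, for the unboundedness clause, once $d(B(F_N))$ exceeds the differential orders of all coefficients of $A$ and $B$, the identities $B(F_{n+1}) = A(F_n)$ and $B(F_n)=A(F_{n-1})$ admit no leading-term cancellation, so $d(F_n) = d(B(F_n)) - d(B)$ and $d(B(F_{n+1})) = d(A(F_n)) = d(F_n) + d(A) = d(B(F_n)) + d(L)$; since $d(L) > 0$, the sequence strictly increases from $n \geq N$ onward and is therefore unbounded.
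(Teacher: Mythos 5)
Your overall skeleton (induct on $n$, use heredity to propagate the recursion property, use the fact that the $q_i$ are variational derivatives, settle adjacent commutation via Proposition $0.3$/Lemma $3.15$) matches the paper, but the two steps you leave to ``parity bookkeeping'' and to a ``standard Lenard--Magri shift'' are exactly where the real work lies, and as stated both arguments have genuine gaps.

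First, the extension step. Your primary route reduces $A(F_n)\in \im B$ to showing that $X_{A(F_n)}(h_i)$ (equivalently $q_i\,A(F_n)$) is a total derivative, and proposes to get this from a parity count: $A(F_n)$ odd, $h_i$ even, $\partial$ parity-reversing. But oddness does not imply exactness: $\partial\mc V\cap\mc V_{\bar 1}=\partial(\mc V_{\bar 0})$ is a proper subspace of $\mc V_{\bar 1}$ --- for instance $u'^3$ is odd while $\frac{\delta}{\delta u}(u'^3)=-6u'u''\neq 0$, so $u'^3\notin\partial\mc V$. The paper's Lemma $5.2$ closes this differently: one extracts, via Lemma $4.3$, the non-local part of the recursion identity $\mc L_{p}(L)=0$ for $p=A(F_n)$ and, using $D_{q_i}=D_{q_i}^*$, rewrites it as $\sum_i\{p_i,p\}\,\partial^{-1}q_i=\sum_i p_i\,\partial^{-1}\frac{\delta}{\delta u}(pq_i)$. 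The left side lies in $\mc V\partial^{-1}\mc V_{\bar 0}$ while the right side lies in $\mc V_{\bar 1}\partial^{-1}\mc V_{\bar 1}$ (since $pq_i$ is odd and $\frac{\delta}{\delta u}$ preserves parity); the injectivity of $\mc V\otimes_{\mc C}\mc V\to W_{\mc V}$ (Lemma $4.4$) and the linear independence of the $p_i$ and $q_i$ then force $\frac{\delta}{\delta u}(pq_i)=0$, and normality gives $pq_i\in\partial\mc V$. So the parity argument is applied to $\frac{\delta}{\delta u}(pq_i)$ against $q_i$ inside the rigid weakly non-local decomposition, not to $X_p(h_i)$ directly. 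Your ``auxiliary route'' points in this direction but is not an argument as written.

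Second, the full pairwise commutation. Expanding $(3.18)$ in $\lambda$ gives the three identities $(3.19)$; translating the cross term into brackets yields $\{A(F),B(G)\}+\{B(F),A(G)\}=A(\cdots)+B(\cdots)$, a \emph{sum}, whereas the shift $\{B(F_{m}),B(F_n)\}=\{B(F_{m+1}),B(F_{n-1})\}$ would require controlling the \emph{difference} $\{A(F_m),B(F_n)\}-\{B(F_m),A(F_n)\}$; the classical bi-Hamiltonian shift relies on the skew pairing $\int \xi H(\eta)$, which is not available for the symmetry bracket $(1.10)$. The paper avoids this entirely: it proves that every power $L^k$ is integrable (Proposition $3.13$, a substantial induction on right least common multiples), writes $B(F_n)=B_k(G)$, $B(F_{n+k})=A_k(G)$ for the minimal decomposition $L^k=A_kB_k^{-1}$, and applies Lemma $3.15$ to the integrable pair $(A_k,B_k)$ to get $\{B(F_n),B(F_{n+k})\}=0$. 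Your unboundedness argument is essentially the paper's and is fine, provided you justify $d(A(F_n))=d(E(B(F_n)))$ via the decomposition $B=\frac{1}{q_i}\partial M_i$, $A=EB+\sum_i p_iM_i$ of $(4.39)$.
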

A more complete picture is given by :
\begin{theorem}
Let $L \in (\mc V [\partial])_{\bar{0}}+\mc V_{\bar{1}} \partial^{-1} \mc V_{\bar{0}}$ be integrable. Then $L^k$ is weakly non-local and integrable for all $k \geq 1$. If 
\begin{equation}
L^k=E_k+\sum_{i=1}^{n_k}{p_{ki}\partial^{-1} q_{ki}} \hspace{1 mm},
\end{equation}
where $\{p_{k1},\dots,p_{kn_k}\}$ and $\{q_{k1},\dots,q_{kn_k}\}$ are two sets of linearly independent functions and $E_k$ is a differential operator,
then the functions $p_{ki}$ are odd, $E_k$ is even and $q_{ki}$ are even variational derivatives : $q_{ki}=\frac{\delta \rho_{ki}}{\delta u}$, with even $\rho_{ki}$. Moreover, for all $k,l \geq 0$ and $i,j \in \{1,n_k\} \times \{1,n_l\}$,
\begin{equation}
\begin{split}
 \{p_{ki},p_{lj}\}&=0, \hspace{1 mm} p_{ki} q_{lj}  \in \partial \mc V \hspace{1 mm}, \\
 \rho_{lj} \text{ is a conserved }& \text{density for the equation } u_t=p_{ki}.
\end{split}
\end{equation}
\end{theorem}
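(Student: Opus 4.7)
The plan is to prove all seven assertions of the theorem by a simultaneous induction on $k$, taking the conjunction of weak non-locality, integrability, the parity claims, the variational-derivative form of $q_{ki}$, commutativity of the $p_{ki}$, $p_{ki}q_{lj} \in \partial \mc V$, and conservation of $\rho_{lj}$ along $u_t = p_{ki}$ as the inductive hypothesis. The base case $k=1$ is essentially free: weak non-locality and the parities of $E_1$, $p_{1i}$, $q_{1i}$ are built into the assumption $L \in (\mc V[\partial])_{\bar 0} + \mc V_{\bar 1}\partial^{-1}\mc V_{\bar 0}$, Proposition 0.3 identifies each $q_{1i}$ with a variational derivative $\delta \rho_{1i}/\delta u$, and the evenness of $q_{1i}$ together with parity preservation of $\delta/\delta u$ allows us to choose $\rho_{1i}$ even.

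For the inductive step I would expand $L^{k+1} = L\cdot L^k$ and reduce each cross term of the form $p_i\,\partial^{-1}(q_i p_{kj})\,\partial^{-1} q_{kj}$ via the pseudodifferential identity
\[
\partial^{-1}(\partial h)\,\partial^{-1} \;=\; h\,\partial^{-1} - \partial^{-1} h,
\]
applicable because $q_i p_{kj} \in \partial \mc V$ by part (6) of the inductive hypothesis. Parity is tracked mechanically (odd $\cdot$ odd $=$ even, even $\cdot$ odd $=$ odd, and $\partial^{-1}$ flips parity), so the resulting expression lies in $(\mc V[\partial])_{\bar 0} + \mc V_{\bar 1}\partial^{-1}\mc V_{\bar 0}$, yielding weak non-locality and the parity claims for $L^{k+1}$. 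Integrability of $L^{k+1}$ would follow from that of $L$ and $L^k$ through Proposition 0.2 combined with the defining identity $(0.14)$ of an integrable pair, after which Proposition 0.3 pins each $q_{k+1,i}$ down as $\delta \rho_{k+1,i}/\delta u$ with $\rho_{k+1,i}$ even.

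The commutativity $\{p_{ki},p_{lj}\}=0$ is then obtained from Theorem 0.4 applied to a basis $\{F^{(i)}\}$ of $\ker B \subset \mc K$: each $F^{(i)}$ yields a Lenard-Magri sequence $F^{(i)}_n$ along which the $B(F^{(i)}_n)$ pairwise commute, and after identifying the $p_{ki}$ with linear combinations of the $A(F^{(i)}_{k-1}) = B(F^{(i)}_k)$, cross-sequence commutativity is extracted by polarizing initial data $F^{(i)}_0 + \lambda F^{(j)}_0 \in \ker B$ and matching powers of $\lambda$. Finally, parts (6) and (7) of the theorem are equivalent under $q_{lj} = \delta \rho_{lj}/\delta u$ via the standard identity
\[
\int p_{ki}\,\frac{\delta \rho_{lj}}{\delta u} \;\equiv\; \int X_{p_{ki}}(\rho_{lj}) \pmod{\partial \mc V},
\]
and both are consequences of $\{p_{ki},p_{lj}\}=0$ together with the integrable-pair identity $(0.14)$.

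I expect the main technical obstacle to be the circularity of the bookkeeping: reducing $L^{k+1}$ to weakly non-local form consumes the level-$k$ identities $q_i p_{kj} \in \partial \mc V$ exactly while one is trying to establish their level-$(k+1)$ analogues. I intend to resolve this by splitting the inductive step into two halves, first deducing weak non-locality and integrability of $L^{k+1}$ from the structural parts of the hypothesis at levels $\leq k$, and then deriving commutativity and conservation laws at level $k+1$ from Theorem 0.4 applied to $L^{k+1}$ together with the polarization argument. A subtler point is the polarization itself: since Theorem 0.4 asserts only existence of Lenard-Magri sequences, one must check that the natural sum $F^{(i)}_n + \lambda F^{(j)}_n$ inherits the commutativity property, which I expect to follow from the stronger integrable-pair identity $(0.14)$ rather than from Theorem 0.4 alone.
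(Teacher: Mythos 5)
Your structural induction on $k$ (expanding $L^{k+1}=L\cdot L^k$, reducing the cross terms with $\partial^{-1}(\partial h)\partial^{-1}=h\partial^{-1}-\partial^{-1}h$, tracking parity, and invoking Proposition $4.11$ to identify the $q_{k+1,i}$ as even variational derivatives) is essentially the paper's argument, and the circularity you worry about is resolved there exactly as you propose: the paper first shows $p_{ki}\in\mc W_{k-1}\subset \mathrm{Im}\, B$, which yields $q_j p_{ki}\in\partial\mc V$ before it is needed in the expansion. (One mis-citation: integrability of $L^{k+1}$ is not a consequence of Proposition $0.2$; it is the separate and nontrivial Proposition $3.13$ on powers of integrable operators.)

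The genuine gap is in your commutativity step. Polarizing the initial data $F_0^{(i)}+\lambda F_0^{(j)}$ and extracting the coefficient of $\lambda$ gives only
\begin{equation*}
\{B(F_n^{(i)}),B(F_m^{(j)})\}+\{B(F_n^{(j)}),B(F_m^{(i)})\}=0,
\end{equation*}
which is the symmetrization of the cross-bracket, not its vanishing. Worse, when $n=m$ this identity is automatic from skewsymmetry of $\{\cdot,\cdot\}$ and carries no information, so your argument does not even yield $\{p_{1i},p_{1j}\}=0$ for $i\neq j$. The paper obtains (a) and (b) by a completely different mechanism, namely the parity argument of Lemma $5.2$: since $L^l$ is recursion for the odd function $p=p_{ki}$, the non-local part of $\mc L_{p}(L^l)=0$ reads
\begin{equation*}
\sum_j\{p_{lj},p\}\,\partial^{-1}q_{lj}=\sum_j p_{lj}\,\partial^{-1}\tfrac{\delta}{\delta u}(p\,q_{lj}),
\end{equation*}
and because $q_{lj}$ is even while $\tfrac{\delta}{\delta u}(p\,q_{lj})$ is odd, the direct-sum structure of $W_{\mc V}\simeq \mc V[\partial]\oplus(\mc V\otimes_{\mc C}\mc V)$ (Lemma $4.4$) forces both sides to vanish termwise, giving $\{p,p_{lj}\}=0$ and $\tfrac{\delta}{\delta u}(p\,q_{lj})=0$ simultaneously; the latter gives $p\,q_{lj}\in\partial\mc V$ by Lemma $1.10$, which your plan also leaves unsupported (it does not follow from commutativity alone). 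If you insist on a Lenard--Magri route, the fix is not polarization but the freedom of the scheme: insert the kernel element $F_0^{(j)}$ at step $N>n$ of the $i$-th sequence, i.e.\ use the valid sequence $\tilde F_{N+m}=F_{N+m}^{(i)}+F_m^{(j)}$, whose internal commutativity then isolates $\{B(F_n^{(i)}),B(F_m^{(j)})\}=0$ after subtracting the already-known pure term. But the paper's parity argument is both what proves (a) and the only proposed source of (b), so it cannot be bypassed as written.
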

Theorems $0.4$ and $0.5$ also hold when the non-local part of $L$ lies in $\mc V_{\bar{0}} \partial^{-1} \mc V_{\bar{1}}$. Note also that by Proposition $0.3$ one could replace the assumption $L$ integrable by : $L$ is hereditary and the $q_i$'s are variational derivatives.
Theorem $0.5$ was proved in [EOR93] for the KdV equation.
\\
\indent
In section $6$, we explain how Theorem $0.4$ reproves the integrability of most of the integrable equations for which a recursion operator is known. In particular, the Korteweg-de Vries (KdV) equation
\begin{equation}
u_t=u'''+3uu'.
\end{equation}
satisfies the hypothesis of Theorems $0.4$ and $0.5$. Indeed, it admits the simple recursion operator
\begin{equation}
L_{KdV}=\partial^2+2u+u'\partial^{-1},
\end{equation}
which is known to be hereditary, as a ratio of two compatible local Poisson structures. Furthermore, the algebra $R$ of differential polynomials in $u$ admits a decomposition as in $(0.21)$ by declaring $u$ to be even and $\partial$ to be odd. For this splitting, $u'$ is odd, $\partial^2+u$ is an even operator and $1$ is an even variational derivative, hence $L_{KdV}$ is integrable by Propostition $0.3$. Therefore we can apply Theorem $0.4$ starting at $F_0=1$ and obtain an integrable hierarchy containing KdV, which we get at the second step of the Lenard-Magri scheme. Here the differential orders go to $+ \infty$ because $d(KdV)=3$ is greater than the differential orders of the coefficients of $L_{KdV}$.
\\
\indent
Similarly, the integrability of the Krichever-Novikov (KN) equation 
\begin{equation}
u_t=u'''-\frac{3}{2}\frac{u''^2}{u'}+\frac{P(u)}{u'}, \hspace{2 mm}  \text{where} \hspace{2 mm}  \frac{d^5 P}{du^5}=0 \hspace{1 mm} ,
\end{equation}
follows from Theorems $0.4$ and $0.5$. In [DS08], Demskoi and Sokolov exhibit a recursion operator for $(KN)$ of the form
\begin{equation}
L_{KN}=\partial^4+a_1\partial^3+a_2\partial^2+a_3\partial+a_4 +G_1 \partial^{-1} \frac{\delta \rho_1}{\delta u}+u'\partial^{-1} \frac{\delta \rho_2}{\delta u}.
\end{equation}
The space of Laurent differential polynomials in $u$, $\mc A=\mathbb{C}[u^{\pm 1},u'^{\pm 1},...]$ admits a decomposition as in $(0.21)$ by declaring $u$ to be even and $\partial$ to be odd. 
From the explicit formulas given in [DS08], it is straighforward to check that $a_i$ has the same parity as $i$ for $i=1,\dots,4$ and that $\rho_i$ are even for $i=1,2$. Hence, the local part of $L_{KN}$ is even and so are the functions $\frac{\delta \rho_i}{\delta u}$, since variational derivatives preserve parity. Moreover, $G_1$ is the equation (KN) itself, which is odd. Finally, one checks that $L_{KN}$ is hereditary and that the condition on the degrees ( last part of Theorem $0.4$) is met, hence $(0.26)$ is integrable, by Theorem $0.4$.
\\
\indent
 As we noted earlier, not all integrable systems admit a weakly non-local recursion operator. For instance, the Calogero-Degasperis-Ibragimov-Shabat (CDIS) equation,
\begin{equation}
\frac{du}{dt}=u'''+3u^2u''+9uu'^2+3u^4u' \hspace{1 mm} ,
\end{equation}
has a rational integrable recursion operator which is rational, but not weakly non-local :
\begin{equation}
L_{CDIS}=\frac{1}{u} \partial (\partial +2u^2)^{-1}(\partial+u^2-\frac{u'}{u})^2(\partial+2u^2){\partial}^{-1}u \hspace{1 mm} .
\end{equation}
\indent
For this particular equation, it is not hard to check that we can apply $L_{CDIS}$ infinitely many times to $u'$ and that $L_{CDIS}$ is integrable. Since $CDIS=L_{CDIS}(u')$, we conlude from Theorem $0.1$ that $CDIS$ is integrable. In the last section of the paper we show how to use our techniques to prove integrability of all equations from the classification list of [SW09].
\\
\indent
We conjecture that every integrable system comes with an integrable rational recursion operator :

\begin{conjecture}
Let $\mc W \subset \mc V$ be an integrable system. Then there exists an integrable rational operator $L=AB^{-1}$ such that 
\begin{equation}
\mc L_{F}(L)=0  \hspace{2 mm} \forall F \in \mc W \hspace{1 mm} .
\end{equation}
\end{conjecture}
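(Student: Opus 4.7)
The plan is to combine the Ibragimov-Shabat existence theorem recalled in Section 2 with Theorem 0.1. Given an integrable system $\mc W \subset \mc V$, first fix $F \in \mc W$ of positive differential order and use Ibragimov-Shabat to produce a rational recursion operator $L$ with $\mc L_F(L)=0$. Writing $L=AB^{-1}$ in a right minimal fractional decomposition, the goal reduces to showing that the pair $(A,B)$ can be chosen to be integrable in the sense of $(0.14)$; once $(A,B)$ is integrable, the hereditary property forces $\mc L_G(L)=0$ for every $G$ in the centralizer of $F$, and in particular for all $G \in \mc W$.

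The strategy is to realize a large enough subspace of $\mc W$ as a Lenard-Magri sequence for $(A,B)$ and invoke Theorem 0.1. Concretely, I would try to show that $L$ preserves $\mc W$, i.e.\ that any $G \in \mc W$ can be written as $A(H)$ with $B(H) \in \mc W$, and then iterate to produce $(H_n)_{n\geq 0}$ with $B(H_{n+1})=A(H_n)$ and $B(H_n) \in \mc W$. Since $\mc W$ contains elements of arbitrarily high order and iterating $L$ generically raises differential orders, the $B(H_n)$ should span an infinite-dimensional $\mc C$-subspace of $\mc W$. Pairwise commutativity of the $B(H_n)$ is free because $\mc W$ is abelian, and $L$ is recursion for $B(H_0)=F$ by construction, so Theorem 0.1 forces $(A,B)$ to be integrable, as required.

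The main obstacle is the construction of a Lenard-Magri sequence inside $\mc W$ from an arbitrary Ibragimov-Shabat recursion operator. In the weakly non-local setting, Theorems 0.4 and 0.5 provide exactly this kind of lifting starting from any $F_0 \in \ker B$ and even describe the resulting hierarchy completely, but outside this class no analogous construction is available. The operator $L_{CDIS}$ in $(0.29)$ and the discussion in Section 6 show that the phenomenon persists well beyond the weakly non-local case, yet a general recipe for choosing $F_0 \in \mc V$ and for controlling the ambiguity in $L(G) \bmod A(\ker B)$ so as to stay inside $\mc V$ is missing. Devising such a recipe, perhaps by extending the parity techniques of Section 5 to arbitrary rational operators, is what I expect to be the hard part.

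A plausible intermediate reduction is to assume the parity decomposition $(0.21)$ together with the existence of odd elements of arbitrarily high order in $\mc W$, and to show that some power $L^k$ of the Ibragimov-Shabat operator is weakly non-local; Theorem 0.5 would then apply to $L^k$ and produce the required sequence, after which the conjecture for $L$ itself would follow by identifying $L$-hierarchies as subhierarchies of $L^k$. Establishing weak non-locality of $L^k$ would presumably follow from a dimension count on the right kernels of successive fractional decompositions, using the strongly non-degenerate condition $(0.18)$ as a target; this already seems to require a strengthening of the techniques developed in the paper.
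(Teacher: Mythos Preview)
The statement you are attempting to prove is labelled a \emph{Conjecture} in the paper; the paper offers no proof, only the evidence gathered in Sections~3--6. So there is nothing to compare your proposal against: the author explicitly leaves this open.

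That said, your proposed approach has real gaps beyond the ones you flag. The Ibragimov--Shabat result you invoke is Proposition~2.4, and it only produces a pseudodifferential operator $L\in\mc V((\partial^{-1}))$, built as a limit of the $\phi_n(D_{F_n}^{1/d_{F_n}})$; nothing in that construction forces $L$ to be rational, so the very first step---``Writing $L=AB^{-1}$ in a right minimal fractional decomposition''---is unjustified. Second, your assertion that ``once $(A,B)$ is integrable, the hereditary property forces $\mc L_G(L)=0$ for every $G$ in the centralizer of $F$'' is not supported by anything in the paper: hereditariness (Lemma~2.15) only propagates the recursion property along a Lenard--Magri chain $B(F)\mapsto A(F)$, not to an arbitrary element commuting with $F$. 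Proposition~2.18 does go from a Lenard--Magri sequence to $L$ being recursion for every $B(H_n)$, but that is exactly the sequence whose existence you have not established. Finally, your intermediate reduction (some power $L^k$ is weakly non-local) is itself speculative; even in the weakly non-local case the paper needs the parity hypothesis of Section~5 to run the argument, and you offer no mechanism for obtaining either weak non-locality or such a splitting in general. In short, your outline identifies the right ingredients (Ibragimov--Shabat plus Theorem~0.1), but each of the bridges between them is missing, which is precisely why the statement remains a conjecture.
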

Our results can be naturally extended to the case of several field variables. We plan on tackling these extensions in a future publication. 
\\
\indent
I would like to thank Victor Kac for suggesting the problem and for many helpful discussions. I am also very thankful to Alberto De Sole for his careful and patient reading of this paper.
\\
\\
\\
\\
\\
\\
\\
\\
\\
\vspace{18 cm}

\section{Preliminaries}
 We begin by recalling a few properties of algebras of differential functions, differential operators, Frechet derivatives and evolutionnary vector fields. For a more complete introduction to these notions, we refer the reader to [BDSK09] (for algebras of differential functions) and [MS08] (for Frechet derivatives and evolutionnary vector fields).
\subsection{Algebras of differential functions}
hey\\
\\
The basic differential algebra that we consider here is the algebra of differential polynomials in $u$, namely
\begin{equation}
R=\mathbb{C}[u,u',u'',\dots] \hspace{1 mm}.
\end{equation}
The total derivative $\partial$ on $R$ is defined by letting $\partial(u^{(n)})=u^{(n+1)}$ for all $n \geq 0$. An \textit{algebra of differential functions in u} is an extension of $R$ in the following sense :

\begin{definition}
An algebra of differential functions $\mc V$ in the variable $u$ is an algebra extension of $R$ endowed with derivations $(\frac{\partial}{\partial u^{(n)}})_{n \geq 0}$ and $\frac{\partial}{\partial x}$ such that
\begin{enumerate}
\item[(1)] $(\frac{\partial}{\partial u^{(n)}})_{ n \geq 0}$  extend the partial derivatives in $R$, \\
\item[(2)] The derivations $(\frac{\partial}{\partial u^{(n)}})_{ n \geq 0}$, $\frac{\partial}{\partial x}$ pairwise commute,\\
\item[(3)] For all $f \in \mc V$, $\frac{\partial f}{\partial u^{(n)}}=0$ \hspace{2 mm} for all but finitely many $n$, \\
\item[(4)] $\frac{\partial P}{\partial x}=0 \hspace{2 mm} \forall P \in R \hspace{1 mm}.$
\end{enumerate}
We define the total derivative $\partial$ on $\mc V$ by the following formula
\begin{equation}
\partial=\sum_{ n \geq 0} {u^{(n+1)}\frac{\partial}{\partial u^{(n)}}}+\frac{\partial}{\partial x} \hspace{1 mm},
\end{equation}
\end{definition}
which estends $\partial$ from $R$. For $f \in \mc V$ and $n \geq 0$, we will often write $f^{(n)}$ instead of $\partial^n(f)$, and $f, f',f'',...$ instead of $f, \partial(f), \partial^2(f),...$.
\\
\indent
Typical examples of algebras of differential functions that we will consider are: the algebra
of differential polynomials itself, any localization of it by some
element $f\in R$, or, more generally, by some multiplicative subset $S \subset R$, such as the whole field
of fractions $Q = \mathbb{C}(u^{(n)}| n \geq 0)$, or any algebraic extension of the algebra $R$ or of the
field $Q$ obtained by adding a solution of certain polynomial equation. An example of the latter
type is $\mc V = \mathbb{C}[\sqrt{u}^{\pm}, u', u'', . . . ]$  , obtained by starting
from the algebra $R$, adding the square root of the
element $u$, and localizing by $\sqrt{u}$.
\\
\indent
In all the sequel,  let $\mc V$ be an algebra of differential functions. 
\begin{definition}
Let $F \in \mc V$. We call differential order of $F$, which will be denoted by $d(F)$ or $d_F$, the following quantity :
\begin{equation}
d(F)=\max \{n \geq 0|  \frac{\partial F}{\partial u^{(n)}} \neq 0\} \hspace{1 mm}.
\end{equation}
If the set on the RHS of $(1.3)$ is empty, we call $F$ a \textit{quasiconstant} of the algebra $\mc V$. If $F$ is a quasiconstant such that $\frac{\partial F}{\partial x}=0$, we say that $F$ is a \textit{constant} of $\mc V$. We will denote the subset of quasiconstants (resp. constants) by $\mc Q_{\mc V}$ (resp. $\mc C_{\mc V}$), or when there is no confusion by $\mc Q$ (resp $\mc C$). We will always assume that $\mc C_{\mc V}$ is an algebraically closed field.
\end{definition}
 
\begin{remark}
For all $F\in \mc V$ such that $d(F) \geq 0$, we have $d(F')=d(F)+1$. This follows immediately from $(1.2)$ and the fact that partial derivatives do not increase the differential order. The latter fact holds because partial derivatives commute. In particular, $F' \neq 0$. Therefore 
\begin{equation}
Ker  \hspace{1 mm} \partial=\mc C_{\mc V} \hspace{1 mm}.
\end{equation}
\end{remark}

We denote the canonical projection $\mc V \rightarrow \mc V /{\partial \mc V}$ by $\int$. This notation is justified by the integration by parts property
\begin{equation}
\int{F'G}=-\int FG' \hspace{2 mm} \forall F,G \in \mc V \hspace{1 mm}.
\end{equation}
\indent
Given a function $F \in \mc V$ of differential order $N$ we call \textit{evolution equation associated to F} the following equation :
\begin{equation}
\frac{du}{dt}=F(u,u',\dots,u^{(N)}) \hspace{1 mm}.
\end{equation}
We also associate to $F$ a derivation on $\mc V$
\begin{equation}
X_F \equiv \sum_{ n \geq 0}{F^{(n)}\frac{\partial}{\partial u^{(n)}}} \hspace{1 mm},
\end{equation}
which is called the \textit{evolutionnary vector field} associated to $F$. Conversely, $F$ is said to be the characteristic function of $X_F$. 
\begin{lemma}
\begin{enumerate} 
\item[(1)] For any $F \in \mc V$, $[X_F,\partial]=0$.\\
\item[(2)] For all $F,G \in \mc V$, $[X_F,X_G]=X_{X_F(G)-X_G(F)}$.
\end{enumerate}
\end{lemma}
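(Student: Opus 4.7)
The plan is to prove both items by direct computation on an arbitrary $f \in \mc V$, exploiting that $X_F$, $X_G$, and $\partial$ are all derivations, together with axiom $(2)$ of Definition 1.1 (commutativity of the partial derivatives $\frac{\partial}{\partial u^{(n)}}$ among themselves and with $\frac{\partial}{\partial x}$). Part (1) is proved first and then used as the key technical input in part (2).

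For part (1), I would apply $X_F$ and $\partial$ to $f \in \mc V$ in the two orders using the explicit formula $(1.2)$. Expanding $X_F(\partial f)$ by the Leibniz rule produces three kinds of terms: a single sum $\sum_n F^{(n+1)} \frac{\partial f}{\partial u^{(n)}}$, a double sum $\sum_{n,m} u^{(n+1)} F^{(m)} \frac{\partial^2 f}{\partial u^{(m)} \partial u^{(n)}}$, and a mixed term $\sum_m F^{(m)} \frac{\partial^2 f}{\partial u^{(m)} \partial x}$. Expanding $\partial(X_F f)$ using $(1.7)$ and the formula for $\partial$ yields terms of exactly the same three shapes. The single sums agree by inspection; the double sums coincide after swapping the names of $m$ and $n$, which is allowed by symmetry of mixed partial derivatives; and the two mixed terms are equal because $\frac{\partial}{\partial x}$ commutes with each $\frac{\partial}{\partial u^{(n)}}$. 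Hence $[X_F,\partial](f)=0$.

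For part (2), I would compute $X_F(X_G(f))$ directly:
\begin{equation*}
X_F(X_G(f)) = \sum_n X_F(G^{(n)})\,\tfrac{\partial f}{\partial u^{(n)}} + \sum_{n,m} G^{(n)} F^{(m)} \tfrac{\partial^2 f}{\partial u^{(m)} \partial u^{(n)}} .
\end{equation*}
Using part (1), $X_F(G^{(n)}) = X_F(\partial^n G) = \partial^n(X_F(G))$. A symmetric computation for $X_G(X_F(f))$ produces the analogous expression with $F$ and $G$ exchanged; the double sums cancel when subtracted because swapping $m \leftrightarrow n$ uses only the symmetry of the mixed partials. What remains is
\begin{equation*}
[X_F,X_G](f) = \sum_n \partial^n\bigl(X_F(G)-X_G(F)\bigr)\,\tfrac{\partial f}{\partial u^{(n)}},
\end{equation*}
which is by definition $X_{X_F(G)-X_G(F)}(f)$.

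I do not expect any real obstacle here: these are routine derivation calculations, and the only subtlety is bookkeeping in the double sums and an explicit appeal to the commutativity axiom to collapse the mixed partials. The one genuinely structural input is that part (1) is needed to rewrite $X_F(G^{(n)})$ as $\partial^n(X_F(G))$ inside the first sum of part (2); without it the formula for $[X_F,X_G]$ would not be manifestly in evolutionary form.
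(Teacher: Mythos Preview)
Your proof is correct. For part (1) the paper packages the same computation via the commutator identity $[\tfrac{\partial}{\partial u^{(n+1)}},\partial]=\tfrac{\partial}{\partial u^{(n)}}$, which produces a telescoping sum rather than your three-way term comparison; the content is the same.

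For part (2) there is a small but genuine difference in strategy. You compute $[X_F,X_G](f)$ explicitly, cancel the symmetric double sum, and then invoke part (1) pointwise to turn $X_F(G^{(n)})$ into $(X_F(G))^{(n)}$. The paper instead argues more structurally: once the second-order terms cancel one knows only that $[X_F,X_G]=\sum_n G_n\,\tfrac{\partial}{\partial u^{(n)}}$ for some coefficients $G_n$; then part (1) (applied to both $X_F$ and $X_G$) shows this commutator commutes with $\partial$, which forces $G_n=G_0^{(n)}$, and finally $G_0$ is read off by evaluating at $u$. Your approach is more hands-on and gives the characteristic directly; the paper's approach trades the explicit bookkeeping for the cleaner principle ``a first-order derivation commuting with $\partial$ is automatically evolutionary, and is determined by its value at $u$.''
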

\begin{proof}
$(1)$ is a consequence of $(1.7)$ and the following identity
\begin{equation}
[\frac{\partial}{\partial u^{(n+1)}},\partial]=\frac{\partial}{\partial u^{(n)}} \hspace{2 mm} \forall n \geq 0 .
\end{equation}
which is immediate from $(1.2)$.
The terms involving products of partial derivatives vanishing in the commutator $[X_F,X_G]$, there are some functions $G_{n} \in \mc V$ such that
\begin{equation}
[X_F,X_G]=\sum_{n}{G_{n}\frac{\partial}{\partial u^{(n)}}}.
\end{equation}
\indent
From $(1)$ we know that $\partial$ commutes with $[X_F,X_G]$. Hence for all $n \geq 0$, $G_{n}={G_{0}}^{(n)}$ and $[X_F,X_G]=X_{G_0}$. Moreover, $G_{0}=[X_F,X_G](u)=X_F(G)-X_G(F)$, which completes the proof.
\end{proof}

From the second part of Lemma $1.4$, it follows that the following bracket is a Lie bracket on $\mc V$ :
\begin{equation}
\{F,G\} \equiv X_F(G)-X_G(F).
\end{equation}
Indeed $(1.10)$ is obviously bilinear over the constants and skewsymmetric. Furthermore, Jacobi identity holds because of $(2)$ in Lemma $1.4$.

\begin{definition}
$F, G \in \mc V$ are said to be symmetries of one another, or to commute, if 
\begin{equation}
\{F,G \}=0.
\end{equation}
\end{definition}

\begin{remark}
The use of the word symmetry is an abuse of language. What $(1.11)$ means is that, if $u$ is a solution of the equation $(1.6)$, then $u+ \epsilon G$ is a solution of $(1.6)$ as well modulo $\epsilon^2$ if and only if $(1.11)$ holds. A more rigorous denomination is generator of an infinitesimal symmetry. 
\end{remark}

\begin{definition}
An integrable system on $\mc V$ is an abelian subalgebra $\mathfrak g $ of $(\mc V, \{.,.\})$ such that the set of differential orders of elements of $\mathfrak g$ is unbounded.
\end{definition}

\begin{definition}
Given $F \in \mc V$, we say that $\rho \in \mc V$ is a conserved density of the evolution equation $\frac{du}{dt}=F$, or simply of $F$, if
\begin{equation}
\int X_F(\rho)=0.
\end{equation}
When $\rho \in \partial \mc V$, $(1.12)$ holds because the derivation $X_F$ commutes with $\partial$. In that case, we say that $\rho$ is a trivial conserved density.
By integration by parts, $(1.12)$ is equivalent to
\begin{equation}
\int \frac{\delta \rho}{\delta u}.F=0,
\end{equation}
where the function $\frac{\delta \rho}{\delta u}$ is called the variational derivative of $\rho$. It is given by
\begin{equation}
 \frac{\delta \rho}{\delta u} \equiv \sum_n (-\partial)^n(\frac{\partial \rho}{\partial u^{(n)}}).
\end{equation}
\end{definition}
\begin{definition}
For $n \geq 0$ let $\mc V_n$ be the subset of functions in $\mc V$ whose differential order is at most $n$.
We say that $\mc V$ is normal if $\frac{\partial}{\partial u^{(n)}}(\mc V_{n})=\mc V_{n}$ for all $n \geq 0$ and $\frac{\partial}{\partial x}(\mc Q_{\mc V})=\mc Q_{\mc V}$.
\end{definition}

\begin{lemma}
If $\mc V$ is normal, then
\begin{equation} 
Ker \frac{\delta}{\delta u}=\partial \mc V .
\end{equation}
\end{lemma}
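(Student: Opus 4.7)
The plan is to prove the two inclusions separately, using induction on the differential order for the harder one.

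For the inclusion $\partial \mc V \subseteq \ker \frac{\delta}{\delta u}$, I would compute directly. The key identity is $[\frac{\partial}{\partial u^{(n)}}, \partial] = \frac{\partial}{\partial u^{(n-1)}}$ (with the convention that $\frac{\partial}{\partial u^{(-1)}} = 0$), which was already recorded as (1.8). Applied to $F = \partial G$, one gets $\frac{\partial F}{\partial u^{(n)}} = \partial \frac{\partial G}{\partial u^{(n)}} + \frac{\partial G}{\partial u^{(n-1)}}$, and plugging this into the formula (1.14) for $\frac{\delta F}{\delta u}$, the sum telescopes to zero after using $(-\partial)^n \partial = -(-\partial)^{n+1}$ and reindexing. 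This handles one inclusion without needing normality.

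For the reverse inclusion $\ker \frac{\delta}{\delta u} \subseteq \partial \mc V$, I would induct on $d(F)$. Given $F$ with $d(F) = n \geq 1$ and $\frac{\delta F}{\delta u} = 0$, the first step is to analyze what this condition forces on the top of $F$. Setting $a = \frac{\partial F}{\partial u^{(n)}}$, the leading term (in terms of highest total differential order) of $\frac{\delta F}{\delta u}$ is $(-1)^n \frac{\partial a}{\partial u^{(n)}} u^{(2n)}$, which can only be killed by cancellation from lower $k$ if it is already zero. Hence $\frac{\partial^2 F}{(\partial u^{(n)})^2} = 0$, i.e. $F$ is affine in $u^{(n)}$: $F = A u^{(n)} + B$ with $A, B \in \mc V_{n-1}$. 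Now I invoke normality: since $A \in \mc V_{n-1}$ and $\frac{\partial}{\partial u^{(n-1)}}(\mc V_{n-1}) = \mc V_{n-1}$, I can pick $G \in \mc V_{n-1}$ with $\frac{\partial G}{\partial u^{(n-1)}} = A$. A quick look at the formula (1.2) for $\partial G$ shows that its coefficient of $u^{(n)}$ is exactly $A$, so $F - \partial G \in \mc V_{n-1}$. Moreover $F - \partial G$ still lies in $\ker \frac{\delta}{\delta u}$ by the first inclusion, so I can apply the inductive hypothesis.

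The main obstacle is making sure the induction terminates properly and that the base case is correctly identified. When the reduction drops below order $0$, what remains is a quasiconstant $H$ with $\frac{\delta H}{\delta u} = 0$ automatically. I need to write $H$ as $\partial K$ for some $K \in \mc V$. Since $H$ is quasiconstant, $\partial K = \frac{\partial K}{\partial x}$ whenever $K$ is chosen quasiconstant, and the other normality hypothesis $\frac{\partial}{\partial x}(\mc Q_{\mc V}) = \mc Q_{\mc V}$ is exactly what produces such a $K \in \mc Q_{\mc V}$. One also needs to check that the case $d(F) = 0$ is vacuous: if $d(F) = 0$ then $\frac{\delta F}{\delta u} = \frac{\partial F}{\partial u}$, which is nonzero by definition of $d(F) = 0$, so no such $F$ appears. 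Putting these pieces together finishes the induction and yields $F \in \partial \mc V$.
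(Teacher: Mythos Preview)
Your proof is correct and is essentially the standard argument---the one in [BDSK09], Proposition~1.5, which the paper simply cites rather than reproving. The paper's only added remark, that $\mc Q_{\mc V}\subset\partial\mc V$ in this setting, is exactly your base case handled via the normality hypothesis $\frac{\partial}{\partial x}(\mc Q_{\mc V})=\mc Q_{\mc V}$.
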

\begin{proof}
See Proposition $1.5$ in [BDSK]. In our case $\mc Q_{\mc V} \subset \partial \mc V$.
\end{proof}

\subsection{Differential operators}
h 
\\
\\
In this subsection, we will be reviewing definitions and results from $[CDSK12]$ and $[CDSK13]$. We refer the reader to these papers for the proofs. Let $\mc K$ be a differential field with subfield of constants $\mc C$.

\begin{definition}
A differential operator $A(\partial)$ on $\mc K$ is an element of the algebra $\mc K[\partial]$ in which multiplication is defined by $\partial a=a \partial +a'$ for every $a \in \mc K$. A pseudodifferential operator $A(\partial)$ on $\mc K$ is an element of the algebra 
$\mathcal{K}((\partial^{-1}))$ in which multiplication is defined by $\partial^{-1}a=\sum_{n\geq 0} { (-1)^n a^{(n)} \partial^{-n-1}}$ for all $a \in \mathcal{V}$. The algebra $\mc K ((\partial^{-1}))$ is a skewfield. A rational ( pseudodifferential ) operator on $\mathcal{K}$ $A(\partial)$ is a pseudodifferential operator which can be written as the ratio of two differential operators $B(\partial)$ and $C(\partial)$ : $A=BC^{-1}$. We denote the set of rational operators on $\mc K$ by $\mc K (\partial)$. The following inclusions are obvious
\begin{equation}
\mc K[\partial] \subset \mc K(\partial) \subset \mc K ((\partial^{-1}))
\end{equation}
If $L=\sum_{-\infty}^N{l_n \partial^{n}} \in \mc K((\partial^{-1}))$ with $l_N \neq 0$, we call $N=d(L)$ the degree of $L$.
\end{definition}

\begin{lemma}
$\mc K(\partial)$ is a field, which can be described also in terms of left fractions :
\begin{equation}
\mc K ( \partial)=\{AB^{-1}|(A,B) \in \mc K[\partial] \times \mc K[\partial]^{\times}\}=\{D^{-1}C|(C,D) \in \mc K[\partial] \times \mc K[\partial]^{\times}\}
\end{equation}
\end{lemma}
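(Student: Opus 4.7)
The plan is to establish that $\mc K[\partial]$ satisfies the (right and left) Ore condition, and then deduce both the field structure of $\mc K(\partial)$ and the equivalence of left and right fraction descriptions as routine consequences.

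First I would prove the \emph{right Ore condition}: for every pair of nonzero $A,B \in \mc K[\partial]$, there exist nonzero $A_1, B_1 \in \mc K[\partial]$ with $A B_1 = B A_1$. The argument is a dimension count in $\mc K$-vector spaces. Let $\mc K[\partial]_{\leq n}$ denote operators of degree at most $n$, which is $(n+1)$-dimensional over $\mc K$. Since $\mc K[\partial]$ is a domain, the right-multiplication maps $P \mapsto AP$ and $P \mapsto BP$ embed $\mc K[\partial]_{\leq n}$ into $\mc K[\partial]_{\leq n + \deg A}$ and $\mc K[\partial]_{\leq n + \deg B}$ respectively. Setting $m = \max(\deg A, \deg B)$, both images lie in $\mc K[\partial]_{\leq n + m}$, which has dimension $n+m+1$. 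Since $2(n+1) > n+m+1$ for $n$ large enough, the two images intersect nontrivially, which produces the required identity $A B_1 = B A_1$ with $B_1 \neq 0$ (and then $A_1 \neq 0$ follows from the domain property).

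Step two is to deduce that the right fractions form a skewfield. Closure under addition: given $A B^{-1}$ and $C D^{-1}$, apply the Ore condition to $(B,D)$ to find nonzero $E,F$ with $BE = DF$, and then
\begin{equation*}
A B^{-1} + C D^{-1} = (AE + CF)(BE)^{-1}.
\end{equation*}
Closure under multiplication: apply the Ore condition to $(C,B)$ to get $C B_1 = B C_1$, so $B^{-1} C = C_1 B_1^{-1}$ and $(A B^{-1})(C D^{-1}) = A C_1 (D B_1)^{-1}$. Closure under inversion: if $A B^{-1} \neq 0$ then $A \neq 0$, and $(A B^{-1})^{-1} = B A^{-1}$, which is rewritten as a right fraction by the same trick.

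Finally, for the equivalence with left fractions, I would establish the symmetric \emph{left Ore condition} ($B_1 A = A_1 B$ with $B_1 \neq 0$) by the same dimension count, using left multiplication instead. This lets me rewrite any right fraction $A B^{-1}$ as a left fraction $B_1^{-1} A_1$ and vice versa, which gives the two descriptions in the statement.

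The main obstacle is the right/left Ore condition of step one, since all of step two is then formal manipulation. The counting argument itself is not deep, but one has to be attentive that the injectivity of $P \mapsto AP$ and $P \mapsto BP$ on the appropriate finite-dimensional pieces follows from $\mc K[\partial]$ being a domain, and that $\deg(AP) = \deg A + \deg P$ for the dimensions to line up as claimed.
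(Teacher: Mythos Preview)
Your proposal is correct and follows the standard route. The paper itself does not prove this lemma: it simply cites Proposition~3.4(a) of [CDSK12], so there is no in-paper argument to compare against. Your dimension-count proof of the Ore condition, followed by the formal verification that right fractions form a skewfield and that left and right fractions coincide, is exactly the classical argument and would be an acceptable self-contained replacement for the citation.

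One point to tighten: you call $P \mapsto AP$ a ``right-multiplication map'', but this is left multiplication by $A$. More importantly, this map is \emph{right} $\mc K$-linear (since $A(Pf) = (AP)f$ for $f \in \mc K$) but not left $\mc K$-linear, so your dimension count must be carried out in the category of right $\mc K$-modules. The spaces $\mc K[\partial]_{\leq n}$ are indeed right $\mc K$-submodules of the correct dimension, and the images $A\,\mc K[\partial]_{\leq n}$ and $B\,\mc K[\partial]_{\leq n}$ are then right $\mc K$-subspaces of dimension $n+1$, so the intersection argument goes through. Just make this explicit. (Symmetrically, for the left Ore condition you use left $\mc K$-linearity of $P \mapsto PA$.) Also, in your inversion step, $BA^{-1}$ is already a right fraction, so no ``trick'' is needed there.
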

\begin{proof}
Part (a) of Proposition $3.4$ in [CDSK12].
\end{proof}

\begin{definition}
For $L \in \mc K(\partial)$ we call right (resp. left) fractional decomposition a pair $(A,B) \in \mc K[\partial]\times K[\partial]^*$ such that $L=AB^{-1}$ (resp. $L=B^{-1}A$).
\end{definition}

\begin{lemma}
Let $L \in \mc K(\partial)$. Then :
\begin{enumerate}
\item[(1)] There exists a right fractional decompostion $(A_0,B_0)$ of $L$ such that
for any other right fractional decomposition $(A,B)$ of $L$ there exists a non-zero differential operator $D$ such that $A=A_0D$ and $B=B_0D$. We call $(A_0,B_0)$ a minimal right fractional decomposition of $L$. \\
\item[(2)]
The analogous statement for left fractions holds.
\\
\item[(3)] Finally,
if $D_0^{-1}C_0$ is a minimal left fractional decomposition of $L$, then $d(D_0)=d(B_0)$.
\end{enumerate}
\end{lemma}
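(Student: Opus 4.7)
The proof exploits the Euclidean structure of $\mc K[\partial]$ (division with remainder available from either side) together with the Ore property providing common left and right multiples. The plan is to prove (1) via degree minimization combined with one-sided Euclidean division, obtain (2) by the mirror argument, and derive (3) by converting a minimal right fractional decomposition into a left one of matching denominator degree through a least common left multiple.

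For (1), select $(A_0, B_0)$ minimizing $d(B_0)$ over all right fractional decompositions of $L$; this minimum exists because degrees lie in $\mathbb{Z}_{\geq 0}$. Given an arbitrary right decomposition $(A, B)$, the Euclidean algorithm in $\mc K[\partial]$ produces $D, S \in \mc K[\partial]$ with $B = B_0 D + S$ and $d(S) < d(B_0)$. Multiplying on the left by $L$ inside $\mc K((\partial^{-1}))$ gives $A = LB = A_0 D + LS$, whence $T := LS = A - A_0 D$ belongs to $\mc K[\partial]$. If $S \neq 0$, then $(T, S)$ would be a right fractional decomposition of $L$ with denominator degree strictly less than $d(B_0)$, contradicting minimality. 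So $S = 0$ and $(A, B) = (A_0 D, B_0 D)$, with $D \neq 0$ since $B \neq 0$. Statement (2) is proved by the mirror argument: take $(D_0, C_0)$ minimizing $d(D_0)$ among left fractional decompositions and use the other form of the Euclidean algorithm, $D = E D_0 + S$ with $d(S) < d(D_0)$.

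For (3), first observe that the minimal $(A_0, B_0)$ is right coprime: if $\mathrm{GCRD}(A_0, B_0)$ were nonconstant, cancelling it would produce a right decomposition of smaller denominator degree. By the Ore property, pick nonzero $A', B' \in \mc K[\partial]$ with $M := B' A_0 = A' B_0$ a generator of the left ideal $\mc K[\partial] A_0 \cap \mc K[\partial] B_0$. Counting $\mc K$-dimensions in the short exact sequence of left $\mc K[\partial]$-modules
\[
0 \to \mc K[\partial]/\mc K[\partial] M \to \mc K[\partial]/\mc K[\partial] A_0 \,\oplus\, \mc K[\partial]/\mc K[\partial] B_0 \to \mc K[\partial]/(\mc K[\partial] A_0 + \mc K[\partial] B_0) \to 0
\]
(each summand $\mc K[\partial]/\mc K[\partial] X$ has $\mc K$-basis $\{1, \partial, \ldots, \partial^{d(X)-1}\}$) yields $d(M) = d(A_0) + d(B_0) - d(\mathrm{GCRD}(A_0, B_0)) = d(A_0) + d(B_0)$. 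Therefore $d(B') = d(M) - d(A_0) = d(B_0)$, and $B' A_0 = A' B_0$ rearranges to $L = B'^{-1} A'$, exhibiting a left fractional decomposition of $L$ of denominator degree $d(B_0)$. By (2), $d(D_0) \leq d(B_0)$, while the symmetric construction applied to $(D_0, C_0)$ produces a right decomposition of denominator degree $d(D_0)$, so by (1), $d(B_0) \leq d(D_0)$; equality follows. The main technical point is the noncommutative degree identity $d(M) + d(\mathrm{GCRD}) = d(A_0) + d(B_0)$ read off from the exact sequence — the rest of the argument is formal once the Euclidean and Ore tools are secured.
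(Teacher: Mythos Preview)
Your argument is correct. The paper itself does not give a proof of this lemma at all: it simply cites Proposition~3.4(b) of [CDSK12] for parts (1) and (2) and Remark~3.8 of [CDSK13] for part (3). What you have written is essentially a self-contained reconstruction of the standard proofs that those references contain. The degree-minimization-plus-Euclidean-division argument for (1) and (2) is exactly the classical one for one-sided Euclidean domains, and your treatment of (3) via the least common left multiple together with the dimension identity $d(\mathrm{LCLM}) + d(\mathrm{GCRD}) = d(A_0) + d(B_0)$, read off from the Mayer--Vietoris short exact sequence of left $\mc K[\partial]$-modules, is the standard route to the equality of left and right minimal denominator degrees. So you have supplied what the paper only pointed to; there is nothing to correct.
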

\begin{proof}
$(1)$ and $(2)$ follows from part (b) of Proposition $3.4$ in [CDSK12]. $(3)$ can be found in Remark $3.8$ of [CDSK13].
\end{proof}

\begin{lemma}
Let $A$ be a non-trivial differential operator on $\mc K$. Then 
\begin{equation}
dim_{\mc C} ker A \leq d(A).
\end{equation}
Moreover, $Im A$ is infinite-dimensional over $\mc C$.
\end{lemma}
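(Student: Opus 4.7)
My plan is to reduce both assertions to a Wronskian non-vanishing statement for $\mathcal{C}$-linearly independent families in $\mathcal{K}$. Write $A = \sum_{i=0}^{d} a_i \partial^i$ with $a_d \neq 0$, where $d = d(A)$. If $f_1, \dots, f_n \in \ker A$, the equations $A(f_j) = 0$ say precisely that the non-zero row vector $(a_0, \dots, a_d) \in \mathcal{K}^{d+1}$ belongs to the left kernel of the matrix $M = (f_j^{(i)})_{0 \le i \le d,\ 1 \le j \le n}$. This forces every $(d{+}1) \times (d{+}1)$ submatrix of $M$ to be singular; equivalently, for any choice of $d{+}1$ of the $f_j$'s, the Wronskian $W(f_{j_1}, \dots, f_{j_{d+1}})$ vanishes. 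The inequality $\dim_{\mathcal{C}} \ker A \leq d$ will then follow from the classical fact that the Wronskian of $\mathcal{C}$-linearly independent elements of $\mathcal{K}$ is non-zero.

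I would prove that Wronskian non-vanishing lemma by induction on the number $m$ of functions. The base $m = 1$ is trivial. For the induction step, assume $W(f_1, \dots, f_m) = 0$ with $f_1, \dots, f_m$ all non-zero (which they must be if linearly independent). Using the row-reduction identity
\begin{equation*}
W(f_1, \dots, f_m) \;=\; (-1)^{m+1}\, f_m^{\,m}\, W\!\left(\left(f_1/f_m\right)',\, \dots,\, \left(f_{m-1}/f_m\right)'\right),
\end{equation*}
obtained by clearing the first row of the Wronskian matrix via column operations and then cleaning up lower rows, I would reduce to the vanishing of a Wronskian of $m-1$ elements. By induction, either some $(f_i/f_m)'$ is zero---giving $f_i/f_m \in \mathcal{C}$ and hence a direct relation---or the $(f_i/f_m)'$ are $\mathcal{C}$-linearly dependent; in either case, integrating the resulting relation $\sum_{i<m} c_i (f_i/f_m)' = 0$ and clearing denominators produces a non-trivial $\mathcal{C}$-linear relation among $f_1, \dots, f_m$, contradicting independence.

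For the second assertion, $A \colon \mathcal{K} \to \mathcal{K}$ is $\mathcal{C}$-linear, so $\im A \cong \mathcal{K}/\ker A$ as $\mathcal{C}$-vector spaces. Since $\dim_{\mathcal{C}} \ker A \le d < \infty$ by the first part, it suffices to show $\mathcal{K}$ is infinite-dimensional over $\mathcal{C}$. This holds in the paper's setting because $u$ is transcendental over $\mathcal{C}$: any relation $P(u) = 0$ with $P \in \mathcal{C}[x] \setminus \{0\}$, differentiated repeatedly along $\partial$ (using $u' \ne 0$ and $\partial c = 0$ for $c \in \mathcal{C}$), would force $P^{(k)}(u) = 0$ for all $k$ and thus the leading coefficient of $P$ to vanish. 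Therefore $1, u, u^2, \dots$ are $\mathcal{C}$-linearly independent in $\mathcal{K}$.

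The principal obstacle is the Wronskian non-vanishing lemma, and within it the row-reduction identity above; the latter is a standard but not entirely trivial determinant manipulation whose bookkeeping must be done with care. Everything else---the passage from left-kernel to vanishing minors, the induction on $m$, and the final dimension count via the first isomorphism theorem---is then formal.
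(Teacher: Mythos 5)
Your proof is correct. For the kernel bound the paper gives no argument of its own, citing instead [DSK13, A.3.5(a)], so your self-contained Wronskian proof (the coefficient vector of $A$ lies in the left kernel of the matrix $(f_j^{(i)})$, forcing all $(d{+}1)\times(d{+}1)$ minors to vanish, combined with the non-vanishing of the Wronskian of $\mc C$-independent elements via the row-reduction induction) supplies what the paper delegates to a reference; the one fact you need, $\ker_{\mc K}\partial=\mc C$, is exactly the standing hypothesis on $\mc K$ in Section 1.2, and your sign bookkeeping in the reduction identity is right. For the second assertion the two arguments genuinely diverge: the paper observes in one line that the differential orders of $A(u^{(n)})$, $n\geq 0$, are unbounded, so $\im A$ contains elements of arbitrarily high order and is a fortiori infinite-dimensional over $\mc C$; you instead deduce $\dim_{\mc C}\im A=\dim_{\mc C}\mc K/\ker A=\infty$ from the first part together with the transcendence of $u$ over $\mc C$. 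Both are valid in the paper's setting (either way one must use that $\mc K$ contains $u$ and its derivatives; the statement fails for an arbitrary differential field such as $\mc K=\mc C$). The paper's route is independent of the kernel bound and yields the slightly stronger, and elsewhere useful, fact that $\im A$ has unbounded differential order, whereas yours is purely formal and transfers to settings where explicit elements of large order are not available.
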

\begin{proof}
The set of differential orders of $(A(u^{(n)}))_{n \geq 0}$ is clearly unbounded, hence $Im A$ is a infinite-dimensional space over $\mc C$. As for the statement $(1.18)$, see  $A.3.5$ (a) in [DSK13].
\end{proof}

\begin{definition}
The ring $\mc K[\partial]$ is right and left principal ideal domain ([CDSK13b]). If $A$ and $B$ are differential operators, we call left (resp. right) least common multiple of the pair $(A,B)$ a generator of the left (resp. right) ideal of $\mc K[\partial]$ generated by $A$ and $B$. 
\end{definition}

\begin{lemma}
Let $f_1,\dots,f_n \in \mc V$ be linearly independent over $\mc C$. Then there exists a differential operator $P \in \mc V[\partial]$ with degree $n$ such that Ker P$=\langle f_1,\dots,f_n \rangle$.
\end{lemma}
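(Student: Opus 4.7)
The plan is to construct $P$ explicitly as a bordered Wronskian. For a formal argument $g$, set
\[
P(g) := \det\begin{pmatrix} f_1 & f_2 & \cdots & f_n & g \\ f_1' & f_2' & \cdots & f_n' & g' \\ \vdots & \vdots & & \vdots & \vdots \\ f_1^{(n)} & f_2^{(n)} & \cdots & f_n^{(n)} & g^{(n)} \end{pmatrix}.
\]
Expanding along the last column yields $P = \sum_{k=0}^{n} (-1)^{n+k} W_k\, \partial^k$, where each $W_k$ is the $n \times n$ minor obtained by deleting the last column and the row containing $g^{(k)}$. All entries of these minors lie in $\mc V$, so $P \in \mc V[\partial]$, and the leading coefficient is the Wronskian $W_n = W(f_1, \dots, f_n)$.

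The first step is to check that $W_n \neq 0$. This is the classical fact from differential algebra: in a differential domain, elements linearly independent over the subfield of constants $\mc C$ have nonvanishing Wronskian. It is proved by induction on $n$: assuming $f_1 \neq 0$, the $n-1$ elements $(f_i/f_1)' \in \mc K$ (for $i=2,\dots,n$) are still linearly independent over $\mc C$, and their Wronskian differs from $W(f_1,\dots,f_n)$ only by a nonzero factor $f_1^n$. Consequently $d(P)=n$. Next, plugging $g = f_i$ into the defining matrix produces two identical columns, hence $P(f_i)=0$; this gives the inclusion $\langle f_1, \dots, f_n \rangle \subseteq \ker P$.

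To upgrade the inclusion to equality I would invoke Lemma $1.15$, viewing $P$ as a differential operator on the field of fractions $\mc K$. That lemma bounds $\dim_{\mc C} \ker_{\mc K} P$ by $d(P) = n$. Since $\ker_{\mc V} P \subseteq \ker_{\mc K} P$ already contains the $n$-dimensional subspace $\langle f_1, \dots, f_n\rangle$, both kernels coincide with this subspace, which is exactly what is claimed.

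The only nonroutine input is the nonvanishing of the Wronskian of $\mc C$-linearly independent elements, and this is a standard result in differential algebra; the remainder is linear algebra (determinantal expansion, repeated-column vanishing) together with the already-proved degree bound on kernel dimensions.
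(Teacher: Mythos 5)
Your proof is correct, but it takes a different route from the paper's. The paper argues by induction on $n$: for $n=1$ it takes $P=f_1\partial-f_1'$, and at each step it left-composes the previously built operator $Q$ (with $\ker Q=\langle f_1,\dots,f_n\rangle$) with the first-order operator $Q(f_{n+1})\partial-Q(f_{n+1})'$, then invokes the kernel bound $(1.18)$ exactly as you do at the end. Your bordered-Wronskian determinant produces the same operator in closed form (the two constructions can only differ by a left factor in $\mc K$, since both are degree-$n$ operators with the same $n$-dimensional kernel), and it has the advantage of giving explicit formulas for all the coefficients of $P$, with the repeated-column trick making $\langle f_1,\dots,f_n\rangle\subseteq\ker P$ immediate. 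The price is that you must import the classical nonvanishing of the Wronskian of $\mc C$-linearly independent elements to know that $d(P)=n$; your sketch of that fact via the reduction $W(f_1,\dots,f_n)=f_1^{\,n}\,W\bigl((f_2/f_1)',\dots,(f_n/f_1)'\bigr)$ is the standard one and is fine. The paper's inductive construction avoids needing this external input altogether, because the nonvanishing it requires at each step, namely $Q(f_{n+1})\neq 0$, falls out of the induction hypothesis $\ker Q=\langle f_1,\dots,f_n\rangle$ together with the assumed $\mc C$-independence. Both arguments rely on Lemma $1.15$ in the same way to upgrade the inclusion of kernels to an equality.
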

\begin{proof}
Let us prove the claim by induction on $n$. If $n=1$, we define $P=f_1\partial-f_1'$. It is clear that $P$ has degree $1$ and that its kernel is spanned over $\mc C$ by $f_1$. Let us consider $n+1$ linearly independent functions $f_1,\dots,f_{n+1}$ and let $Q$ be a degree $n$ differential operator whose kernel is spanned by $f_1,\dots,f_n$. By construction of $Q$, $Q(f_{n+1}) \neq 0$, hence we can define $R=(Q(f_{n+1})\partial-Q(f_{n+1})')Q$. We can see that $f_i \in  Ker R$ for $i=1,\dots,n+1$. We conclude that $Ker R$ is spanned by the $f_i$'s using equation $(1.18)$ and noting that $deg(R)=n+1$.
\end{proof}

\subsection{Frechet Derivatives}
hey\\
\\
From now on, we consider an algebra of differential functions $\mc V$ over $u$. Furthermore, we assume that $\mc V$ is a domain and let $\mc K$ be its field of fraction. Recall that $\mc C$ is assumed to be algebraically closed. For a complete discussion on Frechet derivatives, we send the interested reader to [MS08].
\begin{definition}
Let $F \in {\mc V}$. We define the Frechet derivative of $F$ to be the differential operator $D_F$ such that 
\begin{equation}
D_F(G)=X_G(F) \hspace{4 mm} \forall G \in {\mc V}.
\end{equation}
Consequentely, for $F \in \mc V$ we have
\begin{equation}
D_{F}=\sum_{m}{\frac{\partial F}{\partial u^{(m)}}\partial^m}.
\end{equation}
\end{definition}

\begin{definition}
Let $P= \sum_{k=0}^N{p_k \partial^k}$ be a differential operator over $\mc V$ and $F \in \mc V$. We define the differential operator $(D_P)_F$ as follows
\begin{equation}
(D_{P})_F=\sum_{k=1}^{N}{ {F^{(k)} D_{p_k}(\partial)}}.
\end{equation}
If $L= \sum_{-\infty}^N{l_k \partial^k} \in \mc V ((\partial^{-1}))$ and $F \in \mc V$, we let
\begin{equation}
X_F(L)=\sum_{-\infty}^N{X_F(l_k) \partial^k}.
\end{equation}
\end{definition}

\begin{lemma}
The following identities hold for all differential operators $A,B \in \mc V[\partial]$ and functions $F,G \in \mc V$ :
\begin{equation}
D_{A(F)}=AD_F+(D_A)_F.
\end{equation}
\begin{equation}
(D_{AB})_F=(D_A)_{B(F)}+A(D_B)_F.
\end{equation}
\begin{equation}(D_A)_F(G)=X_G(A) (F).
\end{equation}
\end{lemma}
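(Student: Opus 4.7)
My plan is to derive all three identities from a single observation: because the evolutionary vector field $X_G$ commutes with $\partial$ (Lemma 1.4(1)) and is a derivation of $\mc V$, its coefficient-wise extension to $\mc V[\partial]$ (as defined by $(1.22)$ applied to differential operators) is a derivation of the associative algebra $\mc V[\partial]$. In particular, for $A,B \in \mc V[\partial]$ one has $X_G(AB) = X_G(A) B + A\, X_G(B)$, and evaluating an operator on $F\in\mc V$ commutes with coefficient-wise application of $X_G$.

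I would prove identity $(1.25)$ first, because it is essentially a rewriting of the definition $(1.21)$ of $(D_P)_F$. Writing $P=\sum_k p_k\partial^k$, one computes
\begin{equation*}
(D_P)_F(G) \;=\; \sum_k F^{(k)}\, D_{p_k}(G) \;=\; \sum_k F^{(k)}\, X_G(p_k) \;=\; \sum_k X_G(p_k)\, F^{(k)} \;=\; X_G(P)(F),
\end{equation*}
the last equality using the definition $(1.22)$ of $X_G(P)$ and the fact that all the $X_G(p_k)$ and $F^{(k)}$ are functions, so they commute. This gives $(1.25)$.

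Next I would prove $(1.23)$. For any $G\in\mc V$, apply $X_G$ to $A(F)=\sum_k a_k F^{(k)}$ using that $X_G$ is a derivation commuting with $\partial$:
\begin{equation*}
D_{A(F)}(G) \;=\; X_G(A(F)) \;=\; \sum_k X_G(a_k)\, F^{(k)} + \sum_k a_k\, \partial^k X_G(F) \;=\; X_G(A)(F) + A(D_F(G)).
\end{equation*}
By $(1.25)$ the first term equals $(D_A)_F(G)$, and the second is $(AD_F)(G)$, yielding $(1.23)$.

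Finally, for $(1.24)$ I would evaluate both sides on an arbitrary $G\in\mc V$ and use $(1.25)$ twice together with the derivation property of $X_G$ on $\mc V[\partial]$:
\begin{equation*}
(D_{AB})_F(G) \;=\; X_G(AB)(F) \;=\; \bigl(X_G(A)\, B + A\, X_G(B)\bigr)(F) \;=\; X_G(A)(B(F)) + A\bigl(X_G(B)(F)\bigr),
\end{equation*}
which by $(1.25)$ equals $(D_A)_{B(F)}(G) + A\,(D_B)_F(G)$. Since $G$ is arbitrary this gives $(1.24)$. The only mild obstacle is being careful that $X_G$ extended coefficient-wise really is an algebra derivation of $\mc V[\partial]$; once the commutation $[X_G,\partial]=0$ is in hand this is routine, and the three identities fall out in order.
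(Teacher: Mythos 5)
Your proof is correct, but it is organized quite differently from the paper's. The paper proves $(1.23)$ first, purely at the level of Frechet derivatives, from the two Leibniz-type facts $D_{ab}=aD_b+bD_a$ and $D_{a'}=\partial D_a$; it then obtains $(1.24)$ by writing down three specializations of $(1.23)$ (for $D_{AB(F)}$, $D_{A(B(F))}$ and $D_{B(F)}$) and telescoping them, and only at the end deduces $(1.25)$ by comparing $X_G(A(F))=X_G(A)(F)+A(X_G(F))$ with $(1.23)$. You reverse the order: you get $(1.25)$ directly from the definition of $(D_P)_F$ and $D_{p_k}(G)=X_G(p_k)$, then $(1.23)$ from the derivation property of $X_G$ together with $[X_G,\partial]=0$, and then $(1.24)$ from $(1.25)$ plus the fact that the coefficient-wise extension of $X_G$ is a derivation of the associative algebra $\mc V[\partial]$. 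Your route is more unified, since all three identities flow from the single observation that $X_G$ acts as a derivation commuting with $\partial$; its one extra obligation, which you correctly flag, is verifying that this coefficient-wise extension respects the noncommutative product $\partial a = a\partial + a'$ (a short computation using $X_G(a^{(n)})=X_G(a)^{(n)}$). The paper's telescoping trick for $(1.24)$ avoids that verification entirely at the cost of being less transparent. One small point worth noting either way: the definition $(1.21)$ in the paper writes the sum for $(D_P)_F$ starting at $k=1$; your reading of it as a sum over all $k\geq 0$ (so that $D_{p_0}$ contributes) is the one consistent with $(1.23)$ and is clearly what is intended.
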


\begin{proof}
$(1.23)$ follows from two facts. First, $D_{ab}=aD_b+bD_a$ for all functions $a,b \in \mc V$ since partial derivatives are derivations of $\mc V$. Second,we have $D_{a'}=\partial D_a$ for all function $a \in \mc V$, which can be inferred from the identity $(1.8)$.
\\
\indent
Let us consider the three following specializations of $(1.23)$ :
\begin{equation}
\begin{split}
D_{AB(F)}&=ABD_F+(D_{AB})_F,\\
D_{A(B(F))}&=AD_{B(F)} + (D_A)_{B(F)}, \\
D_{B(F)}&=BD_F+(D_B)_F.
\end{split}
\end{equation}
We find $(1.24)$ by multiplying the third line of $(1.26)$ on the left by $A$, adding the second line and substracting the first. 
As for $(1.25)$ we first note that, since $X_G$ is a derivation of $\mc V$ commuting with $\partial$, we have for all $G \in \mc V$
\begin{equation}
X_G(A(F))=X_G(A)(F)+A(X_G(F)).
\end{equation}
Then, using the definition $(1.19)$ of the Frechet derivative, we have
\begin{equation}
D_{A(F)}(G)=X_G(A)(F)+AD_F(G).
\end{equation}
We obtain $(1.25)$ comparing $(1.28)$ with $(1.23)$.
\end{proof}

\begin{definition}
Let $*$ be the anti-involution of $\mc V((\partial^{-1}))$ such that $\partial^{*}=-\partial$ and $F^*=F$ for all $F \in \mc V$. It is well defined since the relation $(0.9)$ is preserved. 
\end{definition}

\begin{lemma}
Let  $\mc V$ be normal and $F \in \mc V$. Then 
$D_F=D_F^*$ if and only if $F$ is a variational derivative :
\begin{equation}
D_F=D_F^* \iff \exists \rho \in \mc V, F=\frac{\delta \rho}{\delta u}.
\end{equation}
\end{lemma}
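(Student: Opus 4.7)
The plan is to prove the two implications separately, with the reverse implication being the more conceptual one and the forward implication relying on a homotopy construction.

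For $(\Leftarrow)$: write $F = \delta \rho/\delta u$ and compute $\int X_H X_G(\rho)$ in two ways. By integration by parts (applied to the definition of $\delta\rho/\delta u$) one has $\int X_K(\rho) = \int K F$ for any $K \in \mc V$, and since $X_H$ commutes with $\partial$ by Lemma~1.4(1) it descends to $\mc V/\partial \mc V$, giving
\begin{equation*}
\int X_H X_G(\rho) = X_H \int G F = \int X_H(G)\, F + \int G\, D_F(H).
\end{equation*}
Swapping $G$ and $H$, subtracting, and using the commutator identity $[X_H, X_G] = X_{\{H,G\}}$ from Lemma~1.4(2) together with the identity $\int X_{\{H,G\}}(\rho) = \int \{H,G\} F$, the bracket-terms cancel and one is left with $\int G\, D_F(H) = \int H\, D_F(G) = \int G\, D_F^*(H)$ for all $G, H \in \mc V$. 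A non-degeneracy argument in the normal algebra $\mc V$ (take $G = u^{(N)}$ with $N$ larger than the order of $(D_F - D_F^*)(H)$, and compare the leading $u^{(N)}$-coefficient of $G\,(D_F - D_F^*)(H)$ with the fact that its integral is zero) then forces $D_F - D_F^* = 0$.

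For $(\Rightarrow)$: assume $D_F = D_F^*$ and use the homotopy formula
\begin{equation*}
\rho = \int_0^1 u\, F^{(t)}\, dt,
\end{equation*}
where $F^{(t)}$ denotes the formal substitution $u^{(n)} \mapsto t u^{(n)}$ in $F$ (viewed as an element of $\mc V[t]$; note that $\partial$ commutes with this substitution, so $(D_F^{*})^{(t)} = ((D_F)^{(t)})^*$). The Leibniz rule gives $D_{uF^{(t)}} = F^{(t)} + u t (D_F)^{(t)}$, so taking the formal adjoint and evaluating on $1$,
\begin{equation*}
\frac{\delta (u F^{(t)})}{\delta u} = F^{(t)} + t\, ((D_F)^{(t)})^{\!*}(u).
\end{equation*}
The hypothesis $D_F = D_F^*$ and the identity $(D_F)^{(t)}(u) = \sum_n u^{(n)} (\partial F/\partial u^{(n)})^{(t)} = \frac{d}{dt} F^{(t)}$ then collapse the integrand to $\frac{d}{dt}(t F^{(t)})$, and integrating from $0$ to $1$ yields $\delta \rho/\delta u = F$.

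The main obstacle is the $(\Rightarrow)$ direction: making sense of the substitution $u^{(n)} \mapsto t u^{(n)}$ in the abstract setting of an algebra of differential functions. For $\mc V = R$ it is immediate, but in the general normal case one must work in the extension $\mc V[t] := \mc V \otimes_{\mc C} \mc C[t]$, show that $F^{(t)} \in \mc V[t]$ (using the polynomial dependence of each $\frac{\partial F}{\partial u^{(n)}}$ on finitely many $u^{(k)}$'s and a well-chosen completion procedure), and then verify that the coefficient-wise $t$-integral of $u F^{(t)}$ lies back in $\mc V$. An alternative route is an induction on the differential order of $F$, using the self-adjointness condition to peel off the top coefficient.
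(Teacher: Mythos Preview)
The paper does not give its own argument here; it simply cites Proposition~1.9 of [BDSK09]. So the relevant comparison is between your attempt and the standard proof in that reference.

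Your $(\Leftarrow)$ direction is fine in spirit: the identity $\int G\,D_F(H)=\int G\,D_F^*(H)$ for all $G,H$ does follow from the computation you outline. The non-degeneracy step you sketch (``take $G=u^{(N)}$ and compare leading coefficients'') is not quite complete as written --- knowing that $u^{(N)}K\in\partial\mc V$ only tells you that $K=\frac{\partial Q}{\partial u^{(N-1)}}$ for some $Q$, which by normality imposes no constraint on $K$ --- but the non-degeneracy of the pairing $\int(\cdot)(\cdot)$ on $\mc V$ is a standard fact and can be established by a slightly more careful argument (e.g.\ apply $\frac{\delta}{\delta u}$ to $u^{(N)}K$ for $N$ of both parities and compare top-order terms).

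The genuine gap is in your $(\Rightarrow)$ direction. The scaling substitution $u^{(n)}\mapsto t\,u^{(n)}$ is only available in the polynomial algebra $R$; in a general normal $\mc V$ it need not make sense. Your proposed justification, that $F^{(t)}\in\mc V[t]$ ``using the polynomial dependence of each $\partial F/\partial u^{(n)}$ on finitely many $u^{(k)}$'s'', rests on a false premise: elements of a normal algebra of differential functions are \emph{not} required to be polynomial in the $u^{(k)}$'s (the paper itself gives examples such as $\mc V=\mathbb{C}[\sqrt{u}^{\pm 1},u',u'',\dots]$, and localizations containing $1/u$ are explicitly allowed). For $F=1/u$ one would get $F^{(t)}=1/(tu)\notin\mc V[t]$, and the integral $\int_0^1 uF^{(t)}\,dt$ is meaningless. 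The correct proof in [BDSK09] avoids this entirely: it exploits the normality hypothesis directly, using the surjectivity of $\frac{\partial}{\partial u^{(n)}}$ on $\mc V_n$ to build a local homotopy operator that reduces the differential order step by step. This is precisely the ``alternative route'' you mention in your last sentence --- that is the argument you need to actually carry out, and the scaling homotopy should be discarded.
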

\begin{proof}
See Proposition $1.9$ in [BDSK09].
\end{proof}

\begin{lemma}
Let $F,G \in \mc V$. Then
\begin{equation}
X_F(D_G)=[D_F,D_G]+X_G(D_F)+D_{\{F,G\}}.
\end{equation}
\end{lemma}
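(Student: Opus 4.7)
The plan is to verify the claimed operator identity by applying both sides to an arbitrary function $H\in\mc V$ and unwinding every term using the definition $D_K(H)=X_H(K)$, the fact that $X_F$ commutes with $\partial$, and the commutator identity $[X_F,X_G]=X_{\{F,G\}}$ from Lemma $1.4$. Since an operator in $\mc V[\partial]$ that annihilates every $H\in\mc V$ must be zero (apply to $1,u,u^2,\dots$ and solve for the coefficients inductively), it suffices to prove the identity after evaluating on $H$.

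First I would compute the left-hand side. Writing $D_G=\sum_n\frac{\partial G}{\partial u^{(n)}}\partial^n$ and applying $X_F$ to the coefficients, and using that $X_F$ commutes with $\partial$, one obtains the Leibniz-type relation
\begin{equation*}
X_F(D_G)(H)=X_F\bigl(D_G(H)\bigr)-D_G\bigl(X_F(H)\bigr)=X_F\bigl(X_H(G)\bigr)-D_G\bigl(X_F(H)\bigr).
\end{equation*}
By Lemma $1.4(2)$, $X_FX_H(G)=X_HX_F(G)+X_{\{F,H\}}(G)$, and $X_{\{F,H\}}(G)=D_G(\{F,H\})=D_G(X_F(H))-D_G(X_H(F))$. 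Substituting and cancelling the common $D_G(X_F(H))$ term leaves
\begin{equation*}
X_F(D_G)(H)=X_H\bigl(X_F(G)\bigr)-D_G\bigl(X_H(F)\bigr).
\end{equation*}

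Next I would expand each summand of the right-hand side in the same spirit:
$[D_F,D_G](H)=D_F(X_H(G))-D_G(X_H(F))$, $\,X_G(D_F)(H)=X_GX_H(F)-D_F(X_G(H))$ (by the same Leibniz-type computation), and $D_{\{F,G\}}(H)=X_H(X_F(G))-X_H(X_G(F))$. Summing them, I would group the four pieces $D_F(X_H(G))-D_F(X_G(H))$ and $X_GX_H(F)-X_HX_G(F)$: the second pair equals $X_{\{G,H\}}(F)=D_F(\{G,H\})=D_F(X_G(H))-D_F(X_H(G))$ by Lemma $1.4(2)$, and so exactly cancels the first pair. What remains of the right-hand side is precisely $X_H(X_F(G))-D_G(X_H(F))$, matching the expression obtained for the left-hand side.

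Since the two sides agree when evaluated on an arbitrary $H\in\mc V$, the coefficient-extraction argument noted above yields the operator identity. The only step that requires care is the bookkeeping in the cancellation on the right, but once the terms are organized so that the two commutator applications of Lemma $1.4(2)$ appear on opposite sides, everything collapses cleanly; I do not foresee any genuine obstacle.
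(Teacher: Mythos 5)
Your proof is correct and is in substance the same as the paper's: both reduce the identity, evaluated on an arbitrary $H\in\mc V$, to the commutator relation $[X_F,X_G]=X_{\{F,G\}}$ of Lemma $1.4$ together with $D_K(H)=X_H(K)$ and $[X_F,\partial]=0$. The only cosmetic difference is that the paper organizes the computation as the Jacobi identity for $\{\cdot,\cdot\}$ applied to $(F,G,H)$ and lets the $X$-terms collapse, whereas you verify the two sides term by term via the Leibniz-type relation $X_F(D_G)(H)=X_F(D_G(H))-D_G(X_F(H))$; the cancellations are the same, and your final passage from ``holds on every $H$'' to the operator identity is the same step the paper takes.
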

\begin{proof}
Since $\mc V$ is a Lie algebra for $\{.,.\}$, Jacobi identity holds for $(F,G,H)$ for all $H \in \mc V$. Therefore
\begin{equation}
\begin{split}
0&=\{F,\{G,H\}\}+\{G,\{H,F\}\}+\{H,\{F,G\}\} \\
 &= (X_F-D_F)(\{G,H\})-(X_G-D_G)(\{F,H\})-(X_{\{F,G\}}-D_{\{F,G\}})(H) \\
 &= (X_F-D_F)(X_G-D_G)(H)-(X_G-D_G)(X_F-D_F)(H)\\
   & \hspace{6 mm}-(X_{\{F,G\}}-D_{\{F,G\}})(H)\\
&= [D_F,D_G](H)+X_G(D_F)(H)-X_F(D_G)(H) \\
 & \hspace{6 mm}+D_{\{F,G\}}(H)+([X_F,X_G]-X_{\{F,G\}})(H) \\
&=([D_F,D_G]+X_G(D_F)-X_F(D_G)+D_{\{F,G\}})(H).
\end{split}
\end{equation}
Since this is true for all $H$ we obtain the desired result.
\end{proof}

\subsection{Bidifferential operators}

\begin{definition} A bidifferential operator on $\mc V$ is an element $M$ of $\mc V [\partial_1,\partial_2]$ :
\begin{equation}
M(\partial_1,\partial_2)=\sum_{k,l} {M_{kl}{\partial_1^k}\partial_2^l}.
\end{equation}
where $M_{kl}$ are functions in $\mc V$. It naturally defines a map from $\mc V \otimes_{\mc C} \mc V$ to $\mc V$ : for any $F,G \in \mc V$, we let
\begin{equation}
M(F,G)=\sum_{k,l}{M_{kl}F^{(k)}G^{(l)}}.
\end{equation}
 For a bidifferential operator $M$ and a function $F \in \mc V$ we define differential operators $M_F$ and $M^F$ by
\begin{equation}
M_F=\sum_{k,l} {M_{kl}F^{(k)}\partial^l}.
\end{equation}
\begin{equation}
M^F=\sum_{k,l} {M_{kl}F^{(l)}\partial^k}.
\end{equation}
By construction, we have for all $F,G \in \mc V$
\begin{equation}
M_F(G)=M^G(F)=M(F,G).
\end{equation}
For a bidifferential operator $M$ we let
\begin{equation}
\begin{split}
  d_1(M) &= \underset{F \in \mc V}{\sup} \hspace{1 mm} {d(M_F)}, \\
  d_2(M) &= \underset{F \in \mc V}{\sup} \hspace{1 mm}{ d(M^F)}.
\end{split}
\end{equation}
For a bidifferential operator $M$ and a differential operator $B$ we define the bidifferential operators $BM$ and $MB$ by letting for all $F,G \in \mc V$ :
\begin{equation}
\begin{split}
(BM)(F,G)&=B(M(F,G)),\\
(MB)(F,G)&=M(F,B(G)).
\end{split}
\end{equation}
\end{definition}

\begin{example}
We constructed in  Definition $1.19$ a bidifferential operator $D_A$ given a differential operator $A$. We call $D_A$ the Frechet derivative of $A$.
\end{example}

\begin{lemma}
Let $M$ be a bidifferential operator and $B$ be a differential operator on $\mc V$. Then there exists a unique pair $(N,P)$ of bidifferential operators on $\mc K$ such that
\begin{enumerate}
\item[(1)] $M=BP+N$, \\
\item[(2)] $d_1(N) < d(B)$.
\end{enumerate}
Similarly, there exists a unique pair $(Q,R)$ of bidifferential operators on $\mc K$ such that
\begin{enumerate}
\item[(1)] $M=QB+R$, \\
\item[(2)] $d_2(R) < d(B)$.
\end{enumerate}
\end{lemma}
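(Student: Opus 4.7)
The plan is a standard division algorithm on bidifferential operators: the first assertion is an induction on $d_1(M)$, and the second follows by a symmetric argument on $d_2(M)$. The pivotal ingredient for existence and uniqueness alike is the degree identity
\[
d_1(BP) \;=\; d(B) + d_1(P)
\]
for any nonzero $P \in \mc K[\partial_1,\partial_2]$. To check this I would write $B = \sum_j b_j \partial^j$ with $b_d \ne 0$ where $d=d(B)$, and $P = \sum_{k,l} P_{kl}\,\partial_1^k \partial_2^l$ with top $\partial_2$-degree $l_0 = d_1(P)$. Expanding $(BP)(F,G)=B(P(F,G))$ via Leibniz, the only monomials contributing to $\partial_2$-degree $l_0+d$ are those in which all $d$ derivatives of $\partial^d$ fall on the $G$-slot, so the aggregate coefficient is $b_d \sum_k P_{k l_0} \partial_1^k \in \mc K[\partial_1]$, which is nonzero because $\mc K$ is a field.

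Uniqueness is then an immediate consequence: given $BP_1 + N_1 = BP_2 + N_2$ with $d_1(N_i) < d(B)$, one has $B(P_1-P_2) = N_2-N_1$ with $d_1 < d(B)$, and the degree identity forces $P_1=P_2$, whence $N_1=N_2$. For existence I induct on $l := d_1(M)$. If $l < d(B)$, take $P=0$, $N=M$. Otherwise set
\[
P_0 \;:=\; b_d^{-1}\, M^{(l)}(\partial_1)\, \partial_2^{\,l-d(B)} \;\in\; \mc K[\partial_1,\partial_2],
\]
where $M^{(l)}(\partial_1) := \sum_k M_{kl}\partial_1^k$ is the leading $\partial_2$-coefficient of $M$. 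The same Leibniz computation shows that $M - BP_0$ has $\partial_2$-degree strictly less than $l$, and the induction hypothesis applied to $M - BP_0$ yields the desired decomposition $M = B(P_0+P_1)+N$.

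The second assertion is entirely parallel: one replaces $BP$ by the symmetric product that increases the $\partial_1$-degree of a bidifferential operator by $d(B)$ (so that $d_2(QB) = d_2(Q)+d(B)$), repeats the Leibniz computation with the roles of the two arguments swapped, and inducts on $d_2(M)$ in place of $d_1(M)$. The only real (and quite minor) obstacle is the combinatorial bookkeeping in the Leibniz expansion needed to isolate the top-degree coefficient and confirm its non-vanishing in $\mc K[\partial_1]$ (respectively $\mc K[\partial_2]$); once this is done, the rest is the familiar Euclidean-division scheme, applied one variable at a time.
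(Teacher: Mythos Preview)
Your proof is correct and follows essentially the same route as the paper: induction on $d_1(M)$, subtracting off $B$ times the leading $\partial_2$-coefficient of $M$ divided by the leading coefficient of $B$. The only difference is cosmetic: since by definition $(BP)_F = B\,P_F$ as elements of $\mc K[\partial]$, your degree identity $d_1(BP)=d(B)+d_1(P)$ follows immediately from additivity of degree in $\mc K[\partial]$, so the Leibniz bookkeeping you flag as the ``minor obstacle'' is not actually needed.
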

\begin{proof}
The uniqueness is obvious. Let us show the existence for the left division. If $d_1(M)<d(B)$ there is nothing to do. Otherwise, let $b_k \partial^{k}$ be the leading term of $B$ and $A(F)\partial^l$ be the leading term of $M_F$.
Then if we let $\tilde{M}_F=M_F-B\frac{A(F)}{b_k} \partial^{l-k}$, we have $d_1(\tilde{M})<d_1(M)$. We conclude by induction on $d_1(M)$.
\end{proof}

\begin{lemma}
Let $M$ and $N$ be two bidifferential operators over $\mc V$, $A$ and $B$ two  differential operators such that for all $F \in \mc V$
\begin{equation}
M_{A(F)}=BN_F.
\end{equation}
Then there exists a bidifferential operator $P$ on $\mc K$ such that for all $F \in \mc V$
\begin{equation}
N_F=P_{A(F)}.
\end{equation}
Moreover $M=BP$.
\end{lemma}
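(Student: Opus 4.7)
The plan is to invoke the left division algorithm (Lemma 1.27) on $M$ by $B$ and show that the remainder must vanish. Concretely, I would write $M = BP + R$ with $P, R$ bidifferential operators on $\mc K$ and $d_1(R) < d(B)$. A direct unwinding of the definitions gives $M_G = B P_G + R_G$ for every $G \in \mc V$; specializing to $G = A(F)$ and using the hypothesis $M_{A(F)} = B N_F$ produces the identity
\begin{equation*}
B\bigl(N_F - P_{A(F)}\bigr) = R_{A(F)} \qquad \forall F \in \mc V.
\end{equation*}

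Since $\mc K[\partial]$ is a domain, the left hand side either vanishes or has degree at least $d(B)$; by the choice of $R$, the right hand side has degree strictly less than $d(B)$. Both sides must therefore be zero, yielding simultaneously $N_F = P_{A(F)}$ for all $F \in \mc V$ (which is the first conclusion) and $R_{A(F)} = 0$ for all $F \in \mc V$.

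The step that actually requires some care is upgrading $R_{A(F)} = 0$ for all $F$ to the identity $R = 0$, which is needed for the second conclusion $M = BP$. Here I would decompose $R = \sum_{k,l} R_{kl} \partial_1^k \partial_2^l$ and extract, for each $l$, the differential operator $S_l = \sum_k R_{kl} \partial^k \in \mc K[\partial]$. The condition $R_{A(F)} = 0$, read coefficient by coefficient of $\partial$, says exactly that $S_l(A(F)) = 0$ for all $F \in \mc V$, i.e., $\operatorname{Im} A \subset \ker S_l$. This is where the main obstacle lies, but it is handled cleanly by Lemma 1.17: assuming $A \neq 0$ (the only case of interest), $\operatorname{Im} A$ is infinite-dimensional over $\mc C$, while $\ker S_l$ is finite-dimensional as soon as $S_l \neq 0$. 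Hence $S_l = 0$ for every $l$, so every $R_{kl}$ vanishes, $R = 0$, and $M = BP$, completing the argument.
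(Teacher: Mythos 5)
Your proof is correct, and it takes a somewhat different route from the paper's. The paper proves the lemma by induction on $d_1(N)$: it compares leading coefficients in $\partial$ on both sides of $M_{A(F)}=BN_F$ to show that the top coefficient $N_l$ of $N$ factors through $A$, peels that term off, and iterates; this produces $P$ with $N_F=P_{A(F)}$ directly, after which $M=BP$ is obtained by cancelling $A$ exactly as you do. You instead front-load the Euclidean division of $M$ by $B$ (which is the paper's Lemma $1.26$, not $1.27$ as you cite — $1.27$ is the statement being proved) and replace the whole induction by a single degree comparison in $B(N_F-P_{A(F)})=R_{A(F)}$, which yields both $N_F=P_{A(F)}$ and $R_{A(F)}=0$ at once; the final step, upgrading $R_{A(F)}=0$ to $R=0$ via the infinite-dimensionality of $\im A$ against the finite-dimensionality of kernels of nonzero differential operators, is the same cancellation the paper uses, though the relevant facts are Lemma $1.15$ rather than Lemma $1.17$. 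Your version is arguably cleaner since it reuses the already-established division lemma instead of re-running a leading-term induction, and you are more careful than the paper in flagging the (necessary) nondegeneracy assumption $A\neq 0$, without which the conclusion $M=BP$ can fail (e.g.\ $A=0$, $N=0$, $B=\partial$, $M(F,G)=FG$). What the paper's induction buys in exchange is an explicit, constructive description of $P$ coefficient by coefficient, but nothing in the rest of the paper depends on that.
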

\begin{proof}
Let $k=d(B)$. We prove the existence of $P$ by induction on $d_1(N)$. If $d_1(N)=0$, then the leading term of $(1.39)$ reads $M_k(A(F))=b_k N_0(F)$ hence $N_0$ is divisble on the right by $A$, i.e. $(1.40)$ is satisfied for some bidifferentila operator $P$. If $d_1(N)=l>0$, the leading term of $(1.39)$ is $M_{k+l}(A(F))=b_k N_l(F)$ hence there exists a differential operator $Q$ such that $N_l=Q(A(F))$. Therefore we have
\begin{equation}
(M-B(Q\partial^l))_{A(F)}=B(N-N_l \partial^l)_F,
\end{equation}
and we can use the induction hypothesis to conclude that $(1.40)$ holds for some bidifferential operator $P$. 
Combining $(1.39)$ with $(1.40)$ gives
\begin{equation}
M_{A(F)}=BP_{A(F)}
\end{equation}
Finally, since the image of $A$ is infinite-dimensional by Lemma $1.15$, and since the kernel of the map $F \mapsto Q_F$ is finite dimensional for any bidifferential operator $Q$, we can remove $A$ from $(1.42)$ and deduce that $M=BP$.
\end{proof}

\begin{lemma}
Let $M$ and $N$ be two bidifferential operators and $A$ and $B$ be two differential operators on $\mc V$ such that $AM_F=N_F B$ for all $F \in \mc V$. Then there exists a bidifferential operator $P$ on $\mc K$ such that $M=PB$ and $N=AP$.
\end{lemma}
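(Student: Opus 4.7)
My plan is to mirror the proof of Lemma $1.26$, proceeding by induction on $d_1(M)$.

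In the inductive step, when $d_1(M) \geq d(B)$, I would extract the leading coefficient in $\partial$ from the identity $A M_F = N_F B$. Writing $l := d_1(M)$, $k := d(B)$, $d := d(A)$, and letting $m'_l(\partial_1) \in \mc K[\partial_1]$ denote the leading coefficient of $M_F$ as a polynomial in $\partial$ (so that $M_F = m'_l(F)\partial^l + \cdots$), the top-coefficient equation reads $a_d m'_l = n'_{d+l-k}\, b_k$ in $\mc K[\partial_1]$, where $n'_{d+l-k}$ is the leading coefficient of $N_F$. I would then define the bidifferential operator $P^{(0)} := b_k^{-1} m'_l(\partial_1)\, \partial_2^{l-k}$ on $\mc K$ and verify directly that $M - P^{(0)} B$ and $N - A P^{(0)}$ have strictly smaller $d_1$-degrees than $M$ and $N$ and still satisfy the hypothesis $A(M - P^{(0)} B)_F = (N - A P^{(0)})_F B$. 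The inductive hypothesis then provides a bidifferential operator $\tilde P$ with $M - P^{(0)} B = \tilde P B$ and $N - A P^{(0)} = A \tilde P$, and setting $P := P^{(0)} + \tilde P$ completes the step.

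The base case is $d_1(M) < d(B)$, where I would show that $M = 0$; then $N_F B = A M_F = 0$ forces $N = 0$ since $\mc K[\partial]$ is a domain (and $A \neq 0$), and $P := 0$ works. Supposing $M \neq 0$, I would successively equate the coefficients of $\partial^{d+l-t}$ in $A M_F = N_F B$ for $t = 0, 1, \dots$. The first $d_1(N) + 1$ equations determine the coefficients $n'_{d+l-k}, n'_{d+l-k-1}, \dots, n'_0$ of $N_F$ inductively in terms of the coefficients $m'_l, m'_{l-1}, \dots$ of $M_F$ and their derivatives, while the remaining $d(B)$ equations become constraints on the $d_1(M) + 1 < d(B)$ operators $m'_0, \dots, m'_l$ alone. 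Using the bidifferential structure (each $m'_j \in \mc K[\partial_1]$ has finite support when applied to $F$), I would examine the constraint equation read off at $F^{(n)}$ for $n$ chosen larger than the support of the top coefficient $m'_l$: the cross terms then vanish by degree reasons, isolating a term that forces $m'_l = 0$, contradicting $l = d_1(M)$. This argument is in the spirit of the "remove $A$" step at the end of the proof of Lemma $1.26$, which exploits the infinite-dimensionality of $\im A$ (Lemma $1.15$).

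The main obstacle will be the base case, where one must show that the overdetermined system of coefficient equations, combined with the bidifferential structure of $M$ and the "for all $F \in \mc V$" hypothesis, rigidly forces all coefficients of $M$ to vanish; this is the point where a careful coefficient-by-coefficient analysis is required.
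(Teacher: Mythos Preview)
Your reduction step---peeling off leading terms to reduce to the case $d_1(M) < d(B)$---is correct and is exactly the Euclidean division the paper performs.

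The gap is in your base case. You propose to show $M=0$ by equating coefficients of $\partial$ in $AM_F = N_F B$, eliminating the $n'_j$'s, and then arguing that the remaining overdetermined system forces $m'_l = 0$ by ``choosing $n$ larger than the support of $m'_l$ so that cross terms vanish.'' This last step is not correct as stated: eliminating the $n'_j$'s introduces iterated total derivatives of the $m'_j(F)$'s, and these mix arbitrarily high derivatives of $F$ with the coefficients of $A$ and $B$; no simple support argument in $F$ isolates $m'_l$. You flag this yourself as ``the main obstacle,'' and indeed you have not resolved it. Note also that for a \emph{single} $F$ the relation $AM_F = N_F B$ with $d(M_F) < d(B)$ does not force $M_F = 0$ (take $A = B$); it is the quantification over all $F$ that must be used, and your sketch does not make clear how.

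The paper handles the base case by a different, clean idea that you are missing: pass to a Picard--Vessiot extension $\mc L \supset \mc K$ with the same constants in which $B$ has full kernel, $\dim_{\mc C}\ker_{\mc L} B = d(B)$. For each $b \in \ker_{\mc L} B$ and each $F \in \mc V$ one has
\[
(AM^b)(F) = A(M^b(F)) = A(M_F(b)) = (AM_F)(b) = (N_F B)(b) = 0.
\]
Since $\mc V$ is infinite-dimensional over $\mc C = \mc C_{\mc L}$, the operator $AM^b \in \mc L[\partial]$ vanishes; as $\mc L[\partial]$ is a domain and $A \neq 0$, this forces $M^b = 0$, i.e., $M_F(b) = 0$ for every $F$. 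Thus $\ker_{\mc L} M_F \supset \ker_{\mc L} B$ has dimension $d(B) > d_1(M) \geq d(M_F)$, so $M_F = 0$ for every $F$, hence $M = 0$.
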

\begin{proof}
Performing Euclidean divisions of $M$ by $B$ on the right we can reduce the problem to the case where $d_1(M)<d(B)$. Let us show that in that case $M=0$. Let $\mc L$ be a differential field extension of $\mc K$ with  $\mc C_{\mc K}=\mc C_{\mc L}$, where $B$ has a full kernel. Such an extension exists by the Picard-Vessiot theory. In $\mc L$, we have equality in $(1.18)$. Let $b \in Ker_{\mc L} B$. We know that for all $F \in \mc V$, $A(M_F(b))=A(M^b(F))=0$. In the language of differential operators, this means that $AM^b=0$, therefore that $M^b$=0. Thus, for all $F \in \mc V$ the kernel of $M_F$ in $\mc L$ has dimension over $\mc C$ at least $d(B)$. This implies that $M_F=0$, because we are in the case where $d(B)>d_1(M)$. Hence $M=0$.
\end{proof}

\begin{lemma}
Let $A$ and $B$ be two differential operators and $AD=BC$ be their least right common multiple. Let $M$ and $N$ be bidifferential operators such that $AM=BN$. Then there exists a bidifferential operator $P$ such that $M=DP$ and $N=CP$.
\end{lemma}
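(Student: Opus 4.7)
The plan is to adapt the proof of Lemma $1.24$ and reduce the bidifferential assertion to one-variable divisibility in the Ore domain $\mc K[\partial]$ via Euclidean division of bidifferential operators (Lemma $1.21$). First, I would perform the left Euclidean division of $M$ by $D$: write $M = DP + R$ with $d_1(R) < d(D)$ for a unique pair of bidifferential operators $(P,R)$ on $\mc K$. Substituting into the hypothesis $AM = BN$ and using $AD = BC$ gives
\[
AR \;=\; BN - ADP \;=\; B(N - CP) \;=:\; BN'.
\]
The problem therefore reduces to showing $R = 0$: once this is established, $M = DP$, and then $BN = ADP = BCP$ combined with slot-by-slot left cancellation of the nonzero $B$ in the domain $\mc K[\partial]$ (exactly as at the end of the proofs of Lemmas $1.23$ and $1.24$) yields $N = CP$.

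To prove $R = 0$, fix $F \in \mc V$ and read the bidifferential identity $AR = BN'$ through the first slot: this gives $AR_F = BN'_F$ in $\mc K[\partial]$. Hence $AR_F \in A\mc K[\partial] \cap B\mc K[\partial]$, which by the defining property of the right least common multiple $AD = BC$ coincides with $AD\cdot\mc K[\partial]$. Thus $AR_F = AD\,Y_F$ for some $Y_F \in \mc K[\partial]$, and left-cancelling $A$ in the Ore domain $\mc K[\partial]$ gives $R_F = D Y_F$. But $d(R_F)\leq d_1(R) < d(D)$ forces $Y_F = 0$, so $R_F = 0$. As this holds for every $F \in \mc V$, expanding $R = \sum_{k,l} R_{k,l}\partial_1^k\partial_2^l$ shows that for every $l$ the differential operator $\sum_k R_{k,l}\partial^k$ annihilates the $\mc C$-infinite-dimensional space $\mc V$; Lemma $1.15$ then forces each such operator, and therefore $R$ itself, to vanish.

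The only delicate point is the correct one-sided reading of the statement ``$AD = BC$ is the right least common multiple of $A$ and $B$'' as the identity $AD\cdot\mc K[\partial] = A\mc K[\partial] \cap B\mc K[\partial]$; granted this, the argument is essentially formal, consisting of one Euclidean division sandwiched between two left cancellations in the Ore domain $\mc K[\partial]$, lifted to bidifferential operators via Lemma $1.21$.
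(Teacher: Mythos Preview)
Your argument is correct and is essentially the paper's own proof unpacked: the paper first observes that for each fixed $F$ the one-variable identity $AM_F=BN_F$ forces $M_F$ to be left-divisible by $D$ (by the right-LCM property in $\mc K[\partial]$), and then invokes the Euclidean division lemma for bidifferential operators to upgrade slotwise divisibility to $M=DP$; you do the same steps in reversed order (Euclidean division first, then show the remainder vanishes slotwise). Note only that your lemma numbers are shifted relative to the paper --- the Euclidean division you cite as ``Lemma~1.21'' is Lemma~1.26 here --- and that the paper leaves the final cancellation $BN=BCP\Rightarrow N=CP$ implicit.
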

\begin{proof}
This lemma holds when $M$ and $N$ are differential operators by the very definition of the least right common multiple. Hence, for all $F \in \mc V$, we know that $M_F$ is divisible on the left by $D$. By Lemma $1.26$, this means that $M$ is divisible on the left by $D$.
\end{proof}

\section{Hereditary Operators}
Let $\mc V$ be a normal algebra  of differential functions in $u$. Let us assume that $\mc V$ is a domain and let $\mc K$ be its field of fraction. Let $\mc C$ be the subfield of constants, assumed to be algebraically closed.
\subsection{Recursion operators}
\begin{definition}
The Lie derivative $\mc L_F (L)$ of a pseudodifferential operator $L$ along the function $F$ is defined to be 
\begin{equation}
\mc L_F (L)=X_F(L)-[D_F,L] .
\end{equation}
 We say ([Olv93]) that $L$ is recursion for $F$ if 
\begin{equation}
\mc L_F(L)=0.
\end{equation}
\end{definition}

\begin{remark}
Every pseudodifferential operator $L$ is recursion for $0$ and $u'$. The constant operator $1$ is recursion for any function $F$. The Lie derivative $\mathcal{L}_F$ is a derivation of the algebra of pseudodifferential operators $\mathcal{V}((\partial^{-1}))$, because both $X_F$ and $[D_F,.]$ are.
\end{remark}

\begin{proposition}
Let $F \in \mc V$, and $H=AB^{-1}$, $K=CD^{-1}$ be two (non-local) Poisson structures on $\mc V$ (see [DSK13] for the definition) such that $F=A(X)=C(Y)$, where $X,Y \in \mc V$ and $B(X)$ and $D(Y)$ are variational derivatives. Then, $\Lambda=HK^{-1}$ is a recursion operator for $F$.
\end{proposition}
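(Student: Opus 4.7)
The plan is to reduce the claim $\mc L_F(\Lambda)=0$ to the two separate assertions $\mc L_F(H)=0$ and $\mc L_F(K)=0$, and then to argue that each of these is an instance of the general Hamiltonian invariance of a (non-local) Poisson structure. As noted in Remark 2.2, the Lie derivative $\mc L_F=X_F-[D_F,\cdot]$ is a derivation of the skewfield $\mc K((\partial^{-1}))$. For any invertible pseudodifferential operator $M$, this gives the standard formula $\mc L_F(M^{-1})=-M^{-1}\mc L_F(M)M^{-1}$, hence
$$\mc L_F(\Lambda)=\mc L_F(H)K^{-1}-HK^{-1}\mc L_F(K)K^{-1}.$$
Consequently the proposition will follow as soon as both summands vanish.

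The two vanishings are symmetric, so it suffices to prove $\mc L_F(H)=0$. What remains is the non-local version of the classical fact that a Hamiltonian vector field $F=H(\xi)$ with $\xi$ a closed one-form preserves the Poisson tensor $H$. Here $H=AB^{-1}$, $F=A(X)$, and the role of $\xi$ is played by $B(X)$, which by hypothesis is a variational derivative; using the characterization of variational derivatives in terms of self-adjointness of the Frechet derivative from Subsection 1.3, this means $D_{B(X)}=D_{B(X)}^*$. I would then unpack the definition of a non-local Poisson structure from [DSK13]: skew-symmetry of $H=AB^{-1}$ produces an identity relating $A^*B$ and $B^*A$, while the Jacobi identity for $H$ is expressed as a trilinear polynomial identity in $A$, $B$ and their Frechet derivatives. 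Substituting $B(X)$ in an appropriate slot of this Jacobi identity, and using the self-adjointness $D_{B(X)}=D_{B(X)}^*$ together with the Frechet identities $D_{A(X)}=AD_X+(D_A)_X$ and $D_{B(X)}=BD_X+(D_B)_X$, the Jacobi identity collapses to precisely $X_F(A)B^{-1}-HX_F(B)B^{-1}=[D_F,H]$, which is $\mc L_F(H)=0$. The statement $\mc L_F(K)=0$ follows by an identical argument with $(A,B,X)$ replaced by $(C,D,Y)$.

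The principal obstacle is the non-local Hamiltonian invariance just described. In the local case it is a one-line tautology, but in the rational operator framework one has to track the denominator $B^{-1}$, the fact that $F=A(X)$ is only defined modulo $A(\ker B)$, and the precise form of the Jacobi identity for non-local Poisson structures from [DSK13]. The cleanest route is probably to right-multiply the desired identity $\mc L_F(AB^{-1})=0$ by $B$, producing a polynomial identity between differential operators in $A$, $B$, and their Frechet derivatives, and then to verify this directly from the non-local Poisson structure axioms applied to the variational derivative $B(X)$.
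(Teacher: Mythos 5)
Your overall strategy is close to the paper's, but it contains a genuine error at the key step: the reduction of $\mc L_F(\Lambda)=0$ to the two separate statements $\mc L_F(H)=0$ and $\mc L_F(K)=0$. Neither of these holds in general. A non-local Poisson structure is \emph{not} preserved by its Hamiltonian vector field in the $(1,1)$-tensor sense $X_F(H)=[D_F,H]$; what the Poisson axioms actually yield (and what the paper derives from formula (ii) of Proposition 6.8 in [DSK13], specialized at the pair $(X,G)$ using $D_{B(X)}=D_{B(X)}^*$ and $A^*B+B^*A=0$) is the bivector-type invariance
\begin{equation*}
X_F(H)=D_FH+HD_F^*,
\end{equation*}
with $D_F^*$ rather than $-D_F$ on the right. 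Concretely, for KdV with $F=u'''+3uu'$ and $H=\partial$ one has $\mc L_F(\partial)=-[D_F,\partial]=3u'\partial+3u''\neq 0$, while $D_F\partial+\partial D_F^*=0$ as it should be. So your claimed collapse of the Jacobi identity to $X_F(A)B^{-1}-HX_F(B)B^{-1}=[D_F,H]$ cannot be correct; the step "the proposition will follow as soon as both summands vanish" is where the argument breaks.

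The fix is small but essential: keep your derivation formula $\mc L_F(\Lambda)=\mc L_F(H)K^{-1}-HK^{-1}\mc L_F(K)K^{-1}$ (or, more directly, $X_F(HK^{-1})=X_F(H)K^{-1}-HK^{-1}X_F(K)K^{-1}$), but substitute the correct invariance $X_F(H)=D_FH+HD_F^*$ and $X_F(K)=D_FK+KD_F^*$. The two anomalous terms $HD_F^*K^{-1}$ then cancel against each other, leaving $D_FHK^{-1}-HK^{-1}D_F=[D_F,\Lambda]$. This cancellation between the two Poisson structures, rather than the separate vanishing of each Lie derivative, is the actual mechanism by which the \emph{ratio} of two Poisson structures becomes a recursion operator, and it is exactly how the paper concludes.
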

\begin{proof}
Since $B(X)$ is a variational derivative, we have by $(1.29)$: $D_{B(X)}=D_{B(X)}^*$. We specialize formula $(ii)$ of Proposition $6.8$ in [DSK13] for the pairs $(A,B)$ and $(X,G)$, where $G \in \mc V$, to get
\begin{equation}
A^*(D_{B(G)}(F)+D_F^*(B(G)))+B^*(D_{A(G)}(F)-D_F(A(G)))=0.
\end{equation}
Using $(1.23)$, $(1.25)$, and the fact that $A^*B+B^*A=0$ (Poisson structures are skewadjoint), we have for all $G \in \mc V$
\begin{equation}
A^*(X_F(B)(G)+D_F^*(B(G)))+B^*(X_F(A)(G)-D_F(A(G)))=0.
\end{equation}
Therefore we obtain the following differential operator identity :
\begin{equation}
A^*X_F(B)+A^*D_F^*B+B^*X_F(A)-B^*D_FA=0.
\end{equation}
Multiplying $(2.5)$ on the left by $(B^*)^{-1}$, on the right by $B^{-1}$, and using the skewadjointness of $H$, we deduce that
\begin{equation}
X_F(H)=D_F H+HD_F^* .
\end{equation}
Since the same argument holds for $K$, we conclude that
\begin{equation}
X_F(HK^{-1})=[D_F,HK^{-1}],
\end{equation}
which means that $HK^{-1}$ is a recursion operator for $F$.
\end{proof}

Next, we address the problem of existence of a recursion operator for an integrable system (cf. [MS08], [IS80], [SS84]).

\begin{proposition}
Let $\mathcal{W}$ be an integrable system in $\mathcal{V}$ spanned by countably many $(F_n)_{n \geq 1}$ such that their differential orders $d_{F_n}$ are strictly increasing and $d_{F_1} \geq 2$. Let us assume moreover that the leading term of $D_{F_1}$ is invertible in $\mc V$ and admits a $d_{F_1}$-th root in $\mc V$. Then there exists a pseudodifferential operator $L$ in $\mc V(( \partial ^{-1}))$ which is recursion for all the $F_n$.
\end{proposition}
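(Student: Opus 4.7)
The plan is to follow the classical Ibragimov--Shabat construction (see [IS80, SS84]): first build $L$ as a formal degree-$1$ pseudodifferential series satisfying $\mc L_{F_1}(L)=0$, then deduce $\mc L_{F_n}(L)=0$ for all $n\geq 2$ from the commutativity of the hierarchy.

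For the first step, I fix $\lambda\in\mc V$, the chosen $d_{F_1}$-th root of the leading coefficient of $D_{F_1}$ (invertible by hypothesis), and set up the ansatz $L = \lambda\partial + \sum_{k\geq 0} l_{-k}\partial^{-k}$ with the $l_{-k}\in\mc V$ to be determined. Substituting into $\mc L_{F_1}(L)=X_{F_1}(L)-[D_{F_1},L]=0$ and matching coefficients of $\partial^{j}$ for $j=d_{F_1}, d_{F_1}-1, \ldots$, the highest order equation recovers precisely the condition that $\lambda^{d_{F_1}}$ be a constant multiple of the leading coefficient of $D_{F_1}$, which is our hypothesis. Each subsequent order yields a first-order linear ODE of the shape $c_k\, l_{-k+1}' = R_k$, where $c_k \in \mc V$ is an invertible multiple of $\lambda^{d_{F_1}-1}$ and $R_k$ is a polynomial expression in $l_0, \ldots, l_{-k+2}$, the coefficients of $D_{F_1}$, and their derivatives.

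The compatibility needed in order to solve for $l_{-k+1} \in \mc V$ is that $R_k/c_k$ be a total derivative at every step, and this is the central technical obstacle. I would prove it using the assumption that $F_1$ admits infinitely many symmetries $F_2, F_3, \ldots$ of strictly increasing differential orders: as shown in [SS84], the obstructions to iteratively extending a formal symmetry of $F_1$ vanish at every order precisely when such an infinite family of higher symmetries exists. In other words, the presence of the hierarchy forces each $R_k/c_k$ into $\partial \mc V$ and thereby determines all the $l_{-k}$ recursively.

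Once $L$ has been built with $\mc L_{F_1}(L)=0$, it remains to verify $\mc L_{F_n}(L)=0$ for every $n \geq 2$. For this I would first establish the identity $[\mc L_F, \mc L_G] = \mc L_{\{F,G\}}$ on $\mc V((\partial^{-1}))$, which follows from Lemma 1.24 together with the fact that $X_F$ and $\ad_{D_F}$ are derivations of the algebra of pseudodifferential operators. Applied to the commuting pair $(F_1, F_n)$ this gives $\mc L_{F_1}(\mc L_{F_n}(L))=0$, i.e.\ $\mc L_{F_n}(L)$ is itself a formal symmetry of $F_1$. A degree-by-degree uniqueness argument, exploiting the rigidity of the recursive construction in Step 1 once the leading term is fixed, then forces $\mc L_{F_n}(L)=0$. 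The main obstacle is Step 1---the verification that each compatibility condition $R_k/c_k \in \partial \mc V$ holds---which is the content of the Shabat--Sokolov formal symmetry theorem and is where the integrability hypothesis is genuinely used.
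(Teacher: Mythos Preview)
Your outline follows the classical formal-symmetry route and differs substantially from the paper's proof. The paper does not solve $\mc L_{F_1}(L)=0$ recursively and then bootstrap to the other $F_n$. Instead it sets $T_n=D_{F_n}^{1/d_{F_n}}$ (well-defined in $\mc V((\partial^{-1}))$ by a leading-coefficient compatibility argument, Lemma~2.10), shows via Lemma~2.9 that $d(\mc L_{F_1}(T_n))=\mu_{F_1}-d_{F_n}+1$ is strictly decreasing, and then produces constant-coefficient Laurent series $\phi_n$ of degree~$1$ so that the sequence $\phi_n(T_n)$ converges in $\mc V((\partial^{-1}))$ to some $L$ (Lemmas~2.7--2.8). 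The recursion property $\mc L_{F_k}(L)=0$ for \emph{every} $k$ then follows from a direct degree estimate: one bounds $d(\mc L_{F_k}(L))$ above by a quantity involving $d(\mc L_{F_k}(\phi_n(T_n)))$ and $d(L-\phi_n(T_n))$, and both are driven to $-\infty$ as $n\to\infty$. This argument is self-contained, uses all the $F_n$ at once to build $L$, and never appeals to the Shabat--Sokolov obstruction theory that you black-box in Step~1.

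Your Step~2 has a genuine gap. The identity $[\mc L_{F_1},\mc L_{F_n}]=\mc L_{\{F_1,F_n\}}=0$ is correct (it follows from Lemma~1.23) and does give $\mc L_{F_1}\bigl(\mc L_{F_n}(L)\bigr)=0$. But the ``rigidity'' you invoke is false: in your own recursion each coefficient $l_{-k}$ is determined only up to an additive constant of integration, so degree-$1$ solutions of $\mc L_{F_1}(\cdot)=0$ with leading term $\lambda\partial$ are not unique. More to the point, expanding any $M\in\ker\mc L_{F_1}$ in powers of $L$ and using Lemma~2.6 shows that $\ker\mc L_{F_1}=\mc C((L^{-1}))$, which is infinite-dimensional. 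Hence knowing that $\mc L_{F_n}(L)$ lies in this kernel does not force it to vanish; you would still have to exclude the possibility that $\mc L_{F_n}(L)$ equals some nonzero constant-coefficient Laurent series in $L$, and you have not supplied that argument. The paper's limit construction sidesteps this difficulty entirely, since its degree estimate handles all $F_k$ simultaneously.
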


We will first state and prove some lemmas before proving Proposition $2.4$.

\begin{lemma}
Let $A$ be a pseudodifferential operator of degree $d_A$ and $F \in \mc V$. Then for all  non-zero integer $i \in \mathbb{Z}$, 
\begin{equation}
d(\mc L_F (A^{i}))= d(\mc L_F(A))+(i-1)d(A).
\end{equation}
\end{lemma}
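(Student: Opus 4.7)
The plan is to exploit the fact that $\mc L_F$ is a derivation of $\mc V((\partial^{-1}))$ (noted in Remark~2.2), treat the case $i \geq 1$ by a straightforward computation of leading terms, and then reduce the case $i < 0$ to the positive case via $A \cdot A^{-1} = 1$.

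First I would handle $i \geq 1$. The derivation property gives
\begin{equation}
\mc L_F(A^i) = \sum_{k=0}^{i-1} A^k \, \mc L_F(A) \, A^{i-1-k}.
\end{equation}
Each summand is a product of three pseudodifferential operators whose degrees are $k\,d(A)$, $d(\mc L_F(A))$, and $(i-1-k)\,d(A)$ respectively. Since the degree of a product of pseudodifferential operators is additive, every summand has degree $(i-1)d(A) + d(\mc L_F(A))$, so the total degree is at most this number. I would then show the bound is sharp by examining the leading symbol: if $A = a\partial^{d(A)} + \text{lower}$ and $\mc L_F(A) = \ell\partial^{d(\mc L_F(A))} + \text{lower}$, then the leading term of $A^k \mc L_F(A) A^{i-1-k}$ is $a^{i-1}\ell\,\partial^{(i-1)d(A) + d(\mc L_F(A))}$ (the extra $\partial^{-1}$-series corrections from moving $\partial^n$ past functions contribute only to strictly lower order). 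Summing over the $i$ values of $k$ yields leading coefficient $i\,a^{i-1}\ell$, which is nonzero because $\mc V$ is a characteristic-zero domain, $a \neq 0$, and (assuming $\mc L_F(A) \neq 0$) $\ell \neq 0$. If instead $\mc L_F(A) = 0$, both sides equal $-\infty$ and the formula is trivial.

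Next I would treat $i = -1$. Applying $\mc L_F$ to $A \cdot A^{-1} = 1$ and using the derivation property yields
\begin{equation}
\mc L_F(A^{-1}) = -A^{-1}\,\mc L_F(A)\,A^{-1},
\end{equation}
whose degree is exactly $d(\mc L_F(A)) - 2d(A)$ (again by additivity of degree for pseudodifferential products, and because the factors involved have nonzero leading coefficients), matching the formula with $i = -1$.

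For general $i < 0$, write $i = -j$ with $j \geq 1$ and apply the already established positive case to $B = A^{-1}$, noting that $A^i = B^j$. Using $d(B) = -d(A)$ and $d(\mc L_F(B)) = d(\mc L_F(A)) - 2d(A)$ from the previous step, one obtains
\begin{equation}
d(\mc L_F(A^i)) = d(\mc L_F(B)) + (j-1)d(B) = d(\mc L_F(A)) - 2d(A) - (j-1)d(A) = d(\mc L_F(A)) + (i-1)d(A),
\end{equation}
as required. The only delicate step is verifying non-cancellation of the leading terms in the positive-$i$ case, but this reduces to the elementary observation that leading coefficients of pseudodifferential products multiply in the commutative ring $\mc V$, so the $i$ equal leading symbols add up with a factor of $i$ and the characteristic-zero hypothesis prevents cancellation.
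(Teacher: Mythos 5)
Your proposal is correct and follows essentially the same route as the paper: the derivation identity $\mc L_F(A^i)=\sum_{k=0}^{i-1}A^k\mc L_F(A)A^{i-1-k}$ with the observation that all $i$ summands share the same leading coefficient (so no cancellation in characteristic zero), plus the identity $\mc L_F(A^{-1})=-A^{-1}\mc L_F(A)A^{-1}$ to handle negative exponents. You merely spell out more explicitly the non-cancellation of leading symbols and the reduction of general negative $i$ to the positive case via $B=A^{-1}$, both of which the paper leaves implicit.
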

\begin{proof}
Let us first consider a positive integer $i$.
As we noted in Remark $2.2$, $\mc L_F$ is a derivation of $\mathcal{V}((\partial^{-1}))$, hence 
\begin{equation}
\mc L_F (A^{i})= \sum_{k=0}^{i-1}{A^k \mc L_F(A) A^{i-1-k}}.
\end{equation}
In the RHS of $(2.9)$, all terms have degree $d(\mc L_F(A))+(i-1)d_A$ and have the same leading coefficient, therefore the result follows. The same holds if we replace
$A$ by its inverse $A^{-1}$. It remains to check that $d(\mc L_F (A^{-1}))=d(\mc L_F (A))-2d_A$. This follows from the identity
\begin{equation}
 \mc L_F(A^{-1})=-A^{-1} \mc L_F(A) A^{-1}.
\end{equation}
\end{proof}

\begin{lemma}
Let $F \in \mc V$ and $ G \in \mc V$ be two functions with $G$ not constant and $d_F \geq 2$. Then
\begin{equation}
d(\mc L_F(G)) = d_F-1.
\end{equation}
\end{lemma}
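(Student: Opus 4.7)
The plan is to directly unwind the definition $\mc L_F(G) = X_F(G) - [D_F,G]$ and track the degree of each summand as a pseudodifferential operator. Since $X_F(G) = D_G(F)$ is a function, it contributes only at degree $0$, so the whole statement reduces to showing that $[D_F,G]$ has degree exactly $d_F - 1$.

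Next, I would compute $[D_F, G]$ by applying the multiplication rule of $\mc V((\partial^{-1}))$ to $D_F = \sum_{m \leq d_F} \frac{\partial F}{\partial u^{(m)}} \partial^m$. Pushing each $\partial^m$ past $G$ yields
\begin{equation*}
[D_F,G] = \sum_{m \leq d_F} \frac{\partial F}{\partial u^{(m)}} \sum_{k=1}^{m} \binom{m}{k} G^{(k)} \partial^{m-k},
\end{equation*}
so the leading coefficient (the one at $\partial^{d_F-1}$, coming from $k=1$, $m=d_F$) is $d_F \, \frac{\partial F}{\partial u^{(d_F)}} \, G'$.

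To conclude, I would check that this leading coefficient is nonzero: $d_F \geq 2$ ensures the integer factor $d_F$ is nonzero; $\frac{\partial F}{\partial u^{(d_F)}} \neq 0$ by the very definition of $d_F$; and $G' \neq 0$ because $G$ is nonconstant, using Remark $1.3$ (so $G' = 0$ would force $G \in \mc C$). Because $d_F - 1 \geq 1$, the degree-$0$ term $X_F(G)$ cannot cancel anything at degree $d_F - 1$, and we obtain $d(\mc L_F(G)) = d_F - 1$ as desired.

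There is no real obstacle here; the only point that genuinely uses the hypothesis $d_F \geq 2$ (as opposed to $d_F \geq 1$) is ruling out interference between $X_F(G)$ and the leading term of $[D_F,G]$, which in the $d_F = 1$ case both live in degree $0$ and can in principle cancel.
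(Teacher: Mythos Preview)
Your proof is correct and follows exactly the same approach as the paper's: note that $X_F(G)$ is a function (degree $\leq 0$) and that $[D_F,G]$ has degree $d_F-1$ because $G$ is nonconstant. The paper's argument is a terse two-line version of precisely what you wrote out in detail, including the implicit reason $d_F\geq 2$ matters.
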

\begin{proof}
$X_F(G)$  differential operator of degree $0$ or $-\infty$. Moreover, since $G$ is not a constant, the degree of the commutator $[D_F,G]$ is $d_F-1$.
\end{proof}

\begin{lemma}
Let $F$ be a function of differential order $d_F \geq 2$ and $(T,S)$ be two pseudodifferential operators of degree $1$ such that $d(\mathcal{L}_F(S))=m<d(\mathcal{L}_F(T))=n$. Then, there exists a Laurent series $\phi (z) \in \mathbb{C}((z^{-1}))$ of degree $1$ such that $T-\phi(S)$ has degree $n-d_F+1$. Furthermore, $d(\mathcal{L}_F(\phi(S)))=d(\mathcal{L}_F(S))$.
\end{lemma}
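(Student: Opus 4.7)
The plan is to construct $\phi$ iteratively, killing the top coefficients of $T-\phi(S)$ one degree at a time. The key ingredient is a rigidity statement: if $A \in \mc V((\partial^{-1}))$ has degree $p$ with leading coefficient $a$ and satisfies $d(\mc L_F(A)) < d_F + p - 1$, then $a/s_1^p \in \mc C$, where $s_1$ is the leading coefficient of $S$. To see this, first note that since $d_F \geq 2$, the contribution $X_F(A)$ to $\mc L_F(A)$ has degree only $p < d_F + p - 1$, so it does not interfere at the top; next, a direct expansion of $[f\partial^{d_F}, a\partial^p]$ using $\partial^{d_F}\cdot a = \sum_j \binom{d_F}{j} a^{(j)} \partial^{d_F - j}$ (with $f$ the leading coefficient of $D_F$) shows that the coefficient of $\partial^{d_F + p - 1}$ in $\mc L_F(A)$ equals $p\, a f' - d_F\, f a'$; requiring it to vanish forces $(\log a)' = (p/d_F)(\log f)'$. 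Applying this first to $S$ itself (with $p=1$, using $m < d_F$) yields $(\log s_1)' = (1/d_F)(\log f)'$, and hence $(\log s_1^p)' = (p/d_F)(\log f)' = (\log a)'$, so indeed $a/s_1^p \in \mc C$.

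One may assume $n < d_F$, since otherwise $n - d_F + 1 = 1$ and $\phi(z)=z$ trivially works. Now build $\phi$ by descending induction, initializing $T^{(1)} := T$ with invariants $d(T^{(p)}) \leq p$ and $d(\mc L_F(T^{(p)})) = n$. At stage $p$ with $p > n - d_F + 1$, if $d(T^{(p)}) = p$, then $n < d_F + p - 1$ and the rigidity statement forces the top coefficient of $T^{(p)}$ to be a scalar multiple of $s_1^p$; define $\phi_p \in \mc C$ by this ratio and set $T^{(p-1)} := T^{(p)} - \phi_p S^p$, which then has degree $\leq p - 1$. If $d(T^{(p)}) < p$, put $\phi_p := 0$. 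The invariant $d(\mc L_F(T^{(p-1)})) = n$ is preserved because, by Lemma $2.5$, $d(\mc L_F(S^p)) = m + p - 1 \leq m < n$. Iterating from $p = 1$ down to $p = n - d_F + 2$ and setting $\phi(z) := \sum_{p=n-d_F+2}^{1} \phi_p z^p$ produces a $\phi$ of degree $\leq 1$ such that $T - \phi(S)$ has degree $\leq n - d_F + 1$.

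For the final assertion, the coefficient $\phi_1 = t_1/s_1 \in \mc C$ is nonzero because $T$ and $S$ both have exact degree $1$; in particular $\phi$ has degree exactly $1$ as a Laurent series in $z^{-1}$. By Lemma $2.5$, $\mc L_F(\phi(S)) = \sum_{p \leq 1} \phi_p \mc L_F(S^p)$ has dominant term $\phi_1 \mc L_F(S)$ of degree $m$, while each other term with $p \leq 0$ contributes in degree $m + p - 1 < m$. Hence $d(\mc L_F(\phi(S))) = m$, as required.

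The heart of the argument is the rigidity computation pinning down the top coefficient of any pseudodifferential operator whose Lie derivative along $F$ has sufficiently low degree; once in hand, the induction is routine bookkeeping. The main care needed is to ensure that each subtraction $T^{(p)} \mapsto T^{(p)} - \phi_p S^p$ preserves the invariant $d(\mc L_F(T^{(p)})) = n$, which in turn relies on the degree estimate for $\mc L_F(S^p)$ supplied by Lemma $2.5$.
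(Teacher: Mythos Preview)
Your argument is correct and amounts to the same idea as the paper's, but organized differently. The paper expands $T=\sum_{i\leq 1} f_i S^i$ with $f_i\in\mc V$, applies $\mc L_F$ via Leibniz, and uses Lemma~2.6 to read off that the largest index with $f_k$ nonconstant is $k=n-d_F+1$; then $\phi$ is just the finite tail of constant coefficients $f_i$ for $i\geq n-d_F+2$. Your rigidity statement (leading coefficient of $T^{(p)}$ is a constant multiple of $s_1^{\,p}$) is precisely the assertion that $f_p$ is constant in this $S$-expansion, unrolled one degree at a time. The paper's packaging is a bit cleaner because the $S$-expansion handles all degrees simultaneously, but your iterative version is equally valid.

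Two small points deserve tightening. First, the case $n=d_F$ is not quite ``trivial'': taking $\phi(z)=z$ gives $T-S$, and you need $t_1\neq s_1$ to ensure this has degree exactly $1$. This does hold, because $d(\mc L_F(T))=d_F$ forces $t_1 f'-d_F f\,t_1'\neq 0$ while $d(\mc L_F(S))=m<d_F$ forces $s_1 f'-d_F f\,s_1'=0$, so $t_1/s_1$ is nonconstant. Second, you conclude only $d(T-\phi(S))\leq n-d_F+1$, whereas the lemma asserts equality. Equality follows at once from your own invariant: since $d(\mc L_F(T^{(n-d_F+1)}))=n$ and any operator of degree $q$ has Lie derivative of degree at most $d_F+q-1$, a strict inequality $q<n-d_F+1$ would contradict the invariant.
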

\begin{proof}
Let $T=\sum_{-\infty <i \leq 1} {f_i}S^i$ be the expansion of $T$ in terms of powers of $S$, where $f_i \in \mc V$. Such an expansion exists because $S$ has degree $1$. Using Leibniz rule, we get  
\begin{equation}
\mathcal{L}_F(T)=\sum{\mathcal{L}_F ({f_i})S^i }+ \sum{{f_i}\mathcal{L}_F(S^i)}.
\end{equation}
 Note that $f_1$ is non-zero because both $S$ and $T$ have degree $1$. Moreover, according to Lemma $2.5$ for all integer $i$ $d(\mathcal{L}_F (S^i))= m+i-1$. Therefore $d(\sum{{f_i}\mathcal{L}_F (S^i)})=m$. As a direct consequence of $(2.12)$, we obtain 
\begin{equation}
d(\mc L_F(T))=d(\sum{\mathcal{L}_F ({f_i})S^i })=n.
\end{equation}
 Let $k=\max\{l,f_i '\neq 0\}$. Since the Lie derivative vanishes on constant operators there exists a nonconstant $f_i$, or we would contradict $(2.13)$. Using Lemma $2.6$ we deduce 
\begin{equation}
d(\sum{\mathcal{L}_F ({f_i})S^i })=k+d_F-1.
\end{equation}
 and, from $(2.13)$, we see that $n=k+d_F-1$. In other words, 
\begin{equation}
T=\sum_{i=n-d_F+2}^{1}{c_i S^i}+ \sum_{i\leq n-d_F+1}{f_i S^i},
\end{equation}
 where $c_i$ are constants and $f_{n-d_F+1}$ is not. If $n=d_F$, we can take any Laurent series $\phi$ of degree $1$ and the lemma will hold. Otherwise, define the Laurent polynomial 
\begin{equation}
\phi (z)=c_1 z+ \dots c_{n-d_F+2}z^{n-d_F+2}.
\end{equation}
Comparing $(2.15)$ with $(2.16)$ and remembering that $f_{n-d_F+1} \neq 0$, we deduce that $T-\phi(S)$ has degree $n-d_F+1$. Finally, by Lemma $2.5$,
\begin{equation}
 d(\mathcal{L}_F(\phi(S)))=d(\mathcal{L}_F(S)).
\end{equation}
\end{proof}

\begin{lemma}
Let $F$ be a function in $\mc V$ such that $d_F \geq 2$ and $T_n \in \mc V((\partial^{-1}))$ be a sequence of  pseudodifferential operators of degree one with invertible leading terms such that the sequence $d(\mathcal{L}_F(T_n))$ is strictly decreasing. Then there exists a sequence of complex laurent series $\phi_n$ of degree $1$ such that 
the sequence $\phi_n(T_n)$ admits a limit $L$ in $\mc V((\partial^{-1}))$. 
\end{lemma}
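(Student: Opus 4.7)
The plan is to construct the Laurent series $\phi_n$ inductively by repeatedly applying Lemma~$2.7$, so that the telescoping differences $\phi_{n+1}(T_{n+1})-\phi_n(T_n)$ have pseudodifferential degrees tending to $-\infty$. Convergence will then be automatic in the natural degree filtration on $\mc V((\partial^{-1}))$, which is complete.

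I start by setting $\phi_1(z)=z$, so that $\phi_1(T_1)=T_1$. Suppose inductively that I have built a degree-one Laurent series $\phi_n$ such that $\phi_n(T_n)$ has degree $1$ with invertible leading coefficient and satisfies $d(\mathcal{L}_F(\phi_n(T_n)))=d(\mathcal{L}_F(T_n))$. To construct $\phi_{n+1}$, I apply Lemma~$2.7$ to the pair $(T,S)=(\phi_n(T_n),T_{n+1})$: its hypothesis $d(\mathcal{L}_F(S))<d(\mathcal{L}_F(T))$ reduces, via the inductive identity, to the assumption that $d(\mathcal{L}_F(T_n))$ is strictly decreasing. Lemma~$2.7$ then yields a Laurent series $\phi_{n+1}$ of degree $1$ with
$$d\bigl(\phi_n(T_n)-\phi_{n+1}(T_{n+1})\bigr)\;\leq\; d(\mathcal{L}_F(T_n))-d_F+1,$$
and its final sentence gives $d(\mathcal{L}_F(\phi_{n+1}(T_{n+1})))=d(\mathcal{L}_F(T_{n+1}))$, closing the induction. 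The leading coefficient of $\phi_{n+1}(T_{n+1})$ is the product of the (nonzero constant) leading coefficient of $\phi_{n+1}$ with the invertible leading coefficient of $T_{n+1}$, hence invertible.

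For convergence, note that $d(\mathcal{L}_F(T_n))$ is a strictly decreasing sequence of integers, so it tends to $-\infty$, and since $d_F\geq 2$ we have $-d_F+1\leq -1$. Writing
$$\phi_n(T_n)\;=\;T_1+\sum_{k=1}^{n-1}\bigl(\phi_{k+1}(T_{k+1})-\phi_k(T_k)\bigr),$$
each summand has degree bounded by $d(\mathcal{L}_F(T_k))-d_F+1$, which tends to $-\infty$. Hence for every fixed $j\in\mathbb{Z}$, only finitely many summands contribute to the coefficient of $\partial^j$, and the partial sums stabilize coefficient-by-coefficient, producing a limit $L\in\mc V((\partial^{-1}))$.

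The one nontrivial point to double-check is the invariant $d(\mathcal{L}_F(\phi_n(T_n)))=d(\mathcal{L}_F(T_n))$, on which the entire induction depends. For any degree-one Laurent series $\phi(z)=\sum_{i\leq 1}c_iz^i$ and any degree-one operator $S$ with invertible leading term, the derivation property of $\mathcal{L}_F$ together with its vanishing on constants gives $\mathcal{L}_F(\phi(S))=\sum_{i\leq 1}c_i\mathcal{L}_F(S^i)$. By Lemma~$2.5$, $d(\mathcal{L}_F(S^i))=d(\mathcal{L}_F(S))+(i-1)$, so the unique maximum-degree contribution is the $i=1$ term $c_1\mathcal{L}_F(S)$ of degree $d(\mathcal{L}_F(S))$. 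This bookkeeping is the main technical content and is precisely where the hypothesis of invertible leading terms is used, since it ensures every $\phi_n(T_n)$ remains honestly of degree $1$ throughout the iteration.
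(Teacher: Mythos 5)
Your proof is correct and rests on the same two ingredients as the paper's own argument — iterated application of Lemma $2.7$ and completeness of $\mc V((\partial^{-1}))$ with respect to the degree filtration — so it is essentially the same approach. The only difference is organizational: the paper builds a doubly-indexed array $T_n^k$, correcting each $T_n$ once against every earlier diagonal term before extracting the diagonal sequence $T_n^n=\phi_n(T_n)$, whereas you correct $T_{n+1}$ a single time against the already-adjusted $\phi_n(T_n)$, which is a cleaner bookkeeping of the same telescoping estimate $d(\phi_{k+1}(T_{k+1})-\phi_k(T_k))=d(\mathcal{L}_F(T_k))-d_F+1$.
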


\begin{proof}
Let us denote  $d(\mathcal{L}_F(T_n))$ by $d_n$ and rename $T_{n}^{1}$ the sequence of operators. Making use of  Lemma $2.6$ we can find Laurent series of degree $1$ ${\psi}_{n}^{1}(z)$ for $n\geq 2$ such that all ${\psi}_{n}^{1} (T_{n}^{1})-T_{1}^{1}$ have degree $d_1-d_F+1$. We denote $\psi_n^1(T_n^1)$ by $T_{n}^{2}$ for $n \geq 2$ and let $T_{1}^{2}=T_{1}^{1}$. Because we assumed that the leading term of the $T_n$ is invertible in $\mc V$, the sequence $T_n^2$ lies in $\mc V((\partial^{-1}))$. Moreover, since $\psi_n^1$ are Laurent series of degree $1$ in $\mathbb{C}$, for all $n \geq 2$ we have by Lemma $2.5$ :
\begin{equation}
d(\mc L_F(T_n^1))=d(\mc L_F(T_n^2)).
\end{equation}
 Next, we use Lemma $2.6$ one more time to obtain Laurent series of degree $1$ ${\psi}_{n}^{2}(z)$ for $n\geq 3$ such that ${\psi}_{n}^{2} (T_{n}^{2})-T_{2}^{2}$ has degree $d_2-d_F+1$ for all $n \geq 3$. As previously,  let us denote by $T_{n}^{3}$ the operator ${\psi}_{n}^{2} (T_{n}^{2})$ for $n \geq 3$, and let $T_{1,2}^3=T_{1,2}^2$. Since $d_1> d_2$, for all $n\geq 2$, $T_n^3-T_1^1=(T_n^3-T_2^2)+(T_2^2-T_1^1)$ has degree $d_1-d_F+1$. Moreover, for all $n \geq 3$ we have
\begin{equation}
d(\mc L_F(T_n^3))=d_n.
\end{equation}
 Iterating the argument, we construct a sequences of pseudodifferential operators $T_{n}^{k}$ with coefficients in $\mc V$ and Laurent series $(\psi_n^k)_{n \geq k+1}$ such that for all $m,k>n$,
\begin{equation}
d(T_n^n-T_m^k) =d_n-d_F+1,
\end{equation}
and such that for all $k \geq n$
 \begin{equation}
d(\mc L_F(T_n^k))=d_n.
\end{equation}
Specializing $(2.20)$ to $k=m$ we get for $n<m$
\begin{equation}
d(T_n^n-T_m^m) =d_n-d_F+1.
\end{equation}
 Therefore $T_n^n=\psi_n^1 \dots \psi_n^{n-1}(T_n^1)=\phi_n(T_n)$ admits a limit $L$ in $\mc V((\partial^{-1}))$. 
\end{proof}

\begin{lemma}
Let $F$ and $G$ be two commuting functions, with $d_G \geq 1$. Then, if $\mu_F$ denotes the degree of $[D_F,\partial]$, and provided that $D_G$ admits a $d_G$-th root in $\mc V(( \partial ^{-1}))$,
\begin{equation}
d(\mathcal{L}_F(D_G^{1/d_G}))=\mu_F-d_G+1.
\end{equation}
\end{lemma}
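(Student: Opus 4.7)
The plan is to exploit the commutation hypothesis $\{F,G\}=0$ to collapse $\mathcal{L}_F(D_G)$ into a simple expression, and then transfer the degree computation to $S := D_G^{1/d_G}$ using that $\mathcal{L}_F$ is a derivation.

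First, I would apply Lemma 1.23 (equation (1.30)) to the pair $(F,G)$. Since $\{F,G\}=0$ kills $D_{\{F,G\}}$, that identity collapses to
\begin{equation}
\mathcal{L}_F(D_G) = X_F(D_G) - [D_F,D_G] = X_G(D_F).
\end{equation}
Thus the problem reduces to computing $d(X_G(D_F))$ and then transferring back to $S$.

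Next I would show $d(X_G(D_F))=\mu_F$. Writing $D_F = \sum_n a_n \partial^n$ with $a_n = \partial F/\partial u^{(n)}$, one has $X_G(D_F) = \sum_n X_G(a_n)\partial^n$, whereas $[D_F,\partial] = -\sum_n a_n'\partial^n$, so by definition $\mu_F$ is the largest $n$ with $a_n' \neq 0$. For $n > \mu_F$, $a_n$ is a constant, so $X_G(a_n)=0$. For $n = \mu_F$, one must check $X_G(a_{\mu_F}) \neq 0$: letting $k_0$ be the largest index with $\partial a_{\mu_F}/\partial u^{(k_0)}\neq 0$ (which exists once one knows $a_{\mu_F}$ is not a pure quasiconstant), the contribution to $X_G(a_{\mu_F})$ from the variable $u^{(d_G+k_0)}$ is $\tfrac{\partial a_{\mu_F}}{\partial u^{(k_0)}}\cdot\tfrac{\partial G}{\partial u^{(d_G)}}$, which is nonzero because $\mc V$ is a domain and $d_G\geq 1$ forces $d_G+k_0 > k_0$, so no other $k<k_0$ term can cancel the leading monomial.

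Finally, I would apply Lemma 2.5 to the identity $D_G = S^{d_G}$. Since $\mathcal{L}_F$ is a derivation of $\mc V((\partial^{-1}))$ (Remark 2.2) and $d(S)=1$,
\begin{equation}
d(\mathcal{L}_F(D_G)) = d(\mathcal{L}_F(S^{d_G})) = d(\mathcal{L}_F(S)) + (d_G - 1)\cdot d(S) = d(\mathcal{L}_F(S)) + d_G - 1.
\end{equation}
Combining this with the two previous steps yields $d(\mathcal{L}_F(S)) = \mu_F - d_G + 1$, as required.

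The only delicate point in this plan is Step 2, the identification $d(X_G(D_F)) = \mu_F$; the rest is bookkeeping via the derivation property. Specifically, one needs to rule out the pathological case where the leading non-constant coefficient $a_{\mu_F}$ depends only on $x$, which is exactly where the assumption $d_G\geq 1$ is used to produce a nonvanishing leading monomial in $X_G(a_{\mu_F})$.
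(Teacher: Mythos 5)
Your proposal is correct and follows essentially the same route as the paper: collapse $\mathcal{L}_F(D_G)$ to $X_G(D_F)$ via Lemma 1.23 and $\{F,G\}=0$, identify its degree with $\mu_F$, and transfer to $D_G^{1/d_G}$ via Lemma 2.5. The only difference is that you spell out the leading-monomial argument for $X_G(a_{\mu_F})\neq 0$ (and flag the quasiconstant caveat), where the paper simply asserts that $X_G$ does not annihilate nonconstant functions.
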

\begin{proof}
By Lemma $1.23$, and since $F$ and $G$ commute, we have
\begin{equation}
  X_F(D_G)=[D_F,D_G]+X_G(D_F).
\end{equation}
In other words, 
\begin{equation}
\mathcal{L}_F(D_G)=X_G(D_F).
\end{equation}
Therefore $d(\mathcal{L}_F(D_G))=d(X_G(D_F))$. Note that, since $d_G \geq 1$, $X_G(H) \neq 0$ for any nonconstant function $H$. Hence the degree of $X_G(D_F)$ is the highest integer $n$ for which $\frac{\partial F}{\partial u^{(n)}}$ is not a constant. This is the same as the degree of $[D_F,\partial]$. Therefore
\begin{equation}
d(\mc L_F(D_G))=\mu_F.
\end{equation}
 We complete the proof combining Lemma $2.5$ with $(2.26)$.
\end{proof}

\begin{lemma}
Let $\mathcal{W}$ be an integrable system in $\mathcal{V}$ spanned by countably many functions $(F_n)_{n \geq 1}$ such that their differential orders $d_{F_n}$ are strictly increasing and $d_{F_1} \geq 2$. Let us assume moreover that $\frac{\partial F_1}{\partial u^{(d_{F_1})}}$ admits a $d_{F_1}$-th root in $\mc V$. Then $D_{F_n}$ admits a $d_{F_n}$-th root in $\mc V((\partial^{-1}))$ for all $n \geq 1$.
\end{lemma}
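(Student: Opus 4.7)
The plan is to reduce the claim to a statement about leading coefficients, and then exploit the commutativity $\{F_1, F_n\} = 0$ guaranteed by $\mathcal{W}$ being an integrable system. The reduction relies on the standard fact that a pseudodifferential operator $A = a_N \partial^N + (\text{lower})\in \mc V((\partial^{-1}))$ of degree $N$ admits an $N$-th root in $\mc V((\partial^{-1}))$ if and only if its leading coefficient $a_N$ admits an $N$-th root in $\mc V$: once the leading coefficient of the prospective root is fixed, the remaining coefficients are determined recursively by triangular linear equations over $\mc V$. Since the leading coefficient of $D_{F_n}$ is $a_n := \frac{\partial F_n}{\partial u^{(d_{F_n})}}$, the task reduces to showing that $a_n$ admits a $d_{F_n}$-th root in $\mc V$ for every $n \geq 1$; the case $n=1$ is the hypothesis.

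For $n \geq 2$, I would apply Lemma $1.23$ to the commuting pair $(F_1, F_n)$, which yields
\[
[D_{F_1}, D_{F_n}] \;=\; X_{F_1}(D_{F_n}) - X_{F_n}(D_{F_1}).
\]
The right-hand side has degree at most $d_{F_n}$ since the vector fields $X_{F_i}$ act on coefficients only, while the commutator on the left has degree at most $d_{F_1} + d_{F_n} - 1$. Since $d_{F_1} \geq 2$, this latter bound exceeds $d_{F_n}$, so every coefficient of $[D_{F_1}, D_{F_n}]$ of degree strictly greater than $d_{F_n}$ must vanish. In particular, extracting the coefficient of $\partial^{d_{F_1} + d_{F_n} - 1}$ using the product rule $\partial^k a = \sum_j \binom{k}{j} a^{(j)} \partial^{k-j}$, the sub-leading contributions of $D_{F_1}$ and $D_{F_n}$ cancel pairwise in the commutator, leaving
\[
d_{F_1}\, a_1\, a_n' \;-\; d_{F_n}\, a_n\, a_1' \;=\; 0.
\]
Dividing by $a_1 a_n$ (working in $\mc K$) this is $\partial \log(a_n^{d_{F_1}}/a_1^{d_{F_n}}) = 0$, hence by $(1.4)$ there exists a constant $\lambda \in \mc C$ with $a_n^{d_{F_1}} = \lambda\, a_1^{d_{F_n}}$.

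Writing $a_1 = c_1^{d_{F_1}}$ with $c_1 \in \mc V$ given by the hypothesis, we obtain $a_n^{d_{F_1}} = \lambda\, c_1^{d_{F_1} d_{F_n}}$. Since $\mc C$ is algebraically closed, $\lambda$ admits a $d_{F_1}$-th root $\mu \in \mc C$, and since $\mc V$ is a domain the equation $a_n^{d_{F_1}} = (\mu c_1^{d_{F_n}})^{d_{F_1}}$ forces $a_n = \zeta \mu c_1^{d_{F_n}}$ for some $d_{F_1}$-th root of unity $\zeta \in \mc C$. A further extraction of a $d_{F_n}$-th root of the constant $\zeta \mu$ in $\mc C$ produces an element $(\zeta \mu)^{1/d_{F_n}} c_1 \in \mc V$ whose $d_{F_n}$-th power is $a_n$, completing the reduction.

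The main obstacle is the sub-leading coefficient computation in the commutator $[D_{F_1}, D_{F_n}]$: one must verify carefully that the contributions of $\tfrac{\partial F_1}{\partial u^{(d_{F_1}-1)}}$ and $\tfrac{\partial F_n}{\partial u^{(d_{F_n}-1)}}$ cancel at degree $d_{F_1}+d_{F_n}-1$, so that the constraint reduces to the clean first-order relation between $a_1$ and $a_n$. Everything else is a formal application of algebraic closedness of $\mc C$.
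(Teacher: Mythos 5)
Your proof is correct and follows essentially the same route as the paper's: reduce to the leading coefficient $a_n=\frac{\partial F_n}{\partial u^{(d_{F_n})}}$, use Lemma $1.23$ and the commutativity $\{F_1,F_n\}=0$ to bound the degree of $[D_{F_1},D_{F_n}]$ by $d_{F_n}<d_{F_1}+d_{F_n}-1$, extract the vanishing coefficient $d_{F_1}a_1a_n'-d_{F_n}a_na_1'=0$, integrate to get $a_n^{d_{F_1}}=\lambda a_1^{d_{F_n}}$ with $\lambda\in\mc C$, and conclude using algebraic closedness of $\mc C$. Your handling of the final root extraction (tracking the root of unity $\zeta$ in the domain $\mc V$) is in fact slightly more careful than the paper's one-line conclusion.
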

\begin{proof}
A pseudodifferential operator $L=l_N \partial^N
+ \dots \in \mc V((\partial^{-1}))$ admits a $N$-th root if and only if $l_N$ does in $\mc V$. Hence we want to show that $\frac{\partial F_n}{\partial u^{(d_{F_n})}}$ admits a $d_{F_n}$-th root for all $n$. We know it is the case when $n=1$. Let us denote by $X$ a $d_{F_1}$-th root of $\frac{\partial F_1}{\partial u^{(d_{F_1})}}$ in $\mc V$.  By degree considerations in $(2.24)$, we see that $[D_{F_n},D_{F_1}]$ has degree at most $d_{F_n}+d_{F_1}-2$. Hence the coefficient of order $d_{F_n}+d_{F_1}-1$ in the commutator $[D_{F_n},D_{F_1}]$ must vanish, i.e.
\begin{equation}
d_{F_n}\frac{\partial F_n}{\partial u^{(d_{F_n})}}(\frac{\partial F_1}{\partial u^{(d_{F_1})}})'=d_{F_1}\frac{\partial F_1}{\partial u^{(d_{F_1})}}(\frac{\partial F_n}{\partial u^{(d_{F_n})}})',
\end{equation}
which is to say that there is a contant $\alpha$ such that
\begin{equation}
(\frac{\partial F_n}{\partial u^{(d_{F_n})}})^{d_{F_1}}=\alpha (\frac{\partial F_1}{\partial u^{(d_{F_1})}})^{d_{F_n}}.
\end{equation}
and therefore
\begin{equation}
\frac{\partial F_n}{\partial u^{(d_{F_n})}}=(\alpha^{1/d_{F_1}}X)^{d_{F_n}}.
\end{equation}
\end{proof}

\begin{proof}, of Proposition $2.4.$
\\
 Let $T_n=D_{F_n}^{1/d_{F_n}}$, which are well-defined in $\mc V((\partial^{-1}))$ thanks to Lemma $2.10$. Moreover, they have the same leading term which is invertible in $\mc V$ because the leading term of $D_{F_1}$ is. By Lemma $2.8$, we have for $n \geq 1$
\begin{equation}
d(\mc L_{F_1}(T_n))=\mu_{F_1}-d_{F_n}+1.
\end{equation}
These degrees are strictly decreasing, hence we can apply Lemma $2.8$ to find Laurent series of degree $1$ $\phi_n$ and a pseudodifferential operator $L \in \mc V((\partial^{-1}))$ such that $\phi_n(T_n)$ converges to $L$.
 Taking the limit $m \rightarrow \infty$ in $(2.22)$, we get
\begin{equation}
d(\phi_n(T_n)-L)=\mu_{F_1}-d_{F_1}+2-d_{F_n}.
\end{equation}
 Let us check that $\mathcal{L}_{F_k}(L)=0$ for all $k$. First of all, using Lemma $2.9$, we have 
\begin{equation} 
d(\mathcal{L}_{F_k}(T_n))=\mu_{F_k}-d_{F_n}+1.
\end{equation} 
 Since $\mathcal{L}_{F_k}$ is a derivation and $\phi_n$ is a complex Laurent series of degree $1$, the same holds if we replace $T_n$ by $\phi_n(T_n)$
\begin{equation} 
d(\mathcal{L}_{F_k}(\phi_n(T_n)))=\mu_{F_k}-d_{F_n}+1.
\end{equation} 
 Moreover, by definition of the Lie derivative, it is clear that for any pseudodifferential operator $M$ and function $H$,
 \begin{equation}
d(\mathcal{L}_H(M))\leq d(M)+d_H.
\end{equation}
Finally, combining equations $(2.31)$ to $(2.34)$ : 
\begin{equation} 
\begin{split}
d(\mc L_{F_k}(L)) & \leq d(\mc L_{F_k}(\phi_n(T_n)))+d(\mc L_{F_k}(L-\phi_n(T_n))) \\ 
                          & \leq d(\mc L_{F_k}(\phi_n(T_n)))+d(L-\phi_n(T_n))+d_{F_k} \\
                          & \leq \mu_{F_k}-d_{F_n}+1+\mu_{F_1}-d_{F_1}+2-d_{F_n}+d_{F_k} \\
                          & \leq (\mu_{F_1}+\mu_{F_k}-d_{F_1}+d_{F_k}+3)-2d_{F_n}.
\end{split}
\end{equation}
The sequence of degrees $d_{F_n}$ is strictly increasing. Therefore for all $k \geq 1$ $L$ is recursion for $F_k$ :
\begin{equation}
\mathcal{L}_{F_k}(L)=0.
\end{equation}
\end{proof}
\begin{remark} Let $\mathcal{W}$ be an integrable system in $\mc V$. Then the quotient space $\mathcal{W}/(\mathcal{W} \cap \mathcal{V}_{1})$ admits a countable basis. Indeed, it follows from Lemma $2.10$ that if two functions $F$ and $G$ of differential order $n \geq 2$ commute, then $\frac{\partial F}{\partial u^{(n)}}$ and $\frac{\partial G}{\partial u^{(n)}}$ are proportional.
\end{remark}

\subsection{Hereditary operators}

\begin{definition}
A rational operator $L=AB^{-1} \in \mc V (\partial)$ is called \textit{hereditary} (cf. [O93]) if the following identity holds for all $F \in \mc V$
\begin{equation}
\mathcal{L}_{A(F)}(L)=L\mathcal{L}_{B(F)}(L).
\end{equation}
\end{definition}
\begin{remark}
Using, $(1.23)$, we can rewrite the preceding definition as follows. An operator $L=AB^{-1}$ is hereditary if and only if for all $F \in \mc V$,
\begin{equation}
X_{A(F)}(L)-[(D_A)_F, L]=L (X_{B(F)}(L)-[(D_B)_F, L]) .
\end{equation}
Indeed, $[AD_F,AB^{-1}]=AB^{-1}[BD_F,AB^{-1}]$ for all $F$.
\end{remark}

\begin{remark}
In the definition of hereditariness, the choice of the fractional decomposition does not matter. If $A=A_0 X$, $B=B_0X$ and equation $(2.38)$ holds for $(A,B)$, it also holds for $(A_0,B_0)$. Indeed, every coefficient of $(2.38)$ as a pseudodifferential operator on $\mc V$ is a differential operator on $F$. Since they vanish on the image of $X$, which is infinite dimensional over $\mc C$, they are identically zero.
\end{remark}
The following lemma gives a reason for the name hereditary : if $L$ is recursion for $G \in \mc V$, it is also recursion for $L(G)$.
\begin{lemma}
 If $L=AB^{-1}$ is a rational operator recursion for $B(F)$ and hereditary, it follows from Definition  $2.12$ that $L$ is recursion for $A(F)$ as well.
\end{lemma}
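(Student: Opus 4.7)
The plan is to observe that this is an immediate specialization of the hereditariness identity. The statement of Definition $2.12$ is an identity in $F$ that holds for every $F \in \mc V$, not just those for which $B(F)$ happens to be a commuting function. So the natural strategy is to plug the specific $F$ from the hypothesis directly into $(2.37)$ and read off the conclusion.

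Concretely, I would argue as follows. By assumption $L$ is recursion for $B(F)$, so by Definition $2.1$ we have $\mc L_{B(F)}(L)=0$ as an element of the algebra of pseudodifferential operators $\mc V((\partial^{-1}))$. Applying the hereditariness identity $(2.37)$ to this same $F$ gives
\begin{equation*}
\mc L_{A(F)}(L) \;=\; L\, \mc L_{B(F)}(L) \;=\; L \cdot 0 \;=\; 0,
\end{equation*}
where the composition on the right is the ordinary product in $\mc V((\partial^{-1}))$, which is well-defined since both factors are pseudodifferential operators. Thus $\mc L_{A(F)}(L)=0$, i.e. $L$ is recursion for $A(F)$ in the sense of Definition $2.1$.

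There is essentially no obstacle here: the entire content of the lemma is packaged into the definition of hereditariness, and the only thing being verified is that if the right-hand side of $(2.37)$ vanishes for the given $F$, then so does the left-hand side. One might briefly remark (as in Remark $2.14$) that the argument is insensitive to the choice of right fractional decomposition $(A,B)$ of $L$, so nothing needs to be said about well-definedness modulo common right factors. This is why the lemma justifies the terminology \emph{hereditary}: the property of being a recursion operator is inherited from $B(F)$ by $A(F)=L(B(F))$, and, by iteration, along the whole orbit of $L$ applied to $B(F)$.
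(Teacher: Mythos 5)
Your proof is correct and is exactly the argument the paper intends: the paper's own proof of this lemma is simply ``Obvious,'' and your specialization of the hereditariness identity $(2.37)$ to the given $F$, combined with $\mc L_{B(F)}(L)=0$, is precisely the one-line computation that makes it so.
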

\begin{proof}
Obvious.
\end{proof}

\begin{example}
Here are a few hereditary operators :
\begin{enumerate}
\item[(a)] any matrix rational operator with constant coefficients.  In that case, both sides of $(2.38)$ vanish.\\
\item[(b)] $L=\partial(\partial+u)\partial^{-1}=\partial+u+u'\partial^{-1}$. In that case, $A=\partial(\partial+u)$ and $B=\partial$. Moreover, we have 
\begin{equation}
\begin{split}
 (D_A)_F &=\partial F\\
 (D_B)_F &=0 \\
                                X_{A(F)}(L) &= \partial A(F) \partial^{-1} \\
                                X_{B(F)}(L) &= \partial F' \partial ^{-1}.
\end{split}
\end{equation}
Therefore, for $L$, $\partial^{-1}(2.38)\partial$ is equivalent to 
\begin{equation}
A(F)-F \partial(\partial+u)+(\partial+u)F \partial=(\partial+u)F'.
\end{equation}
The coefficient of $\partial^2$ is $0$ on both sides of $(2.40)$. The constant coefficients also agree, as
\begin{equation}
 (F'+Fu)'-Fu'=F''+uF'.
\end{equation}
It remains to check that the coefficients in front of $\partial$ are the same in both sides of $(2.40)$. This follows from 
\begin{equation}
-Fu+F'+uF=F'.
\end{equation}
\item[(c)] $L=\partial^2+2u+u'\partial^{-1}$. It can be checked by a direct computation as in $(b)$.
\end{enumerate}
\end{example}

\begin{lemma}
Let $L=AB^{-1}$ be  a hereditary rational operator. Then $L^N$ is hereditary for any $N \in \mathbb{Z}$.
\end{lemma}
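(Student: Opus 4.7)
My plan is to reduce the hereditary identity for $L^N$ to a stronger pointwise claim using the derivation property of the Lie derivative, then induct on $N$. Since $\mc L_H$ is a derivation of $\mc V((\partial^{-1}))$ (Remark 2.2), for every $N\geq 1$ and every pseudodifferential operator $H$ one has $\mc L_H(L^N) = \sum_{k=0}^{N-1} L^k\, \mc L_H(L)\, L^{N-1-k}$. Therefore, given any right fractional decomposition $L^N = A_N B_N^{-1}$, the hereditary identity $\mc L_{A_N(F)}(L^N) = L^N \mc L_{B_N(F)}(L^N)$ will follow immediately from the stronger claim $(\star_N)$: for every $F \in \mc V$, $\mc L_{A_N(F)}(L) = L^N \mc L_{B_N(F)}(L)$.

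I would prove $(\star_N)$ by induction on $N$. The case $N=1$ is exactly the hereditariness of $L$. For the inductive step, since $\mc K(\partial)$ is a field (Lemma 1.13), I choose differential operators $X, Y$ with $Y\neq 0$ such that $B^{-1}A_N = XY^{-1}$, or equivalently $A_N Y = BX$. This produces the factorization $L^{N+1} = AX\,(B_N Y)^{-1}$, so I set $A_{N+1}=AX$ and $B_{N+1}=B_N Y$. For every $F \in \mc V$ the relation $A_N Y = BX$ rewrites as $B(X(F)) = A_N(Y(F))$, and I can chain the hereditariness of $L$ at the function $X(F)$ with the induction hypothesis $(\star_N)$ at the function $Y(F)$ to obtain
\begin{equation*}
\mc L_{A_{N+1}(F)}(L) = \mc L_{A(X(F))}(L) = L\,\mc L_{B(X(F))}(L) = L\,\mc L_{A_N(Y(F))}(L) = L^{N+1}\,\mc L_{B_N(Y(F))}(L) = L^{N+1}\,\mc L_{B_{N+1}(F)}(L),
\end{equation*}
which closes the induction.

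For negative exponents I first handle $N=-1$ directly. Using the product rule $\mc L_H(L^{-1}) = -L^{-1}\mc L_H(L) L^{-1}$, the hereditary identity for $L^{-1} = BA^{-1}$, namely $\mc L_{B(F)}(L^{-1}) = L^{-1}\mc L_{A(F)}(L^{-1})$, reduces upon substitution to $L\,\mc L_{B(F)}(L) = \mc L_{A(F)}(L)$, which is the hereditariness of $L$. Once $L^{-1}$ is known to be hereditary, the case $L^{-N} = (L^{-1})^N$ for $N\geq 1$ follows from the positive-exponent case already proved. The only mildly technical point of the whole argument is the bookkeeping with fractional decompositions; the key observation that unlocks the induction is that writing $B^{-1}A_N = XY^{-1}$ produces exactly the compatibility $B(X(F)) = A_N(Y(F))$ needed to splice the hereditary identity for $L$ with the induction hypothesis for $L^N$, so there is no deeper obstacle.
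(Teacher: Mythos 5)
Your proof is correct and follows essentially the same route as the paper: you establish the stronger intermediate identity $\mc L_{A_N(F)}(L)=L^N\mc L_{B_N(F)}(L)$ by induction, using the relation $A_NY=BX$ (the paper's $BC_k=A_kD_k$) to splice the hereditariness of $L$ with the induction hypothesis, then pass to $\mc L_{A_N(F)}(L^N)$ via the derivation property, and treat $L^{-1}$ by the same direct computation as the paper. The only cosmetic difference is that you make the reduction of negative powers $N\le -2$ to the positive case explicit, which the paper leaves implicit.
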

\begin{proof}
First, let us show that $L^{-1}$ is itself hereditary. Note that $L^{-1}=BA^{-1}$. Since $\mc L_F$ is a derivation we have for $F \in \mc V$
\begin{equation}
\begin{split}
   X_{B(F)}(L^{-1})&=-L^{-1}X_{B(F)}(L)L^{-1} \\
                          &=-L^{-2}(LX_{B(F)}(L))L^{-1} \\
                          &=L^{-1}(-L^{-1} X_{A(F)}(L) L^{-1}) \\
                          &=L^{-1}X_{A(F)}(L^{-1}).
\end{split}
\end{equation}
  It remains to consider the case $N \geq 2$. Let $A_1=A$, $B_1=B$, $C_0=0$ and $D_0=0$. Assuming to have defined $A_k$, $B_k$, $C_{k-1}$ and $D_{k-1}$ for $k \geq 1$ we let $C_k$ and $D_k$ such that $BC_k=A_kD_k$ and let $A_{k+1}=AC_k$ and $B_{k+1}=B_kD_k$. It is easy to show by induction that $L^k=A_k{B_k}^{-1}$. Let us show by induction on $k$ that 
\begin{equation}\mc L_{A_k(F)} (L)=L^k \mc L_{B_k(F)} (L).
\end{equation}
 By definition of hereditariness this is true for $k=1$. As for the induction step, 
\begin{equation} 
\begin{split}
\mc L_{A_{k+1}(F)} (L) & =  \mc L_{AC_k(F)} (L) \\ 
                                  & = L\mc L_{BC_k(F)} (L) \\
                                  & =  L \mc L_{A_kD_k(F)} (L) \\
                                  & =  L^{k+1} \mc L_{B_kD_k(F)} (L) \\
                                  & = L^{k+1} \mc L_{B_{k+1}(F)} (L).
\end{split}
\end{equation}
To conclude note that $\mc L_{A_k(F)}(L^k)=\sum_{i=0}^{k-1}{L^i \mc L_{A_k(F)}(L) L^{k-1-i}}$. Hence
\begin{equation}
\begin{split}
\mc L_{A_k(F)} (L^k)&=L^k\sum_{i=0}^{k-1}{L^i \mc L_{B_k(F)}(L) L^{k-1-i}}\\
                               &L^k \mc L_{B_k(F)} (L^k).
\end{split}
\end{equation}
\end{proof}

The following proposition says that a rational operator must be heredtary in order to generate an integrable system.

\begin{proposition}
Let $L=AB^{-1}$ be a rational operator and $(H_n)_{n \geq 0}$ a sequence of functions, which span an infinite-dimensional space over $\mc C$, such that 
\begin{enumerate}
 \item[(1)]  $B(H_{n+1})=A(H_n)$ for all $n \geq 0$, \\
  \item[(2)]  $\{B(H_n),B(H_m)\}=0$ for all $n,m \geq 0$.
\end{enumerate}
Then $L$ is recursion for all the $B(H_n)$ and hereditary.
\end{proposition}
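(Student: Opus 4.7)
The plan is to derive a convenient formula for how $\mc L_F(L)$ acts on functions in the image of $B$, specialize it to the sequence $\{H_n\}$, and then upgrade the resulting pointwise vanishing to vanishing of rational operators using the infinite-dimensionality hypothesis and Lemma $1.15$.

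Starting from $\mc L_F(L) = X_F(L) - [D_F, L]$, combining the elementary identity $X_F(L(G)) = X_F(L)(G) + L(X_F(G))$ with the relation $X_F(H) - D_F(H) = \{F, H\}$ that follows from $(1.5)$ and the definition of $\{\cdot,\cdot\}$, I would derive the master formula
\[
\mc L_F(L)(B(H)) = \{F, A(H)\} - L(\{F, B(H)\}),
\]
valid modulo the usual ambiguity $A(\ker B)$. Equivalently, since $\mc L_F$ is a derivation on the skew field $\mc V((\partial^{-1}))$, one has the identity $\mc L_F(L)\cdot B = \mc L_F(A) - L\,\mc L_F(B)$; specializing $F = B(H_n)$ and evaluating at $H_m$, hypothesis $(1)$ lets us rewrite $A(H_m) = B(H_{m+1})$, so both brackets $\{B(H_n), B(H_{m+1})\}$ and $\{B(H_n), B(H_m)\}$ vanish by hypothesis $(2)$, the remaining $A$-valued terms cancel, and $(\mc L_{B(H_n)}(L)\cdot B)(H_m)$ lies in the fixed finite-dimensional subspace $A(\ker B)$ for every $m \geq 0$.

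To conclude $\mc L_{B(H_n)}(L) = 0$ as a rational operator, set $R := \mc L_{B(H_n)}(L)\cdot B$ and write $R = CD^{-1}$ in right fractional form. If $C \neq 0$ then $\ker C$ is finite-dimensional by Lemma $1.15$, so the set of $H \in \im D$ mapped by $R$ into $A(\ker B)$ is itself finite-dimensional; but each $H_m$ lies in this set by the previous step, contradicting the infinite-dimensionality of $\mathrm{span}_{\mc C}\{H_m\}$. Hence $C = 0$, so $R = 0$, and since $B \neq 0$ in $\mc V((\partial^{-1}))$, we deduce $\mc L_{B(H_n)}(L) = 0$; that is, $L$ is recursion for every $B(H_n)$.

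For hereditariness, consider $T(F) := \mc L_{A(F)}(L) - L\,\mc L_{B(F)}(L)$. For $F = H_n$, hypothesis $(1)$ gives $A(H_n) = B(H_{n+1})$, and the recursion result just proved yields $\mc L_{A(H_n)}(L) = \mc L_{B(H_{n+1})}(L) = 0$ and $L\,\mc L_{B(H_n)}(L) = 0$, so $T(H_n) = 0$ for every $n$. To extend to arbitrary $F \in \mc V$ I invoke the argument of Remark $2.14$: after the $D_F$-contributions cancel on both sides of the hereditariness identity $(2.38)$, each coefficient of $T(F)$ as a pseudodifferential operator is a differential operator in $F$ with coefficients in $\mc V$; a differential operator in $F$ vanishing on the infinite-dimensional span of $\{H_n\}$ must be identically zero by Lemma $1.15$, hence $T(F) = 0$ for every $F \in \mc V$ and $L$ is hereditary. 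The main subtlety throughout is the careful handling of the finite-dimensional ambiguity of the rational-operator action, which the identity $\mc L_F(L)\cdot B = \mc L_F(A) - L\,\mc L_F(B)$ makes tractable.
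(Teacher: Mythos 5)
Your strategy is essentially the paper's: the identity $\mc L_F(L)B=\mc L_F(A)-L\,\mc L_F(B)$ is, after discarding the cancelling $AD_F-AB^{-1}BD_F$ terms, exactly the quantity $(X_F(A)-D_FA)-AB^{-1}(X_F(B)-D_FB)$ that the paper evaluates on the $H_m$, and your hereditariness step (both sides of $(2.38)$ vanish at $F=H_n$ because $L$ is recursion for $A(H_n)=B(H_{n+1})$ and for $B(H_n)$, then coefficients are differential operators in $F$ killing an infinite-dimensional space) is verbatim the paper's conclusion. So the content is right.

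The one place you are less careful than the paper is the upgrade from ``$R(H_m)\in A(\ker B)$ for all $m$'' to ``$R=0$''. You pass to a minimal right fraction $R=CD^{-1}$ and assert that each $H_m$ lies in the set of $H\in \im D$ mapped into $A(\ker B)$; but what the previous step actually gives is a value of $R$ at $H_m$ computed through the \emph{particular} presentation $\mc L_F(A)-AB^{-1}\mc L_F(B)$ (using that $\mc L_F(B)(H_m)=B(-\{F,H_m\})$ lies in $\im B$). It is not immediate that $H_m$ lies in the image of the denominator $D$ of the minimal fraction, nor that the two notions of ``evaluating $R$ at $H_m$'' agree — this is precisely the subtlety about association through non-minimal or composite decompositions that [CDSK14] is needed for. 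The paper sidesteps this entirely: it records the two genuinely differential-operator identities $(X_F(A)-D_FA)(H_m)=A(x_m)$ and $(X_F(B)-D_FB)(H_m)=B(x_m)$ with the \emph{same} $x_m=-X_F(H_m)$, left-multiplies by $C$ and $D$ with $CA=DB$ so that both sides become $CA(x_m)=DB(x_m)$, and only then uses infinite-dimensionality to get the operator identity $C(X_F(A)-D_FA)=D(X_F(B)-D_FB)$, from which $\mc L_F(L)=0$ follows by division in the skew field. You have all the ingredients for this (your computation already produces the common value $-A(\{F,H_m\})$ on both sides), so the fix is to clear denominators \emph{before} evaluating rather than to reason about the domain of the minimal fraction of $R$ afterwards.
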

\begin{proof}
Let $F=B_{H_n}$ for some $n \geq 0$. We know that $F$ commutes with $A(H_m)$ for all $m \geq 0$. This means by definition that
\begin{equation}
X_{F} (A(H_m))=D_{F} (A(H_m)) \hspace{2 mm} \forall m \geq 0.
\end{equation}
If we denote $-X_F(H_m)$ by $x_m$, we can rewrite  $(2.47)$ as 
\begin{equation}
(X_{F}(A)-D_{F}A)(H_m)=A(x_m) \hspace{2mm} \forall m \geq 0.
\end{equation}
 Similarly, because $F$ commutes with $B(H_m)$ for all $m \geq 0$, we see that 
\begin{equation}
(X_{F}(B)-D_{F}B)(H_m)=B(x_m) \hspace{2 mm} \forall m \geq 0.
\end{equation} 
 Let $C$ and $D$ be differential operators such that $CA=DB$. Then, multiplying on the left $(2.48)$ by $C$ and $(2.49)$ by $D$, we get that 
\begin{equation}
C(X_{F}(A)-D_{F}A)(H_m)=D(X_{F}(B)-D_{F}B)(H_m) \hspace{2 mm} \forall m \geq 0.
\end{equation}
 Since the sequence $H_m$ spans an infinite-dimensional space over $\mc C$, $(2.50)$ implies that
\begin{equation}
C(X_{F}(A)-D_{F}A)=D(X_{F}(B)-D_{F}B),
\end{equation}
which rewrites into
\begin{equation}
(X_{F}(A)-D_{F}A)=AB^{-1}(X_{F}(B)-D_{F}B).
\end{equation}
because $L=AB^{-1}=C^{-1}D$. From there, we can conclude that $L$ is recursion for $F$. Indeed
\begin{equation}
\begin{split}
      X_F(AB^{-1})&=X_F(A)B^{-1}-AB^{-1}X_F(B)B^{-1}\\
                          &=(D_FA+AB^{-1}(X_F(B)-D_FB))B^{-1}-AB^{-1}X_F(B)B^{-1} \\
                          &=D_FAB^{-1}-AB^{-1}D_F\\
                         &=[D_F,AB^{-1}],
\end{split}
\end{equation}
where we used $(2.52)$ to obtain the second line of $(2.53)$.
\\
\indent
Since $L$ is recursion for $A(H_n)$ and $B(H_n)$ for all $n \geq 0$, equation $(2.38)$ holds for $F=H_n$ for all $n \geq 0$. Moreover, since the $H_n$'s span an infinite-dimensional space over $\mc C$, we get that $(2.38)$ holds for all $H \in \mc V$ and that $L$ is hereditary.
\end{proof}

\begin{proposition}
Let $A( \partial ) \in \mc V [\partial]$ be a hereditary differential operator, and $F \in \mc V$ be such that $A$ is recursion for $F$. Then 
\begin{equation}
\{A^n(F),A^m(F)\}=0 \hspace{3 mm} \forall n,m \geq 0.
\end{equation}
\end{proposition}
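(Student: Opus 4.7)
The plan is to exploit hereditariness to propagate the recursion property from $F$ to every $A^n(F)$, and then to collapse $\{A^n(F),A^m(F)\}$ to zero using the elementary identity $D_H(K)=X_K(H)$.

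First I would establish two auxiliary recursion statements. Applying Lemma $2.15$ to the trivial fractional decomposition $A=A\cdot 1^{-1}$, the hypothesis that $A$ is hereditary and recursion for $F=1\cdot F$ yields that $A$ is recursion for $A(F)$. Repeating the argument with $F$ replaced in turn by $A(F), A^2(F), \ldots$, one obtains by induction that $A$ is recursion for $A^n(F)$ for every $n\geq 0$. Since the Lie derivative $\mc L_G$ is a derivation of $\mc V((\partial^{-1}))$, this also gives $\mc L_{A^n(F)}(A^m)=0$ for all $n,m\geq 0$; in the special case $n=0$, $A^m$ is recursion for $F$.

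With these two families of recursion relations in hand, the main computation is short. The Leibniz rule for the evolutionary vector field $X_{A^n(F)}$ and the fact that $A^m$ is recursion for $A^n(F)$ give
\begin{align*}
X_{A^n(F)}(A^m(F)) &= X_{A^n(F)}(A^m)(F)+A^m\bigl(X_{A^n(F)}(F)\bigr)\\
&=[D_{A^n(F)},A^m](F)+A^m\bigl(X_{A^n(F)}(F)\bigr)\\
&=D_{A^n(F)}(A^m(F))-A^m\bigl(D_{A^n(F)}(F)\bigr)+A^m\bigl(X_{A^n(F)}(F)\bigr).
\end{align*}
By the definition of the Frechet derivative, $D_{A^n(F)}(F)=X_F(A^n(F))$; and running the identical manipulation on the pair $(F,A^n)$, now using that $A^n$ is recursion for $F$, shows that $X_F(A^n(F))=D_F(A^n(F))=X_{A^n(F)}(F)$. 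The last two terms therefore cancel, leaving $X_{A^n(F)}(A^m(F))=D_{A^n(F)}(A^m(F))$. Applying $D_H(K)=X_K(H)$ one more time rewrites the right-hand side as $X_{A^m(F)}(A^n(F))$, and so $\{A^n(F),A^m(F)\}=0$.

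The only real obstacle is bookkeeping: the key cancellation requires combining the recursion of $A^m$ for $A^n(F)$ with the recursion of $A^n$ for $F$, two facts that both follow from the iterated use of Lemma $2.15$ together with the derivation property of $\mc L$. Once both are established, the computation is entirely formal and the symmetry in $n$ and $m$ of $D_{A^n(F)}(A^m(F))$ does the rest.
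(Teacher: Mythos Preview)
Your argument is correct. The paper's own proof organizes the same ingredients slightly differently: after observing (as you do) that $A$ is recursion for each $A^n(F)$, it fixes $n$ and inducts on the gap $m$ via the single identity
\[
\{A^n(F),A(H)\}=\mc L_{A^n(F)}(A)(H)+A\bigl(\{A^n(F),H\}\bigr),
\]
applied with $H=A^{n+m}(F)$, so only the recursion of $A$ itself is ever invoked. You instead pass directly to $A^m$ via the derivation property of $\mc L$ and then need the auxiliary fact $\{F,A^n(F)\}=0$ (your ``identical manipulation'') to close the computation; this is precisely the base row $n=0$ of the paper's induction proved separately. Both routes are short and rest on the same Leibniz/recursion mechanism, so the difference is organizational rather than substantive.
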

\begin{proof}
 Since $A$ is recursion for $F$, it follows from iterating Lemma $2.15$ that $A$ is recursion for $A^n(F)$ for all $n \geq 0$. Let $n \geq 0$. We want to prove by induction on $m \geq 0$ that $A^n(F)$ and $A^{n+m}(F)$ commute. It is obviously true when $m=0$. Let us assume that, for some $m \geq 0$, $A^n(F)$ and $A^{n+m}(F)$ commute. Since $A$ is recursion for $A^n(F)$, we have 
\begin{equation}
\begin{split}
        \{A^n(F),A^{m+1}(F)\}&=X_{A^n(F)}(A^{m+1}(F))-X_{A^{m+1}(F)}(A^n(F))\\
                                              &=(X_{A^n(F)}(A)-[D_{A^n(F)},A])(A^{m}(F))\\
                                              & \hspace{2 mm}+A(\{A^n(F),A^{m}(F)\})\\
                                              &=0+0.
\end{split}
\end{equation}
\end{proof}
\begin{proposition}
Let $A=\sum_{k=0}^N{a_k \partial^k}$ be a hereditary differential operator on $\mc V$. Then for all $0 \leq k \leq N$, 
\begin{equation}
d(a_k) \leq N+1.
\end{equation}
\end{proposition}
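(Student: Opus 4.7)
The plan is to recast the hereditary identity via Remark~2.13 and then extract a contradiction from a top-coefficient comparison in $\partial$. Setting $B=1$ (so that $(D_B)_F=0$) in the alternative form of hereditariness yields the equivalent identity
\begin{equation*}
X_{A(F)}(A) - [(D_A)_F, A] = A\,X_F(A) \qquad \forall F \in \mc V. \tag{$\star$}
\end{equation*}
Each of the three terms is a differential operator in $\partial$ whose coefficients are $\mc C$-linear in $F, F', F'', \ldots$ (for instance $A(F)=\sum_i a_i F^{(i)}$ and $(D_A)_F=\sum_k F^{(k)}D_{a_k}$). Consequently $(\star)$ may be analyzed coefficient-by-coefficient in $\partial^j$ and, for each $j$, coefficient-by-coefficient in the $F^{(m)}$; the latter is justified by specializing $F = u^{(n)}$ for large $n$ and using algebraic independence of the $u^{(n+m)}$.

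Set $d := \max_k d(a_k)$ and $k_0 := \max\{k : d(a_k) = d\}$, and suppose for contradiction that $d \geq N + 2$. Then $(D_A)_F$ has operator degree exactly $d$, with leading coefficient at $\partial^d$ equal to
\[
p_d = \sum_{k=0}^{N} F^{(k)}\frac{\partial a_k}{\partial u^{(d)}}.
\]
A direct Leibniz-rule computation (the extra terms $p_{d-1}q_N+p_d q_{N-1}$ and $q_{N-1}p_d+q_N p_{d-1}$ cancel in the commutative ring $\mc V$) gives the top coefficient of $[(D_A)_F,A]$ at $\partial^{d+N-1}$ as $d p_d a_N' - N a_N p_d'$. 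Since $A X_F(A)$ has operator degree at most $2N$ and $X_{A(F)}(A)$ has operator degree at most $N$, both strictly less than $d+N-1$, identity $(\star)$ forces
\begin{equation*}
d\,p_d\,a_N' - N\,a_N\,p_d' = 0 \qquad \text{for every } F \in \mc V.
\end{equation*}

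The concluding step extracts the coefficient of $F^{(k_0+1)}$ in this relation. Writing
\[
p_d' = \sum_k\!\left(F^{(k+1)}\frac{\partial a_k}{\partial u^{(d)}} + F^{(k)}\bigl(\tfrac{\partial a_k}{\partial u^{(d)}}\bigr)'\right)
\]
and using that $\frac{\partial a_k}{\partial u^{(d)}}=0$ for every $k>k_0$ (by maximality of $k_0$, or because $a_k=0$ when $k>N$), every contribution cancels except the term $-N\,a_N\,\frac{\partial a_{k_0}}{\partial u^{(d)}}$. As $a_N\neq 0$ (since $d(A)=N$) and $\frac{\partial a_{k_0}}{\partial u^{(d)}}\neq 0$ (by definition of $k_0$ and $d$), this can vanish only if $N=0$, giving the desired contradiction when $N\geq 1$. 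The case $N=0$ is handled by the same logic: with $A=a_0$ the top coefficient of $[F D_{a_0},a_0]$ at $\partial^{d-1}$ equals $d F \frac{\partial a_0}{\partial u^{(d)}}a_0'$, whose required vanishing for all $F$ is incompatible with $d(a_0)=d\geq 2$. I expect the only technical work to lie in the commutator computation and in the careful extraction of the $F^{(k_0+1)}$-coefficient; once these are set up correctly, the degree count in $(\star)$ does all the work.
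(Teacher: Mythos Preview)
Your proof is correct and follows essentially the same route as the paper's: both set $B=1$ in the hereditary identity, observe that $X_{A(F)}(A)$ and $AX_F(A)$ have $\partial$-degree at most $2N$, compute the coefficient of $\partial^{d+N-1}$ in $[(D_A)_F,A]$ as $d\,p_d\,a_N'-N\,a_N\,p_d'$, and argue that this does not vanish for suitable $F$. You are simply more explicit than the paper in extracting the $F^{(k_0+1)}$-coefficient and in treating the $N=0$ case separately, whereas the paper's proof condenses this into the single remark that the expression is ``non-zero for $F \in \mc V$ with sufficiently large differential order.''
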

\begin{proof}
$(2.38)$ for $B=1$ means that for all $F \in \mc V$,
\begin{equation}
X_{A(F)}(A)-[(D_A)_F ,A]=AX_F(A).
\end{equation}
Both $X_{A(F)}(A)$ and $AX_F(A)$ have degree in $\partial$ at most $2N$ hence 
\begin{equation}
d_{\partial}([(D_A)_F ,A]) \leq 2N \hspace{4 mm} \forall F \in \mc V.
\end{equation}
Let $m= \max_{k}{d(a_k)}$. Then
the coefficient of $\partial^{N+m-1}$ in $[(D_A)_F,A]$ is
\begin{equation}
m(\sum_k{F^{(k)}\frac{\partial a_k}{\partial u^{(m)}}})a_N'-Na_N(\sum_k{F^{(k)}\frac{\partial a_k}{\partial u^{(m)}}})',
\end{equation}
which is non-zero for $F \in \mc V$ with sufficiently large differential order. Therefore
\begin{equation}
m \leq N+1.
\end{equation}
\end{proof}

\section{Integrable Operators}
Hereditariness is not a sufficient condition for a rational operator to generate an integrable system. Let us consider the operator 
\begin{equation}
L=\partial^{-1}u'' \partial \hspace{1 mm}.
\end{equation}
We begin by finding the minimal right fractional decomposition of  $L$. Since $\partial ( \frac{u''}{u^{'''}} \partial-1)=u''\partial \frac{1}{u^{'''}}\partial $ we have
\begin{equation}
L=(\frac{u''}{u^{'''}} \partial-1)(\frac{1}{u^{'''}}\partial)^{-1}=AB^{-1}.
\end{equation}
Therefore we can compute $D_A$ and $D_B$. Namely, for all $F$
\begin{equation}
\begin{split}
 (D_A)_F&=\frac{F'}{u'''}(-\frac{u''}{u'''}\partial+1)\partial^2,\\
 (D_B)_F&=-\frac{F'}{{u'''}^2}\partial^3.
\end{split}
\end{equation}
We want to check that $L$ is hereditary, i.e. that $(2.38)$ holds for all $F \in \mc V$ :
\begin{equation}
X_{A(F)}(L)-[(D_A)_F,L]=L(X_{B(F)}(L)-[(D_B)_F,L]).
\end{equation}
Let us plug in $(3.1)$ and $(3.3)$ into $(3.4)$, multiply on the left by $\partial$ and on the right by $\partial^{-1}$. $(3.4)$ is equivalent to the following
\begin{equation}
A(F)''-[\partial \frac{F'}{u'''}(-\frac{u''}{u'''}\partial+1)\partial,u'']=u''B(F)''-u''[-\partial \frac{F'}{{u'''}^2} \partial^2,u''].
\end{equation}
Putting $u''$ inside the commutator and rearranging the terms yields
\begin{equation}
\begin{split}
A(F)''-u''B(F)''&=[(\partial \frac{F'}{u'''}(-\frac{u''}{u'''}\partial+1)+u''\partial\frac{F'}{{u'''}^2} \partial)\partial,u'']\\
                   &= [(\partial \frac{F'}{u'''}-\frac{F'}{u'''} \partial)\partial,u''] \\
                   &= (\frac{F'}{u'''})'u''' \\
                   &=u'''B(F)'.
\end{split}
\end{equation}
Hence to check that $L$ is hereditary amounts to check that the following identity holds for all $F$ :
\begin{equation}
A(F)'=u''B(F)',
\end{equation}
which is satisfied. Therefore $L$ is hereditary.
\\
\indent
However, $L$ is recursion for a function $F$ if and only if $F$ lies in the $\mc C$-span of $1$ and $u'$. Indeed, recall that
$L$ is recursion for a function $F$ if  and only if
\begin{equation}
X_F(\partial^{-1} u'' \partial)=[D_F,\partial^{-1} u'' \partial].
\end{equation}
Taking the conjugation of $(3.8)$ by $\partial$ gives
\begin{equation}
F''=[\partial D_F \partial^{-1},u''].
\end{equation}
Since the degree of $[M,G]$ is $d(M)-1$ when $M$ is a pseudodifferential operator and $G$ is a non-constant function, the degree of $D_F$ has to be at most $1$. From there it is easy to conclude that the only functions for which $L$ is a recursion operator are linear combination of $u'$ and $1$.
\\
\indent
This example shows that to be hereditary for a rational operator does not imply that it can generate infinitely many functions. A rational operator which generates an integrable system falls into a finner subclass of operators which we call integrables and study in detail in this section. 
\\
\indent
We now proceed to define integrable operators. We will show that any positive power of integrable $L$ is integrable, and that integrability is a necessary condition to generate infinitely many commuting functions. We will also state and prove one of the key properties of these operators which is that if integrable $L=AB^{-1}$ is recursion for $B(F)$, then $A(F)$ and $B(F)$ commute.
\begin{definition}
A differential operator $A \in \mc K[\partial]$ is called integrable if there exists a bidifferential operator $M$ on $\mc K$ such that for all $F \in \mc V$
\begin{equation}
X_{A(F)}(A)=D_{A(F)}A+A(M_F-D_F A).
\end{equation}
Or equivalently,
\begin{equation}
X_{A(F)}(A)-(D_A)_FA=AM_F \hspace{1 mm}.
\end{equation}
\end{definition}

\begin{example}
A Poisson differential operator $H$ is integrable. Indeed recall from [BDSK09] $(1.50)$ that $H$ is a Poisson differential operator if and only if it is skewadjoint $H^*=-H$ and satisfies  for all $(F,G)$
\begin{equation}
H(D_GH(F)+D_{H(F)}^*(G)-D_FH(G)+D_F^*H(G))=\{H(F),H(G)\}.
\end{equation}
Moreover, since $H^*=-H$, we have
\begin{equation}
D_{H(F)}^*(G)+D_F^*H(G)=(D_H)_F ^* (G),
\end{equation}
 hence for all $(F,G)$, the following equation holds
\begin{equation}
X_{H(F)}(H)(G)+HD_F(G)-D_{H(F)}H(G)=H(D_H)_F ^* (G).
\end{equation}
Rearranging the LHS of $(3.14)$ using $(1.24)$, we get for all $F$ :
\begin{equation}
X_{H(F)}(H)-(D_H)_F H=H (D_H)_F ^*.
\end{equation}
\end{example}

Not all integrable operators are Poisson :
\begin{example}
The operator $A=\partial(\partial+u)$ is integrable. Indeed $(3.11)$ holds for $A$ and all function $F \in \mc V$
\begin{equation}
\partial (F''+(uF)')-\partial F\partial(\partial+u)=\partial (\partial+u)(-F\partial+F').
\end{equation}
\end{example}

\begin{remark}
The bidifferential operator $M$ in $(3.11)$ has to be skewsymmetric. Indeed, thanks to $(1.25)$, $(3.10)$ and $(3.11)$ are equivalent to
\begin{equation}
X_{A(F)}(A)(G)-X_{A(G)}(A)(F)=A(M(F,G)) \hspace{3 mm} \forall F,G \in \mc V.
\end{equation}
\end{remark}

\begin{definition}
We say that a pair of differential operators $(A,B)$ on $\mc K$ is integrable if there exists two bidifferential operators $M$ and $N$ such that for all $F \in \mc V$ and for all $\lambda \in \mc C$, 
\begin{equation}
X_{(A+\lambda B)(F)}(A+ \lambda B)-(D_{A+ \lambda B})_F(A+\lambda B)=(A+ \lambda B)(M+ \lambda N)_F.
\end{equation}
We say that a rational operator $L \in \mc K(\partial)$ is integrable if there exists an integrable pair $(A,B)$ such that $L=AB^{-1}$.
Note that $(3.18)$ is stronger than asking for $(A+\lambda B)$ to be integrable for all $\lambda$. $(3.18)$ is equivalent to the following three equations, symmetric in $A$ and $B$, for all $F \in \mc V$ :
\begin{equation}
\begin{split}
AM_F&=X_{A(F)}(A)-(D_A)_FA\\
BN_F&=X_{B(F)}(B)-(D_B)_FB \\
  AN_F+BM_F&=X_{A(F)}(B)+X_{B(F)}(A)-(D_A)_F B-(D_B)_F A.
\end{split}
\end{equation}
\end{definition}

\begin{remark}
A compatible pair of local Poisson structures $(H,K)$ is integrable. Indeed, $(3.15)$ holds with $H+ \lambda K$ for any constant $\lambda$. Comparing with $(3.18)$, we obtain the claim.
\end{remark}

\begin{lemma}
Let $A, B, C$ be differential operators such that $A=BC$ is integrable. Then $B$ is integrable. 
\end{lemma}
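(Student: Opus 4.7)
The plan is to expand the integrability identity for $A=BC$ by the Leibniz-type rules for evolutionary vector fields and for Frechet derivatives of composed operators, and then to apply the two divisibility lemmas for bidifferential operators, Lemmas $1.26$ and $1.27$, in sequence.

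Starting from $X_{A(F)}(A)-(D_A)_FA=AM_F$, I would substitute $A=BC$. Since $X_{A(F)}$ is a derivation of $\mc V[\partial]$ commuting with $\partial$, we have $X_{A(F)}(BC)=X_{A(F)}(B)\,C+B\,X_{A(F)}(C)$, and $(1.24)$ applied to $BC$ gives $(D_{BC})_F=(D_B)_{C(F)}+B(D_C)_F$. Plugging both into the integrability identity and collecting the terms that factor through $B$ on the left rearranges the equation into
\begin{equation*}
\bigl[X_{B(C(F))}(B)-(D_B)_{C(F)}B\bigr]\,C \;=\; B\bigl[CM_F-X_{A(F)}(C)+(D_C)_F A\bigr].
\end{equation*}
Denote the bracket on the left by $\mc W_F$ and the one on the right by $\mc P_F$. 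A direct inspection will show that $(F,H)\mapsto \mc W_F(H)$ and $(F,H)\mapsto \mc P_F(H)$ are polynomial in the derivatives of $F$ and $H$, so $\mc W$ and $\mc P$ are honest bidifferential operators.

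Next I would invoke Lemma $1.27$ with $(A,M,N,B)$ in the lemma's notation matched to $(B,\mc P,\mc W,C)$ in ours: the identity $B\,\mc P_F=\mc W_F\,C$ is exactly its hypothesis, so the lemma produces a bidifferential operator $P'$ with $\mc P=P'C$ and
\begin{equation*}
\mc W_F\;=\;X_{B(C(F))}(B)-(D_B)_{C(F)}B\;=\;B\,P'_F\qquad\forall\,F\in\mc V.
\end{equation*}
Finally, to promote this identity from $G\in \im C$ to all $G\in\mc V$, I would introduce the bidifferential operator $\tilde{\mc Z}(G,H):=X_{B(G)}(B)(H)-(D_B)_G(B(H))$; in this notation the previous display reads $\tilde{\mc Z}_{C(F)}=B\,P'_F$, which is precisely the hypothesis of Lemma $1.26$ with $(M,A,N,B)=(\tilde{\mc Z},C,P',B)$. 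That lemma then yields a bidifferential operator $\mc M'$ such that $\tilde{\mc Z}=B\,\mc M'$, that is,
\begin{equation*}
X_{B(G)}(B)-(D_B)_GB\;=\;B\,\mc M'_G\qquad\forall\,G\in\mc V,
\end{equation*}
which is exactly the definition of integrability of $B$.

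The main obstacle is not any deep manipulation but careful bookkeeping: one must check that the intermediate objects $\mc W$, $\mc P$ and $\tilde{\mc Z}$ are genuine bidifferential operators in the correct pairs of variables, and match them up properly with the variables of Lemmas $1.26$ and $1.27$. Once that is done, Lemma $1.27$ performs the cancellation of $C$ on the right and Lemma $1.26$ propagates the identity from the image of $C$ to all of $\mc V$, with no additional computation required.
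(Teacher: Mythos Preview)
Your proposal is correct and follows the same route as the paper's proof: both reach the identity $B(CM_F+(D_C)_F BC-X_{BC(F)}(C))=(X_{BC(F)}(B)-(D_B)_{C(F)}B)C$ (the paper's $(3.23)$) and then cancel $C$ on the right via Lemma~$1.27$. The only difference is that the paper writes the left-hand bracket as $P_{C(F)}$ with $P_G=X_{B(G)}(B)-(D_B)_GB$ and then asserts in one line that Lemma~$1.27$ yields $P=BR$, whereas you make explicit the additional appeal to Lemma~$1.26$ needed to pass from $P_{C(F)}=BP'_F$ to divisibility of $P$ itself by $B$; this is a clarification rather than a different argument.
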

\begin{proof}
We know that there exists a bidifferential operator $M$ such that for all $F \in \mc V$
\begin{equation}
 X_{A(F)}(A)=D_{A(F)}A+A(M_F-D_F A).
\end{equation}
 Since $A=BC$ and evolutionnary vector fields act via $(1.22)$ as derivations of the ring of differential operators, we obtain for all $F \in \mc V$
\begin{equation}
X_{BC(F)}(B)C+BX_{BC(F)}(C)-D_{BC(F)}BC+BCD_FBC=BCM_F.
\end{equation}
Let us start by simplifying the two last terms of the LHS using $(1.23)$ twice :
\begin{equation}
\begin{split}
D_{BC(F)}&=BD_{C(F)}+(D_B)_{C(F)}\\
              &=BCD_F+B(D_C)_F+(D_B)_{C(F)}.
\end{split}
\end{equation}
Combining $(3.21)$ and $(3.22)$, we obtain
\begin{equation}
B(CM_F+(D_C)_F BC-X_{BC(F)}(C))=(X_{BC(F)}(B)- (D_B)_{C(F)}B)C.
\end{equation}
We are in the situation where $BQ_F=P_{C(F)}C$ for some bidifferential operator $Q$ and 
\begin{equation}
P_F=X_{B(F)}(B)- (D_B)_{F}B.
\end{equation}
Lemma $1.27$ implies that there exists a bidifferential operator $R$ such that $P=BR$. This precisely means that $(3.11)$ holds, i.e. that $B$ is an integrable operator.
\end{proof}

Next, we make a connection with the previous notion of hereditary operator by showing that integrability implies hereditariness.

\begin{lemma}
Let $L$ be an integrable rational operator. Then $L$ is hereditary.  Conversely if $L$ is hereditary, $AB^{-1}$ is a minimal right fractional decompoisition of $L$ and $B$ is integrable, then the pair $(A,B)$ is integrable.
\end{lemma}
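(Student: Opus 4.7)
The plan is to verify both implications by a direct manipulation of the identity $\mc L_{A(F)}(L) = L\mc L_{B(F)}(L)$, and then, in the converse direction, to extract the desired bidifferential operator $M$ using a Bezout identity in $\mc K[\partial]$ that comes from the minimality of the decomposition.

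For the forward implication, set $P_F := X_{A(F)}(A) - (D_A)_F A$ and $Q_F := X_{A(F)}(B) + X_{B(F)}(A) - (D_A)_F B - (D_B)_F A$. A direct expansion of Lie derivatives on $L = AB^{-1}$ gives
\[\mc L_{A(F)}(L) = P_F B^{-1} - AB^{-1}\bigl(X_{A(F)}(B) - (D_A)_F B\bigr) B^{-1},\]
and the analogous formula for $\mc L_{B(F)}(L)$. Substituting the first two relations of (3.19) for $P_F$ and for $X_{B(F)}(B) - (D_B)_F B$ collapses the difference to
\[\mc L_{A(F)}(L) - L\mc L_{B(F)}(L) = AM_F B^{-1} - AB^{-1}(Q_F - AN_F) B^{-1},\]
which vanishes by the third relation $BM_F + AN_F = Q_F$ of (3.19). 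Hence integrability of $(A,B)$ forces hereditariness of $L$.

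For the converse, I would run the same computation in reverse: hereditariness of $L$ combined with the integrability relation $BN_F = X_{B(F)}(B) - (D_B)_F B$ yields the identity $A^{-1}P_F = B^{-1}(Q_F - AN_F)$ of rational operators. Denote this common operator by $R_F$; then $AR_F = P_F$ and $BR_F = Q_F - AN_F$ are both bidifferential operators in $F$. The task reduces to showing that $R_F$ is itself a (bi)differential operator, for then $M := R$ automatically satisfies the first and third relations of (3.19), while the second is the hypothesis that $B$ is integrable, and thus $(A,B)$ is integrable.

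This last step is the heart of the matter, and is the only place where the minimality assumption is used. Since $(A,B)$ is a minimal right fractional decomposition of $L$, the pair admits no common right factor of positive degree. Therefore the left ideal $\mc K[\partial]A + \mc K[\partial]B$, being principal, is generated by such a common right factor, which must be a unit; hence a Bezout identity $UA + VB = 1$ holds for some $U, V \in \mc K[\partial]$. Applying this to $R_F$ produces
\[R_F = U(AR_F) + V(BR_F) = U P_F + V(Q_F - AN_F),\]
which is manifestly bidifferential in $F$. The main obstacle is thus concentrated in this Bezout step; once it is in place, the argument essentially writes itself from the computation already performed in the forward direction.
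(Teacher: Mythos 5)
Your argument is correct, and the forward direction is essentially identical to the paper's: the same chain of substitutions of the three relations of $(3.19)$ into the expansion of $\mathcal L_{A(F)}(L)-L\mathcal L_{B(F)}(L)$, merely organized as a vanishing difference rather than a chain of equalities. In the converse, you and the paper both reduce to the same identity $P_F=AB^{-1}(Q_F-AN_F)$ of rational operators; the difference lies in how the common quotient is shown to be a genuine bidifferential operator. The paper passes to the left least common multiple $CA=DB$ of $(A,B)$, rewrites the identity as $CP=D(Q-AN)$, and invokes its divisibility Lemma $1.29$ (using that, by minimality, $CA=DB$ is also the right least common multiple of $(C,D)$) to produce $M$ with $P=AM$ and $Q-AN=BM$. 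You instead use the right coprimality of $(A,B)$ guaranteed by minimality to get a Bezout identity $UA+VB=1$ in $\mc K[\partial]$ (which is legitimate since $\mc K[\partial]$ is a left principal ideal domain, Definition $1.16$, so the left ideal generated by $A$ and $B$ is generated by a greatest common right divisor, here a unit), and then write the quotient explicitly as $R_F=UP_F+V(Q_F-AN_F)$. Both routes use minimality in exactly the same place and for the same underlying reason; your Bezout version has the advantage of exhibiting the bidifferential operator $M=R$ by an explicit formula rather than through an abstract factorization lemma, while the paper's version stays uniform with the divisibility toolkit (Lemmas $1.26$--$1.29$) it reuses elsewhere. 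One small point worth making explicit in your write-up: the Bezout identity is not stated anywhere in the paper, so you should justify it from the principal ideal property as you sketched, and note that $R$, being $UP+V(Q-AN)$ with $U,V$ fixed differential operators and $P,Q,N$ bidifferential, is indeed bidifferential and satisfies the first and third lines of $(3.19)$ by construction.
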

\begin{proof}
Let $(A,B)$ be a pair of operators satisfying $(3.18)$ and let $M, N$ be the bidifferential operators corresponding to $A$ and $B$. Let us condider the LHS of equation $(2.38)$ :
\begin{equation}
\begin{split}
 LHS &=X_{A(F)}(AB^{-1})-[(D_A)_F, AB^{-1}] \\
&=(X_{A(F)}(A)-(D_A)_F A )B^{-1}-AB^{-1}(X_{A(F)}(B)- (D_A)_F B) B^{-1} \\
 &=AB^{-1}(BM_F-X_{A(F)}(B)+(D_A)_F B)B^{-1} \\
 &=AB^{-1}(X_{B(F)}(A)-(D_B)_F A-AN_F)B^{-1}\\
 &=AB^{-1}(X_{B(F)}(A)-(D_B)_F A -AB^{-1}(X_{B(F)}(B)-(D_B)_F B))B^{-1}\\
 &=AB^{-1}(X_{B(F)}(AB^{-1})-[(D_B)_F, AB^{-1}])\\
 &= RHS.
\end{split}
\end{equation}
We used the first line of $(3.19)$ to go from line $2$ to line $3$, the third line of $(3.19)$ to deduce line $4$ from line $3$, and finally the second line of $(3.19)$ to go from line $4$ to line $5$.
\\
\indent
If we now simply assume that $L$ is hereditary, that $AB^{-1}$ is minimal and that $B$ integrable with corresponding matrix bidifferential operator $N$, we can say from the four last lines of $(3.25)$ that
\begin{equation}
RHS=AB^{-1}(X_{B(F)}(A)-(D_B)_F A-AN_F)B^{-1},
\end{equation}
where RHS denotes the right hand side of $(2.38)$. Similarly we can rewrite the LHS of $(2.38)$ as we did in $(3.25)$
\begin{equation}
LHS=(X_{A(F)}(A)-(D_A)_F A )B^{-1}-AB^{-1}(X_{A(F)}(B)- (D_A)_F B) B^{-1} \hspace{1 mm}.
\end{equation}
By hereditariness of $L$, the quantities in $(3.26)$ and $(3.27)$ are equal. In other words, we have the equation
\begin{equation}
P=AB^{-1}Q,
\end{equation}
where for all $F \in \mc V$,
\begin{equation}
\begin{split}
 P_F&=X_{A(F)}(A)-(D_A)_F A\\
 Q_F&=X_{A(F)}(B)+X_{B(F)}(A)-(D_A)_F B-(D_B)_F A-A N_F \hspace{1 mm}.
\end{split}
\end{equation}
Let $CA=DB$ be the left least common multiple of the pair $(A,B)$. It is also the right least common multiple of the pair $(C,D)$. Moreover $(3.28)$ rewrites as $CP=DQ$. Applying Lemma $1.29$ provides us with a bidifferential operator $M$ such that
\begin{equation}
\begin{split}
P&=AM\\
Q&=BM \hspace{1 mm}.
\end{split}
\end{equation}
Therefore the pair $(A,B)$ satisfies $(3.19)$ and is integrable.
\end{proof}

\begin{lemma}
Let $A$ and $B$ be integrable differential operators such that $AB^{-1}$ is hereditary. Then the pair $(A,B)$ is integrable. 
\end{lemma}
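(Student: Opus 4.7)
The plan is to verify that the bidifferential operators witnessing the integrability of $A$ and of $B$ individually already form a compatible pair witnessing the integrability of the pair $(A,B)$; no new choice is needed, since the compatibility comes for free from hereditariness. Let $M$ be the skewsymmetric bidifferential operator from the hypothesis that $A$ is integrable, so $AM_F = X_{A(F)}(A) - (D_A)_F A$, and let $N$ come from the integrability of $B$, so $BN_F = X_{B(F)}(B) - (D_B)_F B$. The first two equations of the system $(3.19)$ then hold tautologically, and the task reduces to the cross relation
\[ AN_F + BM_F \;=\; X_{A(F)}(B) + X_{B(F)}(A) - (D_A)_F B - (D_B)_F A. \]

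To obtain this relation, I would start from the hereditariness identity $(2.38)$, substitute $L = AB^{-1}$ on both sides, and multiply on the right by $B$, exactly as in the calculation opening the proof of Lemma $3.8$. A direct expansion shows that the left-hand side reduces to
\[ \bigl(X_{A(F)}(A) - (D_A)_F A\bigr) \;-\; AB^{-1}\bigl(X_{A(F)}(B) - (D_A)_F B\bigr), \]
and the right-hand side, upon using $AB^{-1}\bigl(X_{B(F)}(B) - (D_B)_F B\bigr) = AB^{-1}\cdot BN_F = AN_F$, reduces to
\[ AB^{-1}\bigl(X_{B(F)}(A) - (D_B)_F A - AN_F\bigr). \]
Equating these two pseudodifferential operators and substituting $AM_F$ for the first bracketed quantity on the left yields an identity in $\mc K((\partial^{-1}))$. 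Since $\mc K((\partial^{-1}))$ is a skewfield, $A$ may be cancelled on the left; multiplying the resulting identity by $B$ on the left produces precisely the desired cross relation.

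Conceptually there is no obstacle beyond keeping track of the algebraic manipulations: the hard work done in Lemma $3.8$ was to \emph{produce} the second bidifferential operator from the minimality of the decomposition via Lemma $1.29$, whereas here both bidifferential operators are provided by the hypotheses. The main point to verify carefully is that the integrability relations for $A$ and $B$ substitute cleanly into the expanded hereditariness identity, and that the cancellations in $\mc K((\partial^{-1}))$ are valid; the computation above confirms both.
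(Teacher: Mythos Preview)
Your proposal is correct and follows essentially the same route as the paper: both proofs reach the identity $P = AB^{-1}Q$ (with $P$ and $Q$ as in $(3.29)$) from hereditariness together with the integrability of $B$, then substitute $P = AM$ from the integrability of $A$ and cancel $A$ in the skewfield to obtain $BM = Q$, which is precisely the missing cross relation. The only difference is cosmetic: the paper points back to the computation already carried out in Lemma~$3.8$, whereas you re-expand the hereditariness identity explicitly.
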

\begin{proof}
If we look carefully at the second part of the proof of Lemma $3.8$, we notice that we did not use the minimality of the pair $(A,B)$ to reach $(3.28)$. From there, if we know that $A$ is integrable, we can conclude as well the integrability of $(A,B)$. Indeed, since $A$ is integrable, $P$ is divisible on the left by $A$ by $(3.11)$ and we can cancel by $A$ in $(3.28)$.
\end{proof}

\begin{corollary}
Let $A$, $B$ and $C$ be differential operators. If the pair $(AC,BC)$ is integrable, then so is $(A,B)$. In particular, for an integrable operator $L$ with right minimal fractional decomposition $A_0 B_0^{-1}$, we have that $(A_0,B_0)$ is integrable.
\end{corollary}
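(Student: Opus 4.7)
The plan is to reduce the corollary to Lemmas $3.7$, $3.8$, and $3.9$ via a chain of implications, so no new computation is required.

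First, I would extract from the integrability of the pair $(AC,BC)$ the individual integrability of $AC$ and $BC$. Unpacking the definition $(3.18)$ at $\lambda=0$ and looking at the coefficient of $\lambda^2$ (equivalently, using the first two lines of $(3.19)$ applied to the pair $(AC,BC)$), we obtain bidifferential operators $M$ and $N$ on $\mc K$ such that
\begin{equation*}
(AC)M_F=X_{AC(F)}(AC)-(D_{AC})_F(AC),\qquad (BC)N_F=X_{BC(F)}(BC)-(D_{BC})_F(BC),
\end{equation*}
for all $F\in\mc V$. Thus $AC$ and $BC$ are both integrable as single operators in the sense of Definition $3.1$. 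Applying Lemma $3.7$ to the factorization $AC=A\cdot C$ shows that $A$ is integrable, and applying it to $BC=B\cdot C$ shows that $B$ is integrable.

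Next, I would argue that the rational operator $L:=AB^{-1}=(AC)(BC)^{-1}$ is integrable (by hypothesis, through the integrable pair $(AC,BC)$) and hence hereditary, by Lemma $3.8$. With $A$ and $B$ both integrable and $AB^{-1}$ hereditary, the hypotheses of Lemma $3.9$ are satisfied, and we conclude that the pair $(A,B)$ itself is integrable.

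For the ``in particular'' statement, suppose $L$ is an integrable rational operator, so there exists an integrable pair $(A,B)$ with $L=AB^{-1}$. Let $A_0B_0^{-1}$ be the minimal right fractional decomposition of $L$. By part $(1)$ of Lemma $1.14$, there is a nonzero differential operator $D$ such that $A=A_0D$ and $B=B_0D$; in other words, $(A,B)=(A_0D,B_0D)$. Applying the first part of the corollary with $C=D$, the integrability of $(A_0D,B_0D)$ forces the integrability of $(A_0,B_0)$.

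There is no genuine obstacle here, only bookkeeping: the content is entirely contained in the earlier lemmas, and the proof amounts to assembling them in the correct order. The only point that deserves slight care is the initial extraction of separate integrability statements for $AC$ and $BC$ from the two-parameter identity $(3.18)$, which is immediate from the decomposition $(3.19)$.
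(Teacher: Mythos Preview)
Your proof is correct and follows essentially the same route as the paper's own proof: extract individual integrability of $AC$ and $BC$ from the first two lines of $(3.19)$, apply Lemma $3.7$ to get $A$ and $B$ integrable, use Lemma $3.8$ for hereditariness of $AB^{-1}$, and conclude via Lemma $3.9$; the ``in particular'' is handled identically via Lemma $1.14$.
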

\begin{proof}
By Lemma $3.8$ $AB^{-1}=(AC)(BC)^{-1}$ is hereditary and by Lemma $3.7$, both $A$ and $B$ are integrable since $AC$ and $BC$ are. We conclude using Lemma $3.9$. If $L$ is integrable there exist an integrable pair of differential operators $(A,B)$ such that $L=AB^{-1}$. If $A_0 B_0^{-1}$ is a minimal right fractional decomposition of $L$, we can find an operator $E$ such that $A=A_0 E$ and $B=B_0 E$. Therefore the pair $(A_0,B_0)$ is integrable as well.
\end{proof}

\begin{example}
The operator $L=\partial^{-1}u'' \partial$ is not integrable. Indeed, from the previous remark and by $(3.2)$ we should have that $\frac{1}{u'''}\partial$ is integrable. However, this is not the case, which will follow from Proposition $4.5$. Therefore hereditariness does not imply integrability.
\end{example}

The two following propositions aim at showing that the set of integrable operators is stable under taking integer powers. This property will turn out to be essential in proving the commutativity of the functions generated by the rational operator.

\begin{proposition}
Let $A$ and $B$ be two differential operators such that $(A,B)$ is integrable, and 
let $AD=BC$ be their right least common multiple (see Definition $1.16$). Then $(AC,BD)$ is integrable as well.
\end{proposition}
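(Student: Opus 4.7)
The plan is to exploit the algebraic identity
\[
(A+\lambda B)(C-\lambda D)=AC-\lambda^{2}BD,
\]
which holds because the hypothesis $AD=BC$ forces the two cross terms $-\lambda AD+\lambda BC$ to cancel. This identity is the key structural input: it turns the integrability of $(AC,BD)$---which, by Definition $3.5$ with the change of parameter $\mu=-\lambda^{2}$, is a one-parameter family of identities for $AC-\lambda^{2}BD$---into something to be extracted from the given integrability of $(A,B)$, which itself is a one-parameter family of identities for $A+\lambda B$.

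Concretely, set $T_{\lambda}=A+\lambda B$, $U_{\lambda}=C-\lambda D$, $S_{\lambda}=T_{\lambda}U_{\lambda}=AC-\lambda^{2}BD$, and $P_{\lambda}=M+\lambda N$. First, I apply equation $(3.18)$ for $(A,B)$ at the test function $F=U_{\lambda}G$, obtaining
\[
X_{S_{\lambda}G}(T_{\lambda})-(D_{T_{\lambda}})_{U_{\lambda}G}T_{\lambda}=T_{\lambda}(P_{\lambda})_{U_{\lambda}G}.
\]
Then I multiply on the right by $U_{\lambda}$, use the Leibniz rule for evolutionary vector fields together with the factorization $(D_{PQ})_{F}=(D_{P})_{Q(F)}+P(D_{Q})_{F}$ from Lemma $1.22$, and invoke the identity $(D_{S_{\lambda}})_{G}=(D_{T_{\lambda}})_{U_{\lambda}G}+T_{\lambda}(D_{U_{\lambda}})_{G}$. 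After regrouping, the resulting identity is
\[
X_{S_{\lambda}G}(S_{\lambda})-(D_{S_{\lambda}})_{G}S_{\lambda}=T_{\lambda}\bigl\{(P_{\lambda})_{U_{\lambda}G}U_{\lambda}+X_{S_{\lambda}G}(U_{\lambda})-(D_{U_{\lambda}})_{G}S_{\lambda}\bigr\}.
\]

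The integrability of $(AC,BD)$ demands that the left-hand side equal $S_{\lambda}(M'-\lambda^{2}N')_{G}=T_{\lambda}U_{\lambda}(M'-\lambda^{2}N')_{G}$ for some bidifferential operators $M',N'$. Since $T_{\lambda}$ is left-cancellable on the image, the task reduces to showing that the expression in braces has the form $U_{\lambda}(M'-\lambda^{2}N')_{G}$. Expanding in powers of $\lambda$, I would read off $M'$ from the $\lambda^{0}$ coefficient and $N'$ from the $\lambda^{2}$ coefficient; the requirement that the $\lambda^{1}$ and $\lambda^{3}$ coefficients factor through $U_{\lambda}=C-\lambda D$ is then exactly the third (cross) equation in $(3.19)$ for the pair $(AC,BD)$.

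The main obstacle will be verifying this left-divisibility by $U_{\lambda}$ as an identity between genuine bidifferential (as opposed to merely pseudodifferential) operators. For this I would combine two inputs. First, $L=AB^{-1}$ is hereditary by Lemma $3.8$, and by Lemma $2.16$ so is $L^{2}=AC(BD)^{-1}$; clearing denominators in the hereditariness identity for $L^{2}$ as in the second part of the proof of Lemma $3.8$ produces an operator equation of the shape $AC\cdot P=BD\cdot Q$ with $P,Q$ bidifferential. Second, the hypothesis that $AD=BC$ is the right least common multiple of $(A,B)$ makes Lemma $1.29$ applicable to this equation, yielding a common bidifferential factor that, after identification with the $\lambda$-expansion coefficients above, produces the required $M'$ and $N'$ and establishes the integrability of $(AC,BD)$.
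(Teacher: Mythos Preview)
Your setup through the displayed identity
\[
X_{S_{\lambda}G}(S_{\lambda})-(D_{S_{\lambda}})_{G}S_{\lambda}=T_{\lambda}\bigl\{(P_{\lambda})_{U_{\lambda}G}U_{\lambda}+X_{S_{\lambda}G}(U_{\lambda})-(D_{U_{\lambda}})_{G}S_{\lambda}\bigr\}
\]
is correct and is a genuinely different route from the paper. The paper never uses the factorization $(A+\lambda B)(C-\lambda D)=AC-\lambda^{2}BD$; instead it shows directly that $BD$ is integrable by an explicit computation modulo the left ideal generated by $B$ (equations $(3.31)$--$(3.41)$), deduces that $AC$ is integrable by the $(A,B)\leftrightarrow(B,A)$ symmetry, and then invokes hereditariness of $L^{2}$ together with Lemma~$3.9$ to upgrade to integrability of the pair. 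Your $\lambda$-deformation packages all of this into one identity, which is cleaner.

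The gap is in your last paragraph. The argument you sketch---use hereditariness of $L^{2}$, clear denominators as in the second half of the proof of Lemma~$3.8$, and apply Lemma~$1.29$---does not close. The second half of Lemma~$3.8$ needs either that $(AC,BD)$ is a \emph{minimal} right fractional decomposition of $L^{2}$ (so that the left LCM of some pair $(C',D')$ with $C'AC=D'BD$ is also the right LCM of $(C',D')$), or, via Lemma~$3.9$, that one of $AC$, $BD$ is already known to be integrable. You have neither: minimality of $(AC,BD)$ is not part of the hypothesis, and integrability of $BD$ is exactly what you are trying to prove. Applying Lemma~$1.29$ to an equation of the form $AC\cdot P=BD\cdot Q$ using only the LCM of $(A,B)$ gives $CP=D\gamma$ and $DQ=C\gamma$, which does not produce the divisibility by $U_{\lambda}$ you need.

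There is, however, a short way to finish your own argument. The left-hand side depends on $\lambda$ only through $S_{\lambda}=AC-\lambda^{2}BD$, hence is even in $\lambda$. Writing the brace as $\alpha_{0}+\alpha_{1}\lambda+\alpha_{2}\lambda^{2}+\alpha_{3}\lambda^{3}$ (each $\alpha_{k}$ a bidifferential operator in $G$) and equating $T_{\lambda}\cdot\text{brace}(\lambda)=T_{-\lambda}\cdot\text{brace}(-\lambda)$ yields $A\alpha_{1}=-B\alpha_{0}$ and $A\alpha_{3}=-B\alpha_{2}$. Now Lemma~$1.29$, applied with the right LCM $AD=BC$ of $(A,B)$, gives bidifferential operators $M',N'$ with $\alpha_{0}=CM'_{G}$, $\alpha_{1}=-DM'_{G}$, $\alpha_{2}=-CN'_{G}$, $\alpha_{3}=DN'_{G}$, i.e.\ $\text{brace}(\lambda)=U_{\lambda}(M'-\lambda^{2}N')_{G}$, which is precisely the integrability of $(AC,BD)$. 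This is the missing step; once you insert it, your proof is complete and is a pleasant alternative to the paper's computation.
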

\begin{proof}
Let $M$ and $N$ be the bidifferential operators corresponding to $A$ and $B$.
By $(3.10)$ for $B$ and $(1.25)$ we can write for all functions $F \in \mc V$ 
\begin{equation}
X_{BD(F)}(B)= (D_{BD})_F B-B(D_D)_F B +B(N)_{D(F)}.
\end{equation}
Multiplying on the right by $D$ we have 
\begin{equation}
X_{BD(F)}(BD)- (D_{BD})_F  BD=B(X_{BD(F)}(D)-(D_D)_F BD+N_{D(F)}D).
\end{equation}
To say that $BD$ is integrable is therefore the same as saying that 
\begin{equation}
Y=X_{BD(F)}(D)-(D_D)_F BD+N_{D(F)}D \hspace{1 mm},
\end{equation}
is divisible on the left by $D$ as a bidifferential operator. By construction of $D$ and by Lemma $1.29$ it will be enough to show that $AY=BP$ for some bidifferential operator $P$. From now on we will work modulo the left ideal of $\mc V[\partial]$ generated by $B$.
\\
\indent
Let us apply the derivation $X_{BD(F)}$ to the identity $AD=BC$. 
\begin{equation}
AX_{BD(F)}(D)+X_{BD(F)}(A)D \equiv X_{BD(F)}(B)C .
\end{equation}
We now want to rewrite the RHS of $(3.34)$. 
Projecting $(3.31)$ modulo $B$, we have  
\begin{equation}
X_{BD(F)}(B) \equiv (D_{BD})_FB .
\end{equation}
Using equations $(3.33)$ to $(3.35)$ we get 
\begin{equation}
\begin{split}
AY \equiv (D_{BD})_F. BC -X_{BD(F)}(A)D&\\
        -A (D_D)_F BD +A(N)_{D(F)} D.
\end{split}
\end{equation}
Replacing $F$ by $D(F)$ in the third line of $(3.19)$ yields
\begin{equation}
X_{A(D(F))}(B)+X_{B(D(F))}(A)-(D_A)_{D(F)} B-(D_B)_{D(F)} A \equiv AN_{D(F)}.
\end{equation}
After comparing $(3.36)$ with $(3.37)$ and recalling that $AD=BC$ we get
\begin{equation}
\begin{split}
AY \equiv (D_{BD})_F AD -A(D_D)_F BD
        + X_{AD(F)}(B)D&\\-(D_B)_{D(F)} AD -(D_A)_{D(F)} BD \hspace{1 mm}.
\end{split}
\end{equation}
Moreover, by $(1.24)$, we have 
\begin{equation}
\begin{split}
(D_{BD})_F AD-(D_B)_{D(F)}AD&=B(D_D)_F AD \equiv 0 \\
A (D_D)_F BD +(D_A)_{D(F)} BD&=(D_{AD})_F BD \hspace{1 mm}.
\end{split}
\end{equation}
Therefore $(3.38)$ can be simplified to 
\begin{equation}
AY \equiv [X_{AD(F)}(B)-(D_{AD})_F B]D.
\end{equation}
After replacing $AD$ by $BC$ and using $(1.24)$ one more time we get 
\begin{equation}
\begin{split}
AY &\equiv [X_{BC(F)}(B)-(D_{BC})_F B]D\\
        & \equiv [X_{BC(F)}(B)-(D_B)_{C(F)} B]D\\
       & \equiv B(N)_{C(F)}D\\
        & \equiv 0.
\end{split}
\end{equation}
\\
We proved that $BD$ is an integrable operator. Since $(B,A)$ is integrable as well by symmetry of $(3.19)$, we infer that $AC$ is an integrable operator as well. Moreover $L=AB^{-1}$ is hereditary by Lemma $3.8$ and hence $L^2=(AC)(BD)^{-1}$ is hereditary too. This enables us to conclude that the pair $(AC,BD)$ is integrable thanks to Lemma $3.9$.
\end{proof}

\begin{proposition}
Let $L$ be an integrable operator. Then $L^n$ is integrable for all $n \in \mathbb{Z}_{+}$
\end{proposition}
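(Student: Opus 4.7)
The plan is to induct on $n$. The case $n=1$ is the hypothesis, and $n=2$ is Proposition $3.12$: taking an integrable pair $(A,B)$ with $L = AB^{-1}$ and the right least common multiple $AD = BC$, the identity $B^{-1}A = CD^{-1}$ yields $L^2 = (AC)(BD)^{-1}$, and Proposition $3.12$ asserts integrability of $(AC, BD)$.

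For the induction step, assuming $L^n = PQ^{-1}$ for an integrable pair $(P,Q)$, I would write $L^{n+1} = L^n \cdot L = PQ^{-1}AB^{-1}$, form the right least common multiple $QE = AF$ of $Q$ and $A$ (so that $Q^{-1}A = EF^{-1}$), and obtain $L^{n+1} = (PE)(BF)^{-1}$. It then remains to prove that $(PE, BF)$ is an integrable pair.

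To do this I would invoke Lemma $3.9$. Hereditariness of $L^{n+1}$ follows from Lemma $2.15$, since $L$ is hereditary by Lemma $3.8$; thus Lemma $3.9$ reduces the task to showing that $PE$ and $BF$ are each individually integrable as differential operators. To verify these two individual integrabilities I would adapt the computation of Proposition $3.12$ to the present asymmetric situation. Letting $M, N$ and $M', N'$ denote the bidifferential operators associated to $(A,B)$ and $(P,Q)$ respectively via $(3.19)$, one applies the appropriate evolutionary vector fields to the identity $QE = AF$, and substitutes the integrability equations of both pairs to extract the required left-divisibilities, mirroring the manipulations culminating in $(3.40)$--$(3.41)$.

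The main obstacle is the bookkeeping. In Proposition $3.12$ a single integrable pair $(A,B)$ played both roles in the doubling construction, and the crucial cancellation $(3.39)$, namely $X_{BD(F)}(B) \equiv (D_{BD})_F B$ modulo $B$, exploited this symmetry. In the induction step the two halves of the composition come from distinct integrable pairs, so each such cancellation must be re-derived using the third equation of $(3.19)$ applied to $(P,Q)$ with the argument $E(G)$, combined symmetrically with the corresponding equations for $(A,B)$. Once this step-by-step translation is carried out, the integrability of $PE$ and $BF$ is established, and Lemma $3.9$ closes the induction.
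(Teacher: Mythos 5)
Your overall frame---induct on $n$, write $L^{n+1}=(PE)(BF)^{-1}$ via the right least common multiple $QE=AF$, get hereditariness of $L^{n+1}$ from hereditariness of $L$ (this is Lemma $2.17$, not Lemma $2.15$), and then invoke Lemma $3.9$ to reduce to the individual integrability of $PE$ and $BF$---is sound and matches the endgame of the paper's proof. The gap is in the step you dismiss as bookkeeping. The proof of Proposition $3.12$ is not symmetric by accident: it depends essentially on the fact that the right least common multiple $AD=BC$ is taken of the \emph{same} pair $(A,B)$ that is assumed integrable. Concretely, the reduction of $AY$ modulo the left ideal generated by $B$ uses $(3.37)$, which is the third (mixed) line of $(3.19)$ for the pair $(A,B)$, and the final divisibility of $Y$ by $D$ comes from Lemma $1.29$ applied to that same lcm. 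In your induction step the lcm $QE=AF$ is taken of the cross pair $(Q,A)=(B_n,A)$, and nothing in your hypotheses, nor in the inductive assumption, gives you the mixed equation of $(3.19)$ for $(Q,A)$: the available identities for $(P,Q)$ and for $(A,B)$ produce terms lying in the left ideals generated by $P$, $Q$, $A$ or $B$ separately, and these do not cancel modulo the ideal you need in order to apply Lemma $1.29$ to the lcm of $(Q,A)$. So the ``asymmetric adaptation'' of the computation culminating in $(3.40)$--$(3.41)$ does not go through as described; indeed, for two unrelated integrable operators the product need not even be hereditary, so some use of the fact that all factors come from powers of the same $L$ is unavoidable, and you have not identified where that enters.

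The paper avoids this obstruction by never taking the lcm of a cross pair for the purpose of the integrability computation: it sets up the sequences $A_n,B_n,C_n,D_n$, establishes the left-divisibility relations $(3.46)$ and $(3.49)$, and then proves integrability only of $(A_{2^n},B_{2^n})$, by applying Proposition $3.12$ to the pair $(A_k,B_k)$ (known integrable by induction) and \emph{its own} right lcm $A_kK=B_kL$, i.e.\ by squaring $L^k$ rather than multiplying $L^n$ by $L$. The intermediate powers are then handled by transfer: $A_m$ (resp.\ $B_m$) left-divides $A_{2^n}$ (resp.\ $B_{2^n}$) for $2^n\geq m$, so Lemma $3.7$ gives individual integrability of $A_m$ and $B_m$, and Lemma $3.9$ plus hereditariness of $L^m$ closes the argument. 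To repair your proof you would either need to import this doubling device, or supply a genuinely new computation establishing the integrability of $BF$ and $PE$ from the data of two integrable pairs linked by $PQ^{-1}=(AB^{-1})^n$; as written, that step is missing.
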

\begin{proof}
Let $(A,B)$ be an integrable pair of differential operators such that $L=AB^{-1}$. We know by Lemma $3.8$ that $L$ is hereditary, and therefore so is $L^k$ for all $k \geq 1$ by Lemma $2.17$.
\\
\indent
Let us begin by defining sequences of differential operators $A_n$, $B_n$, $C_n$ and $D_n$. We let $A_1=A$, $B_1=B$, $C_0=1$ and $D_0=1$. Assuming that we defined $A_n$, $B_n$, $C_{n-1}$ and $D_{n-1}$, we let 
\begin{equation}
B_n C_n=AD_n.
\end{equation}
be the least right common multiple of the pair $(A,B_n)$. We then define 
\begin{equation}
A_{n+1}=A_n C_n, \hspace{1 mm} B_{n+1}=BD_n.
\end{equation}
 It is easy to check by induction that $L^n=A_n{B_n}^{-1}$  for all $n \geq 1$. 
\\
\indent
Let us prove by induction that $D_n$ divides $D_{n+1}$ on the left for all $n \geq 1$, i.e. that there exists a differential operator $E_n$ such that $D_{n+1}=D_nE_n$. 
\\
We know that 
\begin{equation}
AD_2=B_2 C_2=B D_1 C_2.
\end{equation}
Hence, since $AD_1=BC_1$ is the right least common multiple of the pair $(A,B)$, we can find an operator $E_2$ such that 
\begin{equation}
D_2=D_1E_2, \hspace{1 mm}D_1 C_2=C_1 E_2.
\end{equation}
Let us rename $D_1$ by $E_1$. Assume that we constructed $n$ operators $E_1, \dots, E_n$ such that for all $1 \leq k \leq n$
\begin{equation}
D_k=E_1 \dots E_k, \hspace{2 mm}
E_kC_{k+1}=C_kE_{k+1}.
\end{equation}
Combining equations $(3.42)$, $(3.43)$ and $(3.46)$, we obtain
\begin{equation}
\begin{split}
AD_{n+1}&=B_{n+1} C_{n+1} \\
             &=B D_n C_{n+1}\\
             &=B D_{n-1} E_n C_{n+1}\\
             &=B_n (E_n C_{n+1}).
\end{split}
\end{equation}
By $(3.42)$, $(3.47)$ and by definition of the least right common multiple, one can find an operator $E_{n+1}$ such that 
\begin{equation}
D_{n+1}=D_nE_{n+1}, \hspace{2 mm}
E_nC_{n+1}=C_nE_{n+1}.
\end{equation}
Therefore $(3.46)$ holds for all $k \geq 1$. Note that $E_k$ and $C_k$ are right coprime for all $k \geq 1$. Indeed, $D_k$ and $C_k$ are right coprime and $E_k$ divides $D_k$ on the right. Moreover, we have for all $k \geq 1$
\begin{equation}
A_k=AC_1 \dots C_{k-1}.
\end{equation}
\\
\indent
  We now prove by induction on $n$ that $(A_{2^n},B_{2^n})$ is integrable for all $n \geq 0$. The case $n=0$ is true by hypothesis. Let us assume that it is true for $k=2^n$ and define operators $K$ and $L$ such that 
\begin{equation}
A_kK=B_kL
\end{equation}
 is the least right common multiple of the pair $(A_k,B_k)$. We are going to show that for some operator $M$, \begin{equation}
A_{2k}M=A_kL, \hspace{2 mm}
B_{2k}M=B_kK.
\end{equation}
Combining $(3.49)$ with $(3.50)$ gives for all $k \geq 1$
\begin{equation}
A(C_1 \dots C_{k-1}K)=B_kL.
\end{equation}
Comparing $(3.42)$ with $(3.52)$ we deduce the existence of an operator $H$ such that
\begin{equation}
L=C_kH, \hspace{2 mm}
C_1 \dots C_{k-1}K=D_kH=E_1 \dots E_kH.
\end{equation}
Using $(3.53)$ and the fact that $C_i.E_{i+1}=E_i.C_{i+1}$ is the least right common multiple of the pair $(E_i,C_i)$ for all $i \geq 1$, we find that there exists a differential operator $\tilde{H}$ such that
\begin{equation}
H=C_{k+1} \dots C_{2k-1} \tilde{H}, \hspace{2 mm}
K=E_{k+1} \dots E_{2k} \tilde{H}.
\end{equation} 
Combining $(3.53)$ and $(3.54)$ gives $(3.51)$ with $M=\tilde{H}$.
By Proposition $3.12$, the pair $(A_k L,B_k K)$ is integrable, hence so is $(A_{2k}, B_{2k})$ by $(3.51)$ and Corollary $3.10$. 
\\
\indent
Since by equations $(3.46)$ and $(3.49)$ $A_m$ (resp. $B_m$)  divides $A_l$ ( resp. $B_l$) on the left whenever $m \geq l$ and all the $A_{2^n}$ ( resp. $B_{2^n}$) are integrable, we know by Lemma $3.7$ that the operators $A_m$ (resp. $B_m$) are integrable for all $m \geq 1$. Moreover, since $L^m=A_m{B_m}^{-1}$ is hereditary for all $m \geq 1$, Lemma $3.9$ enables us to claim that the pairs $(A_m,B_m)$ are integrable for all $m\geq 1$. In particular, $L^m$ is integrable for all $m \in \mathbb{Z}_+$.
\end{proof}

As it was the case with hereditariness, integrability is a necessary condition for a rational operator to produce commuting functions. More specifically, we prove :

\begin{proposition}
Let $L$ be a rational operator where $L=AB^{-1}$ is a minimal fractional decomposition and let $(H_n)_{n \geq 0}$ be a sequence in $\mc V$, which spans an infinite-dimensional space over $\mc C$, such that  
\begin{enumerate}
 \item[(1)]  $B(H_{n+1})=A(H_n)$ for all $n \geq 0$, \\
  \item[(2)]  $\{B(H_n),B(H_m)\}=0$ for all $n,m \geq 0$.
\end{enumerate}
Then the pair $(A,B)$ is integrable.
\end{proposition}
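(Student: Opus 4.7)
The plan is to reduce the claim to showing that $B$ alone is integrable in the sense of Definition 3.1, and then invoke the converse direction of Lemma 3.8. First, apply Proposition 2.19 to the pair $(A,B)$ and the sequence $(H_n)$: this gives that $L = AB^{-1}$ is hereditary and is a recursion operator for each $B(H_n)$. Looking inside the proof of Proposition 2.19, one has moreover, as an intermediate consequence of the commutation relations, the differential operator identity $C(X_F A - D_F A) = D(X_F B - D_F B)$ for every $F = B(H_n)$, where $L = C^{-1}D$ is the left fractional decomposition with $CA = DB$. The minimality of $AB^{-1}$ forces $(A,B)$ to be right coprime, so $CA = DB$ is the right LCM of the pair $(C,D)$.

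Next, set $F = B(\hat F)$ and expand using $D_{B(\hat F)} = BD_{\hat F} + (D_B)_{\hat F}$. A direct computation turns the identity above into
\[
D\, P^B_{\hat F} = C(S_{\hat F} + AD_{\hat F}B - BD_{\hat F}A),
\]
for $\hat F = H_n$, where $P^B_F = X_{B(F)}(B) - (D_B)_F B$ and $S_F = X_{B(F)}(A) - (D_B)_F A$ are the bidifferential objects appearing in (3.19). Since $CA = DB$ is the right LCM of $(C,D)$, the differential operator version of Lemma 1.29 applied to this equality in $\mc V[\partial]$ (for each fixed $n$) produces a differential operator $U_n$ with $P^B_{H_n} = B U_n$, so $P^B_{H_n}$ is left-divisible by $B$ for every $n$.

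It remains to promote this pointwise divisibility to a divisibility of bidifferential operators. Using Lemma 1.25, write $P^B = BN + R$ with $d_1(R) < d(B)$. Then $R_{H_n} = B(U_n - N_{H_n})$ has degree strictly less than $d(B)$, forcing $U_n = N_{H_n}$ and so $R_{H_n} = 0$ for all $n$. Expanding $R = \sum R_{kl}\partial_1^k\partial_2^l$, each differential operator $\sum_k R_{kl}\partial^k$ vanishes on the infinite-dimensional $\mc C$-span of $\{H_n\}$, hence vanishes identically by Lemma 1.15, so $R = 0$ and $P^B = BN$. This is exactly the statement that $B$ is integrable, and combined with the hereditariness of $L$ and the minimality of $AB^{-1}$ the converse half of Lemma 3.8 delivers the integrability of the pair $(A,B)$. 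The delicate step is the algebraic manipulation in the second paragraph isolating $P^B$ on one side of an equation divisible by the right LCM; once in that form, the Lemma 1.29 + Euclidean division bookkeeping is routine.
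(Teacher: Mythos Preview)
Your argument is correct and follows essentially the same strategy as the paper: establish hereditariness via Proposition~2.18 (your ``2.19''), extract a relation of the form $C(\cdots)=D(\cdots)$ for each $H_n$, use that $CA=DB$ is the right LCM of $(C,D)$ (which holds precisely because $AB^{-1}$ is minimal) to deduce left-divisibility of $P^B_{H_n}$ by $B$, and then pass from pointwise divisibility to bidifferential divisibility via Euclidean division (Lemma~1.26, your ``1.25'') plus the infinite-dimensionality of the span of the $H_n$.

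There are two small but genuine differences worth recording. First, the paper re-derives the key identities directly from the commutation relations $\{A(H_n),A(H_m)\}=0$ and $\{B(H_n),B(H_m)\}=0$, combining them via the shift $A(H_n)=B(H_{n+1})$ to obtain equation~(3.62); you instead recycle the intermediate identity (2.51) from the proof of Proposition~2.18 and substitute $F=B(H_n)$, expanding $D_{B(H_n)}=BD_{H_n}+(D_B)_{H_n}$. Your route is slightly more economical, since (2.51) already encodes the commutativity information. Second, the paper proves both $A$ and $B$ integrable and concludes via Lemma~3.9, whereas you only prove $B$ integrable and invoke the converse direction of Lemma~3.8, which uses the minimality hypothesis explicitly; this is a cleaner endgame. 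The ``delicate step'' you flag is indeed the one place requiring care, and your computation there is correct: both sides of your displayed identity match after expanding $X_{B(\hat F)}(A)-D_{B(\hat F)}A$ and $X_{B(\hat F)}(B)-D_{B(\hat F)}B$ via (1.23) and using $CA=DB$.
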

\begin{proof}
Let us begin by defining two bidifferential operators, letting
\begin{equation}
\begin{split}
    M_F:=&X_{A(F)}(A)-(D_A)_FA\\
   N_F:=&X_{B(F)}(B)-(D_B)_F B,
\end{split}
\end{equation}
for $F \in \mc V$. By commutativity of the functions $A(H_n)$, we have for all $n, m \geq 0$ 
\begin{equation}
X_{A(H_n)}(A(H_m))=X_{A(H_m)}(A(H_n)).
\end{equation}
Since $X_{A(H_n)}$ is a derivation of $\mc V$ commuting with $\partial$ and using $(3.55)$ we get
\begin{equation}
\begin{split}
X_{A(H_n)}(A(H_m))&=X_{A(H_n)}(A)(H_m)+A(X_{A(H_n)}(H_m))\\
 &=(M)_{H_n}(H_m)+A(X_{A(H_n)}(H_m))+(D_A)_{H_n}(H_m) .
\end{split}
\end{equation}
On the other hand, by $(1.19)$ and $(1.23)$ we have
\begin{equation}
\begin{split}
X_{A(H_m)}(A(H_n))&=D_{A(H_n)}(A(H_m))\\
 &=(D_A)_{H_n}(H_m)+A(D_{H_n}(A(H_m))).
\end{split}
\end{equation}
Combining equations $(3.56)$, $(3.57)$ and $(3.58)$ we get
\begin{equation}
(M_{H_n}+A(X_{A(H_n)}-D_{H_n}A))(H_m)=0.
\end{equation}
Similarly, replacing $A$ with $B$ and $M$ with $N$, we also have for all $n,m \geq 0$
\begin{equation}
(N_{H_n}+B(X_{B(H_n)}-D_{H_n}B))(H_m)=0.
\end{equation}
Let $CA=DB$ be the least left common multiple of the pair $(A,B)$. Recall that for all $n \geq 0$, $B(H_{n+1})=A(H_n)$. Therefore, multiplying $(3.59)$ on the left by $C$ and $(3.60)$ for $n+1$ on the left by $D$, we obtain for all $n,m \geq 0$ : 
\begin{equation}
  (C(M_{H_n}-AD_{H_n}A))(H_m)=(D(N_{H_{n+1}}-BD_{H_{n+1}}B))(H_m).
\end{equation}
Since this is true for all $m \geq 0$ and the span of the $H_m$ is infinite dimensional over $\mc C$ we have an operator identity  for all $n \geq 0$ :
\begin{equation}
  C(M_{H_n}-AD_{H_n}A)=D(N_{H_{n+1}}-BD_{H_{n+1}}B).
\end{equation}
Since $CA=DB$ is also the right least common multiple of the pair $(C,D)$ because $A.B^{-1}$ is a minimal fractionnal decomposition, we deduce from $(3.62)$ the existence of differential operators $P_n$, such that for all $n \geq 0$,
\begin{equation}
N_{H_{n+1}}-BD_{H_{n+1}}B=BP_n.
\end{equation}
Let $Q$ and $R$ be two bidifferential operators such that for all $F \in \mc V$
\begin{enumerate}
\item[(a)] $N_F=BQ_F+R_F$, \\
\item[(b)] $d(R_F) < d(B)$,
\end{enumerate}
which we can find using Lemma $1.26$. It follows from $(3.63)$ that for all $n \geq 0$, $R_{H_{n+1}}=0$. Since the span of the $H_n$ is infinite dimensional over $\mc C$ we have $R=0$. Therefore $B$ divides $N$ on the left, i.e. $B$ is integrable. Similarly, we can prove that $A$ is integrable. Finally, recall that $L=AB^{-1}$ is hereditary by Proposition $2.17$. Therefore the pair $(A,B)$ is integrable by Lemma $3.9$.
    \end{proof}

The following lemma states that when an integrable operator $L$ is recursion for a function, the latter commutes with its image under $L$. 
\vspace{2mm}
\begin{lemma}
Let $F \in \mc V$ and $(A,B)$ be an integrable pair of differential operators such that $L=AB^{-1}$ is recursion for $B(F)$. Then $A(F)$ and $B(F)$ commute.
\end{lemma}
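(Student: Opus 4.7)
The plan is to compute $\mc L_{B(F)}(A)(F)$ in two different ways and equate the results. Writing $G := B(F)$, the first expression comes from a universal Leibniz-type identity
\[
\mc L_G(C)(F) = \{G, C(F)\} - C\{G, F\},
\]
valid for any differential operator $C$, which I would establish by expanding $\mc L_G(C) = X_G(C) - [D_G, C]$ directly from the definitions of $X_G$ and $D_G$. Specialising to $C = A$ gives $\mc L_G(A)(F) = \{G, A(F)\} - A\{G, F\}$, while specialising to $C = B$ together with $\{G, G\} = 0$ yields $\mc L_G(B)(F) = -B\{G, F\}$.

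The second expression for $\mc L_G(A)(F)$ exploits the hypotheses. Integrability of $B$ provides a skew bidifferential operator $N$ with $X_G(B) = (D_B)_F B + B N_F$; combined with the decomposition $D_G = (D_B)_F + B D_F$ from $(1.23)$, a short commutator manipulation delivers the clean left factorisation
\[
\mc L_G(B) = B\bigl(N_F + (D_B)_F + [B, D_F]\bigr) =: BP.
\]
The recursion hypothesis $\mc L_G(L) = 0$, combined with the Leibniz rule for the Lie derivative on pseudodifferential operators, reads $\mc L_G(A) = AB^{-1}\mc L_G(B)$; substituting the factorisation simplifies this to $\mc L_G(A) = AP$, a genuine differential operator.

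Evaluating at $F$, the skewsymmetry $N_F(F) = 0$, the identity $(D_B)_F(F) = X_F(B)(F)$ from $(1.25)$, and the derivation rule $X_F(B(F)) = X_F(B)(F) + B X_F(F)$ should cause $P(F)$ to telescope to exactly $-\{G, F\}$. Hence $\mc L_G(A)(F) = A P(F) = -A\{G, F\}$, and comparison with the Leibniz identity for $C = A$ forces $\{G, A(F)\} = 0$, which is the asserted commutativity of $A(F)$ and $B(F)$.

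The most delicate step is establishing the factorisation $\mc L_G(B) = BP$ together with the exact cancellation $P(F) = -\{G, F\}$; both amount to careful bookkeeping with the identities of Section 1, with the skewsymmetry of $N$ being the single place where integrability of $B$ is crucially used. The rest of the argument is routine once the derivation rule and the factorisation are in place.
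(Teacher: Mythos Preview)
Your proof is correct and follows essentially the same strategy as the paper: use integrability of $B$ to factor $\mc L_{B(F)}(B)=B\cdot(\text{something})$, use the recursion hypothesis to transfer this to $\mc L_{B(F)}(A)=A\cdot(\text{same thing})$, and then evaluate at $F$ where skewsymmetry of $N$ kills the only obstruction. Your organisation is in fact slightly cleaner than the paper's: the Leibniz identity $\mc L_G(C)(F)=\{G,C(F)\}-C\{G,F\}$ packages the final bookkeeping neatly, and by writing the factorisation $\mc L_G(B)=BP$ directly you avoid the paper's detour through a left least common multiple of $(A,B)$ (equations (3.68)--(3.70)), which is not really needed once one observes that $X_{B(F)}(B)-D_{B(F)}B=B(M_F-D_FB)$ already exhibits the left divisibility by $B$.
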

\begin{proof}
We know that $B$ is integrable, meaning  by $(3.10)$ that there exists a bidifferential operator $M$, such that  for all $G \in \mc V$,
\begin{equation}
X_{B(G)}(B)=D_{B(G)}B+B(M_G-D_G B).
\end{equation}
Since $M$ is skewsymmetric, we have 
\begin{equation}
M_G(G)=0  \hspace{2 mm} \forall G \in \mc V.
\end{equation}
Let us now recall that $L$ being recursion for $B(F)$ means
   \begin{equation}
X_{B(F)}(AB^{-1})=[D_{B(F)},AB^{-1}].
\end{equation}
Rearranging the terms of $(3.66)$ and multiplying on the right by $B$, we get 
 \begin{equation}
X_{B(F)}(A)-D_{B(F)}A=AB^{-1}(X_{B(F)}(B)-D_{B(F)}B).
\end{equation}
Let $CA=DB$ be the left common multiple of the pair $(A,B)$. We deduce from $(3.67)$ that
\begin{equation}
C(X_{B(F)}(A)-D_{B(F)}A)=D(X_{B(F)}(B)-D_{B(F)}B).
\end{equation}
Let $CA_0=DB_0$ be the right least common multiple of the pair $(C,D)$. Let $E$ be such that $A=A_0E$ and $B=B_0E$.  From $(3.68)$ it follows that there exists a differential operator $H$ such that
\begin{equation}
     X_{B(F)}(A)-D_{B(F)}A = A_0H, \hspace{2 mm}
     X_{B(F)}(B)-D_{B(F)}B = B_0H .
\end{equation}
Comparing $(3.64)$ with the second line of $(3.69)$, we have
\begin{equation}
H=E(M_F-D_FB).
\end{equation} 
Therefore, by $(3.65)$ and $(3.70)$,
\begin{equation}
H(F)=-E(D_F(B(F))).
\end{equation}
 Applying the first line of $(3.69)$ to $F$ and using $(3.71)$ we get 
\begin{equation}
X_{B(F)}(A(F))=D_{B(F)}(A(F)),
\end{equation}
proving the claim.
\end{proof}

The next proposition provides us with the first  sufficient condition for a rational operator $L$ to generate an infinte sequence of commuting functions, providing that $L^n F_0$ is defined for all $n$ and some $F_0$.
\vspace{4 mm}
\begin{proposition}
Let $L=AB^{-1}$ be a rational operator with $(A,B)$ integrable and let $(H_n)_{n \geq 0}$ be a sequence in $\mc V$ such that
\begin{enumerate}
\item[(1)] $L$ is recursion for $B(H_0)$, \\
\item[(2)] $A(H_n)=B(H_{n+1})$ for all $n \geq 0$.
\end{enumerate}
 Then the functions $B(H_n)$ pairwise commute.
\end{proposition}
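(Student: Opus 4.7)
The plan is to reduce pairwise commutativity $\{B(H_n), B(H_m)\} = 0$ to the nearest-neighbor case provided by Lemma 3.15, by iterating and invoking the integrability of all positive powers of $L$ established in Proposition 3.14.

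First, by Lemma 3.8 applied to the integrable pair $(A,B)$, the operator $L = AB^{-1}$ is hereditary. The hypothesis that $L$ is recursion for $B(H_0)$ then propagates: combining with Lemma 2.15 and induction on $n$, $L$ is recursion for $A(H_{n-1}) = B(H_n)$ for every $n \geq 0$. Since $\mc L_{B(H_n)}$ is a derivation of $\mc V((\partial^{-1}))$, the identity $\mc L_{B(H_n)}(L) = 0$ implies $\mc L_{B(H_n)}(L^k) = 0$ for every $k \geq 1$.

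Next, by Proposition 3.14, every power $L^k$ is integrable. More precisely, with the recursive construction of that proof, $L^k = A_k B_k^{-1}$ where $(A_k, B_k)$ is an integrable pair built from iterated least right common multiples: $A_{k+1} = A_k C_k$, $B_{k+1} = B D_k$, with $B_k C_k = A D_k$. The key intermediate assertion is that the Lenard--Magri sequence is compatible with these iterates: for every $n \geq 0$ and $k \geq 1$ there exists $\Phi_{n,k} \in \mc V$ such that
\[
B_k(\Phi_{n,k}) = B(H_n) \qquad \text{and} \qquad A_k(\Phi_{n,k}) = B(H_{n+k}).
\]
Granting this, apply Lemma 3.15 to the integrable pair $(A_k, B_k)$ and the function $F = \Phi_{n,k}$, for which $L^k$ is recursion for $B_k(\Phi_{n,k}) = B(H_n)$ by the previous paragraph. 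This yields $\{A_k(\Phi_{n,k}), B_k(\Phi_{n,k})\} = 0$, that is, $\{B(H_{n+k}), B(H_n)\} = 0$. Letting $n$ and $k$ vary gives the desired pairwise commutativity.

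The association claim is proved by induction on $k$. The case $k = 1$ is immediate with $\Phi_{n,1} = H_n$ by hypothesis (2). For the inductive step, given $\Phi_{n+1,k}$ satisfying $B_k(\Phi_{n+1,k}) = B(H_{n+1}) = A(H_n)$ and $A_k(\Phi_{n+1,k}) = B(H_{n+k+1})$, the identity $B_k C_k = A D_k$ together with the compatibility $B_k(\Phi_{n+1,k}) = A(H_n)$ makes the system $C_k(\Phi_{n,k+1}) = \Phi_{n+1,k}$, $D_k(\Phi_{n,k+1}) = H_n$ consistent; solving it produces $\Phi_{n,k+1}$ with $B_{k+1}(\Phi_{n,k+1}) = B D_k(\Phi_{n,k+1}) = B(H_n)$ and $A_{k+1}(\Phi_{n,k+1}) = A_k C_k(\Phi_{n,k+1}) = A_k(\Phi_{n+1,k}) = B(H_{n+k+1})$. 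The main obstacle is precisely this simultaneous solvability: it rests on the minimality of the right common multiple $B_k C_k = A D_k$, which guarantees that the pair $(C_k, D_k)$ surjects onto the fiber $\{(x,y) : B_k(x) = A(y)\}$, perhaps after passing to $\mc K$ and adjusting by elements of $\ker B$ to secure a representative in $\mc V$.
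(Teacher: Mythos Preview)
Your overall strategy coincides with the paper's: use Lemma~3.8 to get hereditariness, Lemma~2.15 to propagate the recursion property along the chain, integrability of $L^k$ (this is Proposition~3.13, not~3.14 as you wrote), and then Lemma~3.15 applied to an integrable pair for $L^k$ together with a function $\Phi_{n,k}$ satisfying $B_k(\Phi_{n,k})=B(H_n)$ and $A_k(\Phi_{n,k})=B(H_{n+k})$. The paper differs from you only in how it produces $\Phi_{n,k}$: it takes $A_kB_k^{-1}$ to be the \emph{minimal} right fractional decomposition of $L^k$ (integrable by Corollary~3.10) and then invokes Theorem~4.12 of [CDSK14], which says precisely that a Lenard--Magri chain for $L$ can be ``contracted'' through the minimal decomposition of any power.

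The gap in your version is exactly where you say it is, and your last paragraph does not close it. The least-right-common-multiple property $B_kC_k = AD_k$ guarantees that for any pair of \emph{differential operators} $X,Y$ with $B_kX = AY$ one has $X=C_kZ$, $Y=D_kZ$ for some operator $Z$; it does \emph{not} directly give the analogous surjectivity onto the fiber $\{(x,y)\in\mc V^2 : B_k(x)=A(y)\}$ at the level of functions. That function-level statement is a genuine theorem (it is essentially the content of the result the paper cites from [CDSK14]) and requires an argument beyond ``minimality of the common multiple.'' Your phrase ``perhaps after passing to $\mc K$ and adjusting by elements of $\ker B$'' is an acknowledgment of the difficulty, not a resolution of it. Until that solvability is established, the inductive construction of $\Phi_{n,k+1}$ is unjustified and the proof is incomplete. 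Either supply a self-contained proof of the function-level surjectivity, or switch to the minimal decomposition of $L^k$ and cite the external result as the paper does.
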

\begin{proof}
$L$ is in particular hereditary by Lemma $3.8$. Therefore, since $L$ is recursion for $B(G_0)$ and by Lemma $2.15$, $L$ is recursion for all the $B(H_n)$. We know that $L^k$ is integrable for all $k \geq 0$ by Proposition $3.13$. Let $L^k=A_k B_k^{-1}$ be the right minimal fractional decomposition of $L^k$. By Theorem $4.12$ in [CDSK14] we know that for any $n \geq 0$ there exists a function $F_{n,k}$ such that $B(H_n)=B_k(F_{n,k})$ and $B(H_{n+k})=A_k(F_{n+k})$. It follows directly from Lemma $3.15$ that $B(H_n)$ and $B(H_{n+k})$ commute in $\mc V$. This holds for all $n,k \geq 0$. 
\end{proof}

\begin{remark}
The first condition in Proposition $3.16$ is met whenever $G_0$ lies in the kernel of $B$.
\end{remark}

\begin{corollary}
Let $A$, $B$ be two differential operators and let $(H_n)_{n \geq 0}$ be a sequence of functions in $\mc V$ which spans an infinite-dimensional space over $\mc C$. Let us assume moreover that $L$ is recursion for $B(H_0)$ and that for all $n \geq 0$,
\begin{equation}
B(H_{n+1})=A(H_n).
\end{equation}
 Then the functions $B(H_n)$ pairwise commute if and only if the pair $(A,B)$ is integrable.
\end{corollary}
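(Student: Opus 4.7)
The corollary is a biconditional, and my plan is to establish each direction by invoking one of the two key propositions proved earlier in this section.

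For the implication that $(A,B)$ integrable forces the $B(H_n)$ to pairwise commute, I would simply apply Proposition $3.16$. Its hypotheses match exactly those of the corollary once $(A,B)$ is assumed integrable: the rational operator $L=AB^{-1}$ is recursion for $B(H_0)$, the recursive relation $B(H_{n+1})=A(H_n)$ holds, and the pair is integrable. The conclusion of Proposition $3.16$ is precisely the desired commutativity of the $B(H_n)$.

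For the converse, I would apply Proposition $3.14$. Since the $B(H_n)$ pairwise commute, satisfy the recursive relation, and the $(H_n)$ span an infinite-dimensional subspace of $\mc V$, Proposition $3.14$ concludes that $(A,B)$ is integrable as soon as $AB^{-1}$ is a minimal right fractional decomposition. When $(A,B)$ is already minimal, the argument is complete. Otherwise, I factor $A=A_0E$ and $B=B_0E$ with $A_0 B_0^{-1}$ minimal, and set $H'_n:=E(H_n)$. By Lemma $1.15$, $\ker E$ is finite-dimensional over $\mc C$, so the sequence $(H'_n)$ still spans an infinite-dimensional subspace; it clearly satisfies $B_0(H'_{n+1})=B(H_{n+1})=A(H_n)=A_0(H'_n)$ and $\{B_0(H'_n),B_0(H'_m)\}=\{B(H_n),B(H_m)\}=0$. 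Proposition $3.14$ applied to the triple $(A_0,B_0,(H'_n))$ then yields that $(A_0,B_0)$ is integrable.

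The hard part, in the non-minimal case, is upgrading integrability of the minimal pair $(A_0,B_0)$ to integrability of the given pair $(A,B)=(A_0E,B_0E)$ itself. Here I would invoke Lemma $3.9$: since Proposition $2.18$ already supplies hereditariness of $L$ from the commuting sequence, it suffices to prove that $A$ and $B$ are each integrable as differential operators. To do so, I would rerun the divisibility argument in the proof of Proposition $3.14$ for the non-minimal pair $(A,B)$ directly, carefully tracking left-divisibility of $M_F=X_{A(F)}(A)-(D_A)_F A$ and $N_F=X_{B(F)}(B)-(D_B)_F B$ by $A$ and $B$ themselves rather than only by $A_0$ and $B_0$. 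The main challenge is that the naive least right common multiple argument only yields $A_0 \mid M$ and $B_0 \mid N$; extracting the extra factor of $E$ on the left requires exploiting the density given by the infinite-dimensional span of the $H_n$ together with the recursive relation $B(H_{n+1})=A(H_n)$, and is where the full strength of the hypotheses will be used.
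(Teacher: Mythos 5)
Your first two paragraphs reproduce the paper's proof exactly: the paper disposes of the corollary in one line by citing Proposition $3.16$ for the ``integrable $\Rightarrow$ commuting'' direction and Proposition $3.14$ for the converse, just as you do. The extra issue you raise in your second and third paragraphs --- that Proposition $3.14$ is stated only for a \emph{minimal} right fractional decomposition while the corollary does not say the pair $(A,B)$ is minimal --- is a real imprecision in the statement, and the paper simply ignores it: its proof is valid only under the reading that $L=AB^{-1}$ is minimal (which is how the result is used everywhere else in the paper, e.g.\ in Section 6). Your reduction to the minimal pair via $H_n':=E(H_n)$ is sound and does yield that $(A_0,B_0)$ is integrable, hence that $L$ is an integrable rational operator in the sense of Definition $3.5$; but, as you candidly admit, you do not actually close the remaining step of promoting this to integrability of the non-minimal pair $(A_0E,B_0E)$ itself, and nothing in the paper (Lemma $3.7$ and Corollary $3.10$ only go in the descending direction, from $(AC,BC)$ to $(A,B)$) supplies that ascent. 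So: under the intended minimal reading your proof is complete and identical to the paper's; in the literal non-minimal reading your third paragraph is an honest sketch of an open step rather than a proof, but the paper's own argument has the same gap, and the clean fix is to restate the corollary with $AB^{-1}$ minimal (or to weaken its conclusion to ``$L$ is an integrable rational operator'').
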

\begin{proof}
It follows immediately from Propositions $3.14$ and $3.16$.
\end{proof}

\begin{example}
The rational operator $L=\partial(\partial+u)\partial^{-1}$ is integrable. It follows from Example $3.3$ after performing the change of variables $u \rightarrow u + \lambda$. On the other hand, if we let $A=\partial(\partial+u)$, $B=\partial$, and $H_n=(\partial+u)^n(1)$, it is clear that $L$ is recursion for $B(H_0)=0$ and that $B(H_{n+1})=A(H_n)$ for all $n \geq 0$. Therefore the functions $H_n'$ pairwise commute. Note that $u_t=H_2'$ is the Burgers equation.
\end{example}
The following proposition says that hereditary operators are not far from being integrable.
\begin{proposition}
Let $L$ be a hereditary rational operator with right minimal fractional decomposition $AB^{-1}$ and left minimal fractional decomposition $C^{-1}D$. Assume moreover that $(C,D)$ are right coprime. Then the pair $(A,B)$ is integrable.
\end{proposition}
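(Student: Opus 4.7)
The plan is to derive from the hereditariness of $L$ that each of $A$ and $B$ is an integrable differential operator, and that the accompanying bidifferential operators $M$ and $N$ additionally satisfy the cross relation of (3.19), so that the pair $(A,B)$ is integrable. The extra hypothesis that $(C,D)$ is right coprime will enter exactly once, through a B\'ezout identity $XC+YD=1$ in $\mc V[\partial]$, which will be used to show that a certain a priori rational operator is in fact differential.

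I begin by rewriting the hereditariness identity (2.38) after right-multiplication by $B$ and left-multiplication by $C$, using $CL=D$ and $LB=A$. Introducing the bidifferential operators $P^{XY}$ defined by $P_F^{XY}:=X_{X(F)}(Y)-(D_X)_F Y$ for $X,Y\in\{A,B\}$, hereditariness takes the form
\[
CP_F^{AA}+DLP_F^{BB}=D(P_F^{AB}+P_F^{BA})
\]
as an identity in $\mc V(\partial)$. Since $CP_F^{AA}$ and $D(P_F^{AB}+P_F^{BA})$ are differential, the rational operator $DLP_F^{BB}=DC^{-1}DP_F^{BB}$ must be differential as well.

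The key step is to show that $C$ left-divides $DP_F^{BB}$. Setting $T_F:=C^{-1}DP_F^{BB}\in\mc V(\partial)$ and applying the B\'ezout identity, I compute
\[
T_F=(XC+YD)T_F=X(CT_F)+Y(DT_F)=X\cdot DP_F^{BB}+Y\cdot DC^{-1}DP_F^{BB},
\]
a sum of two differential operators, hence itself a differential operator. Applying Lemma 1.26 to the bidifferential operator $DP^{BB}$ and the divisor $C$, the vanishing of the remainder then produces a bidifferential operator $P'$ with $CP'_F=DP_F^{BB}$.

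The minimality of both fractional decompositions of $L$ forces $\deg C=\deg B$ and $\deg D=\deg A$, so that $CA=DB$ is simultaneously the left least common multiple of $(A,B)$ (using right coprimality of $(A,B)$) and the right least common multiple of $(C,D)$ (using left coprimality of $(C,D)$ from left minimality of $C^{-1}D$). Applying Lemma 1.29 to the identity $CP'_F=DP_F^{BB}$ produces a bidifferential $N$ with $P'_F=AN_F$ and $P_F^{BB}=BN_F$, which is the integrability of $B$. Rewriting the hereditariness identity as $CP_F^{AA}=D[(P_F^{AB}+P_F^{BA})-P'_F]$ and applying Lemma 1.29 once more yields a bidifferential $M$ with $P_F^{AA}=AM_F$ and $(P_F^{AB}+P_F^{BA})-P'_F=BM_F$, which is the integrability of $A$. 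Substituting $P'_F=AN_F$ in this last relation gives $P_F^{AB}+P_F^{BA}=AN_F+BM_F$, the third equation of (3.19), so that $(A,B)$ is integrable. The main obstacle is the middle step, where the extra right coprimality hypothesis on $(C,D)$ is essential in order to promote $T_F$ from a rational to a differential operator.
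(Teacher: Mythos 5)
Your proof is correct and follows essentially the same route as the paper: rewrite the hereditariness identity through the left presentation $L=C^{-1}D$, use right coprimality of $(C,D)$ to conclude that $C$ left-divides $DP^{BB}_F$, and then apply Lemma $1.29$ to the right least common multiple $CA=DB$ of $(C,D)$ to extract the left factor $B$ from $P^{BB}$. The only differences are presentational: you make explicit, via the B\'ezout identity, the step the paper compresses into ``use the right minimality of the fraction $DC^{-1}$,'' and you verify all three equations of $(3.19)$ directly with a second application of Lemma $1.29$ where the paper instead invokes Lemma $3.8$.
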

\begin{proof}
Recall that thanks to Lemma $3.8$ it is enough to show that $B$ is integrable. If we look carefully at the proof of Lemma $3.8$, more specifically if we equate the second and the fifth line of $(3.25)$, we see that the hereditariness of $L$ is equivalent to the equation 
\begin{equation}
P_F=LQ_F-L^2 R_F  \hspace{2 mm} \text{     for all } F \in \mc V,
\end{equation}
 where the bidifferential operators $P,Q,R$ are given by
\begin{equation}
\begin{split}
P_F&=X_{A(F)}(A)-(D_A)_F A \\
Q_F&=X_{A(F)}(B)+X_{B(F)}(A)-(D_A)_F B-(D_B)_F A \\
R_F&=X_{B(F)}(B)-(D_B)_F B.
\end{split}
\end{equation}
Asking for the integrability of $B$ amounts to ask for $R$ to be divisible on the left by $B$.  We will show that fact. Let us first use the left presentation $L=C^{-1}D$ in $(3.74)$ :
\begin{equation}
  P_F=C^{-1}DQ_F+C^{-1}DC^{-1}DR_F.
\end{equation}
 Rearranging $(3.76)$, we get for all $F \in \mc V$
\begin{equation}
 CP_F-DQ_F=DC^{-1}DR_F.
\end{equation}
 Now we use the right minimality of the fraction $DC^{-1}$ to deduce that $DR$ is divisible on the left by $C$. Given that $CA=DB$ is the least right common multiple of the pair $(C,D)$ since $AB^{-1}$ is a right minimal decomposition of $L$, it follows from Lemma $1.29$ that $R$ is divisible on the left by $B$.
\end{proof}

\begin{example}
In Example $3.11$, we saw that the rational operator $L=\partial^{-1}u''\partial$ is not integrable. Note that $L$ does not meet the hypothesis of Proposition $3.20$. Indeed, $\partial$ and $u'' \partial$ are not right coprime.
\end{example}

\section{Weakly non-local Operators}
In this section we study weakly non-local rational operators. Let $\mc V$ be a normal algebra of differential functions, and let $\mc K$ be its field of fractions.

\begin{definition}
A weakly non-local operator $L$ is a rational operator which can be written in the following form 
\begin{equation}
L=E(\partial)+ \sum_{i=1}^n{p_i {\partial}^{-1}q_i},
\end{equation}
where $E$ is a differential operator and $p_i$ and $q_i$ are elements of $\mc V$.  We denote the space of weakly non-local operators with coefficients in $\mc V$ by $W_{\mc V}$, or simply $W$ when there is no confusion on the algebra $\mc V$.
\end{definition}

\begin{definition}
Let $\mc A$ be a differential algebra with the subfield of constants $\mc C$ and let $P \in \mc A[\partial]$.
We say that $P$ has a full kernel in $\mc A$ if 
\begin{equation}
dim_{\mathcal{C}}\hspace{1 mm} Ker_{\mc A} P=d(P).
\end{equation}
\end{definition}

As we show next, weakly non-local operators can be characterized by their denominators in a minimal fractional decomposition $AB^{-1}$.

\begin{lemma}
Let $L=E(\partial)+ \sum_{i=1}^n{p_i {\partial}^{-1}q_i}$ be a weakly non-local operator, and $A$ and $B$ be two differential operators. Then, we have :
\begin{equation}
\begin{split}
    AL&=\sum_{i=1}^n{A(p_i) {\partial}^{-1}q_i}  \hspace{2 mm} mod \hspace{2 mm} \mc V[\partial], \\
    LB&=\sum_{i=1}^n{p_i {\partial}^{-1}B^*(q_i)} \hspace{2 mm} mod \hspace{2 mm} \mc V[\partial].
\end{split}
\end{equation}
\end{lemma}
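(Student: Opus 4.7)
The plan is to prove the first identity by a direct expansion and then deduce the second identity by applying the anti-involution $*$ of Definition~1.22. Both identities are additive and $\mc V$-linear in $A$ (resp.~$B$), and $A \cdot E(\partial)$ (resp.~$E(\partial) \cdot B$) lies in $\mc V[\partial]$. Hence the first identity reduces to the single claim that for every $k \geq 0$ and every $p, q \in \mc V$,
\begin{equation*}
\partial^{k} \cdot p \, \partial^{-1} q \equiv p^{(k)} \partial^{-1} q \pmod{\mc V[\partial]} .
\end{equation*}

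To establish this, I would expand $\partial^{k} p = \sum_{j=0}^{k} \binom{k}{j} p^{(j)} \partial^{k-j}$ and obtain
\begin{equation*}
\partial^{k} \cdot p \, \partial^{-1} q = \sum_{j=0}^{k-1} \binom{k}{j} p^{(j)} \partial^{k-1-j} q + p^{(k)} \partial^{-1} q .
\end{equation*}
For $0 \leq j \leq k-1$ the exponent $k-1-j$ is non-negative, so $\partial^{k-1-j} q \in \mc V[\partial]$ and every term of the first sum already lies in $\mc V[\partial]$. Only the last term $p^{(k)} \partial^{-1} q$ survives modulo differential operators. Summing over $i$ and using $A(p_i) = \sum_{k} a_{k} p_i^{(k)}$ for $A = \sum_{k} a_{k} \partial^{k}$ gives the first congruence.

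For the second identity, the cleanest route is the adjoint trick. Since $(p \partial^{-1} q)^{*} = q^{*} (\partial^{-1})^{*} p^{*} = -q \partial^{-1} p$, the operator $L^{*} = E^{*} - \sum_{i=1}^{n} q_i \partial^{-1} p_i$ is again weakly non-local. Applying the first identity (already proved) to the operator $B^{*}$ and to $L^{*}$ yields
\begin{equation*}
B^{*} L^{*} \equiv -\sum_{i=1}^{n} B^{*}(q_i) \, \partial^{-1} p_i \pmod{\mc V[\partial]} .
\end{equation*}
Now $*$ is an anti-involution of $\mc V((\partial^{-1}))$ sending $\mc V[\partial]$ to itself, so taking $*$ of both sides, together with $(B^{*} L^{*})^{*} = L B$ and $(\partial^{-1})^{*} = -\partial^{-1}$, produces the desired second congruence. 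I expect no genuine obstacle; the only care needed is in tracking signs in the expansion of $\partial^{k} p$ and in checking that $*$ maps differential operators to differential operators, which is immediate from $(a \partial^{k})^{*} = (-\partial)^{k} a \in \mc V[\partial]$.
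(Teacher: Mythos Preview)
Your proof is correct. The paper's argument is slightly different in presentation: it invokes right and left Euclidean division by $\partial$, writing $A p_i = C_i \partial + A(p_i)$ and $q_i B = \partial D_i + B^*(q_i)$ for suitable differential operators $C_i, D_i$, from which both congruences follow immediately after multiplying by $\partial^{-1} q_i$ on the right, respectively $p_i \partial^{-1}$ on the left.

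For the first identity your direct expansion of $\partial^k p$ is essentially the same computation as the paper's right division, just made explicit. The genuine difference is in the second identity: the paper treats it symmetrically via a left Euclidean division (which requires the small observation that the remainder of $q_i B$ upon left division by $\partial$ is exactly $B^*(q_i)$), whereas you bypass that computation entirely by applying the anti-involution $*$ to reduce to the already-proved first identity. Your adjoint trick is arguably cleaner, since it makes the appearance of $B^*$ automatic rather than something to be verified; the paper's approach has the minor advantage of being uniform and not relying on the behaviour of $*$ on the full pseudodifferential algebra.
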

\begin{proof}
This follows directly from the fact that, for all $i=1,\dots,n$ we can find by the Euclidean division two differential operators $C_i$ and $D_i$ such that
\begin{equation}
Ap_i=C_i\partial+A(p_i), \hspace{2 mm}
q_iB=\partial D_i +B^*(q_i).
\end{equation}
\end{proof}

\begin{lemma}
The vector space $W_{\mc V}$ is isomorphic to $\mc V [\partial] \oplus ({\mc V} \otimes _ {\mc C} \mc V)$ under the map $E+\sum_i{p_i{\partial}^{-1}q_i} \rightarrow E \oplus (\sum_i{p_i \otimes q_i})$.
\end{lemma}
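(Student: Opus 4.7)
I would set up the natural map $\Phi : \mc V [\partial] \oplus (\mc V \otimes_{\mc C} \mc V) \to W_{\mc V}$ by $\Phi(E, \sum_i p_i \otimes q_i) = E + \sum_i p_i \partial^{-1} q_i$. This is well-defined because the assignment $(p,q) \mapsto p \partial^{-1} q$ is $\mc C$-bilinear, hence factors through the tensor product over $\mc C$. Surjectivity is then immediate from Definition $4.1$.

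The plan for injectivity is as follows. Suppose $\Phi(E, \sum_{i=1}^n p_i \otimes q_i) = 0$ in $\mc V((\partial^{-1}))$. Since $E$ contributes only non-negative powers of $\partial$ and the sum $\sum_i p_i \partial^{-1} q_i$ contributes only strictly negative powers (by the expansion formula $(0.8)$), we immediately read off $E = 0$ and
\begin{equation*}
\sum_{i=1}^n p_i \partial^{-1} q_i = 0.
\end{equation*}
To show $\sum_i p_i \otimes q_i = 0$ in $\mc V \otimes_{\mc C} \mc V$, I would pick a $\mc C$-basis $e_1, \dots, e_m$ of $\mathrm{span}_{\mc C}(q_1, \dots, q_n)$ and write $q_i = \sum_j c_{ij} e_j$ with $c_{ij} \in \mc C$. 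Setting $\tilde p_j = \sum_i c_{ij} p_i$, the relation becomes $\sum_{j=1}^m \tilde p_j \partial^{-1} e_j = 0$. Expanding $\partial^{-1} e_j$ via $(0.8)$ and collecting coefficients of $\partial^{-k-1}$ yields, for every $k \geq 0$,
\begin{equation*}
\sum_{j=1}^m \tilde p_j \, e_j^{(k)} = 0.
\end{equation*}

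The key step is to conclude that all $\tilde p_j$ vanish. Taking $k = 0, 1, \dots, m-1$, this is the linear system $(\tilde p_1, \dots, \tilde p_m) \cdot W = 0$ in $\mc K$, where $W = (e_j^{(k)})_{1 \leq j \leq m,\, 0 \leq k \leq m-1}$ is the Wronskian matrix of $e_1, \dots, e_m$. Since $e_1, \dots, e_m$ are linearly independent over $\mc C = \mc C_{\mc K}$, the standard Wronskian criterion in a differential field of characteristic zero guarantees $\det W \neq 0$ in $\mc K$, so $W$ is invertible and $\tilde p_j = 0$ for all $j$. Going back,
\begin{equation*}
\sum_{i=1}^n p_i \otimes q_i = \sum_{i,j} p_i \otimes c_{ij} e_j = \sum_j \Bigl(\sum_i c_{ij} p_i\Bigr) \otimes e_j = \sum_j \tilde p_j \otimes e_j = 0,
\end{equation*}
which gives injectivity and hence the desired isomorphism.

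The main obstacle is the Wronskian step; everything else is bookkeeping. It relies on the assumption (made at the start of Section $2$) that $\mc C$ is the full field of constants of $\mc K$, so that $\mc C$-linear independence of the $e_j$ is equivalent to the non-vanishing of their Wronskian in $\mc K$. If one wanted to avoid appealing to the Wronskian theorem, an alternative would be a direct induction on $m$: pick any $a \in \mc V$ with $e_m' \neq 0$, differentiate the relation $\sum_j \tilde p_j e_j = 0$ and subtract a suitable multiple of itself to eliminate $\tilde p_m$, reducing to the case of $m-1$ functions; but this amounts to unpacking the same Wronskian determinant argument.
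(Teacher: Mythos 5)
Your proof is correct and follows essentially the same route as the paper: expand $\sum_i p_i\partial^{-1}q_i=0$ as a Laurent series in $\partial^{-1}$ to obtain $\sum_i p_i q_i^{(k)}=0$ for all $k\geq 0$, then use linear independence of the $q_i$ over $\mc C$ to force the $p_i$ to vanish. The only cosmetic difference is the last step: the paper invokes its Lemma $1.17$ to produce, for each $i$, a differential operator annihilating all $q_j$ with $j\neq i$ but not $q_i$, whereas you invert the Wronskian matrix directly --- two packagings of the same fact about linear independence over the constant field --- and your explicit separation of the local part $E$ by degree considerations makes precise a point the paper leaves implicit.
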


\begin{proof}
Let $Z=span \{p{\partial}^{-1}q | (p,q) \in {\mc V}^2 \}$. We need to prove that $Z$ is isomorphic to ${\mc V} \otimes _ {\mc C} \mc V$. Let $\phi$ be the map from ${\mc V} \otimes _ {\mc C} \mc V$ to $Z$ sending the tensor $f \otimes g $ to the operator $f{\partial}^{-1} g$. $\phi$ is surjective by definition of $Z$. To check that it is an injective map let us take two sets consisting of linearly independent functions $\{f_1,\dots,f_n\}$ and $\{g_1,\dots,g_n\}$ and assume that 
\begin{equation}
L=\sum_{i=1}^n {{f_i}{\partial}^{-1}{g_i}}=0.
\end{equation}
 If we expand $L$ as a Laurent series in $\partial$ and equal its coefficients to $0$ we get that for all nonnegative integer $k$, $\sum_{i=1}^n {{f_i}{g_i}^{(k)}}=0$ which is the same as saying that for all differential operator $P$, $\sum_{i} {{f_i}P(g_i)}=0$. Since the $g_j$ are linearly independent functions and for a given $i$, we can pick by Lemma $1.17$ an operator $P_i$ annihilating all the $g_j$'s except $g_i$. Therefore $f_i$ should be trivial, which is a contradiction.
\end{proof}

\begin{lemma}
Let $L \in \mc K(\partial)$ be a rational operator with minimal right fractional decomposition $L=AB^{-1}$. Then the following statements are equivalent
\begin{enumerate}
\item[(1)] $L \in W_{\mc K}$.\\
\item[(2)] $B$ has a full kernel in $\mc K$. \\
\item[(3)] $B^*$ has a full kernel in $\mc K$.
\end{enumerate}
Moreover if $L=E(\partial)+\sum_{i=1}^n {{p_i}{\partial}^{-1}{q_i}}$ where both the $p_i$'s and $q_i$'s are linearly independent elements of $\mc K$, then $d(B)=n$ and $B$ is a right least common multiple of the differential operators $\frac{1}{q_i}\partial$. Finally, $Ker B^*$ is spanned by the $q_i$'s.
\end{lemma}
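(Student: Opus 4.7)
The plan is to prove the cycle $(1)\Rightarrow(3)\Rightarrow(1)$ and the equivalence $(2)\iff(3)$, harvesting the ``moreover'' claims about $d(B)$, $\ker B^*$, and the right LCM property as by-products. The main workhorse throughout is Lemma $4.3$, which identifies the non-local part of $LB$ as $\sum p_i\partial^{-1}B^*(q_i)$ modulo $\mc K[\partial]$. A second key ingredient, essentially contained in the proof of Lemma $4.4$, is the observation that if $p_1,\ldots,p_n \in \mc K$ are linearly independent over $\mc C$ and $\sum p_i\partial^{-1}f_i = 0$, then each $f_i = 0$; this is proved by extracting the coefficient of $\partial^{-k-1}$ to obtain $\sum p_i f_i^{(k)} = 0$ for all $k\geq 0$ and applying a Wronskian argument to a maximal linearly independent subset of the $f_i$.

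\textbf{Proving $(1)\Rightarrow(3)$ and the moreover statements.} Write $L = E + \sum_{i=1}^n p_i\partial^{-1}q_i$ with $\{p_i\}$ and $\{q_i\}$ linearly independent. Lemma $4.3$ gives $LB \equiv \sum p_i\partial^{-1}B^*(q_i) \pmod{\mc K[\partial]}$; since $LB = A$ is a differential operator, the observation above forces $B^*(q_i)=0$ for every $i$, so $\langle q_1,\ldots,q_n\rangle \subseteq \ker B^*$ and $d(B)\geq n$ by Lemma $1.15$. For the reverse bound, I apply Lemma $1.17$ to produce a degree-$n$ operator $P$ with $\ker P = \langle q_1,\ldots,q_n\rangle$; then $B' := P^*$ satisfies $LB' \in \mc K[\partial]$ by the same Lemma $4.3$ computation, so minimality of $B$ forces $d(B)\leq d(B')=n$. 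This pins down $d(B)=n$ and $\ker B^* = \langle q_1,\ldots,q_n\rangle$, giving $(3)$ together with the claimed identifications of the degree and kernel.

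\textbf{Proving $(3)\Rightarrow(1)$.} Assume $\ker B^*$ has basis $q_1,\ldots,q_n$ with $n = d(B)$. I would compare two $\mc K$-subspaces of $\mc K(\partial)/\mc K[\partial]$: $N_1$, the image of $\mc K[\partial]\cdot B^{-1}$, and $N_2$, the image of $\sum_{i=1}^n \mc K\,\partial^{-1}q_i$. The surjection $\mc K[\partial]\to N_1$ sending $A\mapsto AB^{-1}$ has kernel $\mc K[\partial]\cdot B$, so $N_1\cong \mc K[\partial]/\mc K[\partial]\cdot B$ has $\mc K$-dimension $n$. Lemma $4.4$ shows $N_2$ has $\mc K$-dimension $n$ with basis $\{\partial^{-1}q_i\}_{i=1}^n$. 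Lemma $4.3$ yields $N_2\subseteq N_1$: for any $f_i\in\mc K$, $(\sum f_i\partial^{-1}q_i)B \equiv \sum f_i\partial^{-1}B^*(q_i) = 0\pmod{\mc K[\partial]}$, so $\sum f_i\partial^{-1}q_i$ lies in $\mc K[\partial]\cdot B^{-1}$. Since $N_1$ and $N_2$ are $n$-dimensional $\mc K$-vector spaces with $N_2\subseteq N_1$, they coincide; hence every $L=AB^{-1}$ lies in $\mc K[\partial]+\sum_i\mc K\,\partial^{-1}q_i$ and is weakly non-local.

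\textbf{The equivalence $(2)\iff(3)$ and the right LCM claim.} For $(2)\iff(3)$, I would use the Wronskian factorization: from a basis $g_1,\ldots,g_n$ of $\ker B$, standard theory yields $B = b_n(\partial - g_n'/g_n)\cdots(\partial - g_1'/g_1)$ with coefficients in $\mc K$, so $B^* = (-1)^n(\partial + g_1'/g_1)\cdots(\partial + g_n'/g_n)\,b_n$; each factor $\partial+g_i'/g_i$ annihilates $1/g_i\in\mc K$, so $B^*$ also has full kernel, and the converse is symmetric. For the LCM claim: $B^*(q_i)=0$ combined with Euclidean division of $B^*$ by $\partial - q_i'/q_i$ (evaluating the remainder at $q_i$) gives $B^* = Q_i(\partial - q_i'/q_i)$, so taking adjoints shows $B = \tfrac{1}{q_i}\partial\cdot(-q_iQ_i^*)$ is left-divisible by $\tfrac{1}{q_i}\partial$ for each $i$. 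Conversely, any common left multiple $T$ of the $\tfrac{1}{q_i}\partial$'s satisfies $T^*(q_i)=0$ for all $i$; performing left division $T=BS+R$ with $d(R)<n$ and applying the adjoint to each $q_i$ gives $R^*(q_i)=0$, forcing $R=0$ by Lemma $1.15$ since $d(R^*)<n=\dim\ker R^*$. Thus $B\cdot\mc K[\partial] = \bigcap_i\tfrac{1}{q_i}\partial\cdot\mc K[\partial]$, so $B$ is the right LCM. \emph{The main obstacle}, in my view, is $(3)\Rightarrow(1)$: the temptation is to invert $B$ explicitly via the Wronskian factorization, but the cleanest route is the abstract $\mc K$-dimension count $N_1=N_2$, which sidesteps any explicit computation of $B^{-1}$.
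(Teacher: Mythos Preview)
Your argument for $(1)\Rightarrow(3)$ and for the right LCM claim matches the paper's in spirit. Your $(3)\Rightarrow(1)$ via the $\mc K$-dimension count $N_1=N_2$ is correct and genuinely different from the paper: the paper instead proves $(2)\Rightarrow(1)$ by induction on $d(B)$, factoring $B=E\,\partial\,\tfrac1f$ with $f\in\ker B$, showing $\ker E\subset\partial\mc K$, and explicitly rewriting $B^{-1}=f\partial^{-1}E^{-1}$ as a weakly non-local expression. Your abstract route is shorter and avoids the explicit inversion.

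There is, however, a genuine gap in your $(2)\iff(3)$. First, the factorization $B=b_n(\partial-g_n'/g_n)\cdots(\partial-g_1'/g_1)$ is not correct as written: the successive factors involve the transformed functions $\tilde g_i=(\partial-h_{i-1})\cdots(\partial-h_1)(g_i)$, not the original $g_i$. More seriously, the inference ``each factor $\partial+h_i$ annihilates an element of $\mc K$, so $B^*$ has full kernel'' is a non-sequitur. A product of degree-one operators each with a nonzero kernel in $\mc K$ need not have full kernel: for instance, in $\mc K=\mc C(u,u',\ldots)$ take $B=(\partial-u'/u)\,\partial$; both factors have one-dimensional kernels in $\mc K$, yet $\ker B=\mc C$ is one-dimensional because $u\notin\partial\mc K$. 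Building a basis of $\ker B^*$ from the factorization requires solving a tower of first-order inhomogeneous equations $(\partial+h_i)(y)=c/f_{i-1}$, which in turn requires each $f_i/f_{i-1}$ to lie in $\partial\mc K$; this is true for the \emph{specific} factorization coming from a full-kernel $B$ (e.g.\ in degree two one finds $f_2/g_1=(g_2/g_1)'$), but you have not checked it.

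The fix is immediate from what you already have. Since the anti-involution $*$ maps $p\,\partial^{-1}q$ to $-q\,\partial^{-1}p$, it preserves $W_{\mc K}$; hence $B^{-1}\in W_{\mc K}\iff(B^*)^{-1}\in W_{\mc K}$. Applying your $(1)\iff(3)$ to $L=B^{-1}$ gives $B^{-1}\in W_{\mc K}\iff B^*$ has full kernel; applying it to $L=(B^*)^{-1}$ gives $(B^*)^{-1}\in W_{\mc K}\iff B$ has full kernel. Chaining these yields $(2)\iff(3)$ with no further work. This is essentially how the paper closes the cycle as well (it derives $(3)\Rightarrow(2)$ from the other two implications), though the paper reaches $(1)$ from $(2)$ rather than from $(3)$.
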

\begin{proof}
Let us first prove that $(1) \implies (3)$. Let $E \in \mc K[\partial]$ and $p_i,q_i \in \mc K$ be such that
\begin{equation}
L=AB^{-1}=E+\sum_{i=1}^n {{p_i}{\partial}^{-1}{q_i}},
\end{equation}
 where the $\{p_1, \dots, p_n\}$ and $\{q_1,\dots,q_n\}$ are sets consisting of linearly independent functions. Then, multiplying $(4.6)$ on the right by $B$ and using Lemma $4.3$ we obtain 
\begin{equation}
\sum_{i} {{p_i}{\partial}^{-1}{B^{*}(q_i)}}=0.
\end{equation} 
By Lemma $4.4$, $(4.7)$ implies that $\{q_1, \dots, q_n\} \subset Ker B^*$. Hence 
\begin{equation}
n \leq dim_{\mc C} Ker B^* .
\end{equation}
Let $C$ be a common multiple of the differential operators $\frac{1}{q_i}\partial$, i.e. a differential operator such that for all $i=1,\dots,n$ we can find a differential operator $M_i$ satisfying 
\begin{equation}
C=\frac{1}{q_i}\partial M_i.
\end{equation}
From equations $(4.7)$ and $(4.9)$ we have
\begin{equation}
\begin{split}
AB^{-1}&=E+\sum_{i=1}^n{p_i(\frac{1}{q_i} \partial)^{-1}} \\
             &=E+\sum_{i=1}^n{p_iM_iM_i^{-1}(\frac{1}{q_i} \partial)^{-1}}\\
             &= (EC+\sum_{i=1}^n{p_iM_i})C^{-1}.
\end{split}
\end{equation}
 Since $AB^{-1}$ is a right minimal fractional decomposition of $L$, there exists by Lemma $1.16$ an operator $D$ such that 
\begin{equation}
\begin{split}
A&=(EC+\sum_{i=1}^n{p_iM_i})D \\
B&=CD.
\end{split}
\end{equation}
In particular,
\begin{equation}
d(B) \leq d(C)=n.
\end{equation}
Recall that $d(B)=d(B^*)$. Hence, by $(4.8)$, $(4.12)$ and Lemma $1.15$, 
\begin{equation}
d(B^*)=dim_{\mc C} Ker B^*=n,
\end{equation}
 which means that $B^*$ has a full kernel in $\mc K$ spanned by the functions $q_i$. We also get from $d(B)=d(C)$ that $B$ is a least right common multiple of the differential operators $\frac{1}{q_i}\partial$, and that
\begin{equation}
A=EB+\sum_{i=1}^n{p_iM_i}.
\end{equation}
\\
\indent
We now prove that $(2) \implies (1)$ by induction  on the degree of $B$. We only have to prove that $B^{-1} \in W_{\mc K}$ since it follows from Lemma $4.3$ that $W_{\mc K}$ is stable by left or right multiplication by elements of $\mc V[\partial]$.
 If $B$ is a degree $1$ operator such that $dim_{\mc C} Ker B =1$ then it can be written in the form $f \partial g$. Therefore $B^{-1}=\frac{1}{g}\partial^{-1}\frac{1}{f}$ is weakly nonlocal. Let $B$ be an operator of degree $n+1$ with full kernel in $\mc K$ and $f$ an element in its kernel. Then we can find an operator $E$ such that 
\begin{equation}
B=E\partial \frac{1}{f}.
\end{equation}
The map 
\begin{equation}
\begin{split}
  \Phi :& Ker B \to Ker E\\
  & g \mapsto (g/f)'.
\end{split}
\end{equation}
has a one dimensional kernel spanned by $f$. Since by hypothesis $Ker B$ is $(n+1)-$dimensional, we get that $Ker E$ is at least $n$-dimensional. We also know that $d(E)=n$. Therefore by Lemma $1.15$, $E$ has a full kernel in $\mc K$. Moreover, $\Phi$ is surjective, i.e.
\begin{equation}
Ker E=Im \Phi \subset \partial K.
\end{equation}
 By the induction hypothesis, $E^{-1}$ is weakly non-local. Let $\{a_i\}$ and $\{b_i\}$ be two sets of linearly independent functions such that
\begin{equation}
E^{-1}=\sum_{i=1}^n{a_i \partial^{-1} b_i}.
\end{equation}
Multiplying $(4.18)$ on the left by $E$ we have from Lemma $4.3$ :
\begin{equation}
\sum_{i=1}^n{E(a_i )\partial^{-1} b_i}=0.
\end{equation}
Therefore, by Lemma $4.4$, $E(a_i)=0$ for $i=1,\dots,n$. Moreover, we know that $Ker E$ is a subset of $\partial \mc K$ thanks to $(4.17)$. For $i=1 \dots n$, let $d_i \in \mc K$ be such that $a_i=d_i'$. Then, we have
\begin{equation}
\begin{split}
B^{-1}&=f\partial^{-1}E^{-1} \\
     &=\sum_{i=1}^n{ f\partial^{-1} d_i' \partial^{-1}b_i }\\
     &=\sum_{i=1}^n{ f d_i \partial^{-1} b_i-f\partial^{-1}d_i b_i}.
\end{split}
\end{equation}
where we used the identity $\partial h-h\partial=h'$, valid for all $h \in \mc K$. Therefore, $B^{-1}$ is weakly non-local.
\\
\indent
Finally, let us check that $(3)\implies (2)$. From the second step of the proof, we know that ${B^*}^{-1}$ is weakly non-local. Using $(1) \implies (3)$, we deduce that $B$ has a full kernel in $\mc K$.
\end{proof}

\begin{remark}
It follows from the proof of Lemma $4.5$ that a rational operator $L \in \mc V(\partial)$ lies in $W_{\mc V}$ if and only if both $B^*$ and $C$ have a full kernel in $\mc V$, where $AB^{-1}$ (resp. $C^{-1}D$) is a right (resp. left) minimal fractional decomposition of $L$.
\end{remark}
Full kernel operators admit an interesting characterization in terms of integrability.

\begin{proposition}
Let $B \in \mc V[\partial]$ be a differential operator with full kernel in $\mc V$. Then it is integrable if and only if $Ker B^{*}$ is spanned
by variational derivatives.
\end{proposition}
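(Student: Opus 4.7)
The plan is to translate integrability of $B$ into an integral condition on $\ker B^*$ and link that condition to variational derivatives via one explicit identity.

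First, I would reformulate integrability. A direct computation using (1.23) and (1.25) yields, for the bidifferential operator $P(F, G) := (X_{B(F)}(B) - (D_B)_F B)(G)$, the clean expression
\[
P(F, G) = \{B(F), B(G)\} - B(\{F, G\}) \qquad \forall\, F, G \in \mc V,
\]
so integrability of $B$ amounts to $B$ dividing $P$ on the left as a bidifferential operator. Using Euclidean division (Lemma 1.26) to write $P = BQ + R$ with $d_1(R) < d(B)$, I would show $R = 0$ as follows: $R(F, G) \in \mathrm{Im}_{\mc K} B$ for all $F, G$ forces, via (0.19) and an integration by parts, $\ker B^* \subseteq \ker R(F, \cdot)^*$ for each $F$; since $B$ has full kernel, $\dim \ker B^* = d(B) > d(R(F, \cdot)) \geq \dim \ker R(F, \cdot)^*$ by Lemma 1.15, whence $R(F, \cdot) = 0$. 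Hence integrability is equivalent to $P(F, G) \in \mathrm{Im}\,B$ for all $F, G$, which by (0.19) and $\int q\, B(\{F, G\}) = \int B^*(q) \{F, G\} = 0$ becomes
\[
\int q \{B(F), B(G)\} = 0 \qquad \forall\, F, G \in \mc V,\ q \in \ker B^*.
\]

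Second, I would prove the key identity, valid for $q \in \ker B^*$:
\[
\int q \{B(F), B(G)\} = \int (D_q - D_q^*)(B(G)) \cdot B(F).
\]
The derivation comes from expanding $X_{B(F)}(B(G)) = X_{B(F)}(B)(G) + B\,X_{B(F)}(G)$, dropping the second summand after integration thanks to $B^*(q) = 0$, and then integrating by parts using $(X_H(B))^* = X_H(B^*)$ (which holds because $X_H$ commutes with $\partial$) combined with the derivation rule $X_H(B^*(q)) = X_H(B^*)(q) + B^*(X_H(q))$ and the cancellations $B^*(q) = 0$ and $X_H(q) = D_q(H)$; this produces $\int q\, X_{B(F)}(B)(G) = -\int D_q(B(F))\, B(G)$. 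Subtracting the $F \leftrightarrow G$ counterpart and one last integration by parts to produce $D_q^*$ yields the identity.

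Finally, both implications follow. If $\ker B^*$ is spanned by variational derivatives, every such $q$ satisfies $D_q = D_q^*$ by Lemma 1.22, so the right-hand side of the identity vanishes and $B$ is integrable. Conversely, if $B$ is integrable, the identity plus one integration by parts gives $\int F \cdot B^*(D_q - D_q^*) B(G) = 0$ for all $F, G \in \mc V$; non-degeneracy of the $\int$ pairing on the normal algebra $\mc V$ forces $B^*(D_q - D_q^*) B = 0$ as a differential operator, and invertibility of $B, B^*$ in the skewfield $\mc K((\partial^{-1}))$ then gives $D_q = D_q^*$, so $q$ is a variational derivative by Lemma 1.22. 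I expect the main technical obstacle to be the careful bookkeeping of signs and adjoints in the derivation of the key identity.
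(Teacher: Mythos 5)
Your argument is correct in substance but follows a genuinely different route from the paper's. The paper exploits the structural description of $B$ coming from Lemma 4.5: writing $B=\frac{1}{q_i}\partial M_i$, left-divisibility of the bidifferential operator $X_{B(F)}(B)-(D_B)_FB$ by $B$ reduces to left-divisibility by each $\frac{1}{q_i}\partial$, i.e.\ to the scalar conditions $X_{B(F)}(q_i)=-D_{B(F)}^*(q_i)$; a direct computation with the factorization turns these into $D_{q_i}(B(F))=D_{q_i}^*(B(F))$ on the infinite-dimensional image of $B$, whence $D_{q_i}=D_{q_i}^*$ and Lemma 1.22 finishes. You instead dualize: you convert left-divisibility into the membership condition $P(F,G)\in \mathrm{Im}\, B$ (your Euclidean-division argument for this equivalence is sound), then into the vanishing of $\int q\{B(F),B(G)\}$ for $q\in \mathrm{Ker}\, B^*$, which your key identity rewrites as $\int (D_q-D_q^*)(B(G))\,B(F)$. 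I checked that identity and it is correct, and both implications follow as you describe. Your version makes the conceptual content more visible --- for a full-kernel $B$, integrability is exactly the statement that $\{B(F),B(G)\}$ lies in $\mathrm{Im}\,B$, tested against $\mathrm{Ker}\,B^*$ --- at the cost of invoking nondegeneracy of the pairing $\int fg$, which the paper's computation avoids (though the paper uses a comparable step in Proposition 4.10).

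One slip to fix: the identity $P(F,G)=\{B(F),B(G)\}-B(\{F,G\})$ is not right. Expanding with $(1.25)$, $P(F,G)=X_{B(F)}(B)(G)-X_{B(G)}(B)(F)$, and the Leibniz rule for $X_{B(F)}$, $X_{B(G)}$ gives
\begin{equation*}
P(F,G)=\{B(F),B(G)\}-B\bigl(X_{B(F)}(G)-X_{B(G)}(F)\bigr),
\end{equation*}
with $X_{B(F)}(G)-X_{B(G)}(F)$ in place of $\{F,G\}=X_F(G)-X_G(F)$. This does not damage your proof, since all you use of the subtracted term is that it lies in $\mathrm{Im}\,B$, which is equally true of the correct term; but the formula as written is false and should be replaced.
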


\begin{proof}
We keep the notations of Lemma $4.5$. Namely : $Ker B^*$ is spanned by the linearly independent elements $q_i$, $i=1,\dots,n$ of $\mc V$ and $B=\frac{1}{q_i}\partial M_i$ for all $i$ for some differential operators $M_i \in \mc K[\partial]$. By $(3.10)$, $B$ is integrable if and only if $B$ divides on the left the differential operator $X_{B(F)}(B)-D_{B(F)}B$ for all $F \in \mc V$. Since $B$ is a right least common multiple of the operators $\frac{1}{q_i} \partial$, it is equivalent to say that for all $i=1,\dots,n$ and for all $F \in \mc V$,  $X_{B(F)}(B)-(D_B)_F B$ is divisible on the left by $\frac{1}{q_i} \partial$ which itself amounts to say that for all $F \in \mc V$ and $i=1,\dots,n$
\begin{equation}
X_{B(F)}(B^*)(q_i)=B^*({D_{B(F)}}^*(q_i)).
\end{equation}
As we have $B^*(q_i)=0$ by definition, the LHS of $(4.21)$ can be rewritten as $-B^*(X_{B(F)}(q_i))$. Taking the adjoint of $(1.24)$ and applying to $q_i$ yields
\begin{equation}
{D_{B(F)}}^*(q_i)=(D_B)_F^*(q_i).
\end{equation}
Both $X_{B(F)}(q_i)$ and $(D_B)_F^*(q_i)$ are differential operators applied to $F$. Therefore the integrability of $B$ is equivalent to the following set of equations for $i=1 \dots n$ and all $F \in \mc V$
\begin{equation}
 X_{B(F)}(q_i)=-{D_{B(F)}}^*(q_i) .
\end{equation}
 The RHS of $(4.23)$ can be rewritten as follows :
\begin{equation}
\begin{split} 
{D_{B(F)}}^*(q_i)&={D_{\frac{M_i(F)'}{q_i}}}^*(q_i)\\
                       &= (M_i(F)' D_{\frac{1}{q_i}}+\frac{1}{q_i}\partial D_{M_i(F)})^*(q_i) \\
                       &= (-\frac{1}{q_i^2} M_i(F)' D_{q_i})^*(q_i) \\
                       &= -D_{q_i}^*(B(F)).
\end{split}
\end{equation}
Plugging $(4.24)$ in $(4.23)$, we see that $B$ is integrable if and only if for all $F \in \mc V$ and $i=1,\dots,n$, we have
\begin{equation}
D_{q_i}(B(F))=D_{q_i}^*(B(F)).
\end{equation}
Since the image of $B$ is infinite-dimensional over $\mc C$, we can simplify $(4.25)$ into
\begin{equation}
D_{q_i}=D_{q_i}^*, \hspace{2 mm} i=1 \dots n.
\end{equation}
Using Lemma $1.23$, we conclude that $B$ is integrable if and only if 
\begin{equation}
Ker B^* \subset \frac{\delta \mc V}{\delta u}.
\end{equation}
\end{proof}

In the next lemma we give some technical yet useful result on the vector space structure of some extension of the space of weakly non-local operators.

\begin{lemma}
Let $U$ be the following subspace of rational operators 
\begin{equation}
U=\{ E(\partial)+\sum {p_{\alpha}\partial^{-1} q_{\alpha}}+\sum {a_{\beta}\partial^{-1}b_{\beta}\partial^{-1}c_{\beta}}\}.
\end{equation}
Then
\begin{equation}
 {\mc V} \otimes_{\mc C} {\mc V /{\partial \mc V}} \otimes_{\mc C} \mc V \simeq U/W_{\mc V}
\end{equation}
via the morphism $\phi$ sending $a \otimes \int{b} \otimes c$ to $a \partial^{-1} b \partial^{-1} c$.
\end{lemma}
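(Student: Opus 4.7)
The plan is as follows. Well-definedness of $\phi$ follows from the operator-algebra identity $a\partial^{-1}(d')\partial^{-1}c = ad\partial^{-1}c - a\partial^{-1}(dc)$, which uses only $d' = \partial\cdot d - d\cdot\partial$ and $\partial^{-1}\partial = 1$; since the right-hand side lies in $W_{\mc V}$, the map $a\otimes\int d'\otimes c\mapsto 0$ in $U/W_{\mc V}$. Surjectivity of $\phi$ is immediate, because every element of $U$ reduces modulo $W_{\mc V}$ to its ``double-pole'' part $\sum_\beta a_\beta\partial^{-1}b_\beta\partial^{-1}c_\beta$, which is the image of $\sum_\beta a_\beta\otimes\int b_\beta\otimes c_\beta$.

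For injectivity, assume $L := \sum_\alpha a_\alpha\partial^{-1}b_\alpha\partial^{-1}c_\alpha \in W_{\mc V}$. After choosing bases $(e_i)_{i=1}^n$ and $(f_j)_{j=1}^m$ of $\mathrm{span}_{\mc C}\{a_\alpha\}$ and $\mathrm{span}_{\mc C}\{c_\alpha\}$ and collecting coefficients, rewrite $L = \sum_{i,j}e_i\partial^{-1}\beta_{ij}\partial^{-1}f_j$ and the tensor element as $\sum_{i,j}e_i\otimes\int\beta_{ij}\otimes f_j$. The claim reduces to showing $\beta_{ij}\in\partial\mc V$ for every pair $(i,j)$. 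The key technical step I would establish is the identity
$$A\cdot L\cdot B \;\equiv\; \sum_{i,j}A(e_i)\,\partial^{-1}\beta_{ij}\,\partial^{-1}B^*(f_j) \pmod{W_{\mc V}}, \qquad A, B\in\mc V[\partial],$$
which I would prove by noting that $A\cdot e_i\cdot\partial^{-1} = A(e_i)\partial^{-1} + D_{A,i}$ for some $D_{A,i}\in\mc V[\partial]$, since the constant term of the differential operator $A\cdot e_i$ is precisely $A(e_i)$; Lemma 4.3 then absorbs $D_{A,i}\cdot\beta_{ij}\partial^{-1}f_j$ into $W_{\mc V}$, and the dual argument handles the right multiplication by $B$.

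Applying Lemma 1.17 to the linearly independent $(e_i)$, I pick $A\in\mc V[\partial]$ with $A(e_i)=0$ for $i\neq i_0$ and $g:=A(e_{i_0})\neq 0$; similarly $B\in\mc V[\partial]$ with $B^*(f_j)=0$ for $j\neq j_0$ and $h:=B^*(f_{j_0})\neq 0$. Substituting in the identity yields $A\cdot L\cdot B\equiv g\partial^{-1}\beta_{i_0 j_0}\partial^{-1}h\pmod{W_{\mc V}}$, while Lemma 4.3 guarantees $A\cdot L\cdot B\in W_{\mc V}$, so
$$g\partial^{-1}\beta_{i_0 j_0}\partial^{-1}h\;\in\;W_{\mc V}.$$
This reduces the problem to the base case $n=m=1$, which I would dispatch via Remark 4.6: the minimal right fractional decomposition of $g\partial^{-1}\beta\partial^{-1}h$ is $g\cdot(\tfrac{1}{h}\partial\tfrac{1}{\beta}\partial)^{-1}$, so the adjoint denominator equals $B_0^*=\partial\tfrac{1}{\beta}\partial\tfrac{1}{h}$; solving $B_0^*(y)=0$ gives $y=c_0h+c_1h\int\beta$, whose $\mc C$-span has dimension $2=\deg B_0^*$ inside $\mc V$ exactly when $\int\beta\in\mc V$, i.e., precisely when $\beta\in\partial\mc V$ (and the symmetric condition on the left minimal decomposition follows dually).

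The main obstacle I anticipate is ensuring that the Leibniz-style identity and the kernel computation are carried out genuinely over $\mc V$ rather than over $\mc K$: the error terms from the operator compositions must be tracked to keep coefficients in $\mc V$ so that Lemma 4.3 applies to produce elements of $W_{\mc V}$ (not just $W_{\mc K}$), and the base case must use Remark 4.6 over $\mc V$ instead of Lemma 4.5 as stated over $\mc K$, since the distinction between $\partial\mc V$ and $\partial\mc K\cap\mc V$ is exactly what the statement concerns.
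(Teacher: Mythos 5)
Your proof is correct in its essentials but follows a genuinely different route from the paper's. The paper proves injectivity by induction on the number of linearly independent left factors: it normalizes the left functions $f_i$ and the middle classes $\int g_j$, and peels off one left block at a time by multiplying on the left by $\partial\frac{1}{f_{i_0}}$ (which kills the $i_0$-th block and replaces the remaining $f_i$ by the still independent $(f_i/f_{i_0})'$); its base case $n=1$ is handled by comparing with an explicit presentation $\sum_l a_l\partial^{-1}b_l$, expanding over a common basis, applying Lemma $4.4$, and using the linear independence of the $\int g_j$. You instead normalize the left and right functions, isolate a single summand $g\partial^{-1}\beta_{i_0j_0}\partial^{-1}h$ by hitting $L$ on both sides with annihilating operators supplied by Lemma $1.17$ --- your two-sided extension of Lemma $4.3$ is correct, since the remainders of the Euclidean divisions of $Ae_i$ by $\partial$ on the right and of $f_jB$ by $\partial$ on the left are exactly $A(e_i)$ and $B^*(f_j)$, with the quotient terms absorbed into $W_{\mc V}$ --- and then analyze the rank-one case through the full-kernel criterion. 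Your route avoids the induction entirely and is more symmetric in the left and right slots; the paper's route avoids any appeal to minimal fractional decompositions.

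One imprecision in your base case: $Ker\, B_0^*$ is $\mc C h\oplus\mc C\, h\gamma$ with $\gamma'=\beta$, so ``full kernel in $\mc V$'' asks for $h\gamma\in\mc V$, not for $\gamma\in\mc V$, and even granting $\gamma\in\mc K$ you cannot directly write $\beta\in\partial\mc V$: the distinction between $\partial\mc V$ and $\partial\mc K\cap\mc V$ that you rightly flag is not closed by Remark $4.6$ alone. The missing (easy) step is normality: $\beta=\gamma'$ with $\gamma\in\mc K$ gives $\frac{\delta \beta}{\delta u}=0$, and Lemma $1.11$ then yields $\beta\in\partial\mc V$ because $\beta\in\mc V$. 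With this observation you can in fact run the base case entirely with Lemma $4.5$ over $\mc K$ (a full kernel in $\mc K$ already forces $\beta\in\partial\mc K$) and dispense with Remark $4.6$; note that the paper's own base case silently uses the same fact when it projects equation $(4.33)$, whose right-hand side a priori lies only in $\partial\mc K$, onto $\mc V/\partial\mc V$.
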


\begin{proof}
Let $\phi$ be the map from ${\mc V} \otimes _ {\mc C} {{\mc V}/{\partial \mc V}} \otimes _{\mc C} \mc V$ to $U/W_{\mc V}$ sending the tensor $f \otimes \int{g} \otimes h$ to the image of the nonlocal operator $f{\partial}^{-1}{g}{\partial}^{-1}h$ in $U/W_{\mc V}$. It is well defined because $f\partial^{-1}g'\partial^{-1}h=0$ in $U/W_{\mc V}$ for any triple $(f,g,h)$ thanks to the identity $g'=\partial g-g \partial$. It is surjective since any class in $U/W_{\mc V}$ contains a sum of elements of the form $f\partial^{-1}g\partial^{-1}h$. 
\\
It remains to check the injectivity of $\phi$. We will do so by proving by induction on $n$ the following statement :
\\
\hspace{4 mm} If $\{f_1,\dots,f_n \}$ and $\{\int{g_1},\dots,\int{g_m} \}$ are two sets of linearly independent elements in $\mc V$ and ${{\mc V}/{\partial \mc V}}$, and $\{h_{ij}\}$ are functions in $\mc V$ then $L=\sum_{i,j}{{f_i}{\partial}^{-1}{g_j}{\partial}^{-1}{h_{ij}}} \in W_{\mc V} $ if and only if $h_{ij}=0$ for all $(i,j)$.
\\
 If $n=1$, we have to see why $L=\sum_{j=1}^m{f{\partial}^{-1}g_j {\partial}^{-1}h_j}\in W$ implies $h_j=0$ for $j=1,\dots,m$. Let $a_l,b_l \in \mc V$ such that
\begin{equation}
\sum_{j=1}^m{f{\partial}^{-1}g_j {\partial}^{-1}h_j}=\sum_{l=1}^k{a_l \partial^{-1} b_l}.
\end{equation}
Multiplying on the left by $\partial \frac{1}{f}$ we get, thanks to Lemma $4.3$,
\begin{equation}
\sum_{j=1}^m { g_j {\partial}^{-1} h_j}=\sum_{l=1}^k {{(a_l/f)}' {\partial}^{-1} b_l}.
\end{equation}
Let $e_1,\dots,e_s$ be a basis over $\mc C$ of the space spanned by the $h_j$'s and $b_l$'s. Let $\alpha_{jr} \in \mc C$ (resp. $\beta_{lr}$) be the coordinates of $h_j$ (resp. $b_l$) in this basis. Then equation $(4.31)$ rewrites into 
\begin{equation}
\sum_{r=1}^s{(\sum_{j=1}^m \alpha_{jr}g_j-\sum_{l=1}^k{\beta_{lr}(a_l/f)'})\partial^{-1} e_r}=0.
\end{equation}
From Lemma $4.4$ we conclude that, for $r=1,\dots,s$ :
\begin{equation}
\sum_{j=1}^m \alpha_{jr}g_j=\sum_{l=1}^k{\beta_{lr}(a_l/f)'}.
\end{equation}
Projecting $(4.33)$ into $\mc V/ \partial \mc V$, we get
\begin{equation}
\sum_{j=1}^m \alpha_{jr}\int g_j=0.
\end{equation}
Since we assume the functionals $\int g_j$ to be linearly independent, we get that all the coordinates $\alpha_{jr}$ must be $0$, hence that $h_j=0$ for all $j$, proving the statement for $n=1$.
Assume that our statement holds for $n-1 \geq 1$ and let $L$ be such that  
\begin{equation}
L=\sum_{i=1}^n{\sum_{j=1}^m{{{f_i}{\partial}^{-1}{g_j}{\partial}^{-1}{h_{ij}}}}} \in W_{\mc V},
\end{equation} 
where $\{f_1,\dots,f_n \}$ and $\{\int{g_1},\dots,\int{g_m} \}$ are two sets of linearly independent elements in $\mc V$ and ${{\mc V}/{\partial \mc V}}$.
Let $i_0 \in [1,n]$. Multiplying on the left equation $(4.35)$ by $\partial \frac{1}{f_{i_0}}$ we have 
\begin{equation}
\sum_{i \neq i_0,j}{{(\frac{f_i}{f_{i_0}})'}{\partial}^{-1}{g_j}{\partial}^{-1}{h_{ij}}} \in W_{\mc K} .
\end{equation}
  Given that the functions $(\frac{f_i}{f_{i_0}})'$ are linearly independent over $\mc C$, we use the induction hypothesis to deduce that 
$h_{ij}=0$ for $i \neq 0$ and $j=1,\dots,m$. Since this is true for all $i_0=1,...,n$, we conclude that $h_{ij}=0$ for all $i=1,...,n$ and all $j=1,...,m$.
\end{proof}

\begin{lemma}
Let $L=E+\sum_{i=1}^n{p_i{\partial}^{-1}q_i}$ be a weekly non-local operator and $AB^{-1}$ be its right minimal fractional decomposition, with $d(B)=n$. Then the following are equivalent 
\begin{equation}
 \begin{split}
    (1) & \hspace{2 mm} A(Ker B) \subset Im B.\\
    (2) &\hspace{2 mm} p_iq_j \in \partial \mathcal{V} \hspace{2 mm} \forall (i,j) \in \{1,n\}^2.\\
     (3) &\hspace{2 mm} L^2  \text{ is weakly non-local}.
\end{split}
\end{equation} 
\end{lemma}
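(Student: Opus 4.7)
The equivalences split as $(2)\Leftrightarrow(3)$ and $(1)\Leftrightarrow(2)$. The hypothesis $d(B)=n$ forces the sets $\{p_1,\dots,p_n\}$ and $\{q_1,\dots,q_n\}$ to be linearly independent over $\mc C$ (otherwise Lemma~4.5 would reduce the representation and give $d(B)<n$), and identifies $\ker B^{*}=\operatorname{span}_{\mc C}\{q_1,\dots,q_n\}$. By Remark~4.6 and (0.19), together with the normality of $\mc V$, one then has the criterion
\begin{equation*}
g\in\operatorname{Im} B \iff q_k g\in\partial\mc V \quad\text{for every } k=1,\dots,n,
\end{equation*}
which will be used throughout.

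For $(2)\Leftrightarrow(3)$: by Lemma~4.3, left or right multiplication of a weakly non-local operator by an element of $\mc V[\partial]$ preserves $W_{\mc V}$. Expanding $L^{2}$ therefore gives
\begin{equation*}
L^{2}\equiv\sum_{i,j=1}^{n} p_i\,\partial^{-1}(q_i p_j)\,\partial^{-1}q_j\pmod{W_{\mc V}}.
\end{equation*}
Under the isomorphism $\phi$ of Lemma~4.9, this class in $U/W_{\mc V}$ corresponds to $\sum_{i,j}p_i\otimes\int q_i p_j\otimes q_j$ in $\mc V\otimes_{\mc C}(\mc V/\partial\mc V)\otimes_{\mc C}\mc V$. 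Since $\{p_i\}$ and $\{q_j\}$ are linearly independent, this tensor vanishes iff $\int q_i p_j=0$ for every $(i,j)$, which is exactly $(2)$.

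For $(1)\Leftrightarrow(2)$: the proof of Lemma~4.5 supplies a unique differential operator $M_i$ of degree $n-1$ with $q_i B=\partial M_i$ and a decomposition $A=EB+\sum_{i=1}^{n}p_i M_i$. For $f\in\ker B$, $\partial M_i(f)=q_i B(f)=0$ gives $M_i(f)\in\mc C$; picking a basis $f_1,\dots,f_n$ of $\ker_{\mc V}B$ and setting $c_{ij}:=M_i(f_j)\in\mc C$, one obtains
\begin{equation*}
A(f_j)=\sum_{i=1}^{n} c_{ij}p_i.
\end{equation*}
By the criterion above, $(2)$ is equivalent to $p_i\in\operatorname{Im} B$ for every $i$, which trivially implies $(1)$. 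Conversely, assuming $(1)$, each $\sum_i c_{ij}p_i$ lies in $\operatorname{Im} B$, so once the matrix $C=(c_{ij})$ is inverted, each $p_i$ is a $\mc C$-linear combination of elements of $\operatorname{Im} B$, giving $(2)$. To see that $C$ is invertible, note that any $v\in\ker C$ produces $f:=\sum_j v_j f_j\in\ker B$ with $A(f)=\sum_i\bigl(\sum_j c_{ij}v_j\bigr)p_i=0$ by linear independence of $\{p_i\}$; if $f\neq 0$, Euclidean division in $\mc K[\partial]$ shows that $\partial-f'/f$ is a common right factor of $A$ and $B$, contradicting the minimality of the fractional decomposition $AB^{-1}$.

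The only non-formal step is this last one: extracting invertibility of $C$ from the minimality of $AB^{-1}$ via the common-kernel-gives-common-right-factor principle. Once that is in place, the rest is a direct manipulation using Lemmas~4.3 and~4.9 together with the image-kernel duality furnished by~(0.19).
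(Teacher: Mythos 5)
Your proof is correct and follows essentially the same route as the paper's: $(2)\Leftrightarrow(3)$ via Lemma 4.3 together with the triple-tensor-product lemma (Lemma 4.8 in the paper's numbering), and $(1)\Leftrightarrow(2)$ via the decomposition $B=\frac{1}{q_i}\partial M_i$, $A=EB+\sum_i p_i M_i$ of $(4.39)$, reducing everything to $A(\ker B)=\langle p_1,\dots,p_n\rangle$. The paper obtains this last equality by a dimension count from $\ker A\cap\ker B=\{0\}$ (right coprimality of $A$ and $B$), which is precisely what your invertibility of the matrix $C=(M_i(f_j))$ encodes, so the two arguments coincide up to packaging.
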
   
\begin{proof}
By Lemma $4.3$, we see that
\begin{equation}
L^2= \sum_{i=1}^n{\sum_{j=1}^m{p_i\partial^{-1}q_ip_j \partial^{-1} q_j}} \hspace{2 mm} mod \hspace{2 mm} W.
\end{equation}
Therefore $(2) \iff (3)$ is a direct application of Lemma $4.8$. $B$ is a right least common multiple of the operators $\frac{1}{q_i} \partial$, which implies that $p \in Im B$  if and only if $q_ip$ is a total derivative for all $i$. In other words, $(2)$ is saying that $p_i \in Im B$ for all $i$. To prove that $(1) \iff (2)$, let us check that $A(Ker B)= \langle p_1, \dots, p_n \rangle$. Recall that 
\begin{equation}
\begin{split}
 B&=\frac{1}{q_i} \partial M_i, \hspace{2 mm} i=1,\dots,n, \\
 A&= EB+\sum_{i=1}^n{p_iM_i}.
\end{split}
\end{equation}
From the first line of $(4.39)$, we get that $M_i(x)$ is a constant for all $x \in Ker B$ and all $i$. Using the second line of $(4.39)$, we see that $A(Ker B) \subset \langle p_1,\dots,p_n \rangle$. Since $A$ and $B$ are right coprime, $Ker A \cap Ker B= \{0\}$ and $dim A(Ker B)=dim Ker B=n$. Therefore, $A(Ker B)= \langle p_1, \dots, p_n \rangle$.
\end{proof}

In the two next propositions, we examine what can be said of rational operators which are both integrable and weakly non-local.  

\begin{proposition}
Let $L=E+\sum_{i=1}^n{p_i{\partial}^{-1}q_i}$ be an integrable weakly non-local operator where $\{p_i\}$ and $\{q_i\}$ are sets consisting of linearly independent functions over $\mc C$. Then the space $V= \langle q_1,\dots,q_n \rangle$ is spanned by variational derivatives and $\frac{\delta }{\delta u} (p_iq_j)\in V$ for all $(i,j) \in \{1,n\}^2$ 
\end{proposition}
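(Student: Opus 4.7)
The plan is to reduce the first claim to Proposition~$4.7$ and to extract the second by distilling a Poisson-bracket identity from the mixed integrability condition, exploiting that $\mc V[\partial]$ is a domain to collapse the doubly-nonlocal contribution.

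For the first claim, since $L=AB^{-1}$ is integrable, Corollary~$3.10$ gives that the pair $(A,B)$ is integrable, hence $B$ itself is. By Lemma~$4.5$, $B$ has full kernel in $\mc K$ with $\ker_{\mc K} B^{*}=V=\langle q_1,\dots,q_n\rangle$, and Proposition~$4.7$ then yields that each $q_i$ is a variational derivative.

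For the second claim, I evaluate the third equation of $(3.19)$ at a pair $(F,g)$ with $g\in\ker B$. Using $B(g)=0$, the identity $(D_B)_g=-BD_g$ that follows from $(1.23)$ (since $D_{B(g)}=0$), and the derivation property of the $X_H$'s, a direct rearrangement produces
$$
\{B(F),A(g)\}=A\bigl(\psi(F,g)\bigr)+B\bigl(\chi(F,g)\bigr),
$$
where $\psi(F,g)=N(F,g)+D_g(B(F))$ and $\chi(F,g)=M(F,g)+D_g(A(F))-D_F(A(g))$. The crucial observation is that the second equation of $(3.19)$ evaluated at $g\in\ker B$ yields $B(N(F,g))=-B(D_g(B(F)))$; writing $P^{g}:=N^{g}+D_g B\in\mc V[\partial]$ with $N^{g}(F):=N(F,g)$, this says $B\circ P^{g}=0$ as an operator, and since $\mc V[\partial]$ is a domain and $B\neq 0$, we must have $P^{g}=0$. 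Hence $\psi(F,g)\equiv0$ and $\{B(F),A(g)\}=B(\chi(F,g))$.

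Now I pair both sides with $q_k$ and integrate. The right side gives zero, since $q_k\in\ker B^{*}$. Writing $A(g)=\sum_i c_i(g)\,p_i$, which is legitimate because $A(\ker B)=\langle p_1,\dots,p_n\rangle$ by the proof of Lemma~$4.10$, and performing a short integration by parts that exploits $D_{q_k}=D_{q_k}^{*}$ together with $B^{*}(q_k)=0$, one obtains
$$
\int q_k\{B(F),p_i\}=\int\bigl(D_{p_i}^{*}(q_k)+D_{q_k}^{*}(p_i)\bigr)\,B(F)=\int\frac{\delta}{\delta u}(p_iq_k)\,B(F).
$$
Thus $\sum_i c_i(g)\int\frac{\delta}{\delta u}(p_iq_k)\,B(F)=0$ for every $F\in\mc V$ and $g\in\ker B$. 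Since $(c_i(g))$ exhausts $\mc C^n$ as $g$ varies over $\ker B$, we conclude $\int\frac{\delta}{\delta u}(p_iq_k)\,B(F)=0$ for each $i$ and all $F$; equivalently, the differential operator $\frac{\delta}{\delta u}(p_iq_k)\cdot B$ has image in $\partial\mc V$ and is therefore left-divisible by $\partial$, so taking adjoints and evaluating at $1$ gives $B^{*}\bigl(\frac{\delta}{\delta u}(p_iq_k)\bigr)=0$, and hence $\frac{\delta}{\delta u}(p_iq_k)\in\ker_{\mc V}B^{*}=V$. The main obstacle is the identification $\psi\equiv 0$: it is the key simplification that erases the doubly-nonlocal term after pairing with the variational derivatives $q_k$.
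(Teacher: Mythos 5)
Your proof is correct, and while the first claim is handled exactly as in the paper (Corollary~$3.10$ and Lemma~$3.8$ reduce to integrability of $B$, then Lemma~$4.5$ identifies $\ker B^*$ with $V$ and Proposition~$4.7$ gives the variational derivatives), your derivation of $\frac{\delta}{\delta u}(p_iq_j)\in V$ takes a genuinely different route. The paper never evaluates the integrability identities on $\ker B$; instead it uses hereditariness, equation $(2.37)$: the right-hand side $L\,\mc L_{B(F)}(L)$ must be weakly non-local, the product of the two non-local parts produces doubly non-local terms $p_i\partial^{-1}q_i(\cdots)\partial^{-1}q_j$, and the structure result on $U/W_{\mc V}$ (Lemma~$4.8$) forces $q_i\{B(F),p_j\}=q_i\bigl(X_{B(F)}(p_j)-D_{B(F)}(p_j)\bigr)\in\partial\mc V$, after which the same manipulation in $\mc V/\partial\mc V$ that you perform yields $B^*\bigl(\frac{\delta}{\delta u}(p_iq_j)\bigr)=0$. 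You reach the identical intermediate statement $\int q_k\{B(F),p_i\}=0$ by instead specializing the three equations of $(3.19)$ at $g\in\ker B$, killing the $A$-term through the clean observation that $B\circ(N^g+D_gB)=0$ forces $N^g+D_gB=0$ in the domain $\mc K[\partial]$, and then pairing $\{B(F),A(g)\}=B(\chi(F,g))$ with $q_k\in\ker B^*$. Your route entirely bypasses Lemma~$4.8$ (the tensor-product description of doubly non-local operators) and the hereditariness computation, at the cost of needing $A(\ker B)=\langle p_1,\dots,p_n\rangle$ and the bijectivity of $A|_{\ker B}$, which are established in the proof of Lemma~$4.9$ (your citation of ``Lemma~$4.10$'' is a misnumbering). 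As a bonus your argument isolates the relation $\{B(F),A(g)\}\in\mathop{\rm Im} B$ for $g\in\ker B$, which anticipates the mechanism of Lemma~$5.2$; the paper's route, on the other hand, is the one that sets up the converse statement in Proposition~$4.11$. Both arguments rely on the same final step, namely that $\int aF=0$ for all $F$ forces $a=0$ since $\mc V\neq\partial\mc V$.
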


\begin{proof}
We already know from Lemma $4.5$ that the kernel of $B^*$ is spanned by the $q_i$'s where $AB^{-1}$ is a right minimal fractional decomposition of $L$. Proposition $4.7$ says that the $q_i$'s must be variational derivatives. Since $L$ is integrable it is hereditary in particular (Lemma $3.8$). In particular, $AB^{-1}$ satisfies equation $(2.37)$. 
The LHS of $(2.37)$ lies in $W$ for all $F \in \mc V$. Indeed, $X_{A(F)}$ preserves $W$ and so does multiplication by a differential operator. Therefore, the RHS of $(2.37)$ lies in $W$ for all $F \in \mc V$ :
\begin{equation}
(E+\sum_{i=1}^n{p_i{\partial}^{-1}q_i})( \mc L_{B(F)}(E)+\mc L_{B(F)}(\sum_{j=1}^n{p_j \partial^{-1} q_j})) \in W_{\mc V}.
\end{equation}
For the same reasons as above, we have for all $F \in \mc V$ :
\begin{equation}
\sum_{i=1}^n{p_i \partial^{-1} q_i \mc L_{B(F)}(\sum_{j=1}^n{p_j \partial^{-1} q_j})} \in W_{\mc V}.
\end{equation}
In other words, for all $F \in \mc V$ we have 
\begin{equation}
\sum_{i,j} p_i \partial ^{-1} q_i (X_{B(F)}(p_j \partial ^{-1} q_j ) - [ D_{B(F)}, p_j \partial ^{-1} q_j ]) \in W_{\mc V}.
\end{equation}
By Lemma $4.3$, we simplify $(4.43)$ into, for all $F \in \mc V$,
\begin{equation}
\begin{split}
&\sum_{i,j}{p_i\partial^{-1}q_ip_j \partial^{-1}(X_{B(F)}(q_j)+D_{B(F)}^*(q_j))} \\
+& \sum_{i,j}{p_i \partial^{-1}q_i(X_{B(F)}(p_j)-D_{B(F)}(p_j))\partial^{-1}q_j} \in W_{\mc V}.
\end{split}
\end{equation}
Recall by $(4.24)$ and $(4.26)$ that for all $F \in \mc V$ and for all $i=1,\dots,n$ :
\begin{equation}
D_{B(F)}^*(q_i)=-D_{q_i}^*(B(F))=-D_{q_i}(B(F))=-X_{B(F)}(q_i).
\end{equation}
Hence the first term in $(4.43)$ vanishes and we get that for all $F \in \mc V$,
\begin{equation}
\sum_{i,j}{p_i \partial^{-1}q_i(X_{B(F)}(p_j)-D_{B(F)}(p_j))\partial^{-1}q_j} \in W_{\mc V}.
\end{equation}
Since both the $p_i$'s and the $q_i$'s are linearly independent, it follows from Lemma $4.8$ and equation $(4.45)$ that for all $F \in \mc V$ :
\begin{equation}
 q_i X_{B(F)}(p_j)-q_i D_{B(F)} ( p_j) \in \partial \mc V.
\end{equation}
Let us work in the quotient space $\mc V / \partial \mc V$. For all $F \in \mc V$
\begin{equation}
q_i D_{p_j}(B(F)) \equiv D_{B(F)}(p_j)q_i.
\end{equation}
By $(1.31)$ and $(4.44)$, $(4.47)$ simplifies into
\begin{equation}
B(F)D_{p_j}^*(q_i) \equiv p_j D_{B(F)}^*(q_i) \equiv -p_j D_{q_i}(B(F)) \equiv -B(F) D_{q_i}^*(p_j).
\end{equation}
Hence, for all $F \in \mc V$ : 
\begin{equation}
B^*(D_{p_j}^*(q_i)+D_{q_i}^*(p_j))F \equiv 0.
\end{equation}
Since $(4.49)$ holds for all $F \in \mc V$ and that $ \partial \mc V \neq \mc V$ ([BDSK09]), we have 
\begin{equation}
B^*(D_{p_j}^*(q_i)+D_{q_i}^*(p_j))=B^*(\frac{\delta }{\delta u} (p_iq_j))=0.
\end{equation}
  Therefore, $\frac{\delta }{\delta u} (p_iq_j)\in V$ for all $(i,j) \in [1,n]^2$.
\end{proof}

For a weakly-non-local operator with minimal fractional decomposition $L=AB^{-1}$, we only need to check that $L$ is hereditary and that the functions appearing on the right of the reduced non-local part of $L$ are variational derivatives to claim that $L$ is integrable.

\begin{proposition}
Let $L=E+\sum_{i=1}^n{p_i \partial^{-1} q_i}$ be a weakly non-local rational operator, where the $p_i$'s and the $q_i$'s are linearly independent over $\mc C$. Then $L$ is integrable if and only if the functions $q_i$ are variational derivatives
and $L$ is hereditary.
\end{proposition}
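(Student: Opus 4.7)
The proof will be a direct assembly of results already established in Sections 3 and 4. The key observation is that the weakly non-local structure of $L$, together with Lemma~4.5, lets us translate integrability of the denominator $B$ into a simple condition on the $q_i$'s via Proposition~4.7.

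For the forward implication, assume $L$ is integrable. Then by Lemma~3.8 $L$ is hereditary, giving half of the conclusion. For the remaining half, I would invoke Proposition~4.11, whose hypotheses (namely, that $L$ is integrable, weakly non-local, with the $p_i$'s and $q_i$'s linearly independent) are exactly what we have; its conclusion includes that the $q_i$'s are variational derivatives. So no new work is needed here.

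For the backward implication, the plan is as follows. Let $AB^{-1}$ be the minimal right fractional decomposition of $L$. By Lemma~4.5, the kernel of $B^*$ is spanned by the linearly independent functions $q_1,\dots,q_n$, so that $\dim_{\mc C}\ker B^* = d(B) = n$ and $B$ has a full kernel in $\mc K$. By hypothesis each $q_i$ is a variational derivative, so $\ker B^* \subset \frac{\delta \mc V}{\delta u}$. Proposition~4.7 then applies and yields that $B$ is integrable. At this point I invoke the second half of Lemma~3.8: since $L=AB^{-1}$ is hereditary, $AB^{-1}$ is a minimal right fractional decomposition, and $B$ is integrable, we conclude that the pair $(A,B)$ is integrable, hence $L$ is an integrable rational operator.

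There is no real obstacle — each step is a direct application of a previously proved result, and the two principal inputs (Proposition~4.7 and the converse direction of Lemma~3.8) are precisely tailored to this situation. The only point requiring mild care is verifying that the hypotheses of Proposition~4.11 and Lemma~3.8 are genuinely met with the minimal decomposition, but this follows automatically from Lemma~4.5, which pins down $\ker B^*$ in terms of the presentation of $L$.
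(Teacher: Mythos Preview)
Your argument is essentially the paper's own proof; the only slip is that the result you describe for the forward direction---``$L$ integrable and weakly non-local implies the $q_i$ are variational derivatives''---is Proposition~4.10, not Proposition~4.11, which is precisely the statement you are proving (so citing it would be circular). With that reference corrected, your proof matches the paper's: both reduce the equivalence to Lemma~3.8 (in both directions) together with Proposition~4.7, using Lemma~4.5 to identify $\ker B^*$ with the span of the $q_i$ for the minimal fractional decomposition.
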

\begin{proof}
Let $AB^{-1}$ be a minimal fractional decomposition of $L$. By corollary $3.10$, $L$ is integrable if and only if the pair $(A,B)$ is integrable.
Moreover, by Lemma $3.8$, the pair $(A,B)$ is integrable if and only if $L$ is hereditary and $B$ is integrable. Finally, by Proposition $4.7$, $B$ is integrable if and only if the functions $q_i$ are variational derivatives.
\end{proof}

\section{A sufficient condition of integrability}
In this section we will prove that integrability is a necessary and sufficient condition to generate infinitely many commuting functions for a weakly non-local operator "preserving" a certain decomposition of $\mc V$.
\\
\indent
 Let $\sigma$ be an involution of $\mathcal{V}$. Let 
 \begin{equation}
    \mathcal{V}=\mathcal{V}_{\bar{0}} \oplus \mathcal{V}_{\bar{1}}
\end{equation}
be the eigenspace decomposition of $\mc V$ for $\sigma$. We call elements of $\mc V_{\bar{0}}$ even functions and elements of $\mc V_{\bar{1}}$ odd functions. Note that, since $\sigma$ is an algebra morphism, we have for all $\bar{i},\bar{j} \in \mathbb{Z}/2\mathbb{Z}$, $\mc V_{\bar{i}}. \mc V_{\bar{j}} \subset \mc V_{\bar{i}+\bar{j}}$. Assume furthermore that $\sigma \partial \sigma^{-1}=-\partial$ and that for all $n \geq 0$, $\sigma \frac{\partial}{\partial u^{(n)}} \sigma^{-1}=(-1)^n\frac{\partial}{\partial u^{(n)}}$. In other words, $\partial$ switches parity and $\frac{\delta}{\delta u}$ preserves it.

\begin{example}
The algebra of differential polynomials $R$ admits such a decomposition by declaring $u$ to be even and $\partial$ to be odd. However, it is not the only way to decompose this algebra with the constraints we just defined. 
\end{example}

As usual, an endomorphism of $\mc V$ is called even if it preserves the decomposition $(5.1)$, and odd if it switches parity. Here is the decomposition of $\mc V[\partial]$ into even and odd parts :
\begin{equation}
  \mathcal{V}[\partial]=(\mathcal{V}_{\bar{0}}[{\partial}^2]+\mathcal{V}_{\bar{1}}[\partial^2]\partial) \oplus (\mathcal{V}_{\bar{1}}[{\partial}^2]+\mathcal{V}_{\bar{0}}[\partial^2]\partial).
\end{equation}

 We extend this decomposition to pseudodifferential operators by declaring that $\partial^{-1}$ is odd.

\begin{lemma}
Let $L=E(\partial)+\sum_{i=1}^n{p_i{\partial}^{-1}q_i}$ be a weakly non-local integrable rational operator, where $E$ is an even differential operator, $q_i$'s are linearly independent elements of $\mc V_{\bar{0}}$ and $p_i$'s are linearly independent elements of $\mc V_{\bar{1}}$. Let $p \in \mc V_{\bar{1}}$ be such that $L$ is recursion for $p$. Finally, let $B$ be a right least common multiple of the operators $\frac{1}{q_i} \partial$ and $A$ be such that $L=AB^{-1}$. Then the following statements hold
\begin{enumerate}
\item[(1)] $p$ lies in the image of $B$.\\
\item[(2)] $\{p,p_i\}=0$ for $i=1 \dots n$. \\
\item[(3)] $pq_i$ is a total derivative for $i=1 \dots n$.
\end{enumerate}
Moreover, if $F \in \mc V$ is such that $B(F)=p$, then $A(F) \in \mc V_{\bar{1}}$ and $L$ is recursion for $A(F)$.
\end{lemma}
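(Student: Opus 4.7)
The plan is to expand the recursion equation $\mc L_p(L)=0$, extract a tensor identity via the isomorphism of Lemma 4.4, and then use parity together with the integrability of $L$ (which, by Proposition 4.11, forces the $q_i$'s to be variational derivatives) to deduce (1)--(3). The Moreover part will follow from the explicit decomposition of $A$ appearing in the proof of Lemma 4.5, combined with Lemma 2.15 applied to $L$, which is hereditary by Lemma 3.8.

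First, I would expand $X_p(L)=[D_p,L]$ and reduce modulo $\mc V[\partial]$. Each commutator $[D_p,p_i\partial^{-1}q_i]$ reduces, via Lemma 4.3, to $D_p(p_i)\partial^{-1}q_i-p_i\partial^{-1}D_p^*(q_i)$; combined with $X_p(p_i)-D_p(p_i)=\{p,p_i\}$ this gives
$$\sum_{i=1}^n\{p,p_i\}\partial^{-1}q_i+\sum_{i=1}^np_i\partial^{-1}\bigl(X_p(q_i)+D_p^*(q_i)\bigr)\equiv 0\pmod{\mc V[\partial]}.$$
By the isomorphism of Lemma 4.4, this corresponds to the vanishing of the associated tensor in $\mc V\otimes_{\mc C}\mc V$. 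Writing $X_p(q_i)+D_p^*(q_i)=\sum_j\beta_{ij}q_j+\tilde b_i$ with $\tilde b_i$ in a fixed complement of $V:=\langle q_1,\dots,q_n\rangle$, the linear independence of $\{p_i\}$ and $\{q_i\}$ then forces $\tilde b_i=0$ (so each $X_p(q_i)+D_p^*(q_i)\in V$) and $\{p,p_i\}+\sum_j\beta_{ji}p_j=0$.

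The parity hypothesis now does the essential work. Since $p$ is odd and $q_i$ is even, and both $X_p$ and $D_p^*$ are odd operators, $X_p(q_i)+D_p^*(q_i)$ lies in $\mc V_{\bar 1}\cap V=\{0\}$, so $\beta_{ij}=0$ and $\{p,p_i\}=0$, proving (2). For (3), by Proposition 4.11 each $q_i$ is a variational derivative, hence $D_{q_i}=D_{q_i}^*$ and $X_p(q_i)=D_{q_i}(p)=D_{q_i}^*(p)$. The standard Leibniz identity $\delta(pq_i)/\delta u=D_p^*(q_i)+D_{q_i}^*(p)$ (immediate from integration by parts) then gives $\delta(pq_i)/\delta u=0$, so $pq_i\in\partial\mc V$ by Lemma 1.10 (using normality of $\mc V$). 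Statement (1) follows from (3) via the criterion $p\in\mathrm{Im}\,B\iff q_ip\in\partial\mc K$ for all $i$, a consequence of the full-kernel property of $B^*$ (equation (0.19), or Lemma 4.5).

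For the Moreover part, I would use the expressions $B=\tfrac{1}{q_i}\partial M_i$ and $A=EB+\sum_ip_iM_i$ appearing in the proof of Lemma 4.5. Then $A(F)=E(p)+\sum_ip_iM_i(F)$ with $\partial M_i(F)=q_iB(F)=q_ip$ odd. Splitting $M_i(F)$ into its even and odd parts forces the odd part to lie in $\ker\partial=\mc C$; since the involution $\sigma$ fixes the subalgebra of constants (as in the setting of Example 5.1, where $\mc C=\mathbb C$), the odd part vanishes and $M_i(F)$ is even, so each $p_iM_i(F)$ is odd and thus $A(F)\in\mc V_{\bar 1}$. Finally, since $L$ is integrable, hence hereditary by Lemma 3.8, and recursion for $B(F)=p$, Lemma 2.15 yields that $L$ is recursion for $A(F)$. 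The main subtlety is the parity analysis of $M_i(F)$, which rests on the mild but essential assumption that all constants are even.
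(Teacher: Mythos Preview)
Your proof is correct and follows essentially the same approach as the paper's: expand the non-local part of the recursion equation, apply the tensor isomorphism of Lemma~4.4, and use parity together with the variational-derivative property of the $q_i$'s (Proposition~4.11) to force both sides to vanish; the Moreover part via the decomposition (4.39) and Lemma~2.15 is likewise identical. The paper's version is marginally more direct---it observes at once that the right tensor factors on the two sides of its equation (5.8) lie in $\mc V_{\bar 0}$ and $\mc V_{\bar 1}$ respectively, bypassing your intermediate step of first showing $X_p(q_i)+D_p^*(q_i)\in V$---and it glosses over the ``constants are even'' assumption that you correctly flag.
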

\begin{proof}
By Proposition $4.11$, the functions $q_i$ are variational derivatives. In particular, for $i=1,...,n$, we have 
\begin{equation}
D_{q_i}=D_{q_i}^*.
\end{equation}
Therefore, for any function $f \in \mc V$ and any $i=1 \dots n$, by definition of the variational derivative and of the adjoint action, we obtain
\begin{equation}
\begin{split}
    \frac{\delta}{\delta u}(fq_i)&=D_f^*(q_i)+D_{q_i}^*(f) \\
                                            &=D_f^*(q_i) + D_{q_i}(f) \\
                                            &= D_f^*(q_i)+X_f(q_i).
\end{split}
\end{equation}
We know that $L$ is recursion for $p$, meaning that
\begin{equation}
X_p(L)=[D_p,L].
\end{equation}
By Lemma $4.3$, the non-local part of equation $(5.5)$ is:
\begin{equation}
X_p(\sum_{i=1}^n{p_i \partial^{-1} q_i})=\sum_{i=1}^n{ D_p(p_i)\partial^{-1}q_i -p_i\partial^{-1} D_p^*(q_i)}.
\end{equation}
Since $X_p$ is a derivation, $(5.6)$ rewrites into
\begin{equation}
\sum_{i=1}^n{(D_p(p_i)-X_p(p_i))\partial^{-1} q_i}=\sum_{i=1}^n{p_i \partial^{-1} (X_p(q_i)+D_p^*(q_i))}.
\end{equation}
Remembering $(1.11)$, $(1.20)$ and $(5.4)$, we deduce from $(5.7)$ that
\begin{equation}
\sum_{i=1}^n{ \{p_i,p\} \partial^{-1} q_i }=\sum_{i=1}^n{p_i \partial^{-1} \frac{\delta }{\delta u}(pq_i)}.
\end{equation}
Since the variational derivative preserves the parity and since $pq_i$ is odd, we have for all $i=1,...,n$
\begin{equation} 
 \frac{\delta }{\delta u}(pq_i) \in \mc V_{\bar{1}}.
\end{equation}
By Lemma $4.4$, we deduce from equations $(5.8)$ and $(5.9)$ that for all $i=1 \dots n$
\begin{equation}
\{p,p_i\}=\frac{\delta }{\delta u}(pq_i)=0.
\end{equation}
Therefore statement $(2)$ holds, and so does $(3)$ by Lemma $1.10$. Finally, note that $(1)$ and $(3)$ are equivalent since $B$ is a right least common multiple of the operators $\frac{1}{q_i} \partial$.\\
\indent
Let $F \in \mc V$ be such that $p=B(F)$. By Lemma $2.15$ and hereditariness of $L$, it follows that $L$ is recursion for $A(F)$. We are left to check that $A(F)$ is an odd function. From the first line of $(4.39)$, we see that $M_i(F)$ is even for all $i=1,\dots,n$. Therefore, using the second line of $(4.39)$, we get that $A(F) \in \mc V_{\bar{1}}$.
\end{proof}

\begin{theorem}
Let $L \in (\mc V [\partial])_{\bar{0}}+\mc V_{\bar{1}} \partial^{-1} \mc V_{\bar{0}}$ be an integrable rational operator. If $AB^{-1}$ is a right minimal fractional decomposition of $L$ and $F_0 \in Ker B$, then there exists a sequence $F_n \in \mc V, n \geq 0$ such that 
\begin{enumerate}
\item[(1)] $B(F_{n+1})=A(F_n)$ for all $n\geq 0$. \\
\item[(2)] $\{B(F_n),B(F_m)\}=0$ for all $n,m \geq 0$.
\end{enumerate}
Let $m=max \{d(e_k),d(p_i),d(q_i)\}$, where $L=\sum_{k \geq 0}{e_k \partial^k}+\sum_{i=1}^l{p_i\partial^{-1}q_i}$ and where both the $p_i$'s and the $q_i$'s are linearly independent over $\mc C$. Then, if for some $N \geq 0$, $d(B(F_N))>m$, the sequence of differential orders $d(B(F_l))$ goes to $+ \infty$.
\end{theorem}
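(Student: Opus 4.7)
The plan is to build the sequence $(F_n)$ inductively using Lemma 5.2 as the engine, to then read off statement (2) from Proposition 3.16, and finally to control $d(B(F_n))$ by a leading-term count in the identity $B(F_{n+1}) = A(F_n)$. By Lemma 4.4 we may assume the $p_i$'s and the $q_i$'s are linearly independent over $\mc C$, placing us in the exact setting of Lemma 5.2. For the inductive construction I maintain, at each step, the twin invariants $B(F_n) \in \mc V_{\bar 1}$ and ``$L$ is recursion for $B(F_n)$''. The base case $n=0$ is trivial: $B(F_0) = 0$ lies in $\mc V_{\bar 1}$ and $L$ is recursion for $0$ by Remark 2.2. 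For the step, I feed $F := F_n$ and $p := B(F_n)$ into Lemma 5.2: the ``Moreover'' clause then supplies $A(F_n) \in \mc V_{\bar 1}$ with $L$ recursion for $A(F_n)$. Re-entering Lemma 5.2 with the new candidate $p := A(F_n)$, item (1) forces $A(F_n) \in \im B$; any preimage is a valid $F_{n+1}$, and the invariants pass to the next step. With the sequence in hand, statement (2) is immediate: since $L$ is integrable and $AB^{-1}$ is a minimal fractional decomposition, Corollary 3.10 makes $(A,B)$ an integrable pair, and combined with $B(F_{n+1}) = A(F_n)$ and the fact that $L$ is recursion for $B(F_0) = 0$, Proposition 3.16 yields $\{B(F_n), B(F_m)\} = 0$ for all $n, m \geq 0$.

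For the growth assertion, Lemma 4.5 presents $B$ as a right least common multiple of the degree-one operators $\tfrac{1}{q_i}\partial$, and $A$ as $EB + \sum_i p_i M_i$; after possibly enlarging $m$ by a fixed amount depending only on $L$, the differential orders of the coefficients of $A$ and $B$ are bounded by $m$. Writing $n_A$ and $n_B$ for the degrees in $\partial$ of $A$ and $B$, so that $d(L) = n_A - n_B$, a direct leading-term computation shows that whenever $d(F) + n_B > m$ one has $d(B(F)) = d(F) + n_B$ and similarly $d(A(F)) = d(F) + n_A$. From $d(B(F_N)) > m$ we extract $d(F_N) > m - n_B$, and applying $B(F_{N+1}) = A(F_N)$ we obtain $d(B(F_{N+1})) = d(B(F_N)) + d(L)$; iterating and using $d(L) > 0$ (the implicit remaining hypothesis, made explicit in Theorem 0.4) gives the claimed divergence.

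The only genuinely delicate step is the parity bootstrap inside the inductive construction: one needs $A(F_n)$ odd in order to re-enter Lemma 5.2. This is precisely what the ``Moreover'' clause of Lemma 5.2 provides, via the identity $\partial M_i(F_n) = q_i B(F_n)$ forcing $M_i(F_n) \in \mc V_{\bar 0}$ irrespective of the parity of $F_n$. Everything else is an invocation of an earlier result or a routine degree count.
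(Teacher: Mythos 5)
Your construction of the sequence and your proof of the commutation relations follow the paper's own route almost verbatim: the same bootstrap through Lemma 5.2 (the invariants ``$B(F_n)$ odd'' and ``$L$ recursion for $B(F_n)$'' yield, via the ``Moreover'' clause, that $A(F_n)$ is odd, that $L$ is recursion for it, and that it lies in the image of $B$), followed by Corollary 3.10 and Proposition 3.16 for statement (2). Your remark that $d(L)>0$ is an implicit hypothesis, present in Theorem 0.4 but dropped from the Section 5 statement, is correct and worth making explicit.

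The degree-growth argument, however, has a genuine gap. You run the leading-term computation on the coefficients of $A$ and $B$ themselves, and for that you must first ``enlarge $m$ by a fixed amount'' so that those coefficients have differential order at most the enlarged constant, say $m'$. This enlargement is not cosmetic: $B$ is a right least common multiple of the operators $\frac{1}{q_i}\partial$ (equivalently, $B^*$ is obtained from the $q_i$ by the iterated construction of Lemma 1.17), so its coefficients involve derivatives of the $q_i$ of order up to roughly $l-1$, and in general $m'>m=\max\{d(e_k),d(p_i),d(q_i)\}$ strictly. Your key step ``from $d(B(F_N))>m$ we extract $d(F_N)>m-n_B$'' is only valid for the enlarged threshold, whereas the hypothesis of the theorem supplies $d(B(F_N))>m$ for the original $m$ only; you have therefore proved the growth statement under a strictly stronger hypothesis than the one stated. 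The paper's proof avoids this entirely by never looking at the coefficients of $A$ and $B$: from the first line of $(4.39)$ one has $(M_i(F_n))'=q_iB(F_n)$, hence $d(M_i(F_n))=d(B(F_n))-1$ as soon as $d(B(F_n))>d(q_i)$, and then the second line $A(F_n)=E(B(F_n))+\sum_i p_iM_i(F_n)$ gives $d(A(F_n))=d(B(F_n))+d(L)$ as soon as $d(B(F_n))>m$ and $d(L)>0$, because the non-local contribution has order at most $d(B(F_n))-1$ while $d(E(B(F_n)))=d(B(F_n))+d(L)$. That computation uses only the orders of the $e_k$, $p_i$, $q_i$, and so delivers the conclusion with the stated $m$. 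Replacing your coefficient count by this one closes the gap.
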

\begin{proof}
We begin by constructing the sequence $(F_n)$ using Lemma $5.2$. Recall that, if 
\begin{equation}
L=E(\partial)+\sum_{i=1}^l{p_i \partial^{-1} q_i},
\end{equation}
where $l=d(B)$, then $A(Ker B)$ is spanned by the functions $p_i$. Since we assume that $L \in (\mc V [\partial])_{\bar{0}}+\mc V_{\bar{1}} \partial^{-1} \mc V_{\bar{0}}$, we get that $A(Ker B) \subset \mc V_{\bar{1}}$. $L$ is hereditary and recursion for $0=B(F_0)$, hence it is recursion for $A(F_0)$ by Lemma $2.15$. One can apply Lemma $5.2$ with $L$ and $p=A(F_0)$, to find a function $F_1$ such that $A(F_0)=B(F_1)$ and $A(F_1)$ is odd. Iterating the argument, we construct $F_n, n \geq 0$ such that $B(F_{n+1})=A(F_n)$ for all $n \geq 0$. The fact that the functions $B(F_n)$ pairwise commute follows from Corollary $3.18$.
\\
\indent
We now prove the second part of the Theorem, more precisely that if for some $N \geq 0$, $d(B(F_N)) >m$, then for all $n \geq N$, $d(B(F_{n+1}))=d(B(F_n))+d(L)$. This follows by induction on $n$ from $(4.39)$. Indeed, let us assume that for some $n \geq N$, $d(B(F_n)) >m$. Then, from the first line of $(4.39)$, we deduce that $d(M_i(F_n))=d(B(F_n))-1$. Hence, from the second line of $(4.39)$, we deduce that $d(A(F_n))=d(E(B(F_n))=d(B(F_n))+d(L)$, which is what we wanted to show, since $A(F_n)=B(F_{n+1})$ by construction.
\end{proof}

\begin{theorem}
Let $L=E(\partial)+\sum_{i=1}^n{p_i \partial^{-1} q_i}$ be a hereditary rational operator where  $E$ is a even differential operator, $p_i$'s are linearly independent odd functions and $q_i$'s are linearly independent even variational derivatives. Then $L^k$ is weakly non-local and integrable for all $k \geq 1$. Moreover if 
\begin{equation}
L^k=E_k(\partial)+\sum_{i=1}^{n_k}{p_{ki}\partial^{-1} q_{ki}}.
\end{equation}
where $\{p_{k1},\dots,p_{kn_k}\}$ and $\{q_{k1},\dots,q_{kn_k}\}$ are two sets of linearly independent functions and $E_k$ is a differential operator,
then the functions $p_{ki}$ are odd, $E_k$ is even and $q_{ki}$ are even variational derivatives. Finally, for all $k,l \geq 0$ and  for all $i,j \in [1,n_k] \times [1,n_l]$,
\begin{enumerate}
 \item[(a)]$\{p_{ki},p_{lj}\}=0$. \\
\item[(b)] $p_{ki}. q_{lj} \in \partial \mc V$. \\
\item[(c)] $\rho_{lj}$ is a conserved density of $u_t=p_{ki}$ where $q_{lj}=\frac{\delta \rho_{lj}}{\delta u}$.
\end{enumerate}
\end{theorem}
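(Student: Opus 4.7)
The plan is to prove by simultaneous induction on $k\ge 1$ three statements: (A) $L^k$ is weakly non-local with even local part, odd numerators $p_{k,i}$, and even variational derivative denominators $q_{k,i}$; (B) $L$ itself is recursion for every $p_{k,i}$; (C) for every $l\le k$, one has $p_{l,i}\,q_{k,j}\in\partial\mc V$ and $\{p_{l,i},p_{k,j}\}=0$. Integrability of $L^k$ is handled separately: Proposition $4.12$ gives that $L$ is integrable (since it is hereditary, weakly non-local, and the $q_i$ are variational derivatives), and Proposition $3.13$ then yields integrability of every $L^k$. Granted (A)--(C), the remaining assertions of the theorem follow: for (a) and (b), apply Lemma $5.2$ to $L^l$ with the odd input $p=p_{k,i}$, noting that by (B) and the derivation property of $\mc L_F$, the vanishing $\mc L_{p_{k,i}}(L)=0$ forces $\mc L_{p_{k,i}}(L^l)=0$; for (c), combine $q_{l,j}=\delta\rho_{l,j}/\delta u$ with integration by parts to identify $\int X_{p_{k,i}}(\rho_{l,j})=\int p_{k,i}q_{l,j}$, which vanishes by (b).

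The base case $k=1$ is immediate: (A) is the hypothesis; (B) follows from Lemma $2.15$ applied to $L$ and any $F_0\in\ker B$, giving $L$ recursion for $A(F_0)\in\langle p_1,\ldots,p_n\rangle$; (C) is Lemma $5.2$ applied to $L$ with $p=p_i$. For the inductive step $k\to k+1$, I would handle the sub-steps in the order (B), (A), (C). For (B): by the induction hypothesis $L$ is recursion for $p_{k,i}$, so Lemma $5.2$ gives $p_{k,i}\in\im B$, hence $p_{k,i}=B(y_{k,i})$ for some $y_{k,i}$, and Lemma $2.15$ (via hereditariness of $L$) produces $L$ recursion for $L(p_{k,i})=A(y_{k,i})\in A_{k+1}(\ker B_{k+1})$. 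For (A): compute $L\cdot L^k$ modulo $\mc V[\partial]$ using Lemma $4.3$ together with the identity $\partial^{-1}h'\partial^{-1}=h\partial^{-1}-\partial^{-1}h$; the double sum $\sum_{i,j}p_i\partial^{-1}q_i p_{k,j}\partial^{-1}q_{k,j}$ collapses to a weakly non-local expression precisely because $q_i p_{k,j}\in\partial\mc V$ by (C) at level $k$. Parity bookkeeping (using that $\partial$ flips parity, so the primitive of the odd function $q_i p_{k,j}$ is even) shows that the resulting numerators are odd and denominators even; Proposition $4.11$ applied to the integrable weakly non-local $L^{k+1}$ then yields that the $q_{k+1,m}$ are variational derivatives, with even primitives $\rho_{k+1,m}$ (since $\delta/\delta u$ preserves parity). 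For (C) at level $k+1$: apply Lemma $5.2$ to $L^{k+1}$ with $p=p_{l,i}$ for each $l\le k+1$, using (B) at level $l$ to supply the recursion hypothesis.

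The main obstacle lies in sub-step (B): verifying that as $y_{k,i}$ varies modulo $\ker B$ and $i$ ranges over $\{1,\ldots,n_k\}$, the elements $L(p_{k,i})$ exhaust the full space $A_{k+1}(\ker B_{k+1})=\langle p_{k+1,m}\rangle$ rather than only a proper subspace. This amounts to a dimension count in the tower $B=B_1\mid B_2\mid\cdots$ produced in Proposition $3.13$, comparing the cumulative Lenard-Magri freedom accumulated over $k+1$ steps against $\dim\ker B_{k+1}$. I expect this to be the technically heaviest portion, and it may require combining Lenard-Magri sequences started at distinct $F_0\in\ker B$ via linear superposition to realize an arbitrary element of $\ker B_{k+1}$ as a $\tilde F_{k+1}$ arising from the construction.
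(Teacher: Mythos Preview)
Your overall architecture matches the paper's: induction on $k$ combining the weak non-locality/parity claims with the recursion property, then feeding both into Lemma $5.2$ to extract (a)--(c). Integrability of $L^k$ via Proposition $4.11$ (not $4.12$) plus Proposition $3.13$ is exactly what the paper does.

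The obstacle you flag in (B)---whether the $A(y_{k,i})$ exhaust $\langle p_{k+1,m}\rangle$---is not a genuine gap, but your proposed resolution (a dimension count in the $B_k$ tower) is the wrong direction, and your ordering (B) before (A) is what creates the apparent difficulty. The point is not to show that your constructed elements exhaust the abstract space $A_{k+1}(\ker B_{k+1})$; rather, the explicit computation you already perform in step (A) tells you exactly what the numerators of $L^{k+1}$ are. Writing out $L\cdot L^k$ modulo $\mc V[\partial]$ as you describe, the resulting numerators are visibly linear combinations of the old $p_j$'s and of the elements $E(p_{k,i})+\sum_j p_j v_{ij}$, where $v_{ij}$ is a primitive of $q_j p_{k,i}$. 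Using the relations $B=\frac{1}{q_j}\partial M_j$ and $A=EB+\sum_j p_j M_j$, the latter combination equals $A(F)$ modulo $\langle p_1,\dots,p_n\rangle$, where $B(F)=p_{k,i}$. So every $p_{k+1,m}$ lies in the span of $\{p_j\}\cup\{A(y_{k,i})\}$, and $L$ is recursion for each of these by induction and Lemma $2.15$. No dimension count is needed.

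The paper packages this by introducing the filtration $\mc W_k=\Span\{A(F_m^i):m\le k,\ i=1,\dots,n\}$ built from Lenard--Magri sequences starting at each $p_i$, observing that $\mc W_k$ is independent of the preimage choices (the ambiguity at each step lies in $A(\ker B)=\mc W_0$), and then proving inductively that the numerators of $L^{k+1}$ lie in $\mc W_k$. This is exactly the containment argument above, just with cleaner bookkeeping. Once you reverse the order to (A) then (B), your proof goes through without the ``technically heaviest portion'' you anticipate.
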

\begin{proof}
Let us define $A$, $B$ and $M_i$ for $i=1 \dots n$ as in $(4.39)$. 
By Proposition $4.11$, $L$ is integrable. Moreover, $L$ is recursion for all the $p_i$'s. Indeed it is obviously recursion for $0$ and $p_i \in A(Ker B)$. Hence by $(2.37)$ $\mc L_{p_i}(L)=0$. Therefore one can iterate Lemma $5.2$ starting from any of the odd functions $p_i$'s. More specifically, for all $i=1 \dots n$ there exists a sequence of functions $F_m^{i} \in \mc V$ such that $p_i=A(F_0^{i})$ and $B(F_{m+1}^{i})=A(F_m^{i})$ for all $m \geq 0$.  Note that $A(F_m^{i}) \in \mc V_{\bar{1}}$ for all $m \geq 0$ and $i=1 \dots n$. For all integer $k \geq 0$ we define  the subspace $\mc W_k \subset \mc V_{\bar{1}}$ to be the span of the functions $A(F_m^{i})$ for $m \leq k$ and $i=1,...,n$. The spaces $\mc W_k$ do not depend on the choices of the sequence $F_m^{i}$. Indeed, given $F_m^{i}$, $F_{m+1}^{i}$ is uniquely defined up to an element of $Ker B$. But we have $A(Ker B)=\mc W_0$. Note that $\mc W_k \subset \mc V_{\bar{1}}$ for all $k \geq 0$. \\
\indent
 Let us prove by induction on $k \geq 1$ the following statements :
First, $ L^k$ is weakly non-local. Secondly, if $\{p_{k1},\dots,p_{kn_k}\}$ and $\{q_{k1},\dots,q_{kn_k}\}$ are two sets of linearly independent functions and $E_k$ is a differential operator such that 
\begin{equation}
L^k=E_k+\sum_{i=1}^{n_k}{p_{ki}\partial^{-1} q_{ki}} \hspace{2 mm},
\end{equation}
then the functions $p_{ki}$ lie in $\mc W_{k-1}$ for all $i$, $E_k$ is even and $q_{ki}$ are even variational derivatives.
The statements hold for $k=1$. Let us assume that they do for $k\geq 1$. By definition of $\mc W_{k-1}$, the functions $p_{ki}$ lie in the image of $B$, and in particular for all $i=1,\dots,n_k$ and $j=1,\dots,n$ there exists a function $v_{ij} \in \mc V$ such that $q_j p_{ki} =(v_{ij})' \in \partial \mc V$. Hence,
\begin{equation}
\sum_{i=1}^{n_k}{\sum_{j=1}^m{p_j\partial^{-1}q_jp_{ki}\partial^{-1} q_{ki}}} \in W.
\end{equation}
By $(5.12)$ and $(5.13)$, $L^{k+1}$ is weakly non-local.  Let us compute the non-local part of $L^{k+1}$.
\begin{equation}
\begin{split}
L^{k+1}& \equiv (E + \sum_{j=1}^n{ p_j \partial^{-1} q_j}) (E_k+\sum_{i=1}^{n_k}{p_{ki}\partial^{-1} q_{ki}} ) \\
            & \equiv \sum_{i=1}^{n_k}{E(p_{ki})\partial^{-1} q_{ki}} + \sum_{i,j} p_j \partial^{-1} (v_{ij})' \partial^{-1}q_{ki} +\sum_{j=1}^n p_j \partial^{-1} E_k^*(q_j) \\
          & \equiv \sum_{i=1}^{n_k}{(E(p_{ki})+ \sum_{j=1}^n{p_j v_{ij}})\partial^{-1} q_{ki}} +
\sum_{j=1}^n{ p_j \partial^{-1} (E_k^*(q_j)-\sum_{i=1}^{n_k}{v_{ij} q_{ki}})}.
\end{split}
\end{equation}
Let $F \in \mc V$ be such that $p_{ki}=B(F)$ (recall that $\mc W_k \subset Im B$). Then by the first line of equation $(4.39)$ and by construction of $v_{ij}$ we have for all $(i,j)\in [1,n_k] \times [1,n]$
\begin{equation}
v_{ij}'=(M_j(F))'.
\end{equation}
 Therefore, by the second line of $(4.39)$, for all $i=1 \dots n_k$,
\begin{equation}
E(p_{ki})+ \sum_{j}{p_j v_{ij}}=A(F) \hspace{2 mm} \text{mod}  \hspace{1 mm} \mc W_{0} .
\end{equation}
Hence 
\begin{equation}
 (L^{k+1})_{\text{nlc}} \in \mc W_{k} \partial^{-1} \mc V.
\end{equation}
$L$ is an even pseudodifferential operator, and so is $L^m$ for any $m\geq 0$. In particular, the local part of $L^{k+1}$, $E_{k+1}$, is an even differential operator. For the same reason, and since $\mc W_{k} \subset \mc V_{\bar{1}}$, we get that 
\begin{equation}
 (L^{k+1})_{\text{nlc}} \in \mc W_{k} \partial^{-1} \mc V_{\bar{0}}.
\end{equation}
$L^{k}$ is integrable, therefore by Proposition $4.11$ and $(5.18)$, the functions $q_{ki}$ are even variational derivatives. 
\\
Let  $k \geq 0$ and $i \in [1,n_k]$. We know that $p_{ki} \in \mc W_{k-1}$. This implies by definition of $\mc W_{k-1}$ that 
$L$ is recursion for $p_{ki}$. Therefore $L^m$ is recursion for $p_{ki}$ for all $m\geq 0$. A direct application of Lemma $5.2$ then gives that $\{p_{ki},p_{mj} \}=0$ and $p_{ki}q_{mj} \in \partial \mc V$ for all $m\geq 0$ and $j \in [1,n_m]$. We are left to prove $(c)$. Let $l\geq 0$, $j \in [1,n_l]$, and $\rho_{lj}$ be such that $q_{lj}=\frac{\delta \rho_{lj}}{\delta u}$. Then we have 
\begin{equation}
\int {p_{ki}\frac{\delta \rho_{lj}}{\delta u}}=0.
\end{equation}
Recalling $(1.14)$, this precisely means that $\rho_{lj}$ is a conserved density of the equation $u_t=p_{ki}$. 
\end{proof}

\begin{remark}
The same statements hold if we switch the parity of the $p_i$'s and the $q_i$'s, but not the parity of $E$.
\end{remark}

\begin{remark}
The ratio of two compatible local Poisson structures is integrable (Remark $3.6$). If furthermore, one assumes that both $H$ and $K$ are odd, and that $K$ has a full kernel spanned by even variational derivatives, then $L=H.K^{-1}$ satisfies the hypothesis of Theorems $5.4$ and $5.5$, therefore it generates an integrable hierarchy of equations, under some assumption on differential orders of the coefficients of $H$ and $K$.
\end{remark}

\section{Examples}
\subsection{$\lambda$-homogeneous equations with linear leading term}
${}$\\
In [SW09], Wang and Sanders give a classification of $\lambda$-homogeneous differential polynomials with linear leading term, i.e. equations of the form
\begin{equation}
u_t=F=u^{(n)}+P,
\end{equation}
where $P$ is a polynomial in $u,\dots,u^{(n-1)}$. Let $\lambda$ and $\mu$ be some constants, then $F$ is called $\lambda-homogeneous$ of weight $\mu$ if it admits the one parameter group of scaling symmetries 
\begin{equation}
(x,t,u) \mapsto (a^{-1}x,a^{-\mu}t,a^{\lambda}u), \hspace{3 mm} a \in \mathbb{R}^{+}.
\end{equation}
Every $\lambda$-homogeneous equation of the form $(6.1)$, modulo homogeneous transformations in $u$, is an equation lying in the hierarchy of one of the $13$ equations displayed in [SW09] (p.103-104). Their corresponding recursion operators can be found in [W02]. 
\\
\\
For the following equations,
\\ 
\begin{equation*}
\begin{split}
u_t&=u'''+3uu' \hspace{50 mm}  \text{(Korteweg-de Vries)} \\
u_t&=u^{(5)}+10uu'''+25u'u''+20u^2u'' \hspace{10 mm} \text{(Kaup-Kuperschmidt)} \\
u_t&=u^{(5)}+10uu'''+10u'u''+20u^2u'' \hspace{10 mm} \text{(Sawada-Kotera)} \\
u_t&=u'''+u'^2 \hspace{53 mm} \text{(Modified KdV)} \\
u_t&=u'''+u'^3 \hspace{53 mm} \text{(Potential modified KdV)} \\
u_t&=u^{(5)}+5u''u'''-5u'^2u'''-5u'u''^2+u'^5 \hspace{2 mm} \text{(Potential Kuperschmidt)},
\end{split}
\end{equation*}
one can check directly using [W02] that their recursion operators lie in $(R[\partial])_{\bar{0}}+R_{\bar{1}}\partial^{-1} \frac{\delta R_{\bar{0}}}{\delta u}$, where $R+R_{\bar{0}} \oplus R_{\bar{1}}$ is the decomposition of the space of differential polynomials into even and odd parts which one obtains by declaring $u$ to be even and $\partial$ to be odd. Moreover, each of these $6$ equations is odd. Therefore, to apply Theorems $5.3$ and $5.4$, we are left to check that the weakly non-local operators are hereditary, which is a tedious but straightforward computation. Since the above equations are odd, we can apply Lemma $5.2$ and initiate the Lenard-Magri scheme with their corresponding recursion operators at themselves. Thus, we conclude that they lie in integrable hierarchies. These abelian subalgebras of $R$ are infinite-dimensional since the orders condition in Theorem $5.3$ is met for $N=1$ or $N=2$ in each case.
\\
\indent
As for the next four equations in the list,
\begin{equation*}
\begin{split}
u_t&=u'''+u'^2 \hspace{80 mm}  \text{(Potential KdV)} \\
u_t&=u^{(5)}+10uu'''+\frac{15}{2}u''^2+\frac{20}{3}u'^3 \hspace{15 mm} \text{(Potential Kaup-Kuperschmidt)} \\
u_t&=u^{(5)}+10uu'''+\frac{20}{3}u'^3 \hspace{40 mm} \text{(Potential Sawada-Kotera)} \\
u_t&=u^{(5)}+5u'u'''+5u''^2-5u^2u'''-20uu'u''-5u'^3+5u^4u' \hspace{2 mm} \text{(Kuperschmidt)},
\end{split}
\end{equation*}
one can check directly using [W02] that their recursion operators lie in $(R[\partial])_{\bar{0}}+R_{\bar{0}}\partial^{-1} \frac{\delta R_{\bar{1}}}{\delta u}$, where $R+R_{\bar{0}} \oplus R_{\bar{1}}$ is the decomposition of the space of differential polynomials into even and odd parts which one obtains by declaring $u$ and $\partial$ to be odd. Moreover, each of these $4$ equations is even. Therefore, to apply Theorems $5.3$ and $5.4$, we are left to check that the weakly non-local operators are hereditary. Similarly, these abelian subalgebras of $R$ are infinite-dimensional since the orders condition in Theorem $5.3$ is met for $N=1$ or $N=2$ in each case.
\\
\indent
The Burgers and the Potential Burgers equations,
\begin{equation*}
\begin{split} 
u_t&=u''+uu' \hspace{5 mm} \text{(Burgers)} \\
u_t&=u''+u'^2 \hspace{5 mm} \text{(Potential Burgers)},
\end{split}
\end{equation*}
admit the following recursion operators
\begin{equation}
\begin{split} 
L_B&=\partial(\partial+u)\partial^{-1} \\
L_{PB}&=\partial+u'.
\end{split}
\end{equation}
Both of these operators are integrable and recursion for $u'$. Moreover, it is clear that they can be applied infinitely many times to $u'$. By Corollary $3.18$ the functions $L_B^n(u'),n\geq 0$ (resp.$L_{PB}^n(u'),n\geq 0$) define an integrable system.
\\
\indent
As for the last equation from the list of [SW09], the Calogero-Degasperis-Ibragimov-Shabat equation :
\begin{equation*}
u_t=u'''+3u^2u''+9uu'^2+3u^4u' \hspace{5 mm} \text{(CDIS)},
\end{equation*}
it admits a rational recursion operator, which is not weakly non-local  :
\begin{equation}
L_{CDIS}=\frac{1}{u} \partial (\partial +2u^2)^{-1}(\partial+u^2-\frac{u'}{u})^2(\partial+2u^2){\partial}^{-1}u.
\end{equation}
Moreover, it is not hard to check that $L_{CDIS}^n(u')$ is well-defined for all $n \geq 0$
Indeed, if in the differential algebra extension of $\mc V$, $\tilde{\mc V}=\mc V[w]$, where $\frac{w'}{w}=u^2$, $L_{CDIS}^n$ rewrites as follows for all $n\geq 0$ :
\begin{equation}
L_{CDIS}^n=\frac{1}{u}\partial \frac{1}{w^2} \partial^{-1} uw \partial^{2n} (\partial uw +(uw)')\frac{1}{u^2} \partial^{-1} u.
\end{equation}
Hence, 
\begin{equation}
L_{CDIS}^n(u')=\frac{1}{u}(\frac{1}{w^2} \int {(uw \partial^{2n+1} uw)(1)})',
\end{equation}
which lies in $\tilde{\mc V}$ since the differential operator $uw\partial^{2n+1}uw$ is skewadjoint for all $n \geq 0$ (the constant coefficient of a skewadjoint differential operator $H$ is a total derivative since $\int{H(1)}=\int{H^*(1)}=-\int{H(1)}$). It is clear from $(6.6)$ that $L_{CDIS}^n(u') \in \mc V \subset \tilde{\mc V}$ for all $n \geq 0$. More precisely, $ \int {(uw \partial^{2n+1} uw)(1)} \in w^2 \mc V$. Finally, one checks that $L_{CDIS}$ is an integrable rational operator and conclude that \textit{CDIS} is an integrable equation by Corollary $3.18$.

\subsection{Krichever-Novikov hierarchy}
hey \\
 In [DS08], Demskoi and Sokolov give a degree $4$ weakly non-local recursion operator $L_{KN}$  for the Krichever-Novikov equation, 
\begin{equation*}
\frac{du}{dt}=u'''-\frac{3{u''}^2}{2u'}+\frac{P(u)}{u'} \hspace{2 mm} \text{(Krichever-Novikov)},
\end{equation*}
where $P$ is a polynomial of degree at most $4$. The recursion operator is of the form 
\begin{equation*}
L_{KN}=\partial^4+a_1\partial^3+a_2\partial^2+a_3\partial+a_4 +G_1 \partial^{-1} \frac{\delta \rho_1}{\delta u}+u'\partial^{-1} \frac{\delta \rho_2}{\delta u}.
\end{equation*}
The space of Laurent differential polynomials in $u$, $\mc A=\mathbb{C}[u^{\pm 1},u'^{\pm 1},...]$ admits a decomposition into even and odd parts $\mc A=\mc A_{\bar{0}} \oplus \mc A_{\bar{1}}$ by declaring $u$ to be even and $\partial$ to be odd. From the explicit formulas given in [DS08], it is straighforward to check that $a_i$ has the same parity as $i$ for $i=1,\dots,4$ and that $\rho_i$ are even for $i=1,2$. Hence, the local part of $L_{KN}$ is even and so are the functions $\frac{\delta \rho_i}{\delta u}$, since variational derivatives preserve parity. Moreover, $G_1$ is the equation (KN) itself, which is odd and so is $u'$. Finally, one checks that $L_{KN}$ is hereditary and apply Theorem $5.3$ (after checking the orders consition in Theorem $5.3$) to conclude that $(KN)$ lies in an infinite dimensional abelian subalgebra of $(\mc A,\{.,.\})$.

\newpage

\end{document}